\documentclass[12pt,openright,a4paper,twoside,DIV12,BCOR20mm,headings=small,
headsepline,listof=totoc, bibliography=totoc]{scrbook}
\usepackage{amsmath,amsfonts,amssymb,amsthm} \allowdisplaybreaks
%
\usepackage[english]{babel}
\usepackage{url}
\usepackage{fncychap}
\usepackage{tikz}
\usepackage{tikz-3dplot}
\usepackage{enumerate}
\usepackage[bf,small]{caption}
\usepackage[automark]{scrpage2}
%
%
\newcommand{\defeq}{:=}
\newcommand{\ZZ}{\mathbb{Z}}
\newcommand{\RR}{\mathbb{R}}
\newcommand{\CC}{\mathbb{C}}
\newcommand{\NN}{\mathbb{N}}
\newcommand{\EE}{\mathbb{E}}
\newcommand{\cc}{\mathfrak{C}}
\newcommand{\PP}{\mathbb{P}}
\newcommand{\Pro}{\ensuremath{p}}
\newcommand{\Inc}{\ensuremath{\iota}}
\newcommand{\euler}{\mathrm{e}}
\newcommand{\norm}[1]{\left\Vert#1\right\Vert}
\newcommand{\cN}{\mathcal{N}}
\newcommand{\BV}{\mathrm{BV}}
\newcommand{\cD}{\mathcal{D}}
\DeclareMathOperator{\supp}{\operatorname{supp}}
\DeclareMathOperator{\re}{Re}
\DeclareMathOperator{\im}{Im}
\DeclareMathOperator{\diam}{diam}
\DeclareMathOperator{\dist}{dist}
\DeclareMathOperator{\Tr}{Tr}
\DeclareMathOperator{\cov}{cov}
\DeclareMathOperator{\ran}{Ran}
\newcommand{\sprod}[2]{\left\langle#1,#2 \right \rangle}
\newcommand{\abs}[1]{\left\vert#1\right\vert}
\newcommand{\drm}{\ensuremath{\mathrm{d}}}
\renewcommand{\i}{\protect{\mathrm{i}}}
\newcommand{\BIGOP}[1]{\mathop{\mathchoice%
{\raise-0.22em\hbox{\huge $#1$}}%
{\raise-0.05em\hbox{\Large $#1$}}{\hbox{\large $#1$}}{#1}}}
\newcommand{\BIGboxplus}{\mathop{\mathchoice%
{\raise-0.35em\hbox{\huge $\boxplus$}}%
{\raise-0.15em\hbox{\Large $\boxplus$}}{\hbox{\large $\boxplus$}}{\boxplus}}}
\newcommand{\bigtimes}{\BIGOP{\times}}
%
%

%
\newtheorem{theorem}{Theorem}[chapter]
\newtheorem{proposition}[theorem]{Proposition}
\newtheorem{lemma}[theorem]{Lemma}
\theoremstyle{definition}                       
\newtheorem{definition}[theorem]{Definition}
\newtheorem{assumption}{Assumption}

\newtheorem{example}[theorem]{Example}
\theoremstyle{remark}                           
\newtheorem{remark}[theorem]{Remark}
\newcommand{\hm}[1]{\textbf{*}\leavevmode{\marginpar{\tiny%
$\hbox to 0mm{\hspace*{-0.5mm}$\leftarrow$\hss}%
\vcenter{\vrule depth 0.1mm height 0.1mm width \the\marginparwidth}%
\hbox to 0mm{\hss$\rightarrow$\hspace*{-0.5mm}}$\\\relax \raggedright #1}}} 
\usepackage{microtype}
\begin{document}
\frontmatter
\begin{titlepage}
\begin{center}
{\Huge \textbf{\sffamily Localization for alloy-type \\ models with non-monotone \\[1ex] potentials}} \\[10ex] 
{\sffamily by} \\[10ex]
\textbf{\sffamily Martin Tautenhahn} \\[17ex]
\tdplotsetmaincoords{70}{150}
\begin{tikzpicture}[scale=1,tdplot_main_coords]
%

\draw[thick,->] (1.5,10.1,0) -- (-1.5,10.1,0) node[anchor=north]{$x$};
\draw[thick,->] (3.8,3,0) -- (3.8,6,0) node[anchor=north]{$t$};
\draw[thick] (0,2,2.6) node{$\lvert \psi (x,t) \rvert^2$};
\def\a{1};
\filldraw[fill=white, draw=black,smooth,domain=-3:3, samples=50] plot (\x,0*0.314,{(\a/(0.3*2.5))*exp(-0.5*((\x-0)/0.3)*((\x-0)/0.3))});
\filldraw[fill=white, draw=black,smooth,domain=-3:3, samples=50] plot (\x,1*0.314,{(\a/(0.4*2.5))*exp(-0.5*((\x-0)/0.4)*((\x-0)/0.4))});
\filldraw[fill=white, draw=black,smooth,domain=-3:3, samples=50] plot (\x,2*3.14/10,{(\a/(0.3*2.5))*exp(-0.5*((\x-1.176)/0.3)*((\x-1.176)/0.3))});
\filldraw[fill=white, draw=black,smooth,domain=-3:3, samples=50] plot (\x,3*3.14/10,{(\a/(0.2*2.5))*exp(-0.5*((\x-1.618)/0.2)*((\x-1.618)/0.2))});
\filldraw[fill=white, draw=black,smooth,domain=-3:3, samples=50] plot (\x,4*3.14/10,{(\a/(0.15*2.5))*exp(-0.5*((\x-1.902)/0.15)*((\x-1.902)/0.15))});
\filldraw[fill=white, draw=black,smooth,domain=-3:3, samples=50] plot (\x,5*3.14/10,{(\a/(0.25*2.5))*exp(-0.5*((\x-2.000)/0.25)*((\x-2.000)/0.25))});
\filldraw[fill=white, draw=black,smooth,domain=-3:3, samples=50] plot (\x,6*3.14/10,{(\a/(0.4*2.5))*exp(-0.5*((\x-1.902)/0.4)*((\x-1.902)/0.4))});
\filldraw[fill=white, draw=black,smooth,domain=-3:3, samples=50] plot (\x,7*3.14/10,{(\a/(0.4*2.5))*exp(-0.5*((\x-1.618)/0.4)*((\x-1.618)/0.4))});
\filldraw[fill=white, draw=black,smooth,domain=-3:3, samples=50] plot (\x,8*3.14/10,{(\a/(0.3*2.5))*exp(-0.5*((\x-1.176)/0.3)*((\x-1.176)/0.3))});
\filldraw[fill=white, draw=black,smooth,domain=-3:3, samples=50] plot (\x,9*3.14/10,{(\a/(0.2*2.5))*exp(-0.5*((\x-0.618)/0.2)*((\x-0.618)/0.2))});
\filldraw[fill=white, draw=black,smooth,domain=-3:3, samples=50] plot (\x,10*3.14/10,{(\a/(0.15*2.5))*exp(-0.5*((\x-0.000)/0.15)*((\x-0.000)/0.15))});
\filldraw[fill=white, draw=black,smooth,domain=-3:3, samples=80] plot (\x,11*3.14/10,{(\a/(0.1*2.5))*exp(-0.5*((\x+0.618)/0.1)*((\x+0.618)/0.1))});
\filldraw[fill=white, draw=black,smooth,domain=-3:3, samples=50] plot (\x,12*3.14/10,{(\a/(0.2*2.5))*exp(-0.5*((\x+1.176)/0.2)*((\x+1.176)/0.2))});
\filldraw[fill=white, draw=black,smooth,domain=-3:3, samples=50] plot (\x,13*3.14/10,{(\a/(0.25*2.5))*exp(-0.5*((\x+1.618)/0.25)*((\x+1.618)/0.25))});
\filldraw[fill=white, draw=black,smooth,domain=-3:3, samples=50] plot (\x,14*3.14/10,{(\a/(0.3*2.5))*exp(-0.5*((\x+1.902)/0.3)*((\x+1.902)/0.3))});
\filldraw[fill=white, draw=black,smooth,domain=-3:3, samples=50] plot (\x,15*3.14/10,{(\a/(0.2*2.5))*exp(-0.5*((\x+2.000)/0.2)*((\x+2)/0.2))});
\filldraw[fill=white, draw=black,smooth,domain=-3:3, samples=50] plot (\x,16*3.14/10,{(\a/(0.3*2.5))*exp(-0.5*((\x+1.902)/0.3)*((\x+1.902)/0.3))});
\filldraw[fill=white, draw=black,smooth,domain=-3:3, samples=50] plot (\x,17*3.14/10,{(\a/(0.35*2.5))*exp(-0.5*((\x+1.618)/0.35)*((\x+1.618)/0.35))});
\filldraw[fill=white, draw=black,smooth,domain=-3:3, samples=50] plot (\x,18*3.14/10,{(\a/(0.4*2.5))*exp(-0.5*((\x+1.176)/0.4)*((\x+1.176)/0.4))});
\filldraw[fill=white, draw=black,smooth,domain=-3:3, samples=50] plot (\x,19*3.14/10,{(\a/(0.45*2.5))*exp(-0.5*((\x+0.618)/0.45)*((\x+0.618)/0.45))});
\filldraw[fill=white, draw=black,smooth,domain=-3:3, samples=50] plot (\x,20*3.14/10,{(\a/(0.4*2.5))*exp(-0.5*((\x+0)/0.4)*((\x+0)/0.4))});
\filldraw[fill=white, draw=black,smooth,domain=-3:3, samples=50] plot (\x,21*3.14/10,{(\a/(0.4*2.5))*exp(-0.5*((\x-0.618)/0.4)*((\x-0.618)/0.4))});
\filldraw[fill=white, draw=black,smooth,domain=-3:3, samples=50] plot (\x,22*3.14/10,{(\a/(0.3*2.5))*exp(-0.5*((\x-1.176)/0.3)*((\x-1.176)/0.3))});
\filldraw[fill=white, draw=black,smooth,domain=-3:3, samples=50] plot (\x,23*3.14/10,{(\a/(0.2*2.5))*exp(-0.5*((\x-1.618)/0.2)*((\x-1.618)/0.2))});
\filldraw[fill=white, draw=black,smooth,domain=-3:3, samples=50] plot (\x,24*3.14/10,{(\a/(0.25*2.5))*exp(-0.5*((\x-1.902)/0.25)*((\x-1.902)/0.25))});
\filldraw[fill=white, draw=black,smooth,domain=-3:3, samples=50] plot (\x,25*3.14/10,{(\a/(0.3*2.5))*exp(-0.5*((\x-2.000)/0.3)*((\x-2.000)/0.3))});
\filldraw[fill=white, draw=black,smooth,domain=-3:3, samples=50] plot (\x,26*3.14/10,{(\a/(0.35*2.5))*exp(-0.5*((\x-1.902)/0.35)*((\x-1.902)/0.35))});
\filldraw[fill=white, draw=black,smooth,domain=-3:3, samples=50] plot (\x,27*3.14/10,{(\a/(0.4*2.5))*exp(-0.5*((\x-1.618)/0.4)*((\x-1.618)/0.4))});
\filldraw[fill=white, draw=black,smooth,domain=-3:3, samples=50] plot (\x,28*3.14/10,{(\a/(0.5*2.5))*exp(-0.5*((\x-1.176)/0.5)*((\x-1.176)/0.5))});
\filldraw[fill=white, draw=black,smooth,domain=-3:3, samples=50] plot (\x,29*3.14/10,{(\a/(0.6*2.5))*exp(-0.5*((\x-0.618)/0.6)*((\x-0.618)/0.6))});
\draw[dotted,
color=black, domain=0:9.106, samples=30, color=black] plot ({2*sin(\x r)},\x,0);
%
\end{tikzpicture} \\[17ex]
{\sffamily 2012}
\end{center}
\newpage
\thispagestyle{empty}
\noindent
The present manuscript is a slightly modified version of my doctoral thesis under the advise of Prof.\, Ivan Veseli\'c. The thesis was submitted on 31/01/2012 and defended on 21/06/2012.
\vfill
\noindent
\textbf{Prof.~Dr.~rer.~nat.~habil.~Ivan Veseli\'c}\\
Technische Universit\"at Chemnitz \\
Fakult\"at f\"ur Mathematik \\
D-09107 Chemnitz \\[2ex]
\textbf{Dr.~rer.~nat.~Martin Tautenhahn}\\
Technische Universit\"at Chemnitz \\
Fakult\"at f\"ur Mathematik \\
D-09107 Chemnitz
\end{titlepage}
\phantom{sd}\thispagestyle{empty}
\vspace{.25\textheight}
\begin{flushright}
Dedicated to my father Ulrich  Tautenhahn
\end{flushright}
\newpage {\phantom{sd}\thispagestyle{empty}  } 
\newpage
 \begin{center}
  \textsc{\bf  Acknowledgement} 
 \end{center}
\thispagestyle{plain}
The first thank belongs beyond doubt to my adviser Prof.\ Ivan Veseli\'c. His guidance, knowledge, endurance, inspiration and friendship during the last years are invaluable for me and my research. Thank you Ivan! Furthermore, let me thank the research group working on mathematical physics in Chemnitz, namely Prof.\ Peter Stollmann, Marcel Hansmann, Karsten Leonhardt, Reza Samavat, Carsten Schubert, Christoph Schumacher, Fabian Schwarzenberger, Christian Seifert and Daniel Wingert. A lot of discussions either in passing or in a more formal framework like in our joint research seminar contributed to my research. I appreciate this friendly and fertile atmosphere in Chemnitz. Special thanks go to my office-mate and friend Fabian Schwarzenberger for increasing the fun and decreasing the confusion on math, for going the extra mile with the kind of feedback that put flesh on the bones.
\\[2ex]
As Prof.\ Daniel Lenz left Chemnitz and started a position in Jena, the alliance Chemnitz-Jena never broke down. I would like to acknowledge and extend my heartfelt gratitude to the group around Prof.\ Daniel Lenz for their support and for many stimulating discussions. I always enjoyed the visits in Jena!
\\[2ex]
It is a honor for me to thank my co-authors Alexander Elgart, Helge Kr\"uger, Ivan Veseli\'c and Norbert Peyerimhoff. The cooperation was always friendly and fruitful. I learned a big deal from all of them and thank for the endurance they have with me, never accepting less than my best efforts.
\\[2ex]
Brooding on the past, I offer my regards and blessings to all of those who supported me in any respect during the completion of this thesis. Not least I wish to express my appreciation common to all who have been available for scientific conversations for years. 
\\[2ex]
My love and gratitude is dedicated to my parents Angela and Ulrich Tautenhahn, and my sister Susanne Tautenhahn. In particular, I am deeply grateful to my father Ulrich. Without his support, especially with his mathematical insight and intuition, this work would not have been possible.
\newpage {\phantom{sd}\thispagestyle{empty}} 
\newpage
 \begin{center}
  \textsc{\bf  Abstract} \thispagestyle{plain}
 \end{center}
We consider a family of self-adjoint operators 
\[
H_\omega = - \Delta + \lambda V_\omega, \quad \omega \in \Omega = \bigtimes_{k \in \ZZ^d} \RR , 
\]
on the Hilbert space $\ell^2 (\ZZ^d)$ or $L^2 (\RR^d)$. Here $\Delta$ denotes the Laplace operator (discrete or continuous), $V_\omega$ is a multiplication operator given by the function
$$V_\omega (x) = \sum_{k \in \ZZ^d} \omega_k u(x-k) \ \text{on $\ZZ^d$, or}\quad  V_\omega (x) = \sum_{k \in \ZZ^d} \omega_k U(x-k) \ \text{on $\RR^d$},$$ and $\lambda > 0$ is a real parameter modeling the strength of the disorder present in the model. The functions $u:\ZZ^d \to \RR$ and $U:\RR^d \to \RR$ are called single-site potential. Moreover, there is a probability measure $\PP$ on $\Omega$ modeling the distribution of the individual configurations $\omega \in \Omega$. The measure $\PP = \prod_{k \in \ZZ^d} \mu$ is a product measure where $\mu$ is some probability measure on $\RR$ satisfying certain regularity assumptions. The operator on $L^2 (\RR^d)$ is called alloy-type model, and its analogue on $\ell^2 (\ZZ^d)$ discrete alloy-type model.
\par
In the pioneer work \cite{Anderson-58}, Anderson argued that the solution of the Schr\"o\-din\-ger equation according to the operator $H_\omega$, the so called wave function, becomes localized in space if the disorder is large enough. This phenomenon of localization manifests itself in the sense that there are intervals $I\subset \RR$ such that the continuous spectrum of $H_\omega$ in $I$ is empty for $\PP$-almost all $\omega \in \Omega$ and the corresponding eigenfunctions decay exponentially, called exponential localization or Anderson localization.
\par
There are two methods to prove exponential localization in multidimensional space, the multiscale analysis and the fractional moment method. However, both methods strongly rely on the property that the operator $H_\omega$ depends (in the sense of quadratic forms) monotonically on the random parameters $\omega_k$, $k \in \ZZ^d$. This is for instance the case if the single-site potential is non-negative. 
\par
This thesis refines the methods in the case where the single-site potential is allowed to change its sign. In particular, we develop the fractional moment method and prove exponential localization for the discrete alloy-type model in the case where the support of $u$ is finite and $u$ has fixed sign at the boundary of its support. We also prove a Wegner estimate for the discrete alloy-type model in the case of exponentially decaying but not necessarily finitely supported single-site potentials. This Wegner estimate is applicable for a proof of localization via multiscale analysis. In an appendix we prove a Wegner estimate for the alloy-type model if the single-site potential is a so-called generalized step-function. Moreover, we show for the alloy-type model that the typical fractional moment decay implies localization under minimal assumptions on the measure $\PP$ in the case where $U$ has bounded support.
\newpage {\phantom{sd}\thispagestyle{empty}  } 
\newpage

 \begin{center}
  \textsc{\bf  Publication notice} \thispagestyle{plain}
 \end{center}
\makeatletter
\newenvironment{mybibliography}[1]
     {\@mkboth{\MakeUppercase\bibname}{\MakeUppercase\bibname}%
      \list{\@biblabel{\@arabic\c@enumiv}}%
           {\settowidth\labelwidth{\@biblabel{#1}}%
            \leftmargin\labelwidth
            \advance\leftmargin\labelsep
            \@openbib@code
            \usecounter{enumiv}%
            \let\p@enumiv\@empty
            \renewcommand\theenumiv{\@arabic\c@enumiv}}%
      \sloppy
      \clubpenalty4000
      \@clubpenalty \clubpenalty
      \widowpenalty4000%
      \sfcode`\.\@m}
     {\def\@noitemerr
       {\@latex@warning{Empty `mybibliography' environment}}%
      \endlist}
\makeatother
The major part of this thesis concerns joint works with Alexander Elgart, Norbert Peyerimhoff and Ivan Veseli\'c. The relevant references are:

\begin{mybibliography}{EKTV11}
\bibitem[ETV10]{0_ElgartTV-10}
A.~Elgart, M.~Tautenhahn, and I.~Veseli\'c, \emph{Localization via fractional
  moments for models on $\mathbb{Z}$ with single-site potentials of finite
  support}, J. Phys. A: Math. Theor. \textbf{43} (2010), no.~47, 474021.

\bibitem[TV10a]{0_TautenhahnV-10b}
M.~Tautenhahn and I.~Veseli\'c, \emph{A note on regularity for discrete
  alloy-type models}, Technische Universit\"at Chemnitz, Preprintreihe der
  Fakult\"at f\"ur Mathematik, Preprint 2010-6, ISSN 1614-8835, 2010.

\bibitem[ETV11]{0_ElgartTV-11}
A.~Elgart, M.~Tautenhahn, and I.~Veseli\'c, \emph{Anderson localization for a class of models with a
  sign-indefinite single-site potential via fractional moment method}, Ann.
  Henri Poincar\'e \textbf{12} (2011), no.~8, 1571--1599.

\bibitem[PTV11]{0_PeyerimhoffTV-11}
N.~Peyerimhoff, M.~Tautenhahn, and I.~Veseli\'c, \emph{Wegner estimate for
  alloy-type models with sign-changing and exponentially decaying single-site
  potentials}, Technische Universit\"at Chemnitz, Preprintreihe der Fakult\"at
  f\"ur Mathematik, Preprint 2011-9, ISSN 1614-8835, 2011.
\end{mybibliography}
Related publications or conference proceedings I (co-)authored are:
\begin{mybibliography}{EKTV11}

\bibitem[TV10b]{0_TautenhahnV-10}
M.~Tautenhahn and I.~Veseli\'c, \emph{Spectral properties of discrete alloy-type models}, XVIth
  International Congress On Mathematical Physics (P.~Exner, ed.), 2010,
  pp.~551--555.

\bibitem[Tau11]{0_Tautenhahn-11}
M.~Tautenhahn, \emph{Localization criteria for anderson models on locally
  finite graphs}, J. Stat. Phys. \textbf{144} (2011), no.~1, 60--77.

\bibitem[EKTV11]{0_ElgartKTV-11}
A.\ Elgart, H.\ Kr{\"u}ger, M.\ Tautenhahn, and I.\ Veseli\'c, \emph{Discrete
  {S}chr\"odinger operators with random alloy-type potential},
  arXiv:1107.2800v1 [math-ph]  (2011), to appear in Proceedings of the Spectral Days
  2010, Pontificia Universidad Cat\'olica de Chile, Santiago.

\end{mybibliography}
\newpage {\phantom{sd}\thispagestyle{empty}  } 
\newpage
\tableofcontents
\listoffigures
%
%
\mainmatter
\chapter{Introduction}
In this chapter we lead the reader to the topic of localization theory for random Schr\"o\-ding\-er operators. 
In a first section we introduce very basic mathematical concepts of quantum mechanics developed in the 1920's, which is used to model the time-evolution of atomic particles like electrons. The time-evolution of an electron moving under the influence of a static electric potential is governed by a self-adjoint operator in some Hilbert space and the associated time-dependent Schr\"o\-ding\-er equation, introduced in 1926 by E.~Schr\"o\-ding\-er \cite{Schroedinger-26}. 
The spectrum of this operator gives insights into the asymptotic behavior of the time-evolution of the atomic particle as discussed in Section~\ref{sec:foundations}. 
\par
In Section \ref{sec:rand_operators} we will introduce the concept and examples of ergodic random operators. They are for instance used to model the time-evolution of an electron under the influence of a random electric potential. The application one should have in mind is a disordered solid; each configuration of the randomness corresponds to a possible realization of the random medium.
\par
In the pioneer work \cite{Anderson-58}, Anderson argued that the solution of the Schr\"o\-ding\-er equation (called wave function) becomes localized in space in certain randomness/energy regions if one considers random operators. The medium has lost its transport properties when compared to ideal crystals. This phenomenon of localization for random operators manifests itself in the fact that either the spectrum of the considered random operator is only of pure point type in some energy region (\textit{spectral localization}), or that the wave function (corresponding to some energy interval) stays trapped in a finite region of space for all time (\textit{dynamical localization}). Localization for random operators will be defined in Section \ref{sec:phenomenon_loc} and the main matter of this thesis is to investigate when localization occurs.
\par
Section \ref{sec:methods_msa_fmm} is devoted to a discussion of the existing methods to prove localization for random operators. Beside some methods only available in the one-dimensional setting, there are exactly two methods to prove localization for random operators: the \textit{multiscale analysis} and the \textit{fractional moment method}. 
At the end of Section \ref{sec:methods_msa_fmm} we explain the structure of this thesis.
\section{Some mathematical foundations of quantum mechanics} \label{sec:foundations}
%
In quantum mechanics (Schr\"o\-ding\-er picture), the state of an electron moving in $d$-dimensional space $\RR^d$ is described by a complex valued function $\psi : \RR^d \times \RR \to \CC$, $\psi (x,t) = \psi (x_1,\ldots , x_d , t)$, called the \emph{wave function}\index{wave function}, where $x \in \RR^d$ corresponds to a point in space and $t$ corresponds to the time variable. The quantity $\lvert \psi (\cdot , t) \rvert^2$ is interpreted as the probability density of the particles location at time $t$. For this reason it is reasonable to assume that $\psi (\cdot , t)$ is an element of $L^2 (\RR^d)$ with $\lVert \psi (\cdot , t) \rVert_{L^2} = 1$ for all $t \in \RR$. For measurable sets $A \subset \RR^d$, the number
\[
 \int_A \lvert \psi (x , t) \rvert^2 \drm x
\]
is interpreted as the probability of finding the particle in $A$ at time $t$. Note that in contrast to classical mechanics, the particle is not localized at a certain point in space, rather it is spread in space according to the probability density $\lvert \psi (\cdot , t) \rvert^2$. 
\par
Given an initial state $\psi_0 \in L^2 (\RR^d)$ with $\lVert \psi_0 \rVert_{L^2} = 1$, the time evolution of the wave function is governed by the time-dependent Schr\"o\-ding\-er equation\index{Schr\"o\-ding\-er equation}
\begin{equation} \label{eq:schroedinger}
 \i \hbar \frac{\partial}{\partial t} \psi (\cdot , t) = - \frac{\hbar^2}{2m} \Delta \psi (\cdot , t) + V \psi (\cdot , t), \quad \psi (\cdot , 0) = \psi_0,
\end{equation}
where $m$ denotes the mass of the particle, $\hbar = h/(2\pi)$, $\Delta$ is the Laplace operator on $L^2 (\RR^d)$ and $V$ is the multiplication operator on $L^2 (\RR^d)$ by the (classical) potential energy. Here $h$ denotes the Planck constant. The operator $H = -(\hbar^2 / 2m) \Delta + V$ from Eq.~\eqref{eq:schroedinger} is called \emph{Schr\"o\-ding\-er operator}. For the analysis of the Schr\"o\-ding\-er equation the constants $m$ and $\hbar$ are irrelevant. On this account we set $\hbar/(2m)$ to one for the rest of this thesis. Under some mild regularity properties on the potential $V$ the Schr\"o\-ding\-er operator $H$ is self-adjoint on a certain dense domain of $L^2 (\RR^d)$, see e.g.\ \cite{ReedS-78b}, which we always assume. Hence, by the spectral theorem, the problem from Eq.~\eqref{eq:schroedinger} has the unique solution
\begin{equation} \label{eq:wave_function}
 \psi (\cdot , t) = \euler^{-\i t H} \psi_0 .
\end{equation}
Figure~\ref{abb_evolution} shows a thinkable example for the time evolution of the squared wave function, i.e.\ of the probability density of the position of the electron.
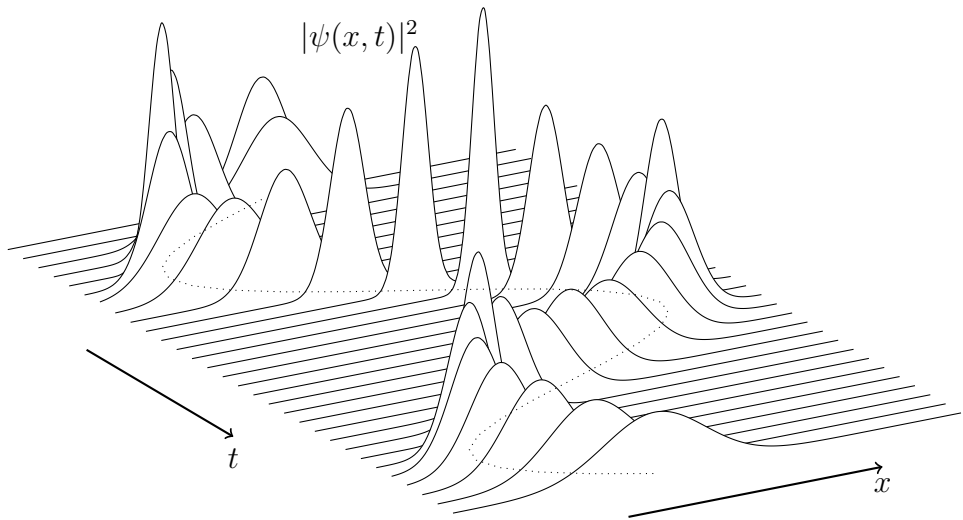
\begin{figure}\centering
\tdplotsetmaincoords{70}{150}
\begin{tikzpicture}[scale=1.3,tdplot_main_coords]
%

\draw[thick,->] (1.5,10.1,0) -- (-1.5,10.1,0) node[anchor=north]{$x$};
\draw[thick,->] (3.8,3,0) -- (3.8,6,0) node[anchor=north]{$t$};
\draw[thick] (0,2,2.4) node{$\lvert \psi (x,t) \rvert^2$};
\def\a{1};
\def\s{100};
\filldraw[fill=white, draw=black,smooth,domain=-3:3, samples=\s] plot (\x,0*0.314,{(\a/(0.3*2.5))*exp(-0.5*((\x-0)/0.3)*((\x-0)/0.3))});
\filldraw[fill=white, draw=black,smooth,domain=-3:3, samples=\s] plot (\x,1*0.314,{(\a/(0.4*2.5))*exp(-0.5*((\x-0)/0.4)*((\x-0)/0.4))});
\filldraw[fill=white, draw=black,smooth,domain=-3:3, samples=\s] plot (\x,2*3.14/10,{(\a/(0.3*2.5))*exp(-0.5*((\x-1.176)/0.3)*((\x-1.176)/0.3))});
\filldraw[fill=white, draw=black,smooth,domain=-3:3, samples=\s] plot (\x,3*3.14/10,{(\a/(0.2*2.5))*exp(-0.5*((\x-1.618)/0.2)*((\x-1.618)/0.2))});
\filldraw[fill=white, draw=black,smooth,domain=-3:3, samples=\s] plot (\x,4*3.14/10,{(\a/(0.15*2.5))*exp(-0.5*((\x-1.902)/0.15)*((\x-1.902)/0.15))});
\filldraw[fill=white, draw=black,smooth,domain=-3:3, samples=\s] plot (\x,5*3.14/10,{(\a/(0.25*2.5))*exp(-0.5*((\x-2.000)/0.25)*((\x-2.000)/0.25))});
\filldraw[fill=white, draw=black,smooth,domain=-3:3, samples=\s] plot (\x,6*3.14/10,{(\a/(0.4*2.5))*exp(-0.5*((\x-1.902)/0.4)*((\x-1.902)/0.4))});
\filldraw[fill=white, draw=black,smooth,domain=-3:3, samples=\s] plot (\x,7*3.14/10,{(\a/(0.4*2.5))*exp(-0.5*((\x-1.618)/0.4)*((\x-1.618)/0.4))});
\filldraw[fill=white, draw=black,smooth,domain=-3:3, samples=\s] plot (\x,8*3.14/10,{(\a/(0.3*2.5))*exp(-0.5*((\x-1.176)/0.3)*((\x-1.176)/0.3))});
\filldraw[fill=white, draw=black,smooth,domain=-3:3, samples=\s] plot (\x,9*3.14/10,{(\a/(0.2*2.5))*exp(-0.5*((\x-0.618)/0.2)*((\x-0.618)/0.2))});
\filldraw[fill=white, draw=black,smooth,domain=-3:3, samples=\s] plot (\x,10*3.14/10,{(\a/(0.15*2.5))*exp(-0.5*((\x-0.000)/0.15)*((\x-0.000)/0.15))});
\filldraw[fill=white, draw=black,smooth,domain=-3:3, samples=\s] plot (\x,11*3.14/10,{(\a/(0.13*2.5))*exp(-0.5*((\x+0.618)/0.13)*((\x+0.618)/0.13))});
\filldraw[fill=white, draw=black,smooth,domain=-3:3, samples=\s] plot (\x,12*3.14/10,{(\a/(0.2*2.5))*exp(-0.5*((\x+1.176)/0.2)*((\x+1.176)/0.2))});
\filldraw[fill=white, draw=black,smooth,domain=-3:3, samples=\s] plot (\x,13*3.14/10,{(\a/(0.25*2.5))*exp(-0.5*((\x+1.618)/0.25)*((\x+1.618)/0.25))});
\filldraw[fill=white, draw=black,smooth,domain=-3:3, samples=\s] plot (\x,14*3.14/10,{(\a/(0.3*2.5))*exp(-0.5*((\x+1.902)/0.3)*((\x+1.902)/0.3))});
\filldraw[fill=white, draw=black,smooth,domain=-3:3, samples=\s] plot (\x,15*3.14/10,{(\a/(0.2*2.5))*exp(-0.5*((\x+2.000)/0.2)*((\x+2)/0.2))});
\filldraw[fill=white, draw=black,smooth,domain=-3:3, samples=\s] plot (\x,16*3.14/10,{(\a/(0.3*2.5))*exp(-0.5*((\x+1.902)/0.3)*((\x+1.902)/0.3))});
\filldraw[fill=white, draw=black,smooth,domain=-3:3, samples=\s] plot (\x,17*3.14/10,{(\a/(0.35*2.5))*exp(-0.5*((\x+1.618)/0.35)*((\x+1.618)/0.35))});
\filldraw[fill=white, draw=black,smooth,domain=-3:3, samples=\s] plot (\x,18*3.14/10,{(\a/(0.4*2.5))*exp(-0.5*((\x+1.176)/0.4)*((\x+1.176)/0.4))});
\filldraw[fill=white, draw=black,smooth,domain=-3:3, samples=\s] plot (\x,19*3.14/10,{(\a/(0.45*2.5))*exp(-0.5*((\x+0.618)/0.45)*((\x+0.618)/0.45))});
\filldraw[fill=white, draw=black,smooth,domain=-3:3, samples=\s] plot (\x,20*3.14/10,{(\a/(0.4*2.5))*exp(-0.5*((\x+0)/0.4)*((\x+0)/0.4))});
\filldraw[fill=white, draw=black,smooth,domain=-3:3, samples=\s] plot (\x,21*3.14/10,{(\a/(0.4*2.5))*exp(-0.5*((\x-0.618)/0.4)*((\x-0.618)/0.4))});
\filldraw[fill=white, draw=black,smooth,domain=-3:3, samples=\s] plot (\x,22*3.14/10,{(\a/(0.3*2.5))*exp(-0.5*((\x-1.176)/0.3)*((\x-1.176)/0.3))});
\filldraw[fill=white, draw=black,smooth,domain=-3:3, samples=\s] plot (\x,23*3.14/10,{(\a/(0.2*2.5))*exp(-0.5*((\x-1.618)/0.2)*((\x-1.618)/0.2))});
\filldraw[fill=white, draw=black,smooth,domain=-3:3, samples=\s] plot (\x,24*3.14/10,{(\a/(0.25*2.5))*exp(-0.5*((\x-1.902)/0.25)*((\x-1.902)/0.25))});
\filldraw[fill=white, draw=black,smooth,domain=-3:3, samples=\s] plot (\x,25*3.14/10,{(\a/(0.3*2.5))*exp(-0.5*((\x-2.000)/0.3)*((\x-2.000)/0.3))});
\filldraw[fill=white, draw=black,smooth,domain=-3:3, samples=\s] plot (\x,26*3.14/10,{(\a/(0.35*2.5))*exp(-0.5*((\x-1.902)/0.35)*((\x-1.902)/0.35))});
\filldraw[fill=white, draw=black,smooth,domain=-3:3, samples=\s] plot (\x,27*3.14/10,{(\a/(0.4*2.5))*exp(-0.5*((\x-1.618)/0.4)*((\x-1.618)/0.4))});
\filldraw[fill=white, draw=black,smooth,domain=-3:3, samples=\s] plot (\x,28*3.14/10,{(\a/(0.5*2.5))*exp(-0.5*((\x-1.176)/0.5)*((\x-1.176)/0.5))});
\filldraw[fill=white, draw=black,smooth,domain=-3:3, samples=\s] plot (\x,29*3.14/10,{(\a/(0.6*2.5))*exp(-0.5*((\x-0.618)/0.6)*((\x-0.618)/0.6))});
\draw[dotted,
color=black, domain=0:9.106, samples=30, color=black] plot ({2*sin(\x r)},\x,0);
%
\end{tikzpicture}
\caption{Illustration of the time evolution of the wave function} \label{abb_evolution}
\end{figure}
Given a self-adjoint operator $H$, the solution of the Schr\"o\-ding\-er equation has a very complicated structure. However, it turns out that the (time)-asymptotic behaviour of the wave function has something to do with spectral properties of the operator $H$. This is formulated in the so-called RAGE-theorem, see Theorem~\ref{theorem:rage} below.
\par
Let $H$ be a self-adjoint operator on some Hilbert space $\mathcal{H}$. For $\psi \in \mathcal{H}$ we denote by $\mu_\psi$ the spectral measure and define the \emph{absolutely continuous}, \emph{singular continuous} and the \emph{pure point} subspace of $\mathcal{H}$ by
\begin{align*}
 \mathcal{H}_{\rm ac} &=  \mathcal{H}_{\rm ac} (H) = \{\psi \in \mathcal{H} \mid \mu_\psi \ \text{is absolutely continuous}\}, \\
\mathcal{H}_{\rm sc} &=\mathcal{H}_{\rm sc} (H) =  \{\psi \in \mathcal{H} \mid \mu_\psi \ \text{is singularly continuous}\}, \\
\mathcal{H}_{\rm pp} &= \mathcal{H}_{\rm pp} (H) = \{\psi \in \mathcal{H} \mid \mu_\psi \ \text{is pure point}\},
\end{align*} 
see e.g.\ \cite{Teschl-09} for more details. We also define the \emph{continuous} subspace by $\mathcal{H}_{\rm c} =  \mathcal{H}_{\rm ac} \oplus  \mathcal{H}_{\rm sc}$. The subspaces $\mathcal{H}_\bullet$, $\bullet \in \{\rm ac, sc, pp\}$ reduce $H$ \cite{Weidmann-00,Weidmann-03}, and the \emph{absolutely continuous}, \emph{singular continuous} and \emph{pure point spectrum}\index{spectrum}\index{spectrum!absolutely continuous}\index{spectrum!singular continuous}\index{spectrum!pure point} of $H$ are defined by
\[
 \sigma_{\rm ac} (H) = \sigma (H|_{\mathcal{H}_{\rm ac}}), \quad
 \sigma_{\rm sc} (H) = \sigma (H|_{\mathcal{H}_{\rm sc}}) \quad \text{and} \quad
 \sigma_{\rm pp} (H) = \sigma (H|_{\mathcal{H}_{\rm pp}}).
\]
The \emph{continuous spectrum}\index{spectrum!continuous} of $H$ is defined by $\sigma_{\rm c} (H) = \sigma_{\rm ac} (H) \cup \sigma_{\rm sc} (H)$. It follows from these definitions that $$\mathcal{H} = \mathcal{H}_{\rm ac} \oplus \mathcal{H}_{\rm sc} \oplus \mathcal{H}_{\rm pp}$$ and hence $\sigma (H) = \sigma_{\rm ac} (H) \cup \sigma_{\rm sc} (H) \cup \sigma_{\rm pp} (H)$. Note that $\sigma_{\rm pp} (H)$ is not the set of all eigenvalues of $H$, but its closure. The introduced decomposition of the Hilbert space and the spectrum has a physical interpretation. Roughly speaking, if the initial state $\psi_0$ is an element of $\mathcal{H}_{\rm pp}$, then the wave function given by Eq.~\eqref{eq:wave_function} (and so the particle) will stay in a compact region of space for all time. On the other hand, if $\psi_0 \in \mathcal{H}_{\rm c}$, then the wave function will leave any compact set in space in the average of time. This is formulated precisely in the so-called RAGE-theorem, named after D.~Ruelle \cite{Ruelle-69}, W.~Amrein and V.~Georgescu \cite{AmreinG-73}, and V.~Enss \cite{Enss-78}. For a proof of the RAGE-Theorem we refer the reader to the books \cite{Teschl-09,Weidmann-03}.
\begin{theorem}[RAGE] \label{theorem:rage}
Let $H$ be a self-adjoint operator in some Hilbert space $\mathcal{H}$. Suppose $K_n$, $n \in \NN$, is a sequence of bounded linear operators in $\mathcal{H}$ which converges strongly to the identity and assume that for each $n$, $K_n$ is relatively compact with respect to $H$. Then
\begin{align}
 \mathcal{H}_{\rm c} &= \Bigl\{ \psi \in \mathcal{H} \mid \lim_{n \to \infty} \lim_{T \to \infty} \frac{1}{T} \int_{0}^T \lVert K_n \euler^{-\i t H} \psi \rVert \drm t = 0 \Bigr \}, \label{eq:time_mean} \\[1ex]
\mathcal{H}_{\rm pp} &= \Bigl\{ \psi \in \mathcal{H} \mid \lim_{n \to \infty} \sup_{t \geq 0}  \lVert (1-K_n) \euler^{-\i t H} \psi \rVert = 0 \Bigr \}  \nonumber.
\end{align}
Moreover, if $\psi \in \mathcal{H}_{\rm ac}$, then for any $n \in \NN$ we have
\[
 \lim_{t \to \infty} \lVert K_n \euler^{-\i t H} \psi \rVert = 0.
\]
\end{theorem}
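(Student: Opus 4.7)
The plan is to prove the first two characterizations simultaneously by introducing the auxiliary sets
\[
\mathcal{M} = \Bigl\{\psi \in \mathcal{H} : \lim_{n\to\infty}\sup_{t\geq 0}\lVert(1-K_n)\euler^{-\i t H}\psi\rVert = 0\Bigr\},
\]
\[
\mathcal{N} = \Bigl\{\psi \in \mathcal{H} : \lim_{n\to\infty}\lim_{T\to\infty}\frac{1}{T}\int_0^T \lVert K_n\euler^{-\i t H}\psi\rVert\drm t = 0\Bigr\},
\]
which are readily seen to be closed linear subspaces (using $\sup_n\lVert K_n\rVert<\infty$ from Banach--Steinhaus) with $\mathcal{M}\cap\mathcal{N}=\{0\}$, since $\psi\in\mathcal{M}$ forces $\lVert K_n\euler^{-\i t H}\psi\rVert\to\lVert\psi\rVert$ uniformly in $t$. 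Once I establish the inclusions $\mathcal{H}_{\rm pp}\subseteq\mathcal{M}$ and $\mathcal{H}_{\rm c}\subseteq\mathcal{N}$, the decomposition $\mathcal{H}=\mathcal{H}_{\rm pp}\oplus\mathcal{H}_{\rm c}$ forces equality in both cases.

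For $\mathcal{H}_{\rm pp}\subseteq\mathcal{M}$ I would expand $\psi=\sum_j c_j\phi_j$ in an eigenbasis of $H$ and truncate to $\psi_N=\sum_{j=1}^N c_j\phi_j$: the orbit $\{\euler^{-\i t H}\psi_N: t\in\RR\}$ lies in a bounded subset of the finite-dimensional space $\mathrm{span}\{\phi_1,\ldots,\phi_N\}$, on which strong operator convergence $K_n\to I$ is uniform. For $\mathcal{H}_{\rm c}\subseteq\mathcal{N}$ the essential ingredient is Wiener's theorem: for any finite complex Borel measure $\nu$ on $\RR$,
\[
\lim_{T\to\infty}\frac{1}{T}\int_0^T\lvert\hat{\nu}(t)\rvert^2\drm t = \sum_{\lambda\in\RR}\lvert\nu(\{\lambda\})\rvert^2,
\]
which vanishes precisely when $\nu$ is continuous. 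After approximating $\psi\in\mathcal{H}_{\rm c}$ by some $\tilde\psi\in\mathcal{H}_{\rm c}\cap\cD(H)$ (dense, since $\mathcal{H}_{\rm c}$ reduces $H$) and rewriting
\[
K_n\euler^{-\i t H}\tilde\psi = K_n(H+\i)^{-1}\euler^{-\i t H}(H+\i)\tilde\psi,
\]
the operator $K_n(H+\i)^{-1}$ is compact by the relative-compactness hypothesis and can be norm-approximated by finite rank operators; applying Wiener's theorem to each rank-one component $\lvert\phi\rangle\langle\phi\rvert$, together with the Cauchy--Schwarz-type bound $\lvert\mu_{\phi,\xi}(\{\lambda\})\rvert^2\leq\mu_\phi(\{\lambda\})\mu_\xi(\{\lambda\})$, produces the vanishing Cesàro average.

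The absolutely continuous statement uses the very same reduction scheme, only with Wiener's theorem replaced by the Riemann--Lebesgue lemma: the cross spectral measure $\mu_{\phi,(H+\i)\tilde\psi}$ is absolutely continuous with respect to Lebesgue measure whenever $\tilde\psi\in\mathcal{H}_{\rm ac}$, so $\hat{\mu}_{\phi,(H+\i)\tilde\psi}(t)\to 0$ as $\lvert t\rvert\to\infty$, while the $\varepsilon$-approximation error is uniform in $t$ by virtue of $\lVert\euler^{-\i t H}\rVert=1$. The main obstacle throughout will be the relative-compactness hypothesis: had each $K_n$ been genuinely compact, the argument would proceed by direct finite rank approximation, but since only $K_n(H+\i)^{-1}$ is compact one must insert the intertwining relation $(H+\i)\euler^{-\i t H}\tilde\psi=\euler^{-\i t H}(H+\i)\tilde\psi$ and carefully track the approximation of $\psi$ by a vector in $\cD(H)$; once this bookkeeping is in place, Wiener's theorem and Riemann--Lebesgue do the real work.
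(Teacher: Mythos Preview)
The paper does not actually prove Theorem~\ref{theorem:rage}; immediately before the statement it says ``For a proof of the RAGE-Theorem we refer the reader to the books \cite{Teschl-09,Weidmann-03}.'' Your outline is precisely the standard textbook argument one finds there: reduce to compact operators via $K_n(H+\i)^{-1}$ and the intertwining $\euler^{-\i tH}(H+\i)=(H+\i)\euler^{-\i tH}$, then approximate by finite rank and invoke Wiener's theorem (for $\mathcal{H}_{\rm c}$) respectively Riemann--Lebesgue (for $\mathcal{H}_{\rm ac}$). The logical scheme of introducing $\mathcal{M},\mathcal{N}$, showing the two easy inclusions and $\mathcal{M}\cap\mathcal{N}=\{0\}$, and then using $\mathcal{H}=\mathcal{H}_{\rm pp}\oplus\mathcal{H}_{\rm c}$ together with linearity of $\mathcal{M}$ to force equality, is correct and clean (the key step being $\psi\in\mathcal{M}\Rightarrow\psi-\psi_{\rm pp}\in\mathcal{M}\cap\mathcal{H}_{\rm c}\subseteq\mathcal{M}\cap\mathcal{N}=\{0\}$). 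One small point worth tightening: for $\mathcal{N}$ to be manifestly a linear subspace it is cleanest to replace the inner $\lim_{T\to\infty}$ by $\limsup_{T\to\infty}$ in its definition, since the existence of the Ces\`aro limit is part of what is being proved; with $\limsup$ the triangle inequality gives linearity immediately, and once you have shown $\mathcal{H}_{\rm c}\subseteq\mathcal{N}$ the genuine limit exists a posteriori.
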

Recall, an operator $K$ on $\mathcal H$ is called relatively compact with respect to an operator $H$ on $\mathcal H$ if $K (H - z)^{-1}$ is compact for one $z \in \rho (H)$.
\par
Let us again consider our quantum mechanical particle (electron) moving in $\RR^d$ under the influence of the potential $V$. The time evolution of the particle is governed by the Schr\"o\-ding\-er equation with the operator $H = -\Delta + V$ acting in $\mathcal{H} = L^2 (\RR^d)$. We assume that $V$ is relatively bounded with respect to $\Delta$ with relative bound smaller than one. As a consequence we have that $\chi_K (H - z)^{-1}$ is compact for any compact $K \subset \RR^d$, see e.g. \cite{Weidmann-03}. The RAGE-Theorem tells us, if $\psi_0 \in \mathcal{H}_{\rm pp}$, then for any $\epsilon > 0$ there exists a compact set $K_\epsilon \subset \RR^d$, such that
\begin{equation} \label{eq:localized}
 \int_{K_\epsilon} \bigl\lvert (\euler^{-\i t H} \psi_0)(x) \bigr\rvert^2 \drm x \geq 1-\epsilon \quad \forall t \geq 0.
\end{equation}
This means that the probability of finding the particle in $K_\epsilon$ stays larger or equal to $1-\epsilon$ for all $t$; the particle is \emph{localized}! In the special case $\sigma_{\rm c} (H) = \emptyset$ we have $\mathcal{H} = \mathcal{H}_{\rm pp}$, and thus the assertion \eqref{eq:localized} holds true for all initial states $\psi_0 \in \mathcal{H}$. In contrast to that, if $\psi_0 \in \mathcal{H}_{\rm ac}$, we have for all compact sets $K \subset \RR^d$ that
\begin{equation} \label{eq:delocalized}
\int_{K} \bigl\lvert (\euler^{-\i t H} \psi_0)(x) \bigr\rvert^2 \drm x  \to 0 \quad \text{as} \ t \to \infty.
\end{equation}
The probability of finding the particle in $K$ tends to zero as time tends to infinity; the particle is \emph{delocalized}! For $\psi_0 \in \mathcal{H}_{\rm c}$ the particle leaves any compact set in time mean as Eq.~\eqref{eq:time_mean} shows. For this reason, elements of the set $\mathcal{H}_{\rm pp}$ are called \emph{bound states} and elements of the set $\mathcal{H}_{\rm c}$ are called \emph{scattering states in time mean}, and the set of initial states $\psi_0 \in \mathcal{H}$ for which Eq.~\eqref{eq:delocalized} holds are called \emph{scattering states}. Hence, $\mathcal{H}_{\rm ac}$ is a subset of the scattering states. Let us end this section with the classical example, the hydrogen atom.
\begin{example}[Hydrogen atom]
The hydrogen atom consists of one proton and one electron in $\RR^3$. For simplicity we assume that the proton sits at the origin and the electron moves under the influence of the Coulomb potential. The corresponding Schr\"o\-ding\-er operator is $H = -\Delta + V$ on $L^2 (\RR^3)$, where $\Delta$ is the Laplace operator and $V$ is the multiplication operator by the function $V(x) = -C/\lvert x \rvert$ with some constant $C$. The spectrum of $H$ consists of eigenvalues (belonging to the pure point part) $E_n = -c/n^2$, $n \in \NN$, with some constant $c$, and absolutely continuous spectrum in the interval $[0,\infty)$. The negative eigenvalues correspond to bound states. Electrons in these states will stay in a finite region around the proton for all time, the so-called orbitals. Electrons in states belonging to the absolutely continuous subspace correspond to scattering states, they are called free electrons and will leave any finite region in space if time tends to infinity.
\end{example}
\section{Random operators} \label{sec:rand_operators}
The amount of literature on random operators is huge, see \cite{CarmonaL-90,PasturF-92} and the references therein. In this section we introduce only some idea of random operators, and in doing so we sometimes follow the line of reasoning of the introductions of \cite{Klein-08,Hislop-08}.
\par
A strong form of idealization in solid state physics is to deal with ideal crystals in the so called one-electron approximation. In order to model the electronic properties of such a crystal one considers one electron moving in a periodic lattice of atoms. The potential of the corresponding Schr\"o\-ding\-er operator $H_{\rm per}$ would be a periodic function. Under some mild regularity assumption on the periodic potential, it is known that $H_{\rm per}$ has only absolutely continuous spectrum, see e.g.\ \cite{ReedS-78d}. By the RAGE-theorem, the underlying Hilbert space consists only of scattering states, and the crystal may have good transport properties.
\par
However, real materials will have distortions (e.g.\ dislocations, vacancies, presence of impurity atoms) which may be assumed to be randomly distributed through the material. Their modeling leads to the study of a family of self-adjoint operators $H_\omega$, $\omega \in \Omega$, where each configuration $\omega$ corresponds to one individual realization of the (random) medium. One assumes additionally that these configurations are distributed according to a measure $\PP$ on the space $\Omega$ of all possible configurations. It turns out that random operators modeling disordered systems obey a different behavior, namely the phenomenon of localization, than periodic operators modeling periodic systems. This will be discussed in the next section. Here we consider some examples of random operators.
\par
The simplest model that describes the time-evolution of a single electron in a random environment is the Anderson model \cite{Anderson-58}, named after P.~W.~Anderson who won (together with S.~N.~F.~Mott and J.~van~Vleck) the Nobel Prize for his investigations of the electronic structure of magnetic and disordered systems. First we define the probability space $\Omega = \times_{k \in \ZZ^d} \RR$ equipped with the $\sigma$-algebra generated by the cylinder sets and the product measure $\PP = \prod_{k \in \ZZ^d} \nu$, where $\nu$ is some probability measure on $\RR$ with bounded support. Hence, the projections $\Omega \ni \omega = (\omega_k)_{k \in \ZZ^d} \mapsto \omega_j$, $j \in \ZZ^d$, give rise to a collection of independent identically distributed bounded real random variables. The Anderson model is given by the family of self-adjoint operators $H_\omega^{\rm A}$, $\omega \in \Omega$, on $\ell^2 (\ZZ^d)$ defined by
\[
 H_\omega^{\rm A} = -\Delta + \lambda V_\omega^{\rm A}
\]
with $\lambda \geq 0$. Here $\Delta : \ell^2 (\ZZ^d) \to \ell^2 (\ZZ^d)$ denotes the discrete Laplacian and $V_\omega:\ell^2 (\ZZ^d) \to \ell^2 (\ZZ^d)$ is a multiplication operator. They are defined by
\[
 (\Delta \psi)(x) = \sum_{\lvert y-x \rvert_{1} = 1} \psi (y) \quad \text{and} \quad (V_\omega^{\rm A} \psi)(x) = \omega_x \psi (x) .
\]
The parameter $\lambda$ measures the strength of the interaction and hence is a measure of the disorder present in the model. Notice that the Anderson model is defined on the Hilbert space $\ell^2 (\ZZ^d)$. However, we introduced the basic concept of quantum mechanics in $L^2 (\RR^d)$ in Section \ref{sec:foundations}. But all the considerations of Section \ref{sec:foundations} hold true also for the discrete setting, with the wave function replaced by a function $\psi : \ZZ^d \times \RR \to \CC$ and consequently some integrals replaced by sums. The Anderson model has been studied, e.g., in \cite{FroehlichS-83,FroehlichMSS-85,DreifusK-89,AizenmanM-93,Aizenman-94,Graf-94,Hundertmark-00,AizenmanFSH-01}.
\par
A second example for a random operator modeling one electron in a random environment is the alloy-type model. This is the family of self-adjoint operators $H_\omega^{\rm B}$, $\omega \in \Omega$, on $L^2 (\RR^d)$ defined by $H_\omega^{\rm B} = -\Delta + V_{0} +  V_\omega^{\rm B}$, where $\Delta$ denotes the Laplace operator, $V_{0}$ is some $\ZZ^d$-periodic potential and $V_\omega^{\rm B}$ is the multiplication operator by the function
\[
 V_\omega^{\rm B} (x) = \sum_{k \in \ZZ^d} \omega_k U( x-k ) .
\]
The function $U : \RR^d \to \RR$ is called single-site potential. Physically one can think of a lattice of atoms sitting at the lattice sites of $\ZZ^d$, each atom at $k \in \ZZ^d$ producing the electric potential $U(\cdot - k)$ in space. One then assumes that the electron at $x \in \RR^d$ couples differently strong to the single-site potentials of different atoms. The alloy-type model is defined precisely in Section \ref{sec:cont_model}. The case where the single-site potential is non-negative has been studied in a number of articles, e.g.\ \cite{CombesH-94,KirschSS-98b,GerminetK-01, DamanikS-01,AizenmanENSS-06}.
\par
As a third example we refer to a discrete analogue of the alloy-type model, the discrete alloy-type model. This model will be in the focus of this thesis and we will introduce it in Section \ref{sec:model}. Let us emphasize that in the discrete alloy-type model, as well as in the alloy-type model, one distinguishes the case where the single-site potential is non-negative (monotone case) and the case where it is sign-indefinite (non-monotone case). In the monotone case, the quadratic form corresponding to the operator depends in a monotonic way on the random parameters. The existing proofs of localization for such random operators strongly rely on this fact. The non-monotone case requires new methods and has for example been studied in \cite{Klopp-95a,HislopK-02,Klopp-02,Veselic-02a,KostrykinV-06,KloppN-09,Veselic-10a,Veselic-10b,ElgartTV-10,ElgartTV-11,PeyerimhoffTV-11,Krueger-11}. The focus of this thesis is to develop the fractional moment method and a proof for the Wegner estimate for the discrete alloy-type model with sign-changing single-site potential.
\par
A fourth example for a random Schr\"o\-ding\-er operator is the so-called random displacement model. This is the Schr\"o\-ding\-er operator
\[
 H_\omega^{\rm D} = -\Delta + V_\omega^{\rm D}
\]
on $L^2 (\RR^d)$ where the random potential is of the form 
\[
 V_\omega^{\rm D} (x) = \sum_{k \in \ZZ^d} U(x-k-\omega_k). 
\]
The random variables $\omega_k$, $k \in \ZZ^d$ are assumed to be independent and identically distributed random variables with values in $\RR^d$ and $U : \RR^d \to \RR$ is some single-site potential. This potential models a random perturbation from the periodic potential $\sum_{k\in \ZZ^d} U(\cdot -k)$ where the atoms are displaced randomly. Therefore one also speaks of structural disorder. Similarly to the alloy-type model (discrete and continuous) with sign-changing single-site potential, the random displacement model does not have any obvious monotonicity properties with respect to the random parameters. But it is just such a monotonicity property which is frequently used in the proofs of localization! The random displacement model has been studied, e.g., in \cite{BakerLS-08,KloppLNS-11,KloppLNS-11b}.
\par
There are more models modeling properties of disordered media. We refer to \cite{Klein-08} where the most prominent models are introduced, and the references therein for a deeper insight into these models. 
\par
As the random operators introduced above model the quantum mechanical behavior of a single electron in a disordered solid, we are not interested in properties of the electron for one single configuration of the randomness. Rather we are interested in properties which hold for almost all configurations $\omega \in \Omega$. One fundamental example for such a property is the spectrum. To be precise, let us note that all models introduced above share the property that they are ergodic random operators in the following sense.
\begin{definition}
An ergodic random operator is a $\ZZ^d$-ergodic measurable map $\omega \mapsto A_\omega$ from a probability space $(\Omega , \mathcal{F} , \PP)$ to the set of all self-adjoint operators on either $L^2 (\RR^d)$ or $\ell^2 (\ZZ^d)$.
\end{definition}
To explain the notion of measurability of such operator valued functions let $E_\omega (\lambda)$ be the corresponding resolution of identity. One calls a map $\omega \mapsto A_\omega$ from a probability space $(\Omega , \mathcal{F} , \PP)$ to the set of all self-adjoint operators on either $L^2 (\RR^d)$ or $\ell^2 (\ZZ^d)$ measurable if the function $\omega \mapsto E_\omega (\lambda)$ is weakly measurable for each $\lambda \in \RR$. Such a map is called $\ZZ^d$-ergodic if there is a family $\{T_i\}_{i \in \ZZ^d}$ of measure-preserving transformations and unitary operators $\{U_i\}_{i \in \ZZ^d}$ such that $$A_{T_i \omega} = U_i A_\omega U_i^*$$ for all $i \in \ZZ^d$. For details we refer the reader to \cite{KirschM-82,CarmonaL-90}. 
A consequence of this ergodicity property is that the spectrum is a deterministic set. 
\begin{theorem}
 Let $(A_\omega)_{\omega \in \Omega}$ be an ergodic random operator. Then there are sets $\Sigma, \Sigma_{\rm pp} , \Sigma_{\rm c}, \Sigma_{\rm ac}, \Sigma_{\rm sc} \subset \RR$, such that for almost all $\omega \in \Omega$ we have
\[
 \sigma (A_\omega) = \Sigma , \quad \text{and} \quad \sigma_\bullet (A_\omega) = \Sigma_\bullet , \quad \bullet \in \{\rm pp , \rm c , \rm ac , \rm sc \} .
\]
The set $\Sigma$ is called almost sure or deterministic spectrum of $(A_\omega)_{\omega\in \Omega}$.
\end{theorem}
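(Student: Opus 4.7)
The plan is to exploit the covariance $A_{T_i\omega}=U_i A_\omega U_i^*$ together with ergodicity of $\{T_i\}_{i\in\ZZ^d}$. Since the unitaries $U_i$ intertwine the operators, they also intertwine the whole functional calculus: for every Borel set $B\subset\RR$ we have $E_{T_i\omega}(B)=U_i E_\omega(B)U_i^*$, where $E_\omega(\cdot)$ denotes the resolution of the identity of $A_\omega$. In particular $\sigma(A_{T_i\omega})=\sigma(A_\omega)$ and, provided the corresponding spectral projections are accessible, $\sigma_\bullet(A_{T_i\omega})=\sigma_\bullet(A_\omega)$ for each spectral part.

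First I would prove that $\sigma(A_\omega)$ is almost surely constant. Fix a countable basis $\mathcal{B}$ of the topology of $\RR$ consisting of open intervals with rational endpoints. The key observation is that $\lambda\notin\sigma(A_\omega)$ if and only if there exists $I\in\mathcal{B}$ with $\lambda\in I$ and $E_\omega(I)=0$. For every $I\in\mathcal{B}$ the set $\Omega_I=\{\omega:E_\omega(I)=0\}$ is measurable, by the weak measurability assumption on $\omega\mapsto E_\omega(\lambda)$, and the covariance identity shows that it is invariant under every $T_i$. Ergodicity of the $\ZZ^d$-action therefore yields $\PP(\Omega_I)\in\{0,1\}$. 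Setting
\[
\Sigma=\RR\setminus\bigcup\{I\in\mathcal{B}:\PP(\Omega_I)=1\},
\]
and intersecting the countably many full-measure events $\Omega_I$ (for $\PP(\Omega_I)=1$) with their complements (for $\PP(\Omega_I)=0$) produces a set of full $\PP$-measure on which $\sigma(A_\omega)=\Sigma$.

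The arguments for $\Sigma_{\rm ac}$, $\Sigma_{\rm sc}$ and $\Sigma_{\rm pp}$ follow the same scheme, once one has measurability of the corresponding spectral projections $P^\bullet_\omega$ onto $\mathcal{H}_\bullet(A_\omega)$, together with the covariance $P^\bullet_{T_i\omega}=U_i P^\bullet_\omega U_i^*$. This is the step I expect to require the most care: it can be obtained by performing the Lebesgue decomposition of each scalar spectral measure $\mu^\omega_\phi$ for $\phi$ ranging over a countable dense family $\{\phi_n\}\subset\mathcal{H}$, exploiting that $\omega\mapsto\mu^\omega_{\phi_n}(B)$ is measurable for Borel $B\subset\RR$ by the weak measurability of $E_\omega(\cdot)$, and then verifying that the absolutely continuous, singular continuous and pure point parts depend measurably on $\omega$. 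With the projections in hand, the restricted operators $A_\omega|_{\mathcal{H}_\bullet(A_\omega)}$ are again ergodic random operators in the sense of the definition, so applying the first step to each of them yields the deterministic sets $\Sigma_\bullet$. The identity $\Sigma_{\rm c}=\Sigma_{\rm ac}\cup\Sigma_{\rm sc}$ is then immediate from the definition of the continuous spectrum.
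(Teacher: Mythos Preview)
The paper does not actually prove this theorem: it merely states it and refers to \cite{Pastur-80,KunzS-80} for the total spectrum and to \cite{KirschM-82} for the extension to the spectral components. Your sketch is essentially the classical argument from those references, so it is in line with what the paper cites.

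One point to tighten: your last step says that the restricted operators $A_\omega|_{\mathcal{H}_\bullet(A_\omega)}$ are ``again ergodic random operators in the sense of the definition''. Strictly speaking they are not, because the definition requires operators on a \emph{fixed} Hilbert space ($L^2(\RR^d)$ or $\ell^2(\ZZ^d)$), whereas the subspaces $\mathcal{H}_\bullet(A_\omega)$ vary with $\omega$. The clean way around this is to stay on the full Hilbert space and run the same $0$--$1$ argument with the projections $P^\bullet_\omega\,E_\omega(I)$ in place of $E_\omega(I)$: the events $\{\omega:P^\bullet_\omega\,E_\omega(I)=0\}$ are measurable (this is exactly where the measurability of $\omega\mapsto P^\bullet_\omega$ you discuss is needed) and $T_i$-invariant by the covariance $P^\bullet_{T_i\omega}=U_i P^\bullet_\omega U_i^*$, and the rest of the argument goes through verbatim to produce $\Sigma_\bullet$.
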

The proof of this theorem goes back to \cite{Pastur-80,KunzS-80} where operators on $\ell^2 (\ZZ^d)$ were considered, and to \cite{KirschM-82} where the extension to each part of the spectrum and general random self-adjoint operators was achieved.
\section{Phenomenon of localization} \label{sec:phenomenon_loc}
Once we know that ergodic random operators have almost surely a deterministic spectrum (and also non-random components), we can ask for the spectral types. As already discussed, periodic Schr\"o\-ding\-er operators modeling ideal crystals have purely absolutely continuous spectrum. By the RAGE theorem the wave function leaves every compact region in space if time tends to infinity. The wave function is delocalized and the crystal may have good transport properties. 
\par
Random operators behave differently. There are energy intervals $I$ such that for almost all configurations of the randomness, the wave function corresponding to the energy interval $I$ stays trapped in a finite region of space for all time. It is localized and one can think of materials not having good transport properties. This manifests in the sense that the spectrum in $I$ is almost surely only of pure point type and the phenomenon behind this is known as Anderson localization, spectral localization or exponential localization. 
\begin{definition} \label{def_spectral_loc}
Let $I \subset \RR$. A \index{self-adjoint}self-adjoint operator $H$ on either $L^2 (\RR^d)$ or $\ell^2 (\ZZ^d)$ is said to exhibit \emph{spectral localization in $I$}, if the spectrum of $H$ in $I$ is only of pure point type, i.e.\ $\sigma_{\rm c} (H) \cap I = \emptyset$. If, additionally, the eigenfunctions of $H$ corresponding to the eigenvalues in $I$ decay exponentially we say that $H$ exhibits \emph{exponential localization in $I$}. If $I=\RR$, we say that
$H$ exhibits \emph{spectral localization} or \emph{exponential localization}, depending whether the eigenfunctions decay exponentially or not.
\end{definition}
Beside the spectral interpretation, there are also interpretations of localization from the dynamical point of view, called dynamical localization. Dynamical localization for the Anderson model on $\ZZ^d$ was first shown in \cite{Aizenman-94}.
\begin{definition}
 Let $I \subset \RR$ and $\mathcal H$ be either $L^2 (\RR^d)$ or $\ell^2 (\ZZ^d)$. A self-adjoint operator $H$ on $\mathcal H$ is said to exhibit \emph{dynamical localization in $I$}, if for every $\psi_0 \in \mathcal{H}$ with compact support and all $p \geq 0$ we have
\[
 \sup_{t \in \RR} \bigl\lVert \lvert x \rvert^p \euler^{- \i H t} \chi_I (H) \psi_0 \bigr\rVert < \infty ,
\]
where $\chi_I:\RR \to \{0,1\}$ denotes the characteristic function of the set $I$. If $I=\RR$, we say that $H$ exhibits \emph{dynamical localization}.
\end{definition}
%
Let us note that dynamical localization implies spectral localization by the RAGE-theorem, see e.g.\ \cite{Stolz-10}, but not vice versa as examples in \cite{RioJLS1996} show. There are various notions of dynamical localization (for example strong dynamical localization), see e.g.\ \cite{Klein-08}.
\par
Let us discuss the picture of localization using the example of the Anderson model $H_\omega^{\rm A} : \ell^2 (\ZZ^d) \to \ell^2 (\ZZ^d)$, with $\nu$ being the uniform distribution on $[-1,1]$, introduced in the previous section. If the disorder parameter $\lambda$ is zero, then $\sigma (H_\omega^{\rm A}) = \sigma (-\Delta) = [-2d,2d]$ and of absolutely continuous type, since the operator $\Delta$ is seen to be unitary equivalent to multiplication by the function $2 \sum_{j=1}^d \cos (2\pi k_j)$, $k_j \in [0,1]$, by the Fourier transform. More generally, it is known that the almost sure spectrum of $H_\omega^{\rm A}$ is given by 
\[
\Sigma = [-2d , 2d] + \lambda \supp \nu ,
\]
see \cite{KunzS-80}. What may be expected about the spectral type? The physical picture is the following. If the disorder $\lambda$ increases, then there are intervals near the band edges, where, for almost all $\omega \in \Omega$, $H_\omega^{\rm A}$ exhibits spectral and dynamical localization, called \emph{localization near the band edges}, while in the center of the band there is still absolutely continuous spectrum. If the disorder $\lambda$ is sufficiently large, then, for almost all $\omega \in \Omega$, $H_\omega^{\rm A}$ exhibits spectral and dynamical localization. To be more precise, we follow the line of reasoning in \cite{Hislop-08} and introduce the following conjectures, see also Fig.~\ref{fig:localization} for an illustration of the disorder/energy regimes of localization and delocalization. 
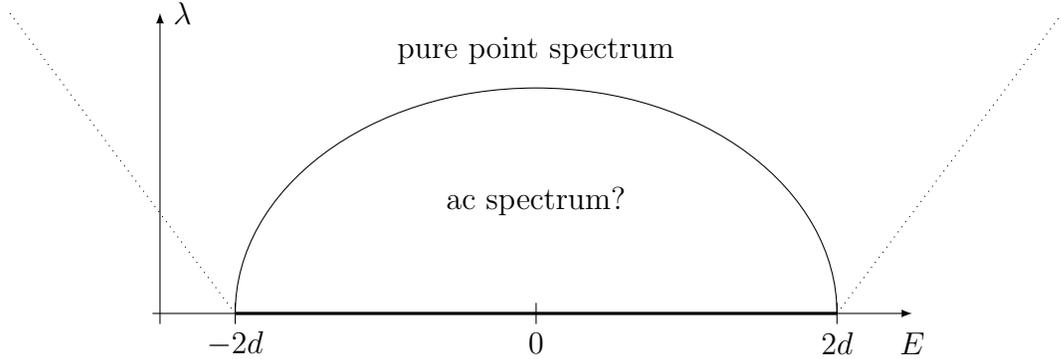
\begin{figure}\centering
 \begin{tikzpicture}
  \draw[-latex] (-5.1,0)--(5,0);
  \draw (5,-0.4) node {$E$};
  \draw[very thick] (-4,0)--(4,0);
  \draw[dotted] (-4,0)--(-7,4);
  \draw[dotted] (4,0)--(7,4);
   \draw (4,0) arc (0:180:4cm and 3cm);
  \draw[-latex] (-5,-4pt) -- (-5,4);
  \draw (-4.7,4) node {$\lambda$};
  \draw (4,4pt)--(4,-4pt);
  \draw (-4,4pt)--(-4,-4pt);
\draw (0,4pt)--(0,-4pt);
  \draw (-4,-0.4) node {$-2d$};
\draw (4,-0.4) node {$2d$};
\draw (0,-0.4) node {$0$};
\draw (0,3.5) node {pure point spectrum};
\draw (0,1.5) node {ac spectrum?};
 \end{tikzpicture}
\caption[Illustration of the energy/disorder regimes of localization and delocalization]{Illustration of the energy/disorder regimes of localization and delocalization.}
\label{fig:localization}
\end{figure}
\begin{enumerate}[(1)]
\item Fix $\lambda > 0$. Then there is $E_0 \in (2d,2d+\lambda)$ such that, for almost all $\omega \in \Omega$, the operator $H_\omega^{\rm A}$ exhibits spectral localization in $\{(-\infty , -E_0] \cup [E_0 , \infty)\}$, i.e.\ $\sigma_{\rm c} (H_\omega^{\rm A}) \cap \{(-\infty , -E_0] \cup [E_0 , \infty)\} = \emptyset$.
\item There is $\lambda_0 > 0$ such that for all $\lambda > \lambda_0$ the operator $H_\omega^{\rm A}$ exhibits spectral localization for almost all $\omega\in\Omega$.
\item Let $d = 1$ and $\lambda > 0$. Then, for almost all $\omega \in \Omega$, $H_\omega^{\rm A}$ exhibits spectral localization.

\item Let $d \geq 3$ and $\lambda>0$ sufficiently small. Then there is $E_{\rm m} = E_{\rm m} (\lambda) \leq E_0$ such that, for almost all $\omega \in \Omega$, the spectrum of $H_\omega^{\rm A}$ in $[-E_{\rm m} , E_{\rm m}]$ is absolutely continuous. 
\end{enumerate}
Statements (1), (2) and (3) are known. Localization at large disorder was first proven in \cite{GoldsteinMP-77} for a continuum random Schr\"o\-ding\-er operator in one space dimension. The groundbreaking articles towards statements (1), (2) and (3) for the Anderson model in arbitrary dimension are \cite{FroehlichS-83,FroehlichMSS-85,SimonW-86,DreifusK-89,AizenmanM-93}.
\par
Statement (4) is not yet proven. However, there are associated results for related models. On the one hand the existence of absolutely continuous spectrum was proven in \cite{Klein-98} for an Anderson model not on $\ZZ^d$, but on the Bethe lattice. Another proof of this result was given in \cite{FroeseHS-07}. Related results for the stability of absolutely continuous spectrum for Anderson models on trees were also given in \cite{AizenmanSW-06,KellerLW-11}. On the other hand the existence of absolutely continuous spectrum at small disorder has been studied for an Anderson model on $\ZZ^d$ by imposing certain decay conditions on the random potential, see e.g.\ \cite{KirschKO-00,Krishna-11}.
\section{Fractional moment method, multiscale analysis and outline of the thesis} \label{sec:methods_msa_fmm}
In Section~\ref{sec:phenomenon_loc} we discussed the phenomenon of localization, specifically for the Anderson model on $\ZZ^d$. There are exactly two methods to prove statements (1) and (2) from Section~\ref{sec:phenomenon_loc} for the multidimensional Anderson model; the \emph{multiscale analysis} and the \emph{fractional moment method}.
Let us note that there are also methods only suitable in the one-dimensional situation which are different to multiscale analysis and fractional moment method, see e.g.\ \cite{BuschmanS-00,Stolz-02}.
\par
The multiscale analysis was invented by \cite{FroehlichS-83} and further developed, e.g., in \cite{FroehlichMSS-85,DreifusK-89}. Originally the multiscale analysis yields exponential localization, later it was shown that multiscale analysis also implies dynamical localization, see e.g. \cite{GerminetB-98,DamanikS-01,GerminetK-01}. The other method, the fractional moment method, was introduced for the Anderson model on the lattice in \cite{AizenmanM-93,Aizenman-94,AizenmanFSH-01}, and extended to the continuum in \cite{AizenmanENSS-06,BoutetNSS-06}. It is believed that the fractional moment method ``requires that the conditional expectation of certain random variables have bounded densities'' \cite{Klein-08}. However, in this thesis we develop the fractional moment method for the discrete alloy-type model which does not satisfy this property in general as shown in Section~\ref{sec:regularity}.
\par
Both multiscale analysis and fractional moment method were first developed for the Anderson model $H_\omega^{\rm A}$ on $\ZZ^d$ \cite{FroehlichS-83,AizenmanM-93}. The methods were then subsequently adapted and generalized to other random operators on $\ell^2 (\ZZ^d)$ or $L^2 (\RR^d)$. Let us name a few of them.
\begin{itemize}
\item The multiscale analysis as well as the fractional moment method are developed for the Anderson model on $\ZZ^d$ where the potential values at different lattice sites are not assumed to be stochastically independent \cite{DreifusK-91,AizenmanM-93,AizenmanFSH-01}. As the discrete alloy-type model is an Anderson model on the lattice with correlated potential values, one can ask the question whether the results of \cite{DreifusK-91,AizenmanM-93,AizenmanFSH-01} apply to the discrete alloy-type model or not. However, they impose certain stringent conditions on the joint distribution of the potential values which are not satisfied for the discrete alloy-type model. This will be discussed in Section~\ref{sec:regularity} in detail.
\item The multiscale analysis was adapted to the alloy-type model $H_\omega^{\rm B}$ on $L^2 (\RR^d)$ with non-negative (monotone case) and compactly supported single-site potentials. This goes back to \cite{HoldenM-84}, see \cite{Stollmann-01,Kirsch-08} and the references therein for further advances of the multiscale analysis in this research direction. The generalization to non-compactly supported but still non-negative single-site potentials was done in \cite{Klopp-95a,KirschSS-98b}. 
\par
The fractional moment method was likewise developed for the alloy-type model with non-negative and compactly supported single-site potentials, see \cite{AizenmanENSS-06,BoutetNSS-06} for the multidimensional case and \cite{HamzaSS-10} for the one-dimensional model where localization is shown at all energies independent of the disorder strength as stated in statement (3) in Section~\ref{sec:phenomenon_loc}.
\item In 2001 Germinet and Klein \cite{GerminetK-01} developed the so-called bootstrap multiscale analysis, which yields stronger results than the original multiscale analysis and is applicable for a large class of random operators including the alloy-type model with non-negative and compactly supported single-site potential as an example. 
%
\item Both, the multiscale analysis and the fractional moment method were used to prove localization for multi-particle random Schr\"o\-ding\-er operators with sign-definite single-site potentials and sign-definite electron-electron interaction. This was done for operators on $\ell^2 (\ZZ^d)$ in \cite{ChulaevskyS-08,Kirsch-08b,ChulaevskyS-09a,ChulaevskyS-09b} via the multiscale analysis and in \cite{AizenmanW-09,AizenmanW-10} based on the fractional moment method. 
\par
For multi-particle Schr\"o\-ding\-er operators in $L^2 (\RR^d)$ with an external alloy-type potential, localization has been studied according to the multiscale analysis in \cite{BoutetCSY-10,BoutetCY-11}. The fractional moment method has not been applied to multi-particle Schr\"o\-ding\-er operators in $L^2 (\RR^d)$. Currently, Michael Fauser and Simone Warzel pursue this research direction.
\end{itemize}
Localization proofs via multiscale analysis or the fractional moment method strongly rely on the property that the random operator depends, in the sense of quadratic forms, monotonically on the random parameters. There is no physically compelling reason for a random Schr\"o\-ding\-er operator to have such a monotonicity property, and one can ask the natural question whether localization can be established if one relinquishes the monotonicity property. 
Examples for a random operator which lack monotonicity are the alloy-type model (discrete or continuous) with sign-changing single-site potential and the random displacement model. 
\par
Since the existing methods rely strongly on a monotonicity property, the phenomenon of localization is much less understood for non-monotone models than for monotone ones. In the last two decades, a lot of research has been done to close the gap between the existing results for non-monotone and monotone random operators. 
Next we list some results for non-monotone random operators which have been achieved via multiscale analysis.
\begin{itemize}
\item
The paper \cite{Klopp-95a} establishes exponential localization in $(-\infty,E_0)$, $E_0>0$, for the alloy-type model on $L^2 (\RR^d)$ in the case where $\Sigma=\RR$ and where the single-site potential decays exponentially. The key new feature is a proof of a Wegner estimate which works without sign assumptions on the single-site potential. Related results concerning localization at band edges (in the weak disorder regime) have been obtained for the alloy-type model with sign-changing single-site potential in \cite{Klopp-02,HislopK-02,KloppN-09} and for the so-called generalized alloy-type model in \cite{KloppN-10}.
\item
The papers \cite{Veselic-02a,KostrykinV-06,Veselic-10b,PeyerimhoffTV-11} establish a Wegner estimate (which can be used for a localization proof via multiscale analysis) for a class of alloy-type models with a sign-indefinite single-site potential of a so-called generalized step function form. The obtained Wegner estimates are valid on the whole energy axis. We present our results of \cite{PeyerimhoffTV-11} in Section \ref{sec:wegner_cont} of this thesis.  
\item
For the random displacement model, also a model with no obvious monotonicity property with respect to the random variables, spectral properties have been studied according to the multiscale analysis, e.g., in \cite{BakerLS-08,KloppLNS-11,KloppLNS-11b}. 
\item
The discrete alloy-type model with sign-changing single-site potential has been studied according to the multiscale analysis in \cite{Veselic-10a,Krueger-11,PeyerimhoffTV-11,CaoE-11}. Veseli\'c, Peyerimhoff and Tautenhahn \cite{Veselic-10a,PeyerimhoffTV-11} establish a Wegner estimate at all energies, see Chapter \ref{chap:wegner}. Kr\"uger \cite{Krueger-11} develops the multiscale analysis not relying on a (classical) Wegner estimate and obtains dynamical and spectral localization in the strong disorder regime for a class of models including the discrete alloy-type model as an example. Kr\"uger uses ideas of \cite{Bourgain-09} where a Wegner estimate is established for the matrix valued Anderson model, which also lacks monotonicity. Recently, Cao and Elgart show in \cite{CaoE-11} localization for the discrete alloy-type model in the Lifshitz tails regime, i.e.\ at small disorder and for low energies.
\end{itemize}
The fractional moment method is far less developed for non-monotone random Schr\"o\-ding\-er operators. Our paper \cite{ElgartTV-10} develops the fractional moment method and gives a proof of localization for the one-dimensional discrete alloy-type model with sign-changing and compactly supported single-site potential, cf.\ Section \ref{sec:d=1}. This result was generalized to the multidimensional case in \cite{ElgartTV-11}, where the single-site potential is assumed to be compactly supported and to have fixed sign at the boundary of its support, cf.\ Chapter \ref{chap:fmm}. We also refer the reader to our survey paper \cite{ElgartKTV-11} where recent results for the discrete alloy-type model are presented. Let us stress that it is generally acknowledged that even for models where the multiscale analysis is well established, the implementation of the fractional moment method gives new insights and slightly stronger
results. For instance, the paper \cite{AizenmanENSS-06} concerns models for which the multiscale analysis was developed much earlier.
\par
The aim of this thesis is to expand the methods for proving localization for non-monotone random operators. In particular, we 
\begin{itemize}
\item develop the fractional moment method for the discrete alloy-type model with sign-changing single-site potential according to our papers \cite{ElgartTV-10,ElgartTV-11}, see Chapter~\ref{chap:fmm} and Appendix~\ref{chap:non_local}.
\item establish a new variant for concluding exponential localization from fractional moment bounds for the discrete alloy-type model following our results in \cite{ElgartTV-10,ElgartTV-11}, see Section~\ref{sec:loc}. An adaptation of this result to the alloy-type model on $L^2 (\RR^d)$ with sign-changing single-site potential is presented in Appendix~\ref{sec:loc_cont}.
\item prove a Wegner estimate at all energies for the discrete and continuous alloy-type model with sign-changing single-site potentials of a generalized step-function type which has already been published in \cite{PeyerimhoffTV-11}, see Chapter~\ref{chap:wegner} and Appendix~\ref{sec:wegner_cont}. 
\end{itemize}

\chapter{Discrete alloy-type model, main results and regularity properties}
%
In Section~\ref{sec:model} we will introduce the discrete alloy-type model, which is a generalization of the classical Anderson model in the sense that the potential values at different lattice sites are not independent. Moreover, since the single-site potential may change its sign, certain properties of the discrete alloy-type model depend in a non-monotonic way on the random parameters. Both, the independence and the monotonicity properties distinguish the classical Anderson model from the discrete alloy-type model. The existing methods for proving localization strongly rely on the independence and/or the monotonicity property. For this reason, new methods are needed to overcome the problems arising from non-monotonicity and non-independence. The discrete alloy-type model has been studied, for instance, in \cite{Veselic-10a,ElgartTV-10,ElgartTV-11,PeyerimhoffTV-11,Krueger-11,CaoE-11}.
\par
In Section~\ref{sec:results} we state our main results on localization via the fractional moment method and a Wegner estimate for the discrete alloy-type model. These results have already been published in \cite{ElgartTV-10,ElgartTV-11,PeyerimhoffTV-11} and will be proven in Chapter~\ref{chap:fmm} and Chapter~\ref{chap:wegner}. 
\par
Anderson models where the potential values at different lattice sites are not independent have already been studied, see e.g.\ \cite{AizenmanM-93,DreifusK-91,Hundertmark-00,AizenmanFSH-01}, by imposing certain regularity assumptions on a conditional distribution of the potential values. In Section \ref{sec:regularity} we show that these regularity assumptions are generally not satisfied for the discrete alloy-type model.
%
%
%
%
\section{The discrete alloy-type model} \label{sec:model}
Let $d \geq 1$. For $x \in \ZZ^d$ we recall the standard norms
\[
 \lvert x \rvert_1 \defeq \sum_{i=1}^d \lvert x_i \rvert \quad \text{and} \quad
 \lvert x \rvert_\infty \defeq \max \{ \lvert x_1 \rvert , \ldots , \lvert x_d \rvert \} .
\]
For any set $\Gamma \subset \ZZ^d$ we introduce the Hilbert space\index{Hilbert space}
\[
\ell^2 (\Gamma) = \Bigl\{\psi : \Gamma \to \CC : \sum_{k \in \Gamma} \lvert \psi (k) \rvert^2 < \infty\Bigr\}
\]
which is equipped with the inner product\index{inner product} $\sprod{\phi}{\psi} = \sum_{k \in \Gamma} \overline{\phi(k)} \psi (k)$. On the Hilbert space $\ell^2 (\ZZ^d)$ we consider the family of discrete Schr\"o{}dinger operators\index{Schr\"o{}dinger operator}
\begin{equation} \label{eq:hamiltonian}
 H_\omega \defeq -\Delta + \lambda V_\omega , \quad \omega \in \Omega , \quad \lambda > 0 .
\end{equation}
Here, $\omega$ is an element of the probability space\index{probability space} specified
below, $\Delta: \ell^2 \left(\ZZ^d\right) \to \ell^2
\left(\ZZ^d\right)$ denotes the discrete Laplace operator\index{discrete Laplace operator} and
$V_\omega : \ell^2 \left(\ZZ^d\right) \to \ell^2 \left(\ZZ^d\right)$
is a random multiplication operator. They are defined by
\begin{equation*}
\left(\Delta \psi \right) (x) \defeq \sum_{\abs{e}_1 = 1} \psi (x+e) \quad \mbox{and} \quad
\left( V_\omega \psi \right) (x) \defeq  V_\omega (x) \psi (x)
\end{equation*}
and represent the kinetic energy\index{kinetic energy} and the random potential energy\index{potential energy}, respectively. Note that we have suppressed the diagonal term $-2d \psi (x)$ in the definition of the discrete Laplacian, since this corresponds just to a spectral shift and we are mostly interested in spectral types. The parameter $\lambda$ models the strength of the disorder\index{disorder}. We assume that the probability space has a product structure $\Omega := \bigtimes_{k \in \ZZ^d} \RR$ and is equipped with the $\sigma$-algebra generated by the cylinder-sets and the probability measure 
\[
\PP (\drm \omega) := \prod_{k \in \ZZ^d} \nu(\drm \omega_k)
\]
where $\nu$ is a probability measure on $\RR$ with compact support. We define $R := \max\{\lvert \inf \supp \nu \rvert , \lvert \sup \supp \nu \rvert\}$.
By definition, the projections $\Omega \ni \omega \mapsto \omega_k$ give rise to a sequence
$\omega_k$, $k \in \ZZ^d$, of independent identically distributed\index{independent identically distributed} (i.i.d.) random variables, each distributed according to the probability measure $\nu$. The symbol
$\mathbb{E}$ denotes the expectation\index{expectation} with respect to the probability
measure, i.e.\ $\mathbb{E} (\cdot) := \int_\Omega (\cdot)
\PP (\drm \omega)$. For a set $\Gamma \subset \ZZ^d$, $\mathbb{E}_\Gamma$
denotes the expectation with respect to $\omega_k$, $k \in \Gamma$. That is,
$\mathbb{E}_{\Gamma} (\cdot) := \int_{\Omega_\Gamma} (\cdot)
\prod_{k \in \Gamma} \nu(\drm \omega_k)$ where $\Omega_\Gamma
:= \bigtimes_{k \in \Gamma} \RR$. 
\par
Let the \emph{single-site potential}\index{single-site potential}
$u : \ZZ^d \to \RR$ be a function in $\ell^1 (\ZZ^d ; \RR) = \{u :\ZZ^d \to \RR \colon \sum_{k \in \ZZ^d} \lvert u(k) \rvert < \infty\}$. We assume that the random
potential $V_\omega $ has an alloy-type\index{alloy-type\addcontentsline{toc}{chapter}{Index}} structure, i.e.\ the potential value
\begin{equation*}
V_\omega (x) := \sum_{k \in \ZZ^d} \omega_k u (x-k)
\end{equation*}
at a lattice site $x \in \ZZ^d$ is a linear combination of the i.i.d.\ random
variables $\omega_k$, $k\in\ZZ^d$, with coefficients provided by the single-site
potential. We call the Hamiltonian \eqref{eq:hamiltonian} a \emph{discrete alloy-type model}\index{discrete alloy-type model}. The function $u(\cdot - k)$ may be interpreted as a potential generated by the atom sitting at the lattice site $k\in\ZZ^d$. We assume (without loss of generality) that $0 \in \Theta := \supp u$.
\par
For the operator $H_\omega$ in \eqref{eq:hamiltonian} and $z \in \CC \setminus \sigma
(H_\omega)$ we define the corresponding \emph{resolvent}\index{resolvent} by $G_\omega (z)
:= (H_\omega - z)^{-1}$. For the \emph{Green function}\index{Green function}, which assigns
to each  $(x,y) \in \ZZ^d \times \ZZ^d$ the corresponding matrix element of the
resolvent, we use the notation
\begin{equation*} \label{eq:greens}
G_\omega (z;x,y) := \sprod{\delta_x}{(H_\omega - z)^{-1}\delta_y}.
\end{equation*}
For $\Gamma \subset \ZZ^d$ and $k \in \Gamma$, $\delta_k \in \ell^2 (\Gamma)$ denotes the
Dirac function\index{Dirac function} given by $\delta_k (k) = 1$ and
$\delta_k (j) = 0$ for $j \in \Gamma \setminus \{k\}$.
Let $\Gamma_1 \subset \Gamma_2 \subset \ZZ^d$. We define the canonical restriction $\Pro_{\Gamma_1}^{\Gamma_2} : \ell^2 (\Gamma_2) \to \ell^2 (\Gamma_1)$ by
\[
 \Pro_{\Gamma_1}^{\Gamma_2} \psi := \sum_{k \in \Gamma_1} \psi (k) \delta_k ,
\]
where the Dirac function has to be understood as an element of $\ell^2 (\Gamma_1)$.
Note that the corresponding embedding $\Inc_{\Gamma_1}^{\Gamma_2} := (\Pro_{\Gamma_1}^{\Gamma_2})^* : \ell^2 (\Gamma_1) \to \ell^2 (\Gamma_2)$ is given by
\[
\Inc_{\Gamma_1}^{\Gamma_2} \phi := \sum_{k \in \Gamma_1} \phi (k) \delta_k ,
\]
where here the Dirac function has to be understood as an element of $\ell^2 (\Gamma_2)$.
If $\Gamma_2 = \ZZ^d$ we will drop the upper index and write $\Pro_{\Gamma_1}$ and $\Inc_{\Gamma_1}$ instead of $\Pro_{\Gamma_1}^{\ZZ^d}$ and $\Inc_{\Gamma_1}^{\ZZ^d}$.
For an arbitrary set $\Gamma \subset \ZZ^d$ we define the restricted operators $\Delta_\Gamma, V_\Gamma, H_\Gamma:\ell^2 (\Gamma) \to \ell^2 (\Gamma)$ by $\Delta_\Gamma := \Pro_\Gamma \Delta \Inc_\Gamma$, $V_\Gamma := \Pro_\Gamma V_\omega \Inc_\Gamma$ and
\[
 H_\Gamma := \Pro_\Gamma H_\omega \Inc_\Gamma = -\Delta_\Gamma + \lambda V_\Gamma .
\]
Furthermore, we define $G_\Gamma (z) := (H_\Gamma - z)^{-1}$ and $G_\Gamma (z;x,y) := \bigl\langle \delta_x, G_\Gamma (z) \delta_y \bigr\rangle$ for $z \in \CC \setminus \sigma (H_\Gamma)$ and $x,y \in \Gamma$. Pay attention that we suppress the dependence on $\omega \in \Omega$ for the restricted operators as well as for the restricted Green function. 
\par
If $\Lambda\subset\ZZ^d$ is finite, $H_\Lambda$ can be interpreted as a finite dimensional random matrix. For example, if $d = 1$ and $\Lambda = \{-2,-1,0,1,2\}$ the matrix representation of $H_\Lambda$ with respect to the canonical basis $\{\delta_k\}_{k \in \Lambda}$ is given by
 \[ 
 H_\Lambda = \begin{pmatrix}
 v_{-2}    & -1        &   0          & 0        &  0       \\ 
 -1        & v_{-1}    &-1            & 0        &  0       \\
  0        &	-1     & v_{0}        &    -1    &  0       \\
  0        &	0      &   -1         & v_{1}    &   -1     \\
  0        &	0      &   0          &     -1   & v_{2}
          
            \end{pmatrix},
\quad v_x = \sum\limits_{k \in \ZZ} \omega_k u(x-k), \quad x \in \{-2,\ldots , 2\} .
\]
The diagonal elements $v_x$, $x \in \{-2,-1,\ldots , 2\}$, correspond to the multiplication operator $V_\Lambda$ and the minus ones on the upper and lower diagonal come up from the negative discrete Laplacian $-\Delta_\Lambda$. 
\par
If we consider operators on $\ell^2 (\Lambda)$ for $\Lambda \subset \ZZ^d$ finite, we will always use the same symbol for the operator as well as for the corresponding matrix representation with respect to the canonical basis.
\par
Let us finally introduce some further assumptions on the model which may hold or not hold. First we introduce some notation. For $\Lambda \subset \ZZ^d$ finite, $\lvert \Lambda \rvert$ will denote the number of elements in $\Lambda$. For $\Lambda \subset \ZZ^d$ we denote by $\partial^{\rm i} \Lambda = \{k \in \Lambda \colon \lvert \{ j \in \Lambda \colon \lvert k-j \rvert_1 = 1  \} \rvert < 2d \}$ the interior boundary\index{boundary!interior} of $\Lambda$ and by $\partial^{\rm o} \Lambda = \partial^{\rm i} (\Lambda^{\rm c})$ the exterior boundary\index{boundary!exterior} of $\Lambda$. Here, $\Lambda^{\rm c} = \ZZ^d \setminus \Lambda$ denotes the complement\index{complement} of $\Lambda$. Recall that $\Theta = \supp u$.
\begin{assumption} \label{ass:d=1}
Assume $d = 1$, the measure $\nu$ has a probability density $\rho \in L^\infty (\RR)$ and $\Theta = \{0,1,\ldots , n-1\}$ for some $n \in \NN$.
\end{assumption}

\begin{assumption} \label{ass:d=1_2}
Assume $d = 1$, the measure $\nu$ has a probability density $\rho \in L^\infty (\RR)$ and $\Theta$ is finite with $\min \Theta = 0$ and $\max \Theta = n-1$ for some $n \in \NN$.
\end{assumption}

\begin{assumption} \label{ass:monotone} Assume that $\Theta$ is a finite set, the measure $\nu$ has a probability density $\rho \in L^\infty (\RR)$, and that the function $u$ satisfies $u (k) > 0$ for all $k \in \partial^{\rm i} \Theta$.
\end{assumption}

\begin{assumption} \label{ass:exponential} The measure $\nu$ has a probability density $\rho \in \BV (\RR)$ and there are constants $C,\alpha>0$ such that for all $k \in \ZZ^d$ we have $\lvert u(k) \rvert \leq C \euler^{-\alpha \lVert k \rVert_1}$.
\end{assumption}
Here $\BV (\RR)$ denotes the space of functions of finite total variation\index{total variation}. A precise definition of this function space is given in Section~\ref{sec:abstract_wegner}.
\begin{assumption} \label{ass:finite}
Let $\Theta$ be a finite set.
\end{assumption}
\begin{assumption} \label{ass:ubar}
The measure $\nu$ has a probability density in the Sobolev space $W^{1,1} (\RR)$, $\Theta$ is finite and the single-site potential satisfies $\bar u := \sum_{k \in \ZZ^d} u(k) \not = 0$.
\end{assumption}
If we say that a measure $\nu$ on $\RR$ has a probability density $\rho$, we mean that the measure $\nu$ is absolutely continuous with respect to the Lebesgue measure with the corresponding density function $\rho$, i.e.\ we have $\nu (A) = \int_A\rho (x) \drm x$ for all measurable sets $A\subset \RR$. 
%
%
%
\section{Main results} \label{sec:results}

For $x \in \ZZ^d$ and $L>0$, we denote by $\Lambda_{L,x} = \{ k \in \ZZ^d : \lvert x-k  \rvert_\infty \leq L \}$ the cube of side length $2L$. Furthermore, we set $\Lambda_L = \Lambda_{L,0}$.
\par
As we are interested in the spectral type of the almost sure spectrum of $H_\omega$, more precisely that there is no continuous spectrum in certain energy/disorder regimes, let us first note that the almost sure spectrum of $H_\omega$ is an interval.
\begin{theorem}[The almost sure spectrum]
Let $\supp \nu$ be a bounded interval. Then, for almost all $\omega \in \Omega$, the spectrum of $H_{\omega}$ is an interval.
\end{theorem}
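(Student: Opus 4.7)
The plan is to combine ergodicity with a connectedness argument on the configuration space. By the ergodicity theorem cited above there is a deterministic closed set $\Sigma\subset\RR$ with $\sigma(H_\omega)=\Sigma$ for $\PP$-almost every $\omega$; the task reduces to showing that $\Sigma$ is connected. I would use the standard characterization
\[
\Sigma \;=\; \overline{\bigcup_{\omega\in\supp\PP}\sigma(H_\omega)}, \qquad \supp\PP=[a,b]^{\ZZ^d},
\]
where $[a,b]$ denotes the interval $\supp\nu$, together with the fact that $\supp\PP$ is convex and hence path-connected, and with continuity of the spectrum under norm-continuous perturbations of the operator.

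The continuity enters via the Lipschitz estimate
\[
\lVert H_\omega - H_{\omega'}\rVert
\;\leq\;\lambda\,\sup_{x\in\ZZ^d}\sum_{k\in\ZZ^d}\lvert u(x-k)\rvert\,\lvert\omega_k-\omega_k'\rvert
\;\leq\;\lambda\lVert u\rVert_{\ell^1}\lVert\omega-\omega'\rVert_{\ell^\infty},
\]
which is valid because $u\in\ell^1(\ZZ^d)$. It implies that $\omega\mapsto\sigma(H_\omega)$ is continuous in Hausdorff distance when $\supp\PP$ carries the $\ell^\infty$-topology. For the constant configuration $\omega^{(t)}_k\equiv t$ with $t\in[a,b]$ the operator reduces to $H_{\omega^{(t)}}=-\Delta+\lambda t\bar u I$, where $\bar u\defeq\sum_k u(k)$ is finite since $u\in\ell^1$, so $\sigma(H_{\omega^{(t)}})=[-2d+\lambda t\bar u,\,2d+\lambda t\bar u]$ is an interval. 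As $t$ sweeps $[a,b]$ the centers of these intervals vary continuously while their width is fixed at $4d$, so their union $I_0$ is itself an interval contained in $\Sigma$.

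It then remains to show that every $\sigma(H_\omega)$ is connected to $I_0$ inside $\bigcup_{\omega\in\supp\PP}\sigma(H_\omega)$. For this I would join an arbitrary $\omega\in\supp\PP$ to a constant configuration $\omega^{(t)}$ by the straight-line path $s\mapsto(1-s)\omega^{(t)}+s\omega$ in the convex set $\supp\PP$, along which the spectra vary continuously in Hausdorff distance. One then argues that $\bigcup_{s\in[0,1]}\sigma(H_{(1-s)\omega^{(t)}+s\omega})$ is connected, contains $\sigma(H_\omega)$ and meets $I_0$; this yields connectedness of $\bigcup_{\omega\in\supp\PP}\sigma(H_\omega)$, and hence of its closure $\Sigma$. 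The main obstacle is this last connectedness statement along the path, since an individual spectrum can be disconnected and spectral bands may split or merge as $s$ varies; the standard remedy is a clopen-set argument on $[0,1]$ combined with the stability of spectral gaps under norm-continuous perturbations (a form of Kato's theorem), ensuring that no band can appear in a region isolated from the spectrum at the preceding parameter value.
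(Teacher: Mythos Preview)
The paper does not prove this theorem in the text; it only says ``For a proof we refer to \cite{ElgartKTV-11}.'' Your proposal is therefore not being compared against an in-text argument, but it is the natural strategy and almost certainly coincides with the referenced one: exploit the characterisation of $\Sigma$ as the union of spectra over $\supp\PP=[a,b]^{\ZZ^d}$, together with convexity of $\supp\PP$ and the Lipschitz bound $\lVert H_\omega-H_{\omega'}\rVert\le\lambda\lVert u\rVert_{\ell^1}\lVert\omega-\omega'\rVert_{\ell^\infty}$.

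Your outline is correct, including the step you flag as ``the main obstacle.'' That step can be made completely rigorous, and your gesture at a clopen argument on $[0,1]$ is the right one; you do not actually need Kato or gap stability beyond what Hausdorff continuity already gives. Here is a clean way to close it. For a norm-continuous path $\gamma:[0,1]\to\supp\PP$ with $\sigma(H_{\gamma(0)})$ connected, set $K_s=\sigma(H_{\gamma(s)})$ and consider the compact set
\[
G=\{(s,E):s\in[0,1],\ E\in K_s\}\subset[0,1]\times\RR.
\]
If $G=P\sqcup Q$ with $P,Q$ nonempty, compact and disjoint, then the connected fibre $\{0\}\times K_0$ lies in one piece, say $P$. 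The set $S_Q=\{s:(\{s\}\times K_s)\cap Q\neq\emptyset\}$ is open (by Hausdorff continuity of $s\mapsto K_s$ and the positive distance between $P$ and $Q$), nonempty, and does not contain $0$, so $s_0=\inf S_Q>0$ and $s_0\notin S_Q$; hence $\{s_0\}\times K_{s_0}\subset P$. Choosing $s_n\downarrow s_0$ in $S_Q$ and $(s_n,E_n)\in Q$, compactness and Hausdorff continuity give a limit $(s_0,E^\ast)\in Q$ with $E^\ast\in K_{s_0}$ --- contradicting $\{s_0\}\times K_{s_0}\subset P$. Thus $G$ is connected, and so is its projection $\bigcup_{s}K_s$. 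Applying this to the straight-line path from a fixed constant configuration $\omega^{(a)}$ to each $\omega\in\supp\PP$ shows that every $\sigma(H_\omega)$ lies in a connected subset of $\Sigma$ containing the fixed interval $\sigma(H_{\omega^{(a)}})$; the union of these connected sets is $\Sigma$ and is connected.
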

This result is based on a discussion of Helge Kr\"uger and Ivan Veseli\'c. For a proof we refer to \cite{ElgartKTV-11}.
\par
Our first result is a so-called finite volume criterion. It can be used to prove exponential decay of an averaged fractional power of the Green function at typical perturbative regimes.
\begin{theorem}[Finite volume criterion] \label{theorem:finite_volume}
Suppose that Assumption \ref{ass:monotone} is satisfied, let $\Gamma \subset \ZZ^d$, $z
\in \CC \setminus \RR$ with $\lvert z \rvert \leq m$ and $s \in (0,1/3)$. Then there
exists a constant $B_s$ which depends only on $d$, $\rho$, $u$, $m$ and $s$,
such that if the condition
\begin{equation*}
b_s(\lambda, L,\Lambda): = \frac{B_s L^{3(d-1)} \Xi_s (\lambda)}{\lambda^{2s/(2\lvert
\Theta \rvert)}}\, \sum_{w\in\partial^{\rm o} W_x}\mathbb{E}
\bigl(\lvert G_{\Lambda\setminus W_x} (z;x,w)\rvert^{s/(2\lvert
\Theta \rvert)}\bigr)< b
\end{equation*}
is satisfied for some $b \in (0,1)$, arbitrary $\Lambda \subset \Gamma$, and all $x\in
\Lambda$, then for all $x,y \in \Gamma$
\begin{equation*} 
\mathbb{E} \bigl(\lvert G_\Gamma (z;x,y)\rvert^{s/(2\lvert \Theta
\rvert)} \bigr)\leq A \euler^{-\mu|x-y|_\infty} .
\end{equation*}
Here
\[
A=\frac{C_s \Xi_s (\lambda)}{b}, \quad 
\Xi_s (\lambda) = \max \{ \lambda^{- s/ 2 \lvert \Theta \rvert} , \lambda^{-2s} \}, \quad 
\mu=\frac{\lvert \ln b \rvert}{L+\diam \Theta + 2} ,
\] 
with $C_s$, depending only on $s$, inherited from the a priori bound of Lemma \ref{lemma:bounded}. The set $W_x$ is a certain annulus around $x$, defined precisely in Eq.~\eqref{eq:Wx} and the text below, and $L \geq \diam \Theta + 2$ is some fixed number determining the size of the annulus $W_x$.
\end{theorem}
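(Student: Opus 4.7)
The plan is to prove the stated exponential decay by iterating a one-step reduction derived from the geometric (second) resolvent identity, each step trading distance of order $L+\diam\Theta+2$ for a multiplicative factor $b<1$. The reference operators are $H_\Gamma$ and $H_{\Gamma\setminus W_x}$, where $W_x$ is the annulus around $x$ mentioned in the statement. The resolvent identity applied across $\partial^{\rm o} W_x$ expresses, for $y\notin W_x$,
\[
G_\Gamma(z;x,y) \;=\; \sum_{\substack{w\in\partial^{\rm o} W_x \\ w'\in W_x,\;|w-w'|_1=1}} G_{\Gamma\setminus W_x}(z;x,w)\,G_\Gamma(z;w',y),
\]
which splits the Green function into a \emph{short-range} factor that carries the argument from $x$ out to the boundary of the annulus and a \emph{long-range} factor that carries it from the boundary on to $y$.

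First I would fix $x$, assume $|x-y|_\infty$ is large enough that $y$ lies outside $W_x$, and apply the identity above. Since $s/(2|\Theta|)\in(0,1)$, the inequality $|a+b|^p\le|a|^p+|b|^p$ for $p\in(0,1)$ converts the identity into a sum of fractional-moment products. The decisive technical step is to \emph{decouple} the expectation of each product into a product of expectations of the two factors separately: this is where Assumption~\ref{ass:monotone} (the fixed sign of $u$ on $\partial^{\rm i}\Theta$) enters, allowing a change of variables in the random coordinates indexed by a buffer layer just inside $W_x$ to disentangle the random variables that affect $G_{\Gamma\setminus W_x}(z;x,w)$ from those that affect $G_\Gamma(z;w',y)$. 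The loss incurred by this change of variables, together with the sup-norm of $\rho$ raised to the power $s/(2|\Theta|)$, produces the prefactors $B_sL^{3(d-1)}/\lambda^{2s/(2|\Theta|)}$ that appear in the definition of $b_s(\lambda,L,\Lambda)$; the $L^{3(d-1)}$ accounts for the number of boundary pairs $(w,w')$ and some combinatorial bookkeeping, while the $\lambda^{-2s/(2|\Theta|)}$ compensates for the Jacobian of the decoupling substitution.

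Once the two factors are independent, the short-range factor is summed over $w\in\partial^{\rm o} W_x$ and, together with its prefactors, is bounded above by $b_s(\lambda,L,\Lambda)<b$. The long-range factor is estimated by the a priori bound from Lemma~\ref{lemma:bounded}, yielding $\mathbb{E}(|G_\Gamma(z;w',y)|^{s/(2|\Theta|)})\le C_s\Xi_s(\lambda)$, or more usefully, by $\max_{w'}\mathbb{E}(|G_\Gamma(z;w',y)|^{s/(2|\Theta|)})$. Setting $f(x,y)=\mathbb{E}(|G_\Gamma(z;x,y)|^{s/(2|\Theta|)})$, one obtains the recursion
\[
f(x,y)\;\le\;b\cdot\max_{w'\in W_x,\,\dist(w',\partial^{\rm o} W_x)=1}f(w',y),
\]
valid whenever $y\notin W_x$. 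Since each iteration moves the base point by at least the inner radius of the annulus, namely by roughly $L+\diam\Theta+2$ (with the extra $\diam\Theta+2$ accounting for the buffer needed for decoupling), iterating $n=\lfloor|x-y|_\infty/(L+\diam\Theta+2)\rfloor$ times and closing with the a priori bound $C_s\Xi_s(\lambda)$ at the final step produces
\[
f(x,y)\;\le\;b^n\cdot C_s\Xi_s(\lambda)\;\le\;\frac{C_s\Xi_s(\lambda)}{b}\,\euler^{-\mu|x-y|_\infty}, \qquad \mu=\frac{|\ln b|}{L+\diam\Theta+2},
\]
which is the stated bound with $A=C_s\Xi_s(\lambda)/b$. (For $|x-y|_\infty$ too small to fit the annulus, the a priori bound applied directly is within the claimed constant $A$ because $b<1$.)

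The main obstacle is the decoupling step in the second paragraph: one needs to exploit the strict positivity of $u$ on $\partial^{\rm i}\Theta$ to perform a substitution in a layer of random variables inside $W_x$ whose new coordinates split cleanly into those influencing only the inner factor $G_{\Gamma\setminus W_x}(z;x,w)$ and those influencing only the outer factor $G_\Gamma(z;w',y)$. This is where the powers $s/(2|\Theta|)$, the factor $L^{3(d-1)}$, and the exponent $2s/(2|\Theta|)$ on $\lambda$ are forced; getting the constants to combine into $b_s(\lambda,L,\Lambda)$ exactly as defined is the principal bookkeeping challenge, whereas the iteration itself and the matching of the final decay rate are then straightforward.
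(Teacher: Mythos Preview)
Your high-level architecture (iterate a one-step reduction, pick up a factor $b$ per step, close with the a priori bound) is right, but the one-step reduction you describe will not go through as written, and the mechanism you propose for the decoupling is not the one that works.

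First, a single first-order resolvent identity across $\partial^{\rm o}W_x$ gives
\[
G_\Gamma(z;x,y)=\sum_{(w,w')}G_{\Gamma\setminus W_x}(z;x,w)\,G_\Gamma(z;w',y),
\]
and here the second factor $G_\Gamma(z;w',y)$ still depends on \emph{all} random variables $\omega_k$ with $k\in\Gamma$, including those that determine the first factor. No change of variables in a buffer layer can make two genuinely dependent random variables independent; that is simply not what a change of variables does. The paper's decoupling is \emph{geometric}, not analytic: it uses the \emph{second}-order resolvent expansion with the depleted set $\hat W_x$,
\[
G_\Gamma(z;x,y)=\sum G_\Gamma^{\hat W_x}(z;x,u)\,G_\Gamma(z;u',v)\,G_\Gamma^{\hat W_x}(z;v',y),
\]
and then expands the two outer factors once more via $W_x\supset\hat W_x$. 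This produces \emph{five} factors. The two outermost, $G_{\Gamma\setminus W_x}(z;x,w)$ and $G_{\Gamma\setminus W_x}(z;r',y)$, live on the two disjoint components of $\Gamma\setminus W_x$ and are therefore stochastically independent for free. The three middle factors are handled by averaging over $\omega_k$, $k\in B_x^+$, via H\"older and the a priori bounds; Assumption~\ref{ass:monotone} enters precisely here, because the two ``intermediate'' factors $G_{\Gamma\setminus\hat W_x}(z;w',u)$ and $G_{\Gamma\setminus\hat W_x}(z;v',r)$ have both endpoints on $\partial^{\rm o}\hat W_x$, where the potential depends monotonically on $\omega_k$, $k\in\partial^{\rm o}B_x$, so part~(b) of Lemma~\ref{lemma:bounded} applies and supplies the $\lambda^{-2s/(2|\Theta|)}$. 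The $\Xi_s(\lambda)$ comes from part~(a) applied to the central factor $G_\Gamma(z;u',v)$, and the $L^{3(d-1)}$ comes from the three boundary sums $|\partial^{\rm B}_x\hat W_x|\cdot|\partial^{\rm B}_y\hat W_x|\cdot|\partial^{\rm o}\Lambda_x|$.

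Second, your recursion $f(x,y)\le b\max_{w'}f(w',y)$ keeps the domain $\Gamma$ fixed, whereas the actual one-step bound reads
\[
\mathbb{E}\bigl(|G_\Gamma(z;x,y)|^{s/(2|\Theta|)}\bigr)\le \frac{b_s(\lambda,L,\Gamma)}{|\partial^{\rm o}\Lambda_x|}\sum_{r\in\partial^{\rm o}\Lambda_x}\mathbb{E}\bigl(|G_{\Gamma\setminus\Lambda_x}(z;r,y)|^{s/(2|\Theta|)}\bigr),
\]
so the domain shrinks at each step. This is exactly why the hypothesis demands the finite-volume condition for \emph{all} $\Lambda\subset\Gamma$, a point your sketch does not use. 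Once this corrected one-step bound is in place (this is Lemma~\ref{lemma:iteration3} in the paper), your final iteration and closing with $C_s\Xi_s(\lambda)$ are indeed straightforward and match the paper's proof of Theorem~\ref{theorem:finite_volume2}.
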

This criterion works in the strong disorder regime using an a priori bound (see Section~\ref{sec:boundedness}), and Theorem~\ref{theorem:exp_decay} follows. Theorem~\ref{theorem:exp_decay} is the typical output of the fractional moment method, i.e.\ the exponential decay of an averaged fractional power of the Green function. It applies to arbitrary finite $\Theta \subset \ZZ^d$ assuming that the single-site potential $u$ has fixed sign on the interior vertex boundary of $\Theta$. 
\begin{theorem}[Fractional moment decay] \label{theorem:exp_decay}
Let $\Gamma \subset \ZZ^d$, $s \in (0,1/3)$ and suppose that Assumption \ref{ass:monotone} is satisfied.
Then for a sufficiently large $\lambda$ there are constants $\mu,A \in (0,\infty)$,
depending only on $d$, $\rho$, $u$,  $s$ and $\lambda$,
such that for all $z \in \CC \setminus \RR$ and all $x,y \in \Gamma$
\begin{equation*} 
\mathbb{E} \bigl(\lvert G_\Gamma (z;x,y)\rvert^{s/(2\lvert \Theta \rvert)}\bigr)\leq A \euler^{-\mu|x-y|_\infty} .
\end{equation*}
\end{theorem}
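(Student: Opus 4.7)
The strategy is to verify the hypothesis of the Finite volume criterion (Theorem~\ref{theorem:finite_volume}) for $\lambda$ sufficiently large. Once the condition $b_s(\lambda, L, \Lambda) < b$ is established for some $b \in (0,1)$, uniformly in the finite subset $\Lambda \subset \Gamma$ and the base point $x \in \Lambda$, the conclusion of Theorem~\ref{theorem:finite_volume} is exactly the assertion we want, with $\mu = \lvert \ln b \rvert/(L + \diam \Theta + 2)$ and $A = C_s \Xi_s(\lambda)/b$.

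First I would fix any admissible radius $L \geq \diam \Theta + 2$ (for instance $L = \diam \Theta + 2$); once $L$ is fixed, the geometric factor $L^{3(d-1)}$ in the prefactor of $b_s$ becomes an explicit constant, and the cardinality $\lvert \partial^{\mathrm o} W_x \rvert$ is bounded by a purely dimensional, $L$-dependent constant $C_{L,d}$. The key input is then the a priori bound of Lemma~\ref{lemma:bounded}, namely $\mathbb{E}(\lvert G_{\Lambda \setminus W_x}(z;x,w) \rvert^{s/(2\lvert \Theta \rvert)}) \leq C_s \Xi_s(\lambda)$ for every $z \in \CC \setminus \RR$ and every admissible pair $(x,w)$; crucially the bound is independent of $z$ and of the finite domain $\Lambda \setminus W_x$. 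Substituting into the definition of $b_s(\lambda, L, \Lambda)$ yields
\begin{equation*}
 b_s(\lambda, L, \Lambda) \leq \frac{B_s L^{3(d-1)} C_{L,d} C_s}{\lambda^{2s/(2\lvert \Theta \rvert)}}\, \Xi_s(\lambda)^2 .
\end{equation*}

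Next I would analyze the $\lambda$-dependence of the right-hand side. Since $\lvert \Theta \rvert \geq 1$ we have $s/(2\lvert \Theta \rvert) \leq 2s$, so for $\lambda \geq 1$ the maximum in $\Xi_s(\lambda) = \max\{\lambda^{-s/(2\lvert\Theta\rvert)}, \lambda^{-2s}\}$ is attained by the first term, giving $\Xi_s(\lambda) = \lambda^{-s/(2\lvert\Theta\rvert)}$. Consequently, the right-hand side is bounded by a constant times $\lambda^{-4s/(2\lvert\Theta\rvert)}$, a negative power of $\lambda$. Thus for every fixed $b \in (0,1)$ we can choose $\lambda$ large enough so that $b_s(\lambda, L, \Lambda) < b$ uniformly in $\Lambda \subset \Gamma$ and $x \in \Lambda$. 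Theorem~\ref{theorem:finite_volume} then delivers the desired exponential decay with the announced constants $A, \mu$.

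The main obstacle I would expect is the a priori bound (Lemma~\ref{lemma:bounded}) itself, since it is exactly there that one must exploit the sign assumption on $u$ along $\partial^{\mathrm i} \Theta$ of Assumption~\ref{ass:monotone} together with the boundedness of $\rho$, to control a non-monotone model via a power $s/(2\lvert\Theta\rvert)$ small enough to absorb the correlated dependence on a block of $\lvert\Theta\rvert$ random variables. Granting that lemma, the remaining argument is the bookkeeping described above: fixing $L$, inserting the a priori bound, and reading off that $b_s$ is a negative power of $\lambda$.
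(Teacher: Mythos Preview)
Your overall strategy---insert the a~priori bound (Lemma~\ref{lemma:bounded}) into the finite volume criterion and observe that $b_s$ is a negative power of $\lambda$---is exactly what the paper does, and your bookkeeping is fine. However, there is one genuine gap.

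You assert that the a~priori bound holds ``for every $z \in \CC \setminus \RR$'' and that it is ``independent of $z$''. This is not what Lemma~\ref{lemma:bounded} says: it requires $|z| \leq m$, and the constant $C_s$ depends on $m$. Likewise the finite volume criterion (Theorem~\ref{theorem:finite_volume}) is stated only for $|z| \leq m$, with $B_s$ depending on $m$. You cannot send $m \to \infty$ because $C_s$ and $B_s$ blow up. So your argument, as written, only yields the bound for $|z| \leq m$ with some fixed $m$, whereas Theorem~\ref{theorem:exp_decay} asks for constants $A,\mu$ that work for \emph{all} $z \in \CC \setminus \RR$.

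The paper handles this by a two-regime split. Since Assumption~\ref{ass:monotone} makes $H_\omega$ uniformly bounded, set $K = \sup_\omega \lVert H_\omega \rVert$ and $m = K + M$ with $M \geq 1$. For $|z| \leq m$ your argument (the paper's argument) applies. For $|z| > m$ one has $\dist(z, \sigma(H_\Gamma)) \geq M \geq 1$ deterministically, and a Combes--Thomas estimate gives $|G_\Gamma(z;x,y)| \leq (2/M) \euler^{-\gamma |x-y|_1}$ pointwise in $\omega$; taking the fractional power and expectation then yields the desired bound directly, with no averaging needed. Combining the two regimes gives uniform constants $A,\mu$.
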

The next theorem states that the exponential decay of an averaged fractional power of the Green function implies exponential localization. Notice that this implication is well known for the (discrete and continuous) alloy-type model with sign-definite single-site potential. If the single-site potential may change its sign, the existing methods do not apply. We provide a new variant for proving this implication which applies to the sign-indefinite case. Let us also emphasize that this result does not rely on Assumption \ref{ass:monotone}. What is needed is that $\Theta$ is a finite set and $\nu$ is allowed to be an arbitrary probability measure with compact support. See Remark~\ref{remark:no_finite} below for a discussion on the assumption that $\Theta$ is finite.
\begin{theorem} \label{theorem:exp_decay_loc}
Let Assumption \ref{ass:finite} be satisfied, $s \in (0,1)$, $C,\mu, \in (0,\infty)$, and $I \subset \RR$ be an interval. Assume that
\[
\EE \bigl( \lvert G_{\Lambda_{L,k}} (E + \i \epsilon;x,y) \rvert^{s} \bigr) \leq C \euler^{-\mu \lvert x-y \rvert_\infty}
\]
for all $k \in \ZZ^d$, $L \in \NN$, $x,y \in \Lambda_{L,k}$, $E \in I$ and all $\epsilon \in (0,1]$.
Then, for almost all $\omega \in \Omega$, $H_\omega$ exhibits exponential localization in $I$.
\end{theorem}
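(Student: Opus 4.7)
The plan is to derive exponential decay of the spectral projection kernel $\langle\delta_x,\chi_J(H_\omega)\delta_y\rangle$ for subintervals $J\subset I$, and read off both the absence of continuous spectrum and the exponential decay of eigenfunctions from this. Concretely, I would aim for an almost-sure exponential bound on the eigenfunction correlator
\[
Q_I(x,y;\omega)\defeq\sup\bigl\{|\langle\delta_x,f(H_\omega)\delta_y\rangle|:\ f\in C_{\mathrm c}(\RR),\ |f|\leq\chi_I\bigr\}.
\]

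\textbf{Step 1 (probabilistic upgrade).} By Chebyshev's inequality the hypothesis gives, uniformly in $L,k,E\in I,\epsilon\in(0,1]$,
\[
\PP\bigl\{\omega:\ \lvert G_{\Lambda_{L,k}}(E+\i\epsilon;x,y)\rvert>\euler^{-\mu|x-y|_\infty/2}\bigr\}\leq C\,\euler^{-\mu s|x-y|_\infty/2}.
\]
Choose a countable dense set $\{E_n\}\subset I$ together with $\epsilon_n\downarrow 0$, sum over $y\in\ZZ^d$ with $|x-y|_\infty\geq L^{1/2}$, and apply Borel--Cantelli. This produces a full-measure set $\Omega_0$ on which, for every $(E_n,\epsilon_n)$ and every sufficiently large cube $\Lambda_{L,k}$, the finite-volume Green function decays exponentially off the diagonal.

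\textbf{Step 2 (from finite volume to the spectral projector).} Using the geometric resolvent identity on annuli $\Lambda_{L,x}\setminus\Lambda_{L/2,x}$, the infinite-volume resolvent matrix element $G_\omega(E+\i\epsilon;x,y)$ is controlled by boundary sums of finite-volume matrix elements, together with the trivial a priori bound $\|G_\omega(E+\i\epsilon)\|\leq\epsilon^{-1}$. Iterating this identity and combining it with Stone's formula
\[
\langle\delta_x,\chi_J(H_\omega)\delta_y\rangle=\lim_{\epsilon\downarrow 0}\frac{1}{\pi}\int_J\im G_\omega(E+\i\epsilon;x,y)\,\drm E,\qquad J\subset I\ \text{open},
\]
yields, for $\omega\in\Omega_0$, a deterministic bound $Q_I(x,y;\omega)\leq C(\omega)\,\euler^{-\mu'|x-y|_\infty}$ with $C(\omega)<\infty$ almost surely. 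Here Assumption~\ref{ass:finite} enters in a crucial way: because $\supp u$ is finite, the random potential values on a box $\Lambda_{L,k}$ depend only on $\omega_j$ for $j$ in an enlargement of $\Lambda_{L,k}$, so the above comparison can be carried out without destroying the probabilistic independence structure used in Step~1.

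\textbf{Step 3 (from kernel decay to localization).} Exponential decay of the eigenfunction correlator implies dynamical localization in $I$, since for all $t\in\RR$
\[
\lvert\langle\delta_x,\euler^{-\i tH_\omega}\chi_I(H_\omega)\delta_y\rangle\rvert\leq Q_I(x,y;\omega).
\]
By the RAGE theorem (Theorem~\ref{theorem:rage}) dynamical localization rules out continuous spectrum in $I$, so $\sigma_{\mathrm c}(H_\omega)\cap I=\emptyset$ almost surely. For an $\ell^2$-eigenfunction $\psi$ with $H_\omega\psi=E\psi$, $E\in I$, the identity $\psi(x)=\langle\delta_x,\chi_{\{E\}}(H_\omega)\psi\rangle=\sum_y\langle\delta_x,\chi_{\{E\}}(H_\omega)\delta_y\rangle\psi(y)$ combined with $|\langle\delta_x,\chi_{\{E\}}(H_\omega)\delta_y\rangle|\leq Q_I(x,y;\omega)$ gives $|\psi(x)|\leq C(\omega)\sum_y\euler^{-\mu'|x-y|_\infty}|\psi(y)|$; a standard argument (choose a site $y_0$ near which $|\psi|$ attains a large value and estimate $|\psi(x)|$ in terms of $|\psi(y_0)|$) upgrades this to exponential decay of $\psi$.

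\textbf{Main obstacle.} The hardest point is Step~2: passing from the finite-volume fractional moment decay of $G_{\Lambda_{L,k}}(E+\i\epsilon;\cdot,\cdot)$ to a decay bound on $Q_I$ that survives the limit $\epsilon\downarrow 0$. Because $\nu$ is not assumed to possess any density, standard spectral-averaging / Simon--Wolff tools are unavailable, so the transfer from $\im G(E+\i\epsilon)$ to $\chi_I(H_\omega)$ must be done via the Stone formula directly, using only the uniformity in $\epsilon\in(0,1]$ and Fatou's lemma to justify the interchange of limits and expectations.
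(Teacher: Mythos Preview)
Your approach has a genuine gap at Step~2, and this is precisely the obstacle the paper works around rather than through. Stone's formula involves the \emph{first} power $\im G_\omega(E+\i\epsilon;x,y)$, while the hypothesis controls only $\EE\bigl(|G|^s\bigr)$ with $s<1$. The obvious interpolation $|G|\leq\epsilon^{s-1}|G|^s$ gives $\EE\bigl(|G_\omega(E+\i\epsilon;x,y)|\bigr)\leq C\epsilon^{s-1}\euler^{-\mu|x-y|_\infty}$, which blows up as $\epsilon\downarrow 0$; Fatou's lemma and uniformity of the fractional bound in $\epsilon$ do not rescue this, because the quantity you need to control after Stone's formula is a first moment, not a fractional one. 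The known routes from fractional moments to eigenfunction-correlator decay (Aizenman--Graf, Aizenman--Schenker--Friedrich--Hundertmark, \ldots) all rely on additional input---either a second-moment/Wegner-type argument using absolute continuity of the single-site distribution, or monotone rank-one spectral averaging---and none of that is available here, since $\nu$ is an arbitrary compactly supported probability measure and the dependence on $\omega$ is non-monotone. The paper makes this point explicitly in Section~\ref{sec:loc}.

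The paper's proof takes a completely different route: it shows (Proposition~\ref{prop:replace-msa}) that the fractional moment hypothesis directly yields the \emph{output} of multiscale analysis---for all large $L$ and far-apart $x,y$, with probability $\geq 1-L^{-2p}$ at least one of $\Lambda_{L,x},\Lambda_{L,y}$ is $(m,E)$-regular for every $E\in I$---and then invokes the deterministic von~Dreifus--Klein argument (Theorem~\ref{thm:vDK-2.3}) to conclude exponential localization. The key observation is that the diagonal bound $\EE\bigl(|G_{\Lambda}(E+\i\epsilon;x,x)|^s\bigr)\leq C$ already implies a Wegner estimate (Lemma~\ref{lemma:a-priori-wegner}), which is what one needs to rule out resonances between the two boxes; this is where Assumption~\ref{ass:finite} enters, to make the two box Hamiltonians stochastically independent. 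No passage to the eigenfunction correlator, no Stone formula, no limit $\epsilon\to 0$ inside an expectation of a first power is required.
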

\begin{remark} \label{remark:no_finite}
Theorem~\ref{theorem:exp_decay_loc} requires Assumption~\ref{ass:finite}. It seems that the proof and so the result can be extended to the case where Assumption~\ref{ass:finite} is replaced by $\lvert u(x) \rvert \leq C \lvert x \rvert_1^{-m}$ for some constant $m>4d$ and $\lvert x \rvert_1$ sufficiently large, but $u$ not necessarily of finite support. This may be realized by using ideas from \cite{KirschSS-98b} relying on a so-called uniform Wegner estimate to provide the independence of two finite-volume Hamiltonians even though the single-site potential is not of finite support.
\end{remark}
Putting together Theorem~\ref{theorem:exp_decay} and Theorem~\ref{theorem:exp_decay_loc}, we obtain exponential localization in the case of sufficiently large disorder.
\begin{theorem}[Exponential localization] \label{theorem:localization}
Let Assumption \ref{ass:monotone} be satisfied and $\lambda$ sufficiently large.
Then, for almost all $\omega \in \Omega$, $H_\omega$ exhibits exponential localization.
\end{theorem}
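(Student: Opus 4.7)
The plan is essentially to observe that Theorem~\ref{theorem:localization} is a direct corollary obtained by chaining together the two preceding theorems, so the work consists mainly of verifying that the hypotheses line up. First I would note that Assumption~\ref{ass:monotone} is strictly stronger than Assumption~\ref{ass:finite}, since it explicitly requires $\Theta = \supp u$ to be finite (and imposes additional positivity and regularity conditions on top). Hence Assumption~\ref{ass:finite}, which is the only structural hypothesis required by Theorem~\ref{theorem:exp_decay_loc}, is automatically available.

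Next I would fix an $s \in (0,1/3)$ and apply Theorem~\ref{theorem:exp_decay} with $\Gamma = \Lambda_{L,k}$, where $k \in \ZZ^d$ and $L \in \NN$ are arbitrary. For $\lambda$ sufficiently large, this theorem supplies constants $\mu, A \in (0,\infty)$ depending only on $d, \rho, u, s, \lambda$ (and crucially not on the choice of $\Gamma$ or of the spectral parameter $z \in \CC\setminus\RR$) such that
\[
\EE\bigl(\lvert G_{\Lambda_{L,k}}(z;x,y)\rvert^{s/(2\lvert\Theta\rvert)}\bigr) \leq A\, \euler^{-\mu \lvert x-y\rvert_\infty}
\]
for all $x,y \in \Lambda_{L,k}$. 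Specializing $z = E + \i\epsilon$ with $E \in \RR$ and $\epsilon \in (0,1]$, and setting $s' := s/(2\lvert \Theta \rvert) \in (0,1)$, this is exactly the fractional moment decay hypothesis of Theorem~\ref{theorem:exp_decay_loc} on the interval $I = \RR$, with uniform constants in all the relevant parameters $(k,L,x,y,E,\epsilon)$.

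Finally, I would invoke Theorem~\ref{theorem:exp_decay_loc} to conclude that, for almost all $\omega \in \Omega$, the operator $H_\omega$ exhibits exponential localization on $I = \RR$, which is the statement of Theorem~\ref{theorem:localization}. There is no genuine obstacle in this argument: all of the technical work is already done inside Theorem~\ref{theorem:exp_decay} (fractional moment bounds via the finite volume criterion plus the a priori bound in the strong disorder regime) and Theorem~\ref{theorem:exp_decay_loc} (the deduction of exponential localization from Green function decay without a monotonicity assumption); the proof here is simply a matter of matching constants and domains. The only point to double-check is that the fractional exponent delivered by Theorem~\ref{theorem:exp_decay}, namely $s/(2\lvert\Theta\rvert)$, lies in the admissible range $(0,1)$ required by Theorem~\ref{theorem:exp_decay_loc}, which is immediate since $s < 1$ and $\lvert\Theta\rvert \geq 1$.
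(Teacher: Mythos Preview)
Your proposal is correct and matches the paper's proof essentially verbatim: the paper simply says that Theorem~\ref{theorem:exp_decay} verifies the hypothesis of Theorem~\ref{theorem:exp_decay_loc} with $I=\RR$, yielding the result. Your additional checks (that Assumption~\ref{ass:monotone} implies Assumption~\ref{ass:finite}, and that $s/(2\lvert\Theta\rvert)\in(0,1)$) are the natural details one would spell out, and they are all sound.
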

All the previous theorems concern a proof of localization according to the fractional moment method. If one pursues a proof via multiscale analysis, there is a need for a Wegner estimate as an input for the multiscale analysis. The next theorem states a Wegner estimate for the discrete alloy-type model with sign-indefinite and exponentially decaying single-site potential. Note that the single-site potential is not assumed to have compact support.
\begin{theorem}[Wegner estimate] \label{theorem:wegner}
Let Assumption \ref{ass:exponential} be satisfied. Then there exists $C(u)>0$ and $I_0 \in \NN_0^d$ both depending only on $u$ such that for any $l\in \NN$ and any bounded interval $I \subset \RR$
\[
 \EE \bigl (\Tr \chi_I (H_{\Lambda_{l}}) \bigr)\le
\lambda^{-1} C(u) \lVert \rho \rVert_{\rm Var} \lvert I \rvert (2l+1)^{2d + \lvert I_0 \rvert} .
\]
A precise definition of $I_0 \in \NN_0^d$ is given in Eq.~\eqref{eq:cF}.

\end{theorem}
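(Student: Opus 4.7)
The plan is to combine the standard reduction of a Wegner estimate to spectral averaging with a linear change of the random variables that restores an effective monotonicity, even though $u$ is sign-changing. As a first step, one writes
\[
\EE\bigl(\Tr \chi_I(H_{\Lambda_l})\bigr) \leq \sum_{x \in \Lambda_l} \EE\bigl(\langle \delta_x, \chi_I(H_{\Lambda_l}) \delta_x\rangle\bigr)
= \sum_{x \in \Lambda_l} \EE\bigl(\mu_x^{H_{\Lambda_l}}(I)\bigr),
\]
where $\mu_x^{H_{\Lambda_l}}$ is the spectral measure of $H_{\Lambda_l}$ at $\delta_x$. The task is then to bound each spectral-measure expectation by a multiple of $\lambda^{-1}\|\rho\|_{\mathrm{Var}}|I|$ times a polynomial in $l$, and to sum over $x \in \Lambda_l$.

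The central idea is the following change of variables. Because $u$ decays exponentially, one can select a finite set $I_0 \subset \NN_0^d$ (depending only on $u$) together with coefficients $(t_k)_{k \in I_0}$ such that the linear combination $\zeta_x \defeq \sum_{k \in I_0} t_k \omega_{x+k}$ captures a substantial rank-one positive part of the multiplication operator $V_{\Lambda_l}$ at site $x$, with the residual being controlled by the exponential tails of $u$. After substituting $\omega_{x+k_\ast} = \zeta_x - \sum_{k \neq k_\ast} t_k \omega_{x+k}$ for a distinguished element $k_\ast \in I_0$, one rewrites $H_{\Lambda_l}$ as a sum of two parts: one that is an affine monotone function of $\zeta_x$ (essentially $\lambda \zeta_x$ times a rank-one projection onto $\delta_x$), and one that is independent of $\zeta_x$. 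This is precisely the trick that turns a sign-indefinite convolution kernel into a locally monotone dependence on an auxiliary variable, at the cost of enlarging the effective support by the diameter of $I_0$.

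With the monotonicity of $H_{\Lambda_l}$ in $\zeta_x$ established, I would invoke the standard spectral averaging lemma (in Combes--Hislop--Klopp style, adapted to BV densities): if $A(t) = A_0 + t P$ with $P \geq 0$ a rank-one projection and the auxiliary variable $\zeta_x$ has a density of bounded variation, then $\int \langle \phi, \chi_I(A(t)) \phi\rangle \, d\mu_{\zeta_x}(t) \leq C|I|\|\rho_{\zeta_x}\|_{\mathrm{Var}}$. The point is that the marginal density of $\zeta_x$, obtained as a convolution of rescaled copies of $\rho$, remains in $\BV(\RR)$ with variation norm controlled by a constant $C(u)$ depending only on the coefficients $(t_k)$ (hence only on $u$) times $\|\rho\|_{\mathrm{Var}}$. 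Conditioning on the remaining random variables and applying this averaging bound at fixed $x$ yields the factor $\lambda^{-1}C(u)\|\rho\|_{\mathrm{Var}}|I|$; the volume factor $|\Lambda_l|^{|I_0|}$ appears because in order to vary $\zeta_x$ freely one must account for up to $|I_0|$ overlapping coordinate changes as $x$ ranges through $\Lambda_l$, and the remaining factor $(2l+1)^{2d}$ comes from summing over $x$ together with the Jacobian bookkeeping of the linear change of variables.

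The main obstacle is the construction of $I_0$ and $(t_k)$ and the subsequent bookkeeping: one must ensure simultaneously that the effective rank-one term has a definite sign with a size comparable to $\lambda$, that the tail of $u$ lying outside $I_0$ does not destroy the monotonicity (this is precisely where Assumption \ref{ass:exponential} enters, through $|u(k)|\leq Ce^{-\alpha|k|_1}$), and that different sites $x, x' \in \Lambda_l$ do not share the same pivot variable $\omega_{x+k_\ast}$ too wastefully (otherwise the polynomial loss in $l$ would be too large). The $\BV$ hypothesis on $\rho$, together with the convolution structure of the marginal of $\zeta_x$, is used precisely to guarantee that after the change of variables the regularity required by the abstract spectral-averaging lemma is preserved with an explicit and benign norm bound.
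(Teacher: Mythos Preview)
Your proposal has the right overall architecture (reduce to a spectral-averaging bound via a change of variables that restores monotonicity), but the core construction is wrong in a way that would make the argument fail.

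First, you have misread what $I_0$ is: in the paper $I_0$ is a single multi-index in $\NN_0^d$, not a finite set, and $|I_0|$ denotes $\sum_r (I_0)_r$, not a cardinality. More importantly, the coefficients producing positivity are \emph{not} a fixed finite family $(t_k)_{k\in I_0}$ independent of $l$. What the paper actually shows (Propositions~\ref{prop1} and~\ref{prop2}) is that the polynomial weights $a(k)=k^{I_0}$ satisfy the exact identity $\sum_{k\in\ZZ^d} k^{I_0}u(x-k)=c_u\neq 0$ for all $x$, where $I_0$ is chosen so that $(D_z^{I_0}F)(\mathbf{1})\neq 0$ for the generating function $F(z)=\sum_k u(-k)z^k$. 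These weights are not summable; one truncates them to a box $\Lambda_{R_l}$ with $R_l$ growing linearly in $l$, and the exponential decay of $u$ controls the truncation error. The extra factor $(2l+1)^{|I_0|}$ therefore comes from $\sum_{k\in\Lambda_{R_l}}|k^{I_0}|\lesssim R_l^{d+|I_0|}$, not from ``overlapping coordinate changes'' as you suggest.

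Your claim that a fixed finite linear combination $\zeta_x=\sum_{k\in I_0}t_k\omega_{x+k}$ extracts a positive rank-one piece is unjustified: for a general sign-changing exponentially decaying $u$ (for instance one with $\sum_k u(k)=0$) there is no such fixed finite linear combination, which is exactly why the paper resorts to the generating-function analysis and polynomially growing weights. Without this step you cannot verify the key hypothesis~\eqref{eq:wegner:ass} of the abstract Wegner estimate (Theorem~\ref{theorem:abstract2}), and the argument breaks down. Once the correct coefficients $t_{j,l}(k)=2k^{I_0}/c_u$ on $\Lambda_{R_l}$ are in hand, the rest is an application of Theorem~\ref{theorem:abstract2}, whose proof already contains the change of variables and the $\BV$-based spectral averaging you describe.
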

\begin{remark}
Theorem \ref{theorem:wegner} is related to the results established in \cite{Veselic-10a}. There several Wegner bounds have been proven for $u \in \ell^1 (\ZZ^d ; \RR)$ (not necessarily of exponential decay) and by considering the assumptions
\begin{enumerate}[(i)]
 \item $u$ is finitely supported, or
 \item $\bar u := \sum_{k\in\ZZ^d} u(k) \neq 0$, or
 \item the space dimension satisfies $d=1$ and $u$ decays exponentially,
\end{enumerate}
which may hold or not hold.
If one of the above conditions is satisfied, the Wegner bound of \cite{Veselic-10a} is linear in the length of the energy interval and polynomially in the volume of the cube. A particularly important case in \cite{Veselic-10a} is the one when both conditions (i) and (ii) hold. In this situation the exponent of the length scale can be chosen to be equal to the space dimension $d$, and yields the Lipschitz continuity of the integrated density of states, and consequently its derivative, the density of states, exists for almost all $E \in \RR$, see \cite{Veselic-08} for a detailed discussion.
Concerning the case where assumption (iii) holds, one can say that Theorem \ref{theorem:wegner} is  a multidimensional generalization of the Wegner estimate from \cite{Veselic-10a}, though the improved proof presented here allows more explicit control on the volume dependence. However, if one goes back to the case where the single site potential is non-negative, the volume dependence in Theorem \ref{theorem:wegner} is quadratic while the one from the results in \cite{Veselic-10a} is linear.
\end{remark}
\begin{remark} \label{remark:wegner_loc}
The Wegner estimate from Theorem \ref{theorem:wegner} is linear in the energy-interval length and polynomial in the volume of the cube. Hence, our Wegner bound can be used for a localization proof via multiscale analysis in any energy region where an initial length scale estimate holds, see e.g.\ \cite{FroehlichS-83,FroehlichMSS-85,DreifusK-89,Kirsch-08}. If the single-site potential does not have compact support, one has to use an enhanced version of the multiscale analysis and so-called uniform Wegner estimates to prove localization, see \cite{KirschSS-98b}. 
\end{remark}
The proofs of the main results are organized as follows. In Chapter~\ref{chap:fmm} we prove all the results concerning the fractional moment method. In Section \ref{sec:d=1} we restrict ourselves to the special case $d=1$, since the important steps of the proof of localization are particularly transparent and the restriction to the one-dimensional case allows an explicit control over the constants. In Section~\ref{sec:boundedness} we prove an a priori bound which is used in Section~\ref{sec:exp_decay} to prove Theorem~\ref{theorem:finite_volume} and Theorem~\ref{theorem:exp_decay}. Section~\ref{sec:loc} is devoted to the proof of Theorems~\ref{theorem:exp_decay_loc} and \ref{theorem:localization}. In Chapter~\ref{chap:wegner} we prove Theorem~\ref{theorem:wegner}, a Wegner estimate for the discrete alloy-type model. 
\par
In an appendix, we present several results related to our main results. In particular, we prove an alternative a priori bound to the one in Section~\ref{sec:boundedness} in Appendix~\ref{chap:non_local}. Appendix~\ref{sec:cont_results} concerns some results on the continuous counterpart of the discrete alloy-type model, the alloy-type model. We prove a Wegner estimate for the alloy-type model with sign-changing single-site potential of a so-called generalized step-function form. Moreover, we show for the alloy-type model with sign-changing single-site potential an analogue of Theorem~\ref{theorem:exp_decay_loc}, i.e.\ that the exponential decay of fractional moments implies exponential localization.
\begin{remark}
The results proven according to the fractional moment method concern an sign-indefinite discrete alloy-type model on $\ell^2 (\ZZ^d)$. It is an interesting question whether the results can be generalized to a discrete alloy-type model defined on a locally finite graph. More precisely, let $G = (V,E)$ be an infinite, locally finite, connected graph without loops or multiple edges where $V = V (G)$ is the set of vertices and $E = E(G)$ is the set of edges. We use the notation $x \sim y$ to indicate that an edge connects the vertices $x$ and $y$, and $m(x) := \lvert \{y \in V \colon y \sim x\} \rvert$ for the number of vertices connected by an edge to $x$. On $\ell^2 (V)$ we consider the operator
\[
 H_\omega^{\rm G} = -\Delta^{\rm G} + \lambda V_\omega^{\rm G}
\]
where $(\Delta^{\rm G} \psi )(x) := -m(x) \psi (x) + \sum_{y \sim x} \psi (y)$, $V_\omega^{\rm G} (x) = \sum_{k \in V} \omega_k u(x-k)$, $u\in\ell^1 (V;\RR)$ and $\omega_k$, $k \in V$, a sequence of bounded and i.i.d.\ random variables whose distribution has a probability density $\rho \in L^{\infty} (\RR)$. Since the graph $G$ may have no uniform bound on the vertex degree, the Laplacian may be an unbounded operator. Let us note that $\Delta^{\rm G}$ is essentially self-adjoint on the dense subset of functions $\psi : V \to \CC$ with finite support, see e.g.\ \cite{Wojciechowski2007,Weber2010,Jorgensen2008}, or \cite{KellerL2011} for a proof in a more general framework.
\par
In the case where $V_\omega^{\rm G} (x) = \omega_x$ and $\lambda$ is sufficiently large, exponential and dynamical localization has been proven in \cite{Tautenhahn-11} for a certain class of locally finite graphs, including all graphs which have a uniform bound on the vertex degree and also some graphs with no uniform bound on the vertex degree.
\par
If one combines the methods from \cite{Tautenhahn-11} and this thesis (\cite{ElgartTV-10,ElgartTV-11}), one may find new criteria on the graph $G$ and on the sign-changing single-site potential $u$, such that exponential or dynamical localization can be proven for the operator $H_\omega^{\rm G}$ almost surely in the case of sufficiently large disorder.
\end{remark}

%
%
\section{Regularity properties\index{regularity properties} for the discrete alloy-type model} \label{sec:regularity}
Anderson models where the potential values at different lattice sites are not independent have been studied previously in the literature according to the multiscale analysis and the fractional moment method, see e.\,g. \cite{AizenmanM-93,DreifusK-91,AizenmanG-98,Hundertmark-00,AizenmanFSH-01,Hundertmark-08}. Among others, they prove localization as long as the potential values satisfy certain regularity conditions. More precisely, they require regularity of the distribution of the potential at $x \in \ZZ^d$ conditioned on arbitrary fixed potential values elsewhere. One question is, whether the regularity conditions of the above mentioned papers are satisfied for the discrete alloy-type potential or not, in other words, whether the theorems in \cite{AizenmanM-93,DreifusK-91,AizenmanG-98,Hundertmark-00,AizenmanFSH-01,Hundertmark-08} apply to our model or not. To be specific, let us formulate the regularity condition from \cite{AizenmanFSH-01}.
\begin{definition}\label{def:conditional}
Let $X$ be a countable set, $\{v_x\}_{x\in X}$ be a collection of real valued random variables and $\rho_x(\cdot\mid v_x^\perp)$ the probability distribution of $v_x$ conditioned on the random variables $v_x^\perp = \{v_j\}_{j \in X \setminus \{x\}}$. The collection $\rho_x(\cdot \mid v_x^\perp)$, $x\in X$,
is said to be (uniformly) \emph{$\tau$-H\"older continuous}\index{uniformly $\tau$-H\"older continuous} for $\tau \in(0,1]$ if there is a constant $C$ such that
\[
\sup_{x\in X} \sup_{v_x^\perp} \rho_x([a,b]\mid v_x^\perp)
\le C(b-a)^\tau \quad \text{ for all } [a,b] \subset \RR.
\]
The second supremum is taken over all possible values of $v_x^\perp$
in $\times_{\ZZ^d\setminus \{x\}}\RR$.
\end{definition}
In the following we want to study in which cases the regularity condition of Definition~\ref{def:conditional} is satisfied for the alloy-type potential and for which not. Therefore, we assume that the measure $\mu$ has a probability density $\rho \in L^\infty (\RR)$. The established results have already been published in \cite{TautenhahnV-10b}. First we give a result in the negative direction. 
\begin{lemma} \label{lemma:negexample}
 Let $d = 1$, $\Theta = \{0,1,\dots,n-1\}$ for some $n \in \NN$, $\inf \supp \rho = 0$ and $\sup \supp \rho = 1$. Then there are constants $c,m,s^+ \in (-\infty,\infty)$, depending only on $u$, such that for all $\delta > 0$ and $\delta \geq \delta' > 0$
\[
 \PP \bigl(\bigl\{ V_\omega (0) \in [m-c\delta , m + c \delta] \bigr\} \mid \bigl\{ V_\omega(-1),V_\omega (n-1) \in [s^+ - \delta' , s^+]  \bigr\}\bigr) = 1 .
\]
The values of the constants $c$, $m$ and $s^+$ can be inferred from the proof.
\end{lemma}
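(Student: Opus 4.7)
The plan is to exploit that in one dimension with $\Theta=\{0,1,\dots,n-1\}$, the three potential values $V_\omega(-1)$, $V_\omega(0)$ and $V_\omega(n-1)$ together involve a tightly coupled set of random variables: $V_\omega(-1)$ uses $\omega_{-n},\dots,\omega_{-1}$, $V_\omega(n-1)$ uses $\omega_{0},\dots,\omega_{n-1}$, and $V_\omega(0)$ uses $\omega_{-n+1},\dots,\omega_{0}$. Thus every random variable entering $V_\omega(0)$ is contained in the union of the two variable blocks on which $V_\omega(-1)$ and $V_\omega(n-1)$ depend; moreover $\omega_{-j}$ for $j=1,\dots,n-1$ appears in $V_\omega(-1)$ with coefficient $u(j-1)$, while $\omega_0$ appears in $V_\omega(n-1)$ with coefficient $u(n-1)$, and in each case the coefficient is nonzero since $\supp u = \Theta$.

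Next I would identify the extremal value. Because $\supp \nu \subset [0,1]$ with $0,1\in\supp\nu$, the (almost sure) supremum of $V_\omega(-1)$ equals
\[
s^+ \defeq \sum_{j=0}^{n-1} u(j)_+, \qquad u(j)_+ \defeq \max\{u(j),0\},
\]
and the same value $s^+$ is the essential supremum of $V_\omega(n-1)$. The key observation is that writing
\[
s^+ - V_\omega(-1) = \sum_{j:u(j)>0} u(j)\bigl(1-\omega_{-1-j}\bigr) + \sum_{j:u(j)<0} \lvert u(j)\rvert\,\omega_{-1-j}
\]
expresses the deficit as a sum of nonnegative terms. Hence if $V_\omega(-1)\in[s^+-\delta',s^+]$, then each individual term is at most $\delta'$, so
\[
\lvert \omega_{-1-j} - \omega_{-1-j}^{\ast}\rvert \leq \delta'/\lvert u(j)\rvert \quad \text{for } j=0,\ldots,n-1,
\]
where $\omega_{-1-j}^{\ast}\in\{0,1\}$ is the extremal value selected by $\operatorname{sgn} u(j)$. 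The same argument applied to $V_\omega(n-1)\in[s^+-\delta',s^+]$ forces $\omega_k$ for $k=0,\dots,n-1$ to be within $\delta'/\lvert u(n-1-k)\rvert$ of an extremal value $\omega_k^{\ast}$.

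Then I would simply substitute into $V_\omega(0)=\sum_{j=0}^{n-1} \omega_{-j} u(j)$. Set $m\defeq \sum_{j=0}^{n-1}\omega_{-j}^{\ast} u(j)$, using $\omega_{-j}^{\ast}$ inherited from the $V_\omega(-1)$-constraint for $j=1,\dots,n-1$ and from the $V_\omega(n-1)$-constraint for $j=0$. The triangle inequality gives
\[
\lvert V_\omega(0)-m\rvert \;\leq\; \lvert u(0)\rvert\cdot\frac{\delta'}{\lvert u(n-1)\rvert} + \sum_{j=1}^{n-1}\lvert u(j)\rvert\cdot\frac{\delta'}{\lvert u(j-1)\rvert} \;=\; c\,\delta',
\]
with
\[
c \;\defeq\; \frac{\lvert u(0)\rvert}{\lvert u(n-1)\rvert} + \sum_{j=1}^{n-1}\frac{\lvert u(j)\rvert}{\lvert u(j-1)\rvert}.
\]
Since $\delta'\leq\delta$, this shows that conditional on the stated event the value $V_\omega(0)$ lies deterministically in $[m-c\delta,m+c\delta]$, which is exactly the claim.

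There is no real obstacle beyond bookkeeping; the only thing to be careful about is that $\supp u = \{0,\dots,n-1\}$ guarantees nonvanishing of each $u(j-1)$ and of $u(n-1)$ in the denominators, and that the two conditioning events are measurable with respect to disjoint blocks of the i.i.d.\ sequence, so one does not need to worry about joint densities—the constraint is purely geometric. A small technical point to mention is that the conditional probability is well defined because the conditioning event $\{V_\omega(-1),V_\omega(n-1)\in[s^+-\delta',s^+]\}$ has strictly positive $\PP$-measure, a consequence of $1\in\supp\nu$ and $0\in\supp\nu$ together with the independence of the two variable blocks.
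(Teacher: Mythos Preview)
Your proof is correct and follows essentially the same approach as the paper: you write $s^+ - V_\omega(-1)$ and $s^+ - V_\omega(n-1)$ as sums of nonnegative terms, deduce that each $\omega_k$ entering $V_\omega(0)$ is pinned near an extremal value, and then substitute. The only difference is cosmetic: the paper uses the uniform bound $\delta'/u_{\min}$ on each coordinate (giving $c = n\,u_{\max}/u_{\min}$), whereas you keep the sharper individual bounds $\delta'/\lvert u(j)\rvert$, yielding a smaller constant $c$; the paper's $\Theta_1$ encodes exactly your collection of extremal values $\omega_{-j}^\ast$.
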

Notice that, under the assumptions of Lemma~\ref{lemma:negexample}, $V_\omega (-1)$ and $V_\omega (n-1)$ are stochastically independent and $\PP (\{ V_\omega(-1),V_\omega (n-1) \in [s^+ - \delta' , s^+] \}) > 0$, where $s^+$ is defined in the proof of Lemma~\ref{lemma:negexample}.
\begin{proof}[Proof of Lemma \ref{lemma:negexample}]
Let $\Theta^{\rm p} := \{k \in \ZZ : u (k) > 0\}$ and $\Theta^{\rm n} := \{k \in \ZZ : u (k) < 0\}$. Further let $u_{\rm max} = \max_{k \in \Theta} \lvert u(k) \rvert$, $u_{\rm min} = \min_{k \in \Theta} \lvert u(k)\rvert$ and $s^+ = \sum_{k \in \Theta^{\rm p}} u(k)$. Let us introduce two further subsets of $\Theta$ which are important in our study. The first one is
\[
\Theta_1 =
\begin{cases}
\Theta^{\rm p} + 1 & \text{if $n-1 \not \in \Theta^{\rm p}$}, \\[1ex]
\left ( (\Theta^{\rm p} + 1)\cap \Theta \right ) \cup \{0\} & \text{if $n-1 \in \Theta^{\rm p}$},
\end{cases}
\]
with $\Theta^{\rm p} + 1 = \{k\in \NN : (k-1)\in \Theta^{\rm p} \}$. The second subset is the complement $\Theta_0 = \Theta \setminus \Theta_1$. To end the proof we show the following interval arithmetic result:
\par
\textit{Let $\delta \geq \delta' > 0$ and $V_\omega (-1),V_\omega(n-1) \in [s^+ - \delta' , s^+]$. Then
\begin{equation} \label{eq:proof0}
V_\omega (0) \in [m-c\delta' , m + c \delta'] \subset [m-c\delta , m + c \delta]
\end{equation}
with $c = n u_{\rm max} / u_{\rm min}$ and $m = \sum_{k \in \Theta_1} u (k)$.}
\par
We divide the proof of \eqref{eq:proof0} into three parts. The first step is to argue that
\begin{equation} \label{eq:proof1}
\omega_{-1-k} \in 
\begin{cases}
\bigl[ 1 - \frac{\delta'}{u_{\rm min}} , 1 \bigr]  & \text{for $k \in \Theta^{\rm p}$}, \\[1ex]
\bigl[ 0, \frac{\delta'}{u_{\rm min}} \bigr]   & \text{for $k \in \Theta^{\rm n}$}.
\end{cases} 
\end{equation}
For the proof of the first part of \eqref{eq:proof1} we use the assumption $V_\omega(-1) \geq s^+ - \delta'$ and obtain
\[
s^+ - \delta' \leq V_\omega (-1) = \sum_{k \in \Theta} u(k) \omega_{-1-k} \leq \sum_{k \in \Theta^{\rm p}} u(k) \omega_{-1-k} , 
\]
and hence $\sum_{k \in \Theta^{\rm p}} u(k) (1- \omega_{-1-k}) \leq \delta'$. We conclude that for all $k \in \Theta^{\rm p}$ we have $u(k) (1-\omega_{-1-k}) \leq \delta'$ which gives the first part of \eqref{eq:proof1}. For the proof of the second part of \eqref{eq:proof1} we use again the assumption $V_\omega (-1) \geq s^+ - \delta'$ and obtain
\begin{equation*}
\sum_{k \in \Theta^{\rm p}} u(k) \omega_{-1-k} - \delta'  \leq s^+ - \delta'  \leq  V_\omega (-1) = \sum_{k \in \Theta^{\rm p}} u(k) \omega_{-1-k} + \sum_{k \in \Theta^{\rm n}} u(k) \omega_{-1-k}
\end{equation*}
which gives $-\delta' \leq \sum_{k \in \Theta^{\rm n}} u(k) \omega_{-1-k}$. Thus, for all $k \in \Theta^{\rm n}$ we have $\omega_{-1-k} \leq -\delta' / u(k) = \delta' / \lvert u(k) \rvert$ which gives the second part of \eqref{eq:proof1}. In a second step we argue that 
\begin{equation} \label{eq:proof2}
\omega_{-k+n-1} \in 
\begin{cases}
\bigl[ 1 - \frac{\delta'}{u_{\rm min}} , 1 \bigr]  & \text{for $k \in \Theta^{\rm p}$}, \\[1ex]
\bigl[ 0, \frac{\delta'}{u_{\rm min}} \bigr]   & \text{for $k \in \Theta^{\rm n}$} . 
\end{cases} 
\end{equation}
The proof of \eqref{eq:proof2} can be done in analogy to the proof of \eqref{eq:proof1}, but using the assumption $V_\omega (n-1) \geq s^+ - \delta'$. 
In a third step we ask the question for which $k \in \Theta$ we have $\omega_{-k} \in [1-\delta' / u_{\rm min} , 1]$. Using the definition of the set $\Theta_1$ we find with \eqref{eq:proof1} and \eqref{eq:proof2} that
\begin{equation} \label{eq:proof3}
\omega_{-k} \in 
\begin{cases}
\bigl[ 1 - \frac{\delta'}{u_{\rm min}} , 1 \bigr]  & \text{for $k \in \Theta_1$}, \\[1ex]
\bigl[ 0, \frac{\delta'}{u_{\rm min}} \bigr]   & \text{for $k \in \Theta_0$} . 
\end{cases} 
\end{equation}
Now, the desired result \eqref{eq:proof0} follows from \eqref{eq:proof3} and the decomposition
\[
V_\omega (0) = \sum_{k \in \Theta} u(k) \omega_{-k} = \sum_{k \in \Theta_1} u(k) \omega_{-k} + \sum_{k \in \Theta_0} u(k) \omega_{-k} .
\]
Hence, the proof is complete.
\end{proof}
\begin{remark}
 The assumption $\inf \supp \rho = 0$ and $\sup \supp \rho = 1$ in Lemma \ref{lemma:negexample} is not crucial. What matters is that $\supp \rho$ is a bounded set.
\end{remark}
\begin{remark}
Lemma \ref{lemma:negexample} implies that the collection of random variables $V_\omega (k)$, $k \in \ZZ^d$, is not uniformly $\tau$-H\"o{}lder continuous. Hence, the results in \cite{AizenmanFSH-01} do not apply to the discrete alloy-type model in general. 
\end{remark}
Now, we consider the case $d = 1$, $\Theta = \{-1,0\}$, $u(0) = 1$ and where $\rho$ is a Gaussian density function with mean zero and variance $\sigma^2$. Note that $\rho$ has unbounded support, although we assumed that the measure $\nu$ has compact support. However, let us consider for the sake of this discussion a generalization of the discrete alloy-type model with a probability measure $\nu$ of unbounded support. In this situation it turns out that the regularity assumption from \cite{AizenmanFSH-01} is satisfied as long as $\lvert u(-1) \rvert \not = 1$. We also refer the reader to \cite{DreifusK-91} where a simiar condition to Definition \ref{def:conditional} is studied in the Gaussian case. Our study is based on the following classical result which may be found in \cite{Port-94}.
\begin{proposition} \label{prop:cond0}
Let $X$ be normally distributed on $\RR^d$, $Y = a \cdot X$ where $a \in \RR^{d}$, and $W = B X$ where $B \in \RR^{m \times d}$. Assume $W$ has a non-singular distribution. Then the distribution of $Y$ conditioned on $W = v \in \RR^m$ is the Gaussian distribution having mean
\[
 \mathbf{E} ( Y ) + \cov (Y,W) \cov(W,W)^{-1} [v - \mathbf{E} (W)]
\]
and variance
\[
 \cov (Y,Y) - \cov (Y,W) \cov(W,W)^{-1} \operatorname{cov} (W,Y) .
\]
\end{proposition}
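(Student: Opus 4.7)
The plan is to carry out the standard Gaussian decorrelation argument. First I would note that since $Y$ and $W$ are both linear (indeed, linear in the same vector $X$), the stacked vector $(Y,W) \in \RR \times \RR^m$ is jointly Gaussian; in particular its law is determined entirely by its mean vector and covariance matrix. The non-singularity hypothesis on the distribution of $W$ guarantees that $\cov(W,W) \in \RR^{m\times m}$ is invertible, so the expressions appearing in the statement are well-defined.

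Next I would introduce the affinely corrected variable
\[
 Z \defeq Y - \mathbf{E}(Y) - \cov(Y,W)\,\cov(W,W)^{-1}\bigl(W - \mathbf{E}(W)\bigr),
\]
and verify by a direct computation that $\mathbf{E}(Z) = 0$ and
\[
 \cov(Z,W) = \cov(Y,W) - \cov(Y,W)\,\cov(W,W)^{-1}\cov(W,W) = 0.
\]
Because $(Z,W)$ is jointly Gaussian and uncorrelated, it is actually independent; this is the one non-trivial ingredient, and it will be the key step in the argument. Solving for $Y$ gives the representation
\[
 Y = Z + \mathbf{E}(Y) + \cov(Y,W)\,\cov(W,W)^{-1}\bigl(W - \mathbf{E}(W)\bigr),
\]
from which, by independence of $Z$ and $W$, the conditional law of $Y$ given $W = v$ is the law of $Z$ translated by the deterministic quantity $\mathbf{E}(Y) + \cov(Y,W)\,\cov(W,W)^{-1}(v - \mathbf{E}(W))$.

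It remains to identify mean and variance of this conditional law. The mean is exactly $\mathbf{E}(Y) + \cov(Y,W)\,\cov(W,W)^{-1}(v - \mathbf{E}(W))$, matching the claim. For the variance, expanding bilinearly,
\[
 \cov(Z,Z) = \cov(Y,Y) - 2\,\cov(Y,W)\,\cov(W,W)^{-1}\cov(W,Y) + \cov(Y,W)\,\cov(W,W)^{-1}\cov(W,Y),
\]
which collapses to $\cov(Y,Y) - \cov(Y,W)\,\cov(W,W)^{-1}\cov(W,Y)$, again as claimed. Moreover, since $Z$ is a linear functional of the Gaussian vector $X$, it is itself Gaussian, and hence the conditional law of $Y$ is Gaussian with the stated parameters. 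The main potential obstacle is the implication ``jointly Gaussian and uncorrelated $\Longrightarrow$ independent'' for $(Z,W)$; this is standard and can be justified via the characteristic function of $(Z,W)$, which factorises precisely because the off-diagonal blocks of its covariance matrix vanish.
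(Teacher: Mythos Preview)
Your argument is correct and is the standard decorrelation proof of this classical fact about conditioning in jointly Gaussian vectors. The paper itself does not prove this proposition at all: it is stated as a known result and attributed to Port's textbook \cite{Port-94}, and is then applied as a black box in the subsequent lemmas. So there is no ``paper's own proof'' to compare against; your write-up supplies precisely the kind of justification the paper elects to cite rather than reproduce.
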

For $l \in \NN$ let $A_{l} \in \RR^{l \times l+1}$ be the matrix with coefficients in the canonical basis given by $A_{l} (i,i) = 1$, $A_l (i,i+1) = u(-1)$ for $i \in \{1,\dots,l\}$, and zero otherwise, namely
\[
A_l= \begin{pmatrix}
1 & u(-1) & & &\\
& \ddots & \ddots & &\\
 & & \ddots & u(-1) &\\
  & & & 1 & u(-1)
\end{pmatrix} \in \RR^{l \times l+1} .
\]
Notice, if we apply $A_l$ on the vector $\omega_{[x,x+l]} = (\omega_{x+k-1})_{k=1}^{l+1}$, we obtain a vector containing the potential values $V_\omega (k)$, $k \in \{x,x+1,\ldots,x+l\}$. Moreover, the vector $(V_\omega (x+k-1))_{k=1}^l = A_l \omega_{[x,x+l]}$ is normally distributed with mean zero and covariance $\sigma^2 A_l A_l^{\rm T}$.
The matrix $A_l A_l^T$ has the form
\[
A_l A_l^T = \begin{pmatrix}
1 + u^2(-1) 	& u(-1) 	& 		& 		\\
u(-1) 	& 1+u^2(-1) 	& \ddots 	& 		\\
 		& \ddots 	& \ddots 	& u(-1)	\\
  		& 		& u(-1) 	& 1 + u^2(-1)
\end{pmatrix} \in \RR^{l \times l} .
\]
By induction we find that the determinant of $A_l A_l^{\rm T}$ is given by
\[
 \det (A_l A_l^{\rm T}) = s_l > 0 \quad \text{where} \quad s_l := \sum_{i=1}^{l} \bigl(u(-1)\bigr)^{2i} .
\]
Since the minor $M_{11}$ and $M_{ll}$ of $A_l A_l^{\rm T}$ equals $A_{l-1} A_{l-1}^{\rm T}$ we obtain by Cramers rule for the elements $(1,1)$ and $(l,l)$ of the inverse of $A_{l-1} A_{l-1}^{\rm T}$
\begin{equation} \label{eq:inverseelement}
 (A_{l} A_{l}^{\rm T})^{-1} (1,1) = (A_{l} A_{l}^{\rm T})^{-1} (l,l) = \frac{s_{l-1}}{s_l} .
\end{equation}
\begin{lemma} \label{lemma:gaussian1}
Let $d = 1$, $l,m \geq 1$, $\Theta = \{-1,0\}$, $u(0)=1$ and $\rho$ be the Gaussian density with mean zero and variance $\sigma^2$. Let further $v^+ \in \RR^{l}$ and $v^- \in \RR^{m}$. Then the distribution of $V_\omega (0)$ conditioned on $(V_\omega (k))_{k=1}^{l} = v^+$ and $(V_\omega (-m+k-1))_{k=1}^{m} = v^-$ is Gaussian with variance
\begin{equation*} 
 \gamma = \sigma^2 \Bigl( u(-1)^2 - 1 + \frac{1}{s_m} + \frac{1}{s_{l}} \Bigr)
\end{equation*}
and mean
\[
 n = u(-1) \left( \sum_{i=1}^{m} (A_{m} A_{m}^{\rm T})^{-1} (m,i) \, v^-_i + \sum_{i=1}^{l} (A_{l} A_{l}^{\rm T})^{-1} (1,i) \, v^+_i \right) .
\]
\end{lemma}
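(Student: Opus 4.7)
The decisive structural fact is that $(\omega_k)_{k \leq 0}$ and $(\omega_k)_{k \geq 1}$ are independent, and since $\Theta = \{-1, 0\}$ the two sets of conditioning constraints split along this divide. Indeed, the negative-side constraints $V_\omega(-m+k-1) = v^-_k$ involve only $\omega_{-m}, \ldots, \omega_0$, while the positive-side constraints $V_\omega(k) = v^+_k$ involve only $\omega_1, \ldots, \omega_{l+1}$. Because $V_\omega(0) = \omega_0 + u(-1) \omega_1$ decomposes with one summand in each block, the conditional law of $V_\omega(0)$ is the convolution of two independent Gaussians: the conditional law of $\omega_0$ given the negative constraints, plus $u(-1)$ times the conditional law of $\omega_1$ given the positive constraints.

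For each block the conditional law is computed via Proposition~\ref{prop:cond0}. On the negative side, take $X = (\omega_{-m}, \ldots, \omega_0)^T \sim \cN(0, \sigma^2 I_{m+1})$, let $Y = \omega_0$ be the projection onto the last coordinate, and set $W = A_m X$, which is exactly the vector of negative constraints. Only the $(m, m+1)$ entry of $A_m$, which equals $u(-1)$, couples $\omega_0$ to any component of $W$; hence $\cov(Y, W)$ is a row vector whose only nonzero entry is $u(-1) \sigma^2$ in position $m$, while $\cov(W, W) = \sigma^2 A_m A_m^T$. Proposition~\ref{prop:cond0} yields conditional mean $u(-1) \sum_{i=1}^m (A_m A_m^T)^{-1}(m, i)\, v^-_i$ and conditional variance $\sigma^2\bigl(1 - u(-1)^2 (A_m A_m^T)^{-1}(m, m)\bigr)$, which by \eqref{eq:inverseelement} together with the telescoping identity $s_m - u(-1)^2 s_{m-1} = 1$ collapses to $\sigma^2/s_m$.

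The positive side is symmetric: with $X = (\omega_1, \ldots, \omega_{l+1})^T$, $Y = \omega_1$ and $W = A_l X$, the covariance $\cov(Y, W)$ has its sole nonzero entry $\sigma^2$ in position $1$, producing conditional mean $\sum_{i=1}^l (A_l A_l^T)^{-1}(1, i)\, v^+_i$ and conditional variance $\sigma^2(1 - s_{l-1}/s_l) = \sigma^2 u(-1)^{2l}/s_l$. Multiplying by $u(-1)$ and $u(-1)^2$ respectively and adding the two independent contributions produces the claimed mean $n$ and the preliminary variance $\sigma^2 \bigl( 1/s_m + u(-1)^{2(l+1)}/s_l \bigr)$.

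The last step is purely algebraic: the geometric-series identity $(u(-1)^2 - 1) s_l = u(-1)^{2(l+1)} - 1$ rearranges to $u(-1)^{2(l+1)}/s_l = u(-1)^2 - 1 + 1/s_l$ (and the degenerate case $u(-1)^2 = 1$ is verified directly), and substituting this into the preliminary variance recovers the claimed $\gamma = \sigma^2(u(-1)^2 - 1 + 1/s_m + 1/s_l)$. The only real obstacle is bookkeeping, namely correctly identifying which entry of $\cov(Y, W)$ is nonzero on each side; once this is in place, Proposition~\ref{prop:cond0} and \eqref{eq:inverseelement} do all the work.
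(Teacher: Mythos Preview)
Your proof is correct and takes a genuinely different route from the paper's. The paper applies Proposition~\ref{prop:cond0} \emph{once}, with the full vector $X=(\omega_{-m},\ldots,\omega_{l+1})$, $Y=V_\omega(0)$, and $W=BX$ for the block-diagonal matrix $B=\mathrm{diag}(A_m,A_l)$; the block structure of $\cov(W,W)=\sigma^2\,\mathrm{diag}(A_mA_m^{\rm T},A_lA_l^{\rm T})$ then appears in the inversion, and the variance formula drops out after using~\eqref{eq:inverseelement} on each block. You instead observe at the outset that $V_\omega(0)=\omega_0+u(-1)\omega_1$ splits into independent pieces aligned with the two constraint families, apply Proposition~\ref{prop:cond0} separately to each half, and add the results. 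The two approaches are formally equivalent --- your independence observation is exactly what makes the paper's block-diagonal structure work --- but yours is more transparent: the independence is named rather than discovered inside a matrix inverse, and you never handle a matrix larger than $A_mA_m^{\rm T}$ or $A_lA_l^{\rm T}$. The trade-off is the extra algebraic step at the end (rewriting $u(-1)^{2(l+1)}/s_l$ via the geometric-series identity), which the paper avoids because its single application of Proposition~\ref{prop:cond0} with $Y=V_\omega(0)$ produces $\gamma$ directly in the stated form.
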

\begin{proof}
 Let $X := (\omega_{-m-1+k})_{k=1}^{l+m+2} \in \RR^{l+m+2}$, $a = (a_{i})_{i=1}^{l+m+2} \in \RR^{l+m+2}$ the vector with coefficients $a_{m+1}=1$, $a_{m+2} = u(-1)$ and zero otherwise. Let us further define the block-matrix
\[
 B = \begin{pmatrix}
      A_{m} & 0 \\
      0   & A_{l}
     \end{pmatrix} \in \RR^{(m+l) \times (m+l+2) }.
\]
Notice that $Y := a \cdot X = V_\omega (0)$,
\[
 A_{m} \omega_{[-m,0]} = (V_\omega (-m+k-1))_{k=1}^{m}, \quad \text{and} \quad 
 A_{l} \omega_{[1,l+1]}= (V_\omega (k))_{k=1}^{l} , 
\]
where $\omega_{[-m,0]} = (\omega_{-m+k-1})_{k=1}^{m+1}$ and $\omega_{[1,l+1]} = (\omega_{k})_{k=1}^{l+1}$.
Hence $W := B X$ is the $(m+l)$-dimensional vector containing the potentials $V_\omega (k)$, $k \in \{-m, \dots, l\} \setminus \{0\}$. Notice that $Y$ and $W$ have mean zero, since $X$ has mean zero.
We apply Proposition~\ref{prop:cond0} with these choices of $X$, $Y$ and $W$, and obtain that the distribution of $V_\omega (0)$ conditioned on $(V_\omega (-m+k-1))_{k=1}^{m} = v^-$ and $(V_\omega (k))_{k=1}^{l} = v^+$ is Gaussian with mean 
\begin{equation*}
n = \cov (Y,W) \cov(W,W)^{-1} v 
\end{equation*}
and variance 
\[
\gamma = \cov (Y,Y) - \cov (Y,W) \cov(w,w)^{-1} \operatorname{cov} (W,Y), 
\]
where $v = (v^- , v^+)^{\rm T}$. It is straightforward to calculate $\cov (Y,Y) = \sigma^2 (1+ u(-1)^2)$ and $\cov (W,Y) =  z = (z^- , z^+)^{\rm T}$, where $z^- = (0,\dots,0,\sigma^2 u(-1))^{\rm T} \in \RR^{m}$ and $z^+ = (\sigma^2 u(-1) ,0,\dots,0)^{\rm T} \in \RR^{l}$. We also have
\[
 \cov (W,W) =   \sigma^2 \begin{pmatrix}
                         A_{m} A_{m}^{\rm T} & 0 \\
                        0  & A_{l} A_{l}^{\rm T}
                       \end{pmatrix} .
\]
Hence by Eq.~\eqref{eq:inverseelement}
\begin{align*}
\gamma &= \sigma^2 (1 + u(-1)^2) - \sigma^{-2}  z^{\rm T} \begin{pmatrix}
                        ( A_{m} A_{m}^{\rm T})^{-1} & 0 \\
                        0  & (A_{l} A_{l}^{\rm T})^{-1}
                       \end{pmatrix}^{-1} z  \nonumber \\
%
%
&= \sigma^2 (1 + u(-1)^2) - \sigma^{-2} \biggl[ \sigma^4 u^2(-1) \frac{s_{m-1}}{s_m} +  \sigma^4 u^2 (-1) \frac{s_{l-1}}{s_l} \biggr] \\
&= \sigma^2 (1 + u(-1)^2) - \sigma^{2} \biggl(  1-\frac{1}{s_m} \biggr)- \sigma^{2} \biggl(  1-\frac{1}{s_m} \biggr) ,
\end{align*}
and
\begin{equation*}
 n =  \bigl[{z^-}^{\rm T}(\sigma^2 A_{m} A_{m}^{\rm T})^{-1} {v^-} + {z^+}^{\rm T}(\sigma^2 A_{l} A_{l}^{\rm T})^{-1} {v^+}\bigr] .
\end{equation*}
This proves the statement of the lemma.
\end{proof}
The case of Lemma \ref{lemma:gaussian1} where either $m$ or $l$ equals zero can be proven analogously and is indeed contained in the statement of Lemma \ref{lemma:gaussian1} in the sense that $s_0 = 1$. However, to avoid confusion let us reformulate the case $m = 0$.
\begin{lemma} \label{lemma:gaussian2}
 Let $d = 1$, $l \geq 1$, $\Theta = \{-1,0\}$, $u(0) = 1$, $\rho$ be the Gaussian density with mean zero and variance $\sigma^2$ and $v \in \RR^{l}$. Then the distribution of $V_\omega (0)$ conditioned on $(V_\omega (k))_{k=1}^{l} = v$ is Gaussian with variance
\begin{equation*}
 \gamma = \sigma^2 \Bigl( u(-1)^2 + \frac{1}{s_l} \Bigr) \quad \text{and mean} \quad   m = u(-1) \sum_{i=1}^{l} (A_{l} A_{l}^{\rm T})^{-1} (1,i) \, v_i .
\end{equation*}
\end{lemma}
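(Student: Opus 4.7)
The plan is to mimic the proof of Lemma~\ref{lemma:gaussian1} in the degenerate case $m = 0$, carried out directly to avoid the empty blocks that literally arise when one sets $m = 0$ in the general formulas. The whole argument reduces to a single invocation of Proposition~\ref{prop:cond0} followed by a short linear algebra computation of three covariances.

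First I would introduce the Gaussian vector $X := (\omega_0, \omega_1, \ldots, \omega_{l+1})^{\rm T} \in \RR^{l+2}$, which has mean zero and covariance $\sigma^2 I$. With $a := (1, u(-1), 0, \ldots, 0)^{\rm T} \in \RR^{l+2}$ one has $Y := a \cdot X = V_\omega(0)$, and with $B \in \RR^{l \times (l+2)}$ obtained from $A_l$ by prepending a zero column one has $W := B X = (V_\omega(k))_{k=1}^{l}$. Proposition~\ref{prop:cond0} then yields directly that the conditional distribution of $V_\omega(0)$ given $W = v$ is Gaussian with mean $n = \cov(Y,W) \cov(W,W)^{-1} v$ and variance $\gamma = \cov(Y,Y) - \cov(Y,W) \cov(W,W)^{-1} \cov(W,Y)$.

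The three covariances are immediate: $\cov(Y,Y) = \sigma^2 \lVert a \rVert^2 = \sigma^2(1 + u(-1)^2)$; the zero first column of $B$ drops out of $B B^{\rm T}$, giving $\cov(W,W) = \sigma^2 A_l A_l^{\rm T}$; and since $\omega_1$ is the only coordinate of $X$ that appears in both $V_\omega(0)$ and any $V_\omega(k)$ with $k \geq 1$ (namely $V_\omega(1)$), one finds $\cov(Y,W) = (\sigma^2 u(-1), 0, \ldots, 0) \in \RR^{1 \times l}$. The identity $(A_l A_l^{\rm T})^{-1}(1, 1) = s_{l-1}/s_l$ recorded in Eq.~\eqref{eq:inverseelement} is then exactly what is needed to simplify the scalar matrix products.

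Plugging these into Proposition~\ref{prop:cond0} yields the mean formula $n = u(-1) \sum_{i=1}^{l} (A_l A_l^{\rm T})^{-1}(1, i)\, v_i$ at once, and produces $\gamma = \sigma^2 (1 + u(-1)^2) - \sigma^2 u(-1)^2 \, s_{l-1}/s_l$. The recursion $s_l = 1 + u(-1)^2 s_{l-1}$ (valid with the boundary convention $s_0 = 1$) rewrites $u(-1)^2 s_{l-1}/s_l$ as $1 - 1/s_l$, collapsing $\gamma$ to $\sigma^2(u(-1)^2 + 1/s_l)$, as claimed. I do not expect any real obstacle here: everything is routine linear algebra once Proposition~\ref{prop:cond0} is invoked, the only care required being to track which coordinates of $\omega$ appear in each $V_\omega(k)$ (with the paper's convention $V_\omega(k) = \omega_k + u(-1) \omega_{k+1}$) and to respect the boundary convention $s_0 = 1$ consistent with the form taken by the general formulas of Lemma~\ref{lemma:gaussian1}.
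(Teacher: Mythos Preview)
Your proposal is correct and follows exactly the approach the paper indicates: the paper does not give a separate proof of Lemma~\ref{lemma:gaussian2} but simply remarks that it is the $m=0$ case of Lemma~\ref{lemma:gaussian1}, proven analogously, which is precisely what you carry out in detail. Your identification of $X$, $a$, $B$, the three covariance computations, the use of Eq.~\eqref{eq:inverseelement}, and the recursion $s_l = 1 + u(-1)^2 s_{l-1}$ with $s_0 = 1$ are all accurate and match the paper's setup.
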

Note that the variance from Lemma \ref{lemma:gaussian2} is bounded from below uniformly even if $u(-1)^2 = 1$, while the variance from Lemma  \ref{lemma:gaussian1} tends to zero if $u(-1)^2 = 1$. 
\begin{remark}
We want to discuss the validity of the regularity assumption from \cite{AizenmanFSH-01} in the case $d = 1$, $\Theta = \{-1,0\}$, $u(0) = 1$ and $\rho$ the Gaussian density function with mean zero and variance $\sigma^2$. Notice that the Gaussian distribution is $\tau$-H\"o{}lder continuous with a constant $C$ independent on the mean but depending on the variance, and the property that $C \to \infty$ if the variance tends to zero.
\par
Let $l,m \geq 1$. If $\lvert u(-1) \rvert \not = 1$, Lemma \ref{lemma:gaussian1} and Lemma \ref{lemma:gaussian2} give that the distribution of $V_\omega (0)$ conditioned on fixed potential values $V_\omega (k)$, $k \in \{-m,\dots,l\} \setminus \{0\}$, is again Gaussian with variance bounded from below by $\sigma^2 \lvert u^2 (-1) - 1 \rvert$. 
As a consequence, the random field $V_\omega (k)$, $k \in \{-m+1,\dots,n-1\}$ is uniformly $\tau$-H\"o{}lder continuous and the constant $C$ from Definition \ref{def:conditional} may be chosen independently from $m,l \in \NN$. Hence the method from \cite{AizenmanFSH-01} applies and gives localization.
\par
If $\lvert u(-1) \rvert = 1$, the situation is somehow different. In this case Lemma \ref{lemma:gaussian1} and Lemma \ref{lemma:gaussian2} give that the random field $V_\omega (k)$, $k \in \Lambda_L = \{-L,\dots,L\}$, $L \in \NN$, satisfies
\[
 \sup_{x \in \Lambda_L} \sup_{v \in \RR^{2L}} \PP(\{V_\omega (x) \in [a,b]\} \mid \{V_\omega (k) = v_k, k \in \Lambda_L \setminus \{x\}\}) \le C_L (b-a)^\tau 
\]
but the constant $C_L$ cannot be chosen uniformly in $L \in \NN$. In particular, $C_L \to \infty$ if $L \to \infty$. As a consequence, the method of \cite{AizenmanFSH-01} will give a bound on the expectation of $\lvert G_{\Lambda_L} (z;i,j) \rvert^s$ which depends on the volume of $\Lambda_L$, and hence does not immediately yield localization. If one considers finite volume restriction $H_{\Lambda_L}$, an analogue condition to Definition \ref{def:conditional} which is sufficient for localization would be the following. There is a $\tau \in(0,1]$ and a constant $C$ such that
\[
\sup_{L \in \NN} \sup_{x \in \Lambda_L} \sup_{v \in \RR^{2L}} \PP(\{V_\omega (x) \in [a,b]\} \mid \{V_\omega (k) = v_k, k \in \Lambda_L \setminus \{x\}\}) \le C(b-a)^\tau 
\]
for all $[a,b] \subset \RR$. This condition is obviously not satisfied if $\lvert u(-1) \rvert = 1$ by Lemma~\ref{lemma:gaussian1} and Lemma~\ref{lemma:gaussian2}.
\end{remark}
\chapter{Fractional moment method for discrete alloy-type models} \label{chap:fmm}
In this chapter we show exponential localization for the discrete alloy-type model in the strong disorder regime under Assumption~\ref{ass:monotone}. In a first section we provide certain tools needed for a proof of localization according to the fractional moment method. In Section~\ref{sec:d=1} we restrict ourselves to the special case $d=1$, since the restriction to the one-dimensional case allows an elegant and short proof of localization. Moreover, we have an explicit control over the constants, for instance on the assumption on the disorder to prove localization. The results presented in Section \ref{sec:d=1} concern a joint work with Alexander Elgart and Ivan Veseli\'c and have already been published in \cite{ElgartTV-10}. The reader who is interested in the theory for arbitrary $d\in\NN$ can directly jump to Section~\ref{sec:boundedness}. 
\par
In Section~\ref{sec:boundedness} we start to develop the theory for arbitrary dimension and prove the boundedness of an averaged fractional power of the Green function. Via a decoupling argument we show in Section~\ref{sec:exp_decay} a so-called finite volume criterion. Together with the fractional moment bound from Section~\ref{sec:boundedness} this gives exponential decay of fractional moments if the disorder is sufficiently large. In Section~\ref{sec:loc} we show that the exponential decay of fractional moments implies exponential localization. The results from Sections~\ref{sec:boundedness} to \ref{sec:loc} concern a joint work with Alexander Elgart and Ivan Veseli\'c and have already been published in \cite{ElgartTV-11}.
\section{Spectral averaging and Schur complement} \label{sec:spectral_averaging}
One important step in a proof of localization according to the fractional moment method is the boundedness of an averaged fractional power of the Green function. In order to prove this, several estimates on averages of resolvents and determinants are useful. 
\par
Let us first motivate the problems arising from the sign-indefiniteness of the single-site potential. To this end, we first consider the sign-definite case where $u(0) = 1$ and $u (k) = 0$ for $k \in \ZZ^d \setminus \{0\}$, and where the measure $\nu$ has a probability density $\rho \in L^{\infty} (\RR)$. By rank one perturbation one can show for all $x \in \ZZ^d$ and $z \in \CC \setminus \RR$ that
\begin{equation} \label{eq:represent1}
 G_\omega (z; x,x) = \frac{\lambda^{-1}}{\omega_x + \hat G_\omega (z;x,x)^{-1}},
\end{equation}
where $\hat G_\omega (z;x,x) = \langle \delta_x , (\hat H_\omega - z)^{-1} \delta_y  \rangle$ and where $\hat H_\omega$ is obtained from $H_\omega$ by setting the random variable $\omega_x$ to zero. If we take the absolute value to the power $s \in (0,1)$ of this identity, and average with respect to $\omega_x$, we obtain
\begin{equation*} \label{eq:rank_one}
 \int_\RR \lvert G_\omega (z; x,x) \rvert^s \rho (\omega_x) \drm \omega_x = \lambda^{-s} \int_\RR \frac{1}{\lvert \omega_x + \hat G_\omega (z;x,x)^{-1} \rvert^s} \rho (\omega_x) \drm \omega_x .
\end{equation*}
Since $\beta = \hat G_\omega (z;x,x)$ is independent of $\omega_x$, one can estimate the integral in Eq.~\eqref{eq:rank_one} by a constant uniform in $\omega_k$, $k \in \ZZ^d \setminus \{x\}$. Roughly speaking, the pole at $-\beta$ is integrable since $s<1$ and $\rho$ is bounded, and the integrand is integrable at infinity since $\rho \in L^1 (\RR)$. More precisely, let $g:\RR \to \RR$ be non-negative with $g \in L^{\infty}(\RR) \cap L^1(\RR)$ and $s \in (0,1)$. Then we have for all $\beta \in \CC$
\begin{equation} \label{eq:graf}
\int_\RR \abs{\xi - \beta}^{-s} g(\xi) \drm \xi \leq \norm{g}_\infty^s \norm{g}_{L^1}^{1-s} \frac{2^s s^{-s}}{1-s} ,
\end{equation}
see e.g.\ \cite{Graf-94} for a proof. Hence, 
\[
 \EE_{\{x\}} \bigl(\lvert G_\omega (z; x,x) \rvert^s \bigr) \leq \lambda^{-s} \norm{\rho}_\infty^s \frac{2^s s^{-s}}{1-s} .
\]
Now assume $d=1$ and that $\Theta = \supp u = \{0,1\}$. Using the Schur complement formula, one can show an analogue of Eq.~\eqref{eq:represent1}, namely
\begin{equation} \label{eq:represent2}
 \begin{pmatrix}
  G_\omega (z; x,x) & G_\omega (z; x,x+1) \\
  G_\omega (z; x+1,x) & G_\omega (z; x+1,x+1)
 \end{pmatrix} = \left[A + \lambda \omega_x \begin{pmatrix} u(0) & 0 \\ 0 & u(1) \end{pmatrix}  \right]^{-1} ,
\end{equation}
where $A$ is some $2\times 2$ matrix which is independent of $\omega_x$. If $u(0)>0$ and $u(1)>0$ there is a multidimensional analogue of Eq.~\eqref{eq:graf}, i.e.
\begin{align*} 
 \int_{\RR} \biggl\lVert \left[A + \lambda \omega_x \begin{pmatrix} u(0) & 0 \\ 0 & u(1) \end{pmatrix}  \right]^{-1} \biggr\rVert^s \rho (\omega_k) \drm \omega_k \leq \left(\frac{2 C_{\rm W} \lVert \rho \rVert_\infty}{\min \{u(0),u(1)\} \lambda}\right)^s \frac{1}{1-s} ,
\end{align*}
where $C_{\rm W}$ is a constant independent of $A$, $u$ and $\lambda$, see Lemma~\ref{lemma:monotone2} below. This implies
\[
 \EE_{\{x\}} \bigl(\lvert G_\omega (z; x,x) \rvert^s \bigr) \leq \left(\frac{2 C_{\rm W} \lVert \rho \rVert_\infty}{\min \{u(0),u(1)\} \lambda}\right)^s \frac{1}{1-s} .
\]
These two examples show how to prove the boundedness of an averaged fractional power of some Green's function elements if the single-site potential is non-negative. If the single-site potential changes its sign, the diagonal matrix on the right hand side of Eq.~\eqref{eq:represent2} is no longer positive definite, and as a consequence Lemma~\ref{lemma:monotone2} is not applicable. For this reason one has to develop new techniques for non-monotone spectral averaging in order to prove the boundedness of Green's function. 
\par
In the one-dimensional situation it turns out that certain matrix elements of the resolvent may be represented as an inverse of a determinant. To estimate the expectation of a fractional power of these Green's function elements, we will use the following averaging lemma for determinants. More precisely, if $\Theta = \{0,1,\ldots , n-1\}$ we will apply Lemma~\ref{lemma:det}, while for arbitrary finite $\Theta$ we shall need a slight extension formulated in Lemma \ref{lemma:detgen}.
\begin{lemma} \label{lemma:det}
 Let $n \in \NN$ and $A, V \in \CC^{n \times n}$ be two matrices and assume that $V$ is invertible. Let further $0 \leq \rho \in L^1(\RR) \cap L^\infty (\RR)$ and $s \in (0,1)$. Then we have for all $\lambda > 0$ the bound
\begin{align}
 \int_{\RR} \abs{\det (A + rV)}^{-s/n} \rho (r) \drm r
&\leq \abs{\det V}^{-s/n} \Vert \rho \Vert_{L^1}^{1-s} \Vert\rho\Vert_{\infty}^{s} \frac{2^{s} s^{-s}}{1-s} \label{eq:det1} \\[1ex]
&\leq \abs{\det V}^{-s/n}\Bigl( \lambda^{-s} \Vert\rho\Vert_{L^1} + \frac{2 \lambda^{1-s}}{1-s} \Vert\rho\Vert_\infty  \Bigr) \label{eq:det2} .
\end{align}
\end{lemma}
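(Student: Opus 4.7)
The plan is to factor out the invertible matrix $V$ and reduce the $n$-dimensional determinant estimate to a single one-pole integral. Write
\[
 \det(A + rV) = \det(V)\det(V^{-1}A + rI) = \det(V) \prod_{i=1}^n (r - \mu_i),
\]
where $\mu_1,\ldots,\mu_n \in \CC$ are the eigenvalues of $-V^{-1}A$. This pulls the factor $\abs{\det V}^{-s/n}$ outside the integral, so it only remains to bound $\int_\RR \prod_{i=1}^n \abs{r-\mu_i}^{-s/n}\rho(r)\,\drm r$ by the remaining factors on the right-hand sides of \eqref{eq:det1} and \eqref{eq:det2}.

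To handle the product of $n$ negative fractional powers I would apply the arithmetic-geometric mean inequality to the nonnegative quantities $\abs{r-\mu_i}^{-s}$, giving
\[
 \prod_{i=1}^n \abs{r-\mu_i}^{-s/n} \leq \frac{1}{n}\sum_{i=1}^n \abs{r-\mu_i}^{-s}.
\]
Integrating against $\rho$ then reduces matters to a uniform-in-$\mu$ estimate of $\int_\RR \abs{r-\mu}^{-s}\rho(r)\,\drm r$ for $\mu\in\CC$, since an average of $n$ quantities, each bounded by the same constant, inherits the same bound. For \eqref{eq:det1} I would invoke the Graf-type bound \eqref{eq:graf} (valid for $\beta\in\CC$), which supplies precisely the constant $\norm{\rho}_\infty^s\norm{\rho}_{L^1}^{1-s}\,2^s s^{-s}/(1-s)$. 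For \eqref{eq:det2} I would instead derive the single-pole estimate by hand, splitting the real line at $\abs{r-\re\mu}=\lambda$: on the inner region, use $\abs{r-\mu}\geq \abs{r-\re\mu}$ together with $\rho\leq\norm{\rho}_\infty$ to control the contribution by $\norm{\rho}_\infty\cdot 2\lambda^{1-s}/(1-s)$; on the outer region, use $\abs{r-\mu}\geq\lambda$ and $\rho\in L^1(\RR)$ to control it by $\lambda^{-s}\norm{\rho}_{L^1}$. Adding the two contributions produces exactly the bracket in \eqref{eq:det2}.

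There is essentially no serious obstacle here: both bounds follow from a standard one-pole estimate, combined with AM-GM and the $V$-factorization. The only minor subtlety is that the eigenvalues $\mu_i$ may be genuinely complex (since $A$ and $V$ are complex matrices), but \eqref{eq:graf} is already stated for complex $\beta$, and the splitting argument is written in terms of $\re\mu$, so complex eigenvalues present no difficulty.
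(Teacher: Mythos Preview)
Your proposal is correct and follows essentially the same route as the paper: factor out $\det V$, reduce to a product of one-pole terms $|r-\mu_i|^{-s/n}$, and control each single-pole integral by the splitting argument at scale $\lambda$ (giving \eqref{eq:det2}), with \eqref{eq:det1} being the optimized version. The only cosmetic difference is that the paper uses H\"older's inequality (with $n$ equal exponents) where you use AM--GM, and it first passes to the real parts $\operatorname{Re}\mu_i$ before splitting; both devices lead to the identical final bound.
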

\begin{proof}
 Since $V$ is invertible, the function $r \mapsto \det (A + rV)$ is a polynomial of order $n$ and thus the set $\{r \in \mathbb{R}  \colon A + rV \text{ is singular}\}$ is a discrete subset of $\mathbb{R}$ with Lebesgue measure zero. We denote the roots of the polynomial by $z_1,\dots , z_n \in \CC$. By multilinearity of the determinant we have
\[
 \abs{\det (A + rV)} = \abs{\det V} \prod_{j=1}^n \lvert r - z_j\rvert \geq
 \abs{\det V} \prod_{j=1}^n \lvert r - \re{z_j}\rvert .
\]
The H\"older inequality implies for $s \in (0,1)$ that
\begin{equation*}
 \int_\RR \abs{\det (A + rV)}^{-s/n} \rho (r) \drm r \leq \abs{\det V}^{-s/n} \prod_{j = 1}^n \left( \int_\RR \lvert r - \re z_j\rvert^{-s} \rho (r) \drm r  \right)^{1/n} .
\end{equation*}
For arbitrary $\lambda > 0$ and all $z \in \RR$ we have
\begin{align*}
\int_\RR \frac{1}{\abs{r - z}^{s}} \rho(r)\drm r &=  \int\limits_{\abs{r - z} \geq \lambda} \frac{1}{\abs{r - z}^{s}} \rho(r)\drm r + \int\limits_{\abs{r - z} \leq \lambda} \frac{1}{\abs{r - z}^{s}} \rho(r)\drm r \\[1ex]
& \leq \lambda^{-s} \Vert\rho\Vert_{L^1} + \Vert\rho\Vert_\infty \frac{2 \lambda^{1-s}}{1-s} ,
\end{align*}
which gives Ineq.~\eqref{eq:det2}. We now choose $\lambda = s \Vert \rho \Vert_{L^1} / (2 \Vert \rho \Vert_\infty)$ (which minimizes the right hand side of Ineq.~\eqref{eq:det2}) and obtain Ineq.~\eqref{eq:det1}.
\end{proof}
\begin{lemma} \label{lemma:detgen}
Let $N,n \in \NN$ and $A, V_0, V_1, \dots,V_N \in \CC^{n \times n}$ be matrices.
Let $(\alpha_k)_{k=0}^N \in \RR^{N+1}$ with $\alpha_0 \not = 0$. Assume that $\sum_{k=0}^N \alpha_k V_k$ is invertible. Let further $0 \leq \rho \in L^1(\RR) \cap L^\infty (\RR)$ with $\supp \rho \subset [-R,R]$, $R>0$, $\Vert\rho\Vert_{L^1} = 1$ and $t \in (0,1)$.
Then,
\begin{align*}
I &= \int_{\RR^{N+1}} \Bigl|\det \Bigl(A + \sum_{i=0}^N r_iV_i\Bigr)\Bigl|^{-t/n} \prod_{i=0}^N\rho (r_i) \drm r_i  \\
&\leq \Bigl| \det \Bigl( \sum_{k=0}^N \alpha_k V_k \Bigr)\Bigr|^{-t/n} \lvert \alpha_0\rvert^t\Bigl(1+ \max_{i\in \{1,\dots,N\}} \frac{\lvert\alpha_i\rvert}{\lvert \alpha_0\rvert} \Bigr)^{Nt}
\,\, \frac{2^t t^{-t}}{1-t} (2R)^{Nt} \Vert\rho\Vert_\infty^{(N+1)t} .
\end{align*}
\end{lemma}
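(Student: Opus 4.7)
The strategy is to reduce the $(N+1)$-variable integral to a one-variable one that can be handled by Lemma~\ref{lemma:det}. To achieve this I would perform a linear change of coordinates that exposes the invertible combination $W := \sum_{k=0}^N \alpha_k V_k$ as the coefficient of a single new variable. Exploiting $\alpha_0 \neq 0$, I would substitute $s_0 = r_0/\alpha_0$ and $s_i = r_i - \alpha_i s_0$ for $1 \leq i \leq N$; a direct algebraic check yields $\sum_{i=0}^N r_i V_i = s_0 W + \sum_{i=1}^N s_i V_i$ with Jacobian $|\alpha_0|$, so that
\begin{equation*}
I = |\alpha_0| \int_{\RR^{N+1}} \Bigl|\det\Bigl(A + s_0 W + \sum_{i=1}^N s_i V_i\Bigr)\Bigr|^{-t/n} \mu(s_0;s_1,\ldots,s_N)\, ds_0\cdots ds_N,
\end{equation*}
where $\mu(s_0;s_1,\ldots,s_N) := \rho(\alpha_0 s_0)\prod_{i=1}^N \rho(\alpha_i s_0 + s_i)$.

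Next, for frozen $(s_1,\ldots,s_N)$ I would apply Lemma~\ref{lemma:det} to the inner $s_0$-integral, taking $\mu(\cdot; s_1,\ldots,s_N)$ as the density and $W$ as the step matrix, to obtain
\begin{equation*}
\int |\det(\tilde A + s_0 W)|^{-t/n} \mu(s_0)\, ds_0 \leq |\det W|^{-t/n} \|\mu\|_{L^1}^{1-t} \|\mu\|_\infty^t \frac{2^t t^{-t}}{1-t},
\end{equation*}
where $\tilde A := A + \sum_{i=1}^N s_i V_i$. The pointwise bound $\|\mu\|_\infty \leq \|\rho\|_\infty^{N+1}$ is immediate and will be responsible for the factor $\|\rho\|_\infty^{(N+1)t}$ in the final estimate.

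The delicate step is integrating $\|\mu\|_{L^1}^{1-t}$ over $(s_1,\ldots,s_N)$. Because $\supp\rho\subset[-R,R]$, one checks that $\mu$ is nonzero only when there exists some $s_0\in[-R/|\alpha_0|,R/|\alpha_0|]$ with $|\alpha_i s_0 + s_i|\leq R$ for every $i$, which forces $|s_i|\leq R(1+|\alpha_i|/|\alpha_0|)$; hence the effective region $S$ has Lebesgue measure at most $|S|\leq (2R)^N(1+\max_i|\alpha_i|/|\alpha_0|)^N$. I would then invoke Jensen's inequality for the concave map $x\mapsto x^{1-t}$ to get
\begin{equation*}
\int_S \|\mu\|_{L^1}^{1-t}\, ds_1\cdots ds_N \leq |S|^t \Bigl(\int_S \|\mu\|_{L^1}\, ds_1\cdots ds_N\Bigr)^{1-t},
\end{equation*}
and evaluate the remaining integral by Fubini, using translation invariance and $\|\rho\|_{L^1}=1$, to find $\int_S \|\mu\|_{L^1}\, ds_1\cdots ds_N = \int \rho(\alpha_0 s_0)\, ds_0 = 1/|\alpha_0|$.

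Assembling the pieces, the Jacobian $|\alpha_0|$ combines with the factor $|\alpha_0|^{-(1-t)}$ coming from Jensen to yield $|\alpha_0|^t$, while $|S|^t$ supplies $(2R)^{Nt}(1+\max)^{Nt}$; the rest falls into place and reproduces the stated inequality. The main (mild) obstacle is spotting the Jensen step: a naive pointwise estimate $\|\mu\|_{L^1}\leq \|\rho\|_\infty^N/|\alpha_0|$ would only give $\|\rho\|_\infty^{N+t}$, which is strictly weaker than $\|\rho\|_\infty^{(N+1)t}$ whenever $2R\|\rho\|_\infty\geq 1$ (an inequality which in fact always holds since $\rho$ is a probability density supported in $[-R,R]$); redistributing the $(1-t)$ power across the outer $N$ variables via Jensen is precisely what produces the sharper exponent claimed in the lemma.
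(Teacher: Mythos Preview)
Your proof is correct and follows the same overall strategy as the paper: a linear change of variables that turns $\sum_i r_i V_i$ into $s_0 W + (\text{rest})$, followed by an application of Lemma~\ref{lemma:det} in the $s_0$-variable. Your substitution $s_i = r_i - \alpha_i s_0$ differs from the paper's $r_i = \alpha_i x_0 + \alpha_0 x_i$ only by the rescaling $s_i = \alpha_0 x_i$, which is immaterial.

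The one genuine methodological difference is in the endgame. The paper invokes the non-optimized bound~\eqref{eq:det2} of Lemma~\ref{lemma:det} with a free parameter $\lambda$, integrates the two resulting terms over the outer variables separately (the $\lambda^{-t}\lVert g\rVert_{L^1}$ term integrates to $\lambda^{-t}$ by Fubini, the $\lambda^{1-t}\lVert g\rVert_\infty$ term is bounded using the support restriction), and only then optimizes $\lambda$. You instead apply the already-optimized bound~\eqref{eq:det1}, which leaves the awkward factor $\lVert\mu\rVert_{L^1}^{1-t}$ to be integrated in $(s_1,\dots,s_N)$; your use of Jensen's inequality for the concave map $x\mapsto x^{1-t}$ to redistribute this power across the support is a clean alternative that reproduces exactly the same constant. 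Both routes are equally short; the paper's avoids Jensen at the cost of carrying $\lambda$ to the end, while yours front-loads the optimization and pays for it with one convexity argument.
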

\begin{proof}
 Substituting
\[
 \begin{pmatrix}
  r_0 \\[0.8ex]
  r_1 \\
  \vdots \\
  \vdots \\
  r_N
 \end{pmatrix} =
T \begin{pmatrix}
  x_0 \\[.2ex]
  x_1 \\[.2ex]
  \vdots \\[.2ex]
  \vdots \\[.2ex]
  x_N
 \end{pmatrix}
=
 \begin{pmatrix}
  \alpha_0 & 0        & \cdots  &\cdots   &  0     \\
  \alpha_1 & \alpha_0 &0        &         & \vdots \\
  \alpha_2 & 0        & \alpha_0&\ddots   & \vdots \\
  \vdots   & \vdots   & \ddots  &\alpha_0 & 0      \\
  \alpha_N & 0        & \hdots  & 0       & \alpha_0
 \end{pmatrix}
 \begin{pmatrix}
  x_0 \\[.2ex]
  x_1 \\[.2ex]
  \vdots \\[.2ex]
  \vdots \\[.2ex]
  x_N
 \end{pmatrix}
=
\begin{pmatrix}
  \alpha_0 x_0   \\[.2ex]
  \alpha_1 x_0 + \alpha_0 x_1 \\[.2ex]
  \alpha_2 x_0 + \alpha_0 x_2 \\[.5ex]
  \vdots     \\[.5ex]
  \alpha_N x_0 + \alpha_0 x_N
 \end{pmatrix}
\]
we obtain
\[
 I = \int_{\RR^{N}} \left(  \int_\RR \Bigl|\det \Bigl(\tilde A +  x_0\sum_{i=0}^N \alpha_i V_i\Bigr)\Bigr|^{-t/n} g(x_0,\dots,x_N) \drm x_0 \right) \lvert \alpha_0\rvert^{N+1} \drm x_1  \dots  \drm x_N
\]
where $\tilde A = A + \alpha_0 \sum_{i=1}^N x_i V_i$ and
$g(x_0,\dots,x_N) = \rho(\alpha_0 x_0)\prod_{i=1}^N \rho(\alpha_i x_0 + \alpha_0 x_i)$.
Since $x_0 \mapsto g(x_0,$ $\dots,x_N)$ is an element of $L^1(\RR) \cap L^\infty (\RR)$ we may apply Lemma \ref{lemma:det} and obtain for all $\lambda > 0$
\begin{align*}
 I &\leq \Bigl|\det \Bigl(\sum_{i=0}^N \alpha_i V_i\Bigr)\Bigr|^{-t / n}
  \int_{\RR^N} \Bigl(  \int_{\RR} \frac{g(x_0,x)}{\lambda^{t}} \drm x_0 + \frac{2 \lambda^{1-t}}{1-t} \sup_{x_0 \in \RR} g(x_0,x) \Bigr) \vert\alpha_0\rvert^{N+1}  \drm x \\[1ex]
&= \Bigl|\det \Bigl(\sum_{i=0}^N \alpha_i V_i\Bigr)\Bigr|^{-t/n} \Bigl(
   \lambda^{-t}  + \frac{2 \lambda^{1-t}}{1-t} \int_{\RR^N} \sup_{x_0 \in \RR} g(x_0,x) \lvert\alpha_0\rvert^{N+1}  \drm x \Bigr)
\end{align*}
with $\drm x = \drm x_1  \dots  \drm x_N$ and $g(x_0,x) = g(x_0,x_1,\ldots,x_N)$. We use $\supp \rho \subset [-R,R]$ and see that if $\lvert x_j\rvert > R \,\Vert T^{-1}\Vert_\infty$ holds 
for some $j = 0,\dots,N$, then we have $g(x_0,\dots,x_N) = 0$. Thus it is sufficient to integrate over the cube $[-R\Vert T^{-1}\Vert_\infty,R\Vert T^{-1}\Vert_\infty]^N$.
We estimate $\sup_{x_0 \in \RR} g(x_0,\dots,x_N) \leq \Vert \rho \Vert_\infty^{N+1}$ and choose $\lambda = t/(2$ $\Vert\rho\Vert_\infty^{N+1} \lvert\alpha_0^{N+1}\rvert (2R$ $\Vert T^{-1}_\infty \Vert)^N)$.
The row-sum norm of $T^{-1}$ equals $\Vert T^{-1} \Vert_\infty = \max_{i \in \{1,\dots,N\}}$ $(\abs{\alpha_0}^{-1}$ $+ \lvert\alpha_i/\alpha_0^2\rvert)$.
\end{proof}
Lemma~\ref{lemma:det} can be used to obtain the following averaging result for resolvents, which will be an important tool for the multi-dimensional case.
\begin{lemma}\label{lemma:averagenorm}
 Let $n \in \mathbb{N}$, $A \in \mathbb{C}^{n \times n}$ an arbitrary matrix, $V \in \mathbb{C}^{n \times n}$ an invertible matrix and $s \in (0,1)$. Let further $0 \leq \rho \in L^1(\RR) \cap L^\infty (\RR)$ with $\supp \rho \subset [-R,R]$ for some $R>0$. Then we have the bounds
\begin{equation} \label{eq:norm_estimate}
\Vert V^{-1} \Vert \leq \frac{\Vert V \Vert^{n-1}}{\abs{\det V}}
\end{equation}
and
\begin{equation} \label{eq:average_norm}
\int_{-R}^R \bigl\Vert (A+rV)^{-1} \bigr\Vert^{s/n} \rho(r) {\rm d}r \leq \frac{\lVert\rho\rVert_{L^1}^{1-s} \lVert\rho\rVert_\infty^s (\Vert A \Vert + R \Vert V \Vert)^{s(n-1)/n}}{s^s 2^{-s} (1-s) \lvert \det V \rvert^{s/n}} .
\end{equation}
\end{lemma}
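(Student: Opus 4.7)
The plan is to prove the two inequalities in sequence, with the second being essentially an application of the first combined with Lemma~\ref{lemma:det}.

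For the norm bound \eqref{eq:norm_estimate}, I would use the singular value decomposition of $V$. Writing $\sigma_1(V) \geq \sigma_2(V) \geq \cdots \geq \sigma_n(V) > 0$ for the singular values of $V$ (positivity holds because $V$ is invertible), we have the identities $\|V\| = \sigma_1(V)$, $\|V^{-1}\| = 1/\sigma_n(V)$ and $|\det V| = \prod_{i=1}^n \sigma_i(V)$. Since each $\sigma_i(V) \leq \|V\|$, one obtains
\[
\sigma_n(V) = \frac{|\det V|}{\prod_{i=1}^{n-1} \sigma_i(V)} \geq \frac{|\det V|}{\|V\|^{n-1}},
\]
which upon inversion gives \eqref{eq:norm_estimate}.

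For the averaged bound \eqref{eq:average_norm}, I would apply \eqref{eq:norm_estimate} pointwise with $V$ replaced by $A + rV$. For $r \in [-R,R]$ this matrix is invertible for all but finitely many $r$ (namely except at the roots of the polynomial $r \mapsto \det(A+rV)$, which has degree exactly $n$ since $V$ is invertible, hence has a Lebesgue-null zero set). Thus for a.e.\ $r$,
\[
\bigl\|(A+rV)^{-1}\bigr\|^{s/n} \leq \frac{\|A+rV\|^{s(n-1)/n}}{|\det(A+rV)|^{s/n}} \leq \frac{(\|A\|+R\|V\|)^{s(n-1)/n}}{|\det(A+rV)|^{s/n}},
\]
using the triangle inequality and $|r|\leq R$ in the numerator. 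Multiplying by $\rho(r)$, integrating over $[-R,R]$, and inserting the bound \eqref{eq:det1} from Lemma~\ref{lemma:det} yields
\[
\int_{-R}^R \bigl\|(A+rV)^{-1}\bigr\|^{s/n} \rho(r)\drm r \leq \frac{(\|A\|+R\|V\|)^{s(n-1)/n}}{|\det V|^{s/n}}\, \|\rho\|_{L^1}^{1-s}\|\rho\|_\infty^s\, \frac{2^s s^{-s}}{1-s}.
\]
Finally, rewriting the numerical factor $2^s s^{-s}/(1-s)$ as $1/(s^s 2^{-s}(1-s))$ puts the estimate in precisely the form stated in \eqref{eq:average_norm}.

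I do not expect any serious obstacle here: the only delicate point is making sure that $r\mapsto \det(A+rV)$ has invertible values almost everywhere (so the $r$-pointwise application of \eqref{eq:norm_estimate} is valid), and that the exponent $s(n-1)/n$ in the numerator is extracted before applying Lemma~\ref{lemma:det}. Both are taken care of by the invertibility of $V$.
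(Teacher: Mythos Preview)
Your proof is correct and essentially identical to the paper's: both establish \eqref{eq:norm_estimate} via the singular value identities $\|V\|=\sigma_{\max}$, $\|V^{-1}\|=1/\sigma_{\min}$, $|\det V|=\prod_i\sigma_i$, and then deduce \eqref{eq:average_norm} by applying \eqref{eq:norm_estimate} pointwise to $A+rV$ (invertible for a.e.\ $r$), bounding $\|A+rV\|\le\|A\|+R\|V\|$, and invoking Lemma~\ref{lemma:det}. The only cosmetic difference is your decreasing versus the paper's increasing ordering of the singular values.
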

\begin{proof}
To prove Ineq.~\eqref{eq:norm_estimate} let $0< s_1 \leq s_2 \leq \ldots \leq s_n$ be the singular values of $V$. Then we have $\prod_{i=1}^n s_i \leq s_1 s_n^{n-1}$, that is,
\begin{equation}\label{eq:2}
 \frac{1}{s_1} \leq \frac{s_n^{n-1}}{\prod_{i=1}^n s_i} .
\end{equation}
For the norm we have $\Vert V^{-1} \Vert = 1/s_1$ and $\Vert V \Vert = s_n$. For the determinant of $V$ we have $\lvert \det V \rvert = \prod_{i=1}^n s_i$. Hence, Ineq.~\eqref{eq:norm_estimate} follows from Ineq.~\eqref{eq:2}. To prove Ineq.~\eqref{eq:average_norm} recall that, since $V$ is invertible, the set $\{r \in \RR \colon \text{$A+rV$ is singular}\}$ is a discrete set. Thus, for almost all $r \in [-R,R]$ we may apply Ineq.~\eqref{eq:norm_estimate} to the matrix $A+rV$ and obtain
\[
 \bigl\Vert (A+rV)^{-1} \bigr\Vert^{s/n} \leq \frac{(\Vert A \Vert + R \Vert V \Vert)^{s(n-1)/n}}{\lvert \det (A+rV) \rvert^{s/n}} .
\]
Inequality \eqref{eq:average_norm} now follows from Lemma \ref{lemma:det}.
\end{proof}
The assumption that the single-site potential $u$ is monotone at the boundary (cf. Assumption \ref{ass:monotone}) allows us to use monotone spectral averaging at some stage. For this purpose we cite a special case of \cite[Proposition 3.1]{AizenmanENSS-06}. Recall, a densely defined operator $T$ on some Hilbert space $\mathcal{H}$ with inner product $\langle \cdot , \cdot \rangle_{\mathcal{H}}$ is called \emph{dissipative} if $\im \langle x,Tx \rangle_{\mathcal{H}} \geq 0$ for all $x \in D(T)$.
\begin{lemma} \label{lemma:monotone}
Let $A \in \CC^{n \times n}$ be a dissipative matrix, $V \in \RR^{n \times n}$ diagonal and strictly positive definite and $M_1 , M_2 \in \CC^{n \times n}$ be arbitrary matrices. Then there exists a constant $C_{\rm W}$ (independent of $n$, $A$, $V$, $M_1$ and $M_2$), such that
\[
\mathcal{L}  \bigl(\bigl\{ r \in \RR : \lVert M_1 (A + r V)^{-1} M_2 \rVert_{\rm HS} > t \bigr\}\bigr) \leq C_{\rm W} \lVert M_1 V^{-1/2} \rVert_{\rm HS} \lVert M_2 V^{-1/2} \rVert_{\rm HS} \frac{1}{t} .
\]
Here, $\mathcal{L}$ denotes the Lebesgue-measure and $\lVert \cdot \rVert_{\rm HS}$ the Hilbert Schmidt norm.
\end{lemma}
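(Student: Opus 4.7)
The natural strategy is to recognize this as a special case of Proposition~3.1 of \cite{AizenmanENSS-06}, which is how I would ultimately conclude. Before invoking that result, a short rescaling reduces the general $V$ to $V = I$. Since $V$ is diagonal and strictly positive, $V^{1/2}$ is Hermitian and invertible, and the symmetric factorization
\[
A + rV \;=\; V^{1/2}\bigl(V^{-1/2}AV^{-1/2} + rI\bigr)V^{1/2}
\]
gives $M_1(A+rV)^{-1}M_2 = N_1(\tilde A + rI)^{-1}N_2$ with $\tilde A := V^{-1/2}AV^{-1/2}$ still dissipative (congruence by a Hermitian matrix preserves $\im\langle x, Ax\rangle \geq 0$), and with $N_1 := M_1 V^{-1/2}$, $N_2 := V^{-1/2}M_2$. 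The Hilbert--Schmidt norms of $N_1$ and $N_2$ are, up to taking the overall adjoint of the bracketed operator (which leaves $\lVert \cdot \rVert_{\rm HS}$ invariant), precisely the weight factors on the right-hand side of the lemma. It thus suffices to handle the case $V = I$.

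For this reduced problem, the key point is that the operator-valued map $z \mapsto G(z) := N_1(\tilde A + z)^{-1}N_2$ is analytic on the upper half-plane: for $\im z > 0$ one has $\im (\tilde A + z) \geq (\im z) I > 0$, so $\tilde A + z$ is invertible with $\lVert (\tilde A + z)^{-1}\rVert \leq (\im z)^{-1}$. I would then pass to the boundary value $r \mapsto G(r + i 0)$, which exists off a finite exceptional set, and apply a dissipative analogue of the classical Stieltjes-transform / spectral averaging argument: each matrix entry of $G(r)$ is representable as a Cauchy--Stieltjes transform of a complex measure whose total variation is controlled by $\lVert N_1\rVert_{\rm HS}\lVert N_2\rVert_{\rm HS}$, obtained via Krein's resolvent formula together with a dilation of $\tilde A$ to a self-adjoint operator on a larger Hilbert space. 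Boole's classical weak-$L^1$ inequality for Stieltjes transforms then yields, after summing over a suitable pairing basis,
\[
\mathcal{L}\bigl(\bigl\{ r \in \RR : \lVert G(r)\rVert_{\rm HS} > t \bigr\}\bigr) \;\leq\; C_{\rm W}\, \lVert N_1\rVert_{\rm HS}\, \lVert N_2\rVert_{\rm HS}/t,
\]
which, combined with the first step, is the claim.

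The main obstacle is the non-self-adjointness of $\tilde A$: standard spectral averaging arguments (Combes--Hislop, Wegner) rest on the functional calculus of a self-adjoint operator, and here only dissipativity $\tilde A - \tilde A^* \in 2i[0,\infty)$ is at our disposal. This is precisely why the AENSS proof needs a dilation (or an equivalent Krein-formula argument) to reduce to the self-adjoint setting where Boole's inequality applies cleanly. A secondary, purely bookkeeping difficulty is tracking on which side of $M_1$ and $M_2$ the factor $V^{-1/2}$ appears after the rescaling; this is handled by applying the estimate to either $M_1(A+rV)^{-1}M_2$ or its adjoint, whichever puts the $V^{-1/2}$ weights on the sides appearing in the stated inequality.
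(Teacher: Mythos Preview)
The paper does not prove this lemma at all: it simply cites it as a special case of \cite[Proposition~3.1]{AizenmanENSS-06}. Your proposal does the same thing, with the added (correct) observation that the symmetric factorization through $V^{1/2}$ reduces the general diagonal $V$ to the identity, and with a sketch of the AENSS machinery (self-adjoint dilation of the maximally dissipative operator followed by Boole's weak-$L^1$ inequality for Stieltjes transforms). So your approach is essentially identical to the paper's, just more explicit.
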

As a corollary we have
\begin{lemma} \label{lemma:monotone2}
Let $A \in \CC^{n \times n}$ be a dissipative matrix, $V \in \RR^{n \times n}$ diagonal and strictly positive definite, $M_1 , M_2 \in \CC^{n \times n}$ be arbitrary matrices and $\rho \in L^\infty (\RR) \cap L^1 (\RR)$ non-negative with $\lVert \rho \rVert_{L^1} = 1$. Then there exists a constant $C_{\rm W}$ (independent of $A$, $V$, $M_1$ and $M_2$), such that
\[
\int_\RR \lVert M_1 (A + r V)^{-1} M_2 \rVert^s \rho (r) \drm r \leq
\frac{(n C_{\rm W} \lVert M_1 V^{-1/2} \rVert \lVert M_2 V^{-1/2} \rVert \lVert \rho \rVert_\infty )^s}{1-s} .
\]
\end{lemma}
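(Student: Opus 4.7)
The plan is to deduce Lemma \ref{lemma:monotone2} from Lemma \ref{lemma:monotone} by a standard layer-cake argument, combining the trivial $L^1$ bound with the distribution function estimate provided by Lemma \ref{lemma:monotone}. First I would introduce the random variable (in $r$) given by $F(r) \defeq \lVert M_1 (A+rV)^{-1} M_2 \rVert$ and its weighted distribution function
\[
\mu(t) \defeq \int_{\RR} \mathbf{1}\{F(r) > t\}\, \rho(r) \drm r, \qquad t \geq 0.
\]
Two elementary bounds are available for $\mu$. On the one hand, $\mu(t) \leq \lVert \rho \rVert_{L^1} = 1$. On the other hand, using $\mathbf{1}\{F(r)>t\} \rho(r) \leq \lVert\rho\rVert_\infty\, \mathbf{1}\{F(r)>t\}$, together with the operator-vs.-Hilbert-Schmidt inequalities $\lVert F(r)\rVert \leq \lVert M_1(A+rV)^{-1}M_2\rVert_{\rm HS}$ and $\lVert M_i V^{-1/2}\rVert_{\rm HS} \leq \sqrt{n}\,\lVert M_i V^{-1/2}\rVert$, Lemma \ref{lemma:monotone} yields
\[
\mu(t) \leq \lVert\rho\rVert_\infty \, \mathcal{L}\bigl(\{r : F(r)>t\}\bigr) \leq \frac{\lVert\rho\rVert_\infty K}{t}, \qquad K \defeq n\, C_{\rm W}\, \lVert M_1 V^{-1/2}\rVert\, \lVert M_2 V^{-1/2}\rVert.
\]

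Next I would apply the layer-cake identity $F(r)^s = s\int_0^\infty t^{s-1}\mathbf{1}\{F(r)>t\}\drm t$ and swap integrations by Fubini to obtain
\[
\int_{\RR} F(r)^s \rho(r) \drm r = s\int_0^\infty t^{s-1}\mu(t)\drm t.
\]
Splitting the outer integral at the threshold $t_0 \defeq \lVert\rho\rVert_\infty K$, using the bound $\mu(t) \leq 1$ on $(0,t_0]$ and $\mu(t) \leq \lVert\rho\rVert_\infty K/t$ on $[t_0,\infty)$, yields
\[
s\int_0^{t_0} t^{s-1}\drm t + s\lVert\rho\rVert_\infty K \int_{t_0}^\infty t^{s-2}\drm t = t_0^s + \frac{s}{1-s}\,\lVert\rho\rVert_\infty K\, t_0^{s-1} = \frac{t_0^s}{1-s},
\]
where the convergence of the second integral uses $s<1$. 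Substituting $t_0 = \lVert\rho\rVert_\infty K$ produces exactly the claimed estimate.

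The proof is essentially a routine weak-type interpolation argument; the only subtlety to watch out for is the factor $n$, which arises from bounding the Hilbert--Schmidt norms of the $n\times n$ matrices $M_i V^{-1/2}$ by $\sqrt{n}$ times their operator norms (so that the product produces the factor $n$). All dependence on $A$ and the detailed structure of $V$ beyond $V^{-1/2}$ has already been absorbed by Lemma \ref{lemma:monotone}, so no further non-trivial spectral averaging is required here. The choice $t_0 = \lVert\rho\rVert_\infty K$ is exactly the one that balances the two regimes and is obtained by elementary one-variable optimization.
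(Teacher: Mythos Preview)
Your proof is correct and follows essentially the same approach as the paper: both use the layer-cake representation together with the weak-$L^1$ bound from Lemma~\ref{lemma:monotone}, split the integral at an optimal threshold, and pick up the factor $n$ from the comparison $\lVert \cdot \rVert_{\rm HS} \le \sqrt{n}\,\lVert \cdot \rVert$. The only cosmetic difference is that the paper writes the layer cake for $F(r)^s$ directly (so the inner tail involves $t^{1/s}$), whereas you use the equivalent form $F^s = s\int_0^\infty t^{s-1}\mathbf{1}\{F>t\}\,\drm t$; the resulting optimization and final bound are identical.
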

\begin{proof}
 First note that for a matrix $T \in \CC^{n \times n}$ we have $\lVert T \rVert \leq \lVert T \rVert_{\rm HS} \leq \sqrt{n} \lVert T \rVert$. With the use of the layer cake representation, 
see e.g.\ \cite[p.\ 26]{LiebL2001}, and Lemma \ref{lemma:monotone} we obtain for all $\kappa > 0$
\begin{align*}
I &= \int_\RR \lVert M_1 (A + r V)^{-1} M_2 \rVert^s \rho (r) \drm r
= \int_0^\infty \int_\RR \mathbf{1}_{\{\lVert M_1 (A + r V)^{-1} M_2 \rVert^s > t \}} \rho (r) \drm r \drm t \\[1ex]
& \leq \kappa + \int_\kappa^\infty \lVert \rho \rVert_\infty n C_{\rm W} \lVert M_1 V^{-1/2} \rVert \lVert M_2 V^{-1/2} \rVert \frac{1}{t^{1/s}} \drm t \\[1ex]
& = \kappa + \lVert \rho \rVert_\infty n C_{\rm W} \lVert M_1 V^{-1/2} \rVert \lVert M_2 V^{-1/2} \rVert \frac{s}{1-s} \kappa^{(s-1)/s} .
\end{align*}
If we choose $\kappa = (\lVert \rho \rVert_\infty n C_{\rm W} \lVert M_1 V^{-1/2} \rVert \lVert M_2 V^{-1/2} \rVert)^s$ we obtain the statement of the lemma.
\end{proof}
The above estimates on spectral averaging concern finite-dimensional matrices only. In order to use these estimates for our infinite-dimensional operator $G_\omega (z)$, we will use a variant of the Schur complement formula (also known as the Feshbach formula or Grushin problem), cf.\ \cite[Appendix]{BellissardHS-07}.
\begin{lemma} \label{lemma:schur1}
Let $\Lambda \subset \Gamma \subset \ZZ^d$ and $\Lambda$ finite. Then we have for all $z \in \CC\setminus\RR$ the identity
\begin{equation*}
 \Pro_{\Lambda}^\Gamma(H_\Gamma - z)^{-1} \Inc_{\Lambda}^\Gamma = \bigl(H_{\Lambda} - B_\Gamma^\Lambda - z \bigr)^{-1} ,
\end{equation*}
where $B_\Gamma^\Lambda : \ell^2 (\Lambda) \to (\Lambda)$ is specified in Eq.~\eqref{eq:bij}. Moreover, the operator $B_\Gamma^\Lambda$ is independent of $V_\omega (k)$, $k \in \Lambda$.
\end{lemma}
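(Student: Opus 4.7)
The plan is to apply the Schur complement (Feshbach) block-matrix inversion formula to the orthogonal decomposition $\ell^2(\Gamma) = \ell^2(\Lambda) \oplus \ell^2(\Gamma\setminus\Lambda)$, which is precisely the decomposition encoded by the projection-embedding pairs $(\Pro_\Lambda^\Gamma,\Inc_\Lambda^\Gamma)$ and $(\Pro_{\Gamma\setminus\Lambda}^\Gamma,\Inc_{\Gamma\setminus\Lambda}^\Gamma)$. First I would observe that because $V_\Gamma$ is a multiplication operator on $\ell^2(\Gamma)$, it is diagonal with respect to every partition of $\Gamma$ and therefore contributes only to the two diagonal blocks of $H_\Gamma-z$. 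The off-diagonal blocks come solely from the hopping part $-\Delta_\Gamma$, and in fact only from the edges with one endpoint in $\Lambda$ and the other in $\Gamma\setminus\Lambda$. Setting $T := \Pro_\Lambda^\Gamma(-\Delta_\Gamma)\Inc_{\Gamma\setminus\Lambda}^\Gamma$ (whose adjoint is $T^\ast = \Pro_{\Gamma\setminus\Lambda}^\Gamma(-\Delta_\Gamma)\Inc_\Lambda^\Gamma$ by self-adjointness of the discrete Laplacian), this yields the $2\times 2$ block representation
\[
H_\Gamma - z \;=\; \begin{pmatrix} H_\Lambda - z & T \\ T^\ast & H_{\Gamma\setminus\Lambda} - z \end{pmatrix}.
\]

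Next I would invoke the standard block-inversion identity. Since $z\in\CC\setminus\RR$ and $H_{\Gamma\setminus\Lambda}$ is self-adjoint (being a restriction of a self-adjoint operator), the bottom-right block $H_{\Gamma\setminus\Lambda}-z$ is boundedly invertible on $\ell^2(\Gamma\setminus\Lambda)$, which is exactly the hypothesis needed for the Schur complement formula to apply. The formula then gives
\[
\Pro_\Lambda^\Gamma (H_\Gamma - z)^{-1} \Inc_\Lambda^\Gamma \;=\; \bigl(H_\Lambda - z - T(H_{\Gamma\setminus\Lambda}-z)^{-1} T^\ast\bigr)^{-1},
\]
which is the claimed identity with $B_\Gamma^\Lambda := T(H_{\Gamma\setminus\Lambda}-z)^{-1} T^\ast$, matching the definition in Eq.~\eqref{eq:bij}. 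As a cross-check, one can establish the identity purely algebraically by multiplying both sides from the left by $H_\Lambda - z - B_\Gamma^\Lambda$ and using the resolution of the identity $\Inc_\Lambda^\Gamma\Pro_\Lambda^\Gamma + \Inc_{\Gamma\setminus\Lambda}^\Gamma\Pro_{\Gamma\setminus\Lambda}^\Gamma = I_{\ell^2(\Gamma)}$ together with the block structure; beyond invertibility of $H_{\Gamma\setminus\Lambda}-z$, no spectral-theoretic input is required.

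The independence claim is then immediate by inspection of $B_\Gamma^\Lambda$: both $T$ and $T^\ast$ depend only on the deterministic operator $-\Delta_\Gamma$, while $H_{\Gamma\setminus\Lambda} = -\Delta_{\Gamma\setminus\Lambda} + \lambda V_{\Gamma\setminus\Lambda}$ depends on $\omega$ only through the values $V_\omega(k)$ with $k\in\Gamma\setminus\Lambda$. Consequently $B_\Gamma^\Lambda$ does not involve any $V_\omega(k)$ with $k\in\Lambda$. I do not anticipate a substantive obstacle here; the only care needed is the bookkeeping that $V_\Gamma$ really is block-diagonal (immediate from its being a multiplication), and the observation that the off-diagonal hopping operator $T$ has rank at most $\lvert\partial^{\rm i}\Lambda\rvert$, so that $B_\Gamma^\Lambda$ is a well-defined bounded operator on the finite-dimensional space $\ell^2(\Lambda)$.
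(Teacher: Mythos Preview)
Your proof is correct and follows essentially the same route as the paper: a $2\times 2$ block decomposition of $H_\Gamma-z$ along $\ell^2(\Lambda)\oplus\ell^2(\Gamma\setminus\Lambda)$, invocation of the Schur complement formula using invertibility of $H_{\Gamma\setminus\Lambda}-z$, and the observation that the off-diagonal blocks involve only $\Delta_\Gamma$ so that $B_\Gamma^\Lambda$ depends on $\omega$ only through $V_\omega(k)$ for $k\in\Gamma\setminus\Lambda$. The paper additionally writes out the matrix elements of $B_\Gamma^\Lambda$ explicitly, but that is bookkeeping rather than a different argument.
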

\begin{proof}
We consider $H_\Gamma - z$ as the block operator matrix 
\[
H_\Gamma - z =
 \begin{pmatrix}
  H_\Lambda - z  & -\Pro_{\Lambda}^\Gamma\Delta_\Gamma \Inc_{\Gamma \setminus \Lambda}^\Gamma \\
  -\Pro_{\Gamma \setminus \Lambda}^\Gamma \Delta_\Gamma \Inc_{\Lambda}^\Gamma  & H_{\Gamma\setminus \Gamma} - z
 \end{pmatrix} .
\]
Since $\Lambda$ is finite, $H_\Lambda$ is bounded and the Schur complement formula gives
\begin{equation*}
 \Pro_\Lambda^\Gamma(H_\Gamma - z)^{-1} \Inc_\Lambda^\Gamma =
 \Bigl[ H_\Lambda-z - \Pro_\Lambda^\Gamma \Delta_\Gamma \Inc_{\Gamma\setminus\Lambda}^\Gamma \bigl(H_{\Gamma \setminus \Lambda} - z \bigr)^{-1} \Pro_{\Gamma\setminus\Lambda}^\Gamma \Delta_\Gamma \Inc_{\Lambda}^\Gamma \Bigr]^{-1} ,
\end{equation*}
compare, e.g., \cite[Appendix]{BellissardHS-07}. For $\Lambda \subset \Gamma \subset \ZZ^d$ we define
\begin{subequations} \label{eq:bij}
\begin{equation} \label{eq:bij1}
 B_\Gamma^\Lambda := \Pro_\Lambda^\Gamma \Delta_\Gamma \Inc_{\Gamma\setminus\Lambda}^\Gamma \bigl(H_{\Gamma \setminus \Lambda} - z \bigr)^{-1} \Pro_{\Gamma\setminus\Lambda}^\Gamma \Delta_\Gamma \Inc_{\Lambda}^\Gamma .
\end{equation}
For the matrix elements of $B_\Gamma^\Lambda$ one calculates
\begin{equation} \label{eq:bij2}
\sprod{\delta_x}{B_\Gamma^\Lambda \delta_y} =
\begin{cases}
   \quad \!\!0  & \text{if $x \not \in \partial^{\rm i} \Lambda \vee y \not \in \partial^{\rm i} \Lambda$,} \\[1ex]
   \sum\limits_{\genfrac{}{}{0pt}{2}{k \in \Gamma\setminus\Lambda :}{\abs{k-x} = 1} }
   \sum\limits_{\genfrac{}{}{0pt}{2}{l \in \Gamma\setminus\Lambda :}{\abs{l-y} = 1} } G_{\Gamma\setminus\Lambda}(z;k,l) & \text{if $x \in \partial^{\rm i} \Lambda \wedge y \in \partial^{\rm i} \Lambda$.}
\end{cases}
\end{equation}
\end{subequations}
$G_{\Gamma\setminus\Lambda}$ is independent of $V_\omega (k)$, $k\in \Lambda$. Thus it is also $B_\Gamma^\Lambda$.
\end{proof}
\begin{lemma}\label{lemma:schur2}
 Let $\Gamma \subset \ZZ^d$ and $\Lambda_1 \subset \Lambda_2 \subset \Gamma$. We assume that $\Lambda_1$ and $\Lambda_2$ are finite sets and that $(\partial^{\rm i} \Lambda_2) \cap \Lambda_1 = \emptyset$. Then we have for all $z \in \CC\setminus\RR$ the identity
\begin{multline*}
\Pro_{\Lambda_1}^\Gamma(H_\Gamma - z)^{-1} \Inc_{\Lambda_1}^\Gamma
= \Bigl[ H_{\Lambda_1} - z \\ - \Pro_{\Lambda_1} \Delta \Inc_{\Lambda_2 \setminus \Lambda_1}
\Bigl(H_{\Lambda_2 \setminus \Lambda_1} - z - \Pro_{\Lambda_2 \setminus \Lambda_1}^{\Lambda_2} B_\Gamma^{\Lambda_2} \Inc_{\Lambda_2 \setminus \Lambda_1}^{\Lambda_2}\Bigr)^{-1}
\Pro_{\Lambda_2 \setminus \Lambda_1} \Delta \Inc_{\Lambda_1}
\Bigr]^{-1}.
\end{multline*}
\end{lemma}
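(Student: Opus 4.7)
The plan is to apply the Schur complement formula in two successive steps: first reduce from $\Gamma$ to $\Lambda_2$ using Lemma~\ref{lemma:schur1}, and then reduce from $\Lambda_2$ to $\Lambda_1$ by a second block decomposition, exploiting the geometric assumption $(\partial^{\rm i}\Lambda_2) \cap \Lambda_1 = \emptyset$ to identify the resulting off-diagonal blocks with the ones appearing in the statement.

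First, Lemma~\ref{lemma:schur1} applied to $\Lambda_2 \subset \Gamma$ gives
\[
 \Pro_{\Lambda_2}^\Gamma (H_\Gamma - z)^{-1} \Inc_{\Lambda_2}^\Gamma = \bigl(H_{\Lambda_2} - B_\Gamma^{\Lambda_2} - z\bigr)^{-1}.
\]
Using the factorizations $\Pro_{\Lambda_1}^\Gamma = \Pro_{\Lambda_1}^{\Lambda_2} \Pro_{\Lambda_2}^\Gamma$ and $\Inc_{\Lambda_1}^\Gamma = \Inc_{\Lambda_2}^\Gamma \Inc_{\Lambda_1}^{\Lambda_2}$, which follow directly from the definitions of the canonical restrictions and embeddings, I would then rewrite the left-hand side of the claim as
\[
 \Pro_{\Lambda_1}^\Gamma (H_\Gamma - z)^{-1} \Inc_{\Lambda_1}^\Gamma = \Pro_{\Lambda_1}^{\Lambda_2} \bigl(H_{\Lambda_2} - B_\Gamma^{\Lambda_2} - z\bigr)^{-1} \Inc_{\Lambda_1}^{\Lambda_2}.
\]

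Next, I decompose $\ell^2(\Lambda_2) = \ell^2(\Lambda_1) \oplus \ell^2(\Lambda_2 \setminus \Lambda_1)$ and write the finite-dimensional operator $H_{\Lambda_2} - B_\Gamma^{\Lambda_2} - z$ as a $2\times 2$ block matrix with respect to this decomposition. Here the assumption $\Lambda_1 \cap \partial^{\rm i}\Lambda_2 = \emptyset$ enters twice: from Eq.~\eqref{eq:bij2} the operator $B_\Gamma^{\Lambda_2}$ is supported on $\partial^{\rm i}\Lambda_2 \times \partial^{\rm i}\Lambda_2$, so both $\Pro_{\Lambda_1}^{\Lambda_2} B_\Gamma^{\Lambda_2}$ and $B_\Gamma^{\Lambda_2} \Inc_{\Lambda_1}^{\Lambda_2}$ vanish; and every neighbour of a site in $\Lambda_1$ lies in $\Lambda_2$, so $\Pro_{\Lambda_1}^{\Lambda_2} \Delta_{\Lambda_2} \Inc_{\Lambda_2\setminus\Lambda_1}^{\Lambda_2} = \Pro_{\Lambda_1} \Delta \Inc_{\Lambda_2\setminus\Lambda_1}$ and $\Pro_{\Lambda_1}^{\Lambda_2} H_{\Lambda_2} \Inc_{\Lambda_1}^{\Lambda_2} = H_{\Lambda_1}$. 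Consequently the block matrix takes the form
\[
 \begin{pmatrix}
  H_{\Lambda_1} - z & -\Pro_{\Lambda_1}\Delta\Inc_{\Lambda_2\setminus\Lambda_1} \\[0.5ex]
  -\Pro_{\Lambda_2\setminus\Lambda_1}\Delta\Inc_{\Lambda_1} & H_{\Lambda_2\setminus\Lambda_1} - \Pro_{\Lambda_2\setminus\Lambda_1}^{\Lambda_2} B_\Gamma^{\Lambda_2} \Inc_{\Lambda_2\setminus\Lambda_1}^{\Lambda_2} - z
 \end{pmatrix}.
\]

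Finally, applying the Schur complement formula (exactly as in the proof of Lemma~\ref{lemma:schur1}) to this $2\times 2$ block matrix yields the claimed expression for $\Pro_{\Lambda_1}^{\Lambda_2}(H_{\Lambda_2}-B_\Gamma^{\Lambda_2}-z)^{-1}\Inc_{\Lambda_1}^{\Lambda_2}$. The main (but still routine) obstacle is the bookkeeping of the various projections $\Pro$ and inclusions $\Inc$ with their different ambient sets; the geometric hypothesis does the real work by ensuring that the lower-right block inherits the perturbation $-\Pro_{\Lambda_2\setminus\Lambda_1}^{\Lambda_2} B_\Gamma^{\Lambda_2} \Inc_{\Lambda_2\setminus\Lambda_1}^{\Lambda_2}$ while the upper-left block and the off-diagonal blocks remain clean, so that the Schur inversion is well defined whenever $z \in \CC \setminus \RR$.
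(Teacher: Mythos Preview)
Your proof is correct and follows essentially the same two-step Schur complement argument as the paper: first reduce from $\Gamma$ to $\Lambda_2$ via Lemma~\ref{lemma:schur1}, then block-decompose $H_{\Lambda_2}-B_\Gamma^{\Lambda_2}-z$ along $\ell^2(\Lambda_1)\oplus\ell^2(\Lambda_2\setminus\Lambda_1)$ and apply Schur again. One small remark: the identifications $\Pro_{\Lambda_1}^{\Lambda_2}\Delta_{\Lambda_2}\Inc_{\Lambda_2\setminus\Lambda_1}^{\Lambda_2}=\Pro_{\Lambda_1}\Delta\Inc_{\Lambda_2\setminus\Lambda_1}$ and $\Pro_{\Lambda_1}^{\Lambda_2}H_{\Lambda_2}\Inc_{\Lambda_1}^{\Lambda_2}=H_{\Lambda_1}$ follow purely from the composition rules for the projections and inclusions and do not require the geometric hypothesis --- that hypothesis is only needed (as the paper also notes) to ensure $B_\Gamma^{\Lambda_2}$ vanishes on all blocks touching $\Lambda_1$.
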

\begin{proof}
We decompose $\Lambda_2 = \Lambda_1 \cup (\Lambda_2 \setminus \Lambda_1)$ and notice that $\langle \delta_x , B_\Gamma^{\Lambda_2} \delta_y \rangle = 0$ if $x \in \Lambda_1$ or $y \in \Lambda_1$ by Eq.~\eqref{eq:bij2} and our hypothesis $(\partial^{\rm i} \Lambda_2) \cap \Lambda_1 = \emptyset$. Due to this decomposition we write $H_{\Lambda_2} - z - B_{\Gamma}^{\Lambda_2}$ as the block operator matrix
\[
H_{\Lambda_2} - z - B_{\Gamma}^{\Lambda_2} =
\begin{pmatrix}
  H_{\Lambda_1} - z & -\Pro_{\Lambda_1} \Delta \Inc_{\Lambda_2 \setminus \Lambda_1} \\[2ex]
  -\Pro_{\Lambda_2 \setminus \Lambda_1} \Delta \Inc_{\Lambda_1} & H_{\Lambda_2 \setminus \Lambda_1} - z - \Pro_{\Lambda_2 \setminus \Lambda_1}^{\Lambda_2} B_\Gamma^{\Lambda_2} \Inc_{\Lambda_2 \setminus \Lambda_1}^{\Lambda_2}
\end{pmatrix} .
\]
The Schur complement formula gives $\Pro_{\Lambda_1}^{\Lambda_2}(H_{\Lambda_2} - z - B_{\Gamma}^{\Lambda_2})^{-1} \Inc_{\Lambda_1}^{\Lambda_2} = S^{-1}$ where $S$ equals
\[
 H_{\Lambda_1} - z - \Pro_{\Lambda_1} \Delta \Inc_{\Lambda_2 \setminus \Lambda_1}
\bigl(H_{\Lambda_2 \setminus \Lambda_1} - z - \Pro_{\Lambda_2 \setminus \Lambda_1}^{\Lambda_2} B_\Gamma^{\Lambda_2} \Inc_{\Lambda_2 \setminus \Lambda_1}^{\Lambda_2} \bigr)^{-1}
\Pro_{\Lambda_2 \setminus \Lambda_1} \Delta \Inc_{\Lambda_1} .
\]
Since $\Pro_{\Lambda_1}^{\Lambda_2}(H_{\Lambda_2} - z - B_{\Gamma}^{\Lambda_2})^{-1} \Inc_{\Lambda_1}^{\Lambda_2} = \Pro_{\Lambda_1}^\Gamma (H_\Gamma - z)^{-1} \Inc_{\Lambda_1}^\Gamma$ by Lemma~\ref{lemma:schur1}, we obtain the statement of the lemma.
\end{proof}
\section{The one-dimensional case}\label{sec:d=1}
In this section we prove exponential localization for the discrete alloy-type model under Assumptions \ref{ass:d=1} and \ref{ass:d=1_2}. Although Assumption \ref{ass:d=1} is more restrictive than Assumption \ref{ass:d=1_2}, we first prove the exponential decay of fractional moments under Assumption \ref{ass:d=1} in Section \ref{sec:boundedness_d=1} and \ref{sec:exp_decay_d=1}, since the spectral averaging step is more transparent and the constants are more explicit. In Section~\ref{sec:general_single_site} we generalize this result to general finitely supported single-site potentials.
\par
In Section \ref{sec:loc_d=1} we conclude exponential localization in the strong disorder regime by using the results from Section~\ref{sec:loc}, where the implication from exponential decay of fractional moments to exponential localization is proven for arbitrary dimension $d \in \NN$.
\par
In the one-dimensional situation it is natural to ask whether it is possible to extract a positive part from the random potential in such a way, that the original methods (with monotone spectral averaging) for deriving fractional moment bounds apply. It turns out that this is not possible in general (even in one space dimension) but that the corresponding class of single-site potentials can be characterized in the following way.
If the polynomial $p_u(x): = \sum_{k=0}^{n-1} u(k) \, x^k$ does not vanish on $[0, \infty)$ it is possible to extract from $V_\omega$ a positive single-site potential with certain additional properties. In this situation the method of \cite{AizenmanENSS-06} applies and gives exponential decay of fractional moments of the Green function. This is worked out in detail in Section~\ref{sec:red_monotone}.
\par
Let us emphasize that our proof of exponential decay of fractional moments in the one-dimensional case nowhere uses monotonicity in the sense of a monotone spectral averaging. The main tool which allows us to get along without the use of any monotonicity property is an averaging result for determinants formulated in Lemma~\ref{lemma:det}.
\subsection{Boundedness of fractional moments} \label{sec:boundedness_d=1}
A very useful information in the one-dimensional case is that certain matrix elements of the resolvent are given by the inverse of a determinant, which makes Lemma~\ref{lemma:det} and Lemma~\ref{lemma:detgen} applicable. In order to work out this specific structure we will apply the Schur complement formula as given in Lemma~\ref{lemma:schur1}.
\par
A set $\Gamma \subset \ZZ$ is called connected if $\partial^{\rm i} \Gamma \subset \{\inf \Gamma , \sup \Gamma\}$. In particular, $\ZZ$ is a connected set.
\begin{lemma} \label{lemma:finitness1}
Let Assumption \ref{ass:d=1} be satisfied, $s \in (0,1)$, and $\Gamma \subset \ZZ$ be connected. Then,
\begin{itemize}
\item[(i)] for every pair $x,x+n-1 \in \Gamma$ and all $z \in \CC \setminus \RR$ we have
\begin{equation} \label{eq:finitness1}
 \mathbb{E}_{\{x\}} \bigl ( \lvert G_\Gamma (z;x,x+n-1)\rvert^{s/n}  \bigr ) \leq  \frac{C_u C_\rho}{\lambda^s} =: C_{\lambda , u , \rho} .
\end{equation}
\item[(ii)] if $1 \leq \abs{\Gamma} \leq n$, we have for  all $z \in \CC \setminus \RR$ the bound
\begin{equation} \label{eq:finitness2}
 \mathbb{E}_{\{\gamma_0\}} \bigl \{ \lvert G_\Gamma (z;\gamma_0,\gamma_1)\rvert^{s/n}  \bigr \} \leq C_u^+ C_\rho^+ \max \{ \lambda^{-s} , \lambda^{-s/n} \} =: C_{\lambda, u,\rho}^+
\end{equation}
where $\gamma_0 = \min \Gamma$ and $\gamma_1 = \max \Gamma$.
\end{itemize}
The constants $C_u$, $C_\rho$, $C_u^+$ and $C_\rho^+$ are given in Eq.~\eqref{eq:cucrho} and \eqref{eq:cucrhoplus}.
\end{lemma}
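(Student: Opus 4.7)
The plan is to identify the relevant Green function matrix element, in each case, as the reciprocal of a single determinant whose $\omega_x$- (respectively $\omega_{\gamma_0}$-) dependence is captured by an invertible diagonal perturbation, and then to apply the determinant-averaging bound Lemma \ref{lemma:det} after a linear substitution that absorbs the disorder strength $\lambda$. The key insight, and essentially the only nontrivial step, is a disconnection phenomenon specific to one dimension.

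For part (i), first observe that $\omega_x$ enters $V_\omega(k)$ precisely for $k$ in the block $\Lambda := \{x, x+1, \ldots, x+n-1\}$, and $\Lambda \subset \Gamma$ by connectedness. Applying the Schur complement formula (Lemma \ref{lemma:schur1}) identifies $G_\Gamma(z;x,x+n-1)$ with the $(x,x+n-1)$ entry of $(H_\Lambda - B_\Gamma^\Lambda - z)^{-1}$. The crucial one-dimensional fact is that $\Gamma \setminus \Lambda$ decomposes into two disjoint pieces $L = \{k \in \Gamma : k < x\}$ and $R = \{k \in \Gamma : k > x+n-1\}$, so $H_{\Gamma \setminus \Lambda}$ is block-diagonal as $H_L \oplus H_R$; formula \eqref{eq:bij2} then shows that the off-diagonal corner entries of $B_\Gamma^\Lambda$ reduce to $G_{\Gamma \setminus \Lambda}(z; x-1, x+n)$, which vanishes because $x-1$ and $x+n$ belong to different summands. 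Consequently $M := H_\Lambda - B_\Gamma^\Lambda - z$ is a genuinely tridiagonal $n \times n$ matrix with off-diagonals $-1$, and a direct cofactor expansion (the $(n,1)$-minor is lower triangular with $-1$'s on its diagonal) yields $(M^{-1})_{1,n} = 1/\det M$. Writing $M = A + \lambda \omega_x U$ with the diagonal matrix $U = \mathrm{diag}(u(0), \ldots, u(n-1))$---invertible since $\Theta = \supp u = \{0, \ldots, n-1\}$---and $A$ independent of $\omega_x$, the substitution $r = \lambda \omega_x$ rescales $\rho$ to a probability density of unit mass and $L^\infty$-norm $\lambda^{-1} \norm{\rho}_\infty$, and Lemma \ref{lemma:det} immediately delivers \eqref{eq:finitness1}.

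Part (ii) proceeds similarly but without any Schur reduction: since $H_\Gamma$ is itself tridiagonal for $m := |\Gamma| \leq n$, the same cofactor calculation gives $G_\Gamma(z; \gamma_0, \gamma_1) = 1/\det(H_\Gamma - z)$, and writing $H_\Gamma - z = A' + \lambda \omega_{\gamma_0} U'$ with $U' = \mathrm{diag}(u(0), \ldots, u(m-1))$ (invertible because $m \leq n$ ensures $\{0, \ldots, m-1\} \subset \Theta$) reduces matters to Lemma \ref{lemma:det} with the exponent $s/n$ rewritten as $s'/m$, where $s' := sm/n \in (0,1)$. The resulting bound is proportional to $\lambda^{-sm/n}$, which is controlled by $\max\{\lambda^{-s/n}, \lambda^{-s}\}$ uniformly in $m \in \{1, \ldots, n\}$ and $\lambda > 0$, giving \eqref{eq:finitness2}. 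I anticipate no other difficulty; the failure of the disconnection observation in higher dimensions would leave nonzero corner entries in $B_\Gamma^\Lambda$ and destroy the identity $(M^{-1})_{1,n} = 1/\det M$, which is precisely why the considerably more elaborate arguments of Sections \ref{sec:boundedness} and \ref{sec:exp_decay} become necessary for $d \geq 2$.
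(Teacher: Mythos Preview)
Your proposal is correct and follows essentially the same route as the paper: Schur complement reduction to the $n\times n$ block (Lemma~\ref{lemma:schur1}), Cramer's rule on the resulting tridiagonal matrix to obtain $|G_\Gamma(z;x,x+n-1)| = 1/|\det D|$, the decomposition $D = A + \lambda\omega_x V$ with $V$ invertible diagonal, and the determinant-averaging Lemma~\ref{lemma:det}; part (ii) is handled identically without the Schur step and with the rescaled exponent $t = s\lvert\Gamma\rvert/n$. Your explicit justification that the off-diagonal corner entries of $B_\Gamma^\Lambda$ vanish via the disconnection of $\Gamma\setminus\Lambda$ into left and right pieces is a point the paper leaves implicit in its displayed matrix form of $D$, but otherwise the arguments coincide.
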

\begin{proof}
We start with the first statement of the lemma. By assumption $x,x+n-1 \in \Gamma$. We apply Lemma \ref{lemma:schur1} with $\Lambda = \{x, x+1, \dots, x+n-1\}\subset \Gamma$ (since $\Gamma$ is connected) and obtain for all $x,y \in \Lambda$
\[
 G_\Gamma (z;x,y) = \bigl \langle \delta_x , (H_{\Lambda} - B_\Gamma^\Lambda - z)^{-1} \delta_{y} \bigr \rangle ,
\]
where the operator $B_\Gamma^\Lambda$ is given by Eq.~\eqref{eq:bij}.
Set $D = H_{\Lambda} - B_\Gamma^\Lambda - z$ and notice that
 \[
 D = \begin{pmatrix}
V_\omega (x)    & -1      &         &         \\
    -1          &\ddots   & \ddots  &      \\
	        & \ddots  & \ddots  & -1     \\
	        &         & -1      & V_\omega (x+n-1)   \\
     \end{pmatrix} - 
\begin{pmatrix}
b_1 & \phantom{b_1} & \phantom{b_1} & \phantom{b_1} \\
& & &  \\[1ex]
& & &  \\[1ex]
& & &  b_2 \\
\end{pmatrix}
 - z ,
\]
where $b_1 = \langle \delta_x , B_\Gamma^\Lambda \delta_x \rangle$ and $b_2 = \langle\delta_{x+n-1} , B_\Gamma^\Lambda \delta_{x+n-1}\rangle$, and where only the non-zero matrix elements are plotted.
By Cramer's rule we have $G_\Gamma (z;x,y) =  \det C_{y,x} / \det D$. Here, ${C}_{i,j} = (-1)^{i+j} M_{i,j}$ and $M_{i,j}$ is obtained from the tridiagonal matrix $D$ by deleting row $i$ and column $j$. Thus $C_{x+n-1,x}$ is a lower triangular matrix with determinant $\pm 1$. Hence,
\[
 \abs{G_\Gamma (z;x,x+n-1)} = \frac{1}{\abs{\det D}}.
\]
Since $\Theta = \supp u = \{0,\dots,n-1 \}$, every potential value $V_\omega (k)$, $k \in \Lambda$, depends on the random variable $\omega_{x}$, while the operator $B_\Gamma^\Lambda $ is independent of $\omega_x$. Thus we may write $D$ as a sum of two matrices
\[
 D = A + \omega_{x} \lambda V,
\]
where $V \in \RR^{n \times n}$ is diagonal with the elements $u(k-x)$, $k=x,\dots,x+n-1$, and $A := D - \omega_x \lambda V$. Since $A$ is independent of $\omega_x$ we may apply Lemma \ref{lemma:det} and obtain for all $s \in (0,1)$ the estimate \eqref{eq:finitness1} with
\begin{equation} \label{eq:cucrho}
 C_u = \Bigl | \prod_{k \in \Theta} u(k) \Bigr |^{-s/n} \quad \text{and} \quad C_\rho =
 \Vert\rho\Vert_{\infty}^{s} \frac{2^{s} s^{-s}}{1-s} .
\end{equation}
The proof of Ineq.~\eqref{eq:finitness2} is similar but does not require Lemma \ref{lemma:schur1}. We have the decomposition  $H_\Gamma - z = \tilde A + \omega_{\gamma_0} \lambda \tilde V$, where
$m = \gamma_1-\gamma_0$,  $\tilde{V}\in \RR^{(m+1) \times (m+1)}$ is diagonal with elements $u (k - \gamma_0)$, $k = \gamma_0,\dots,\gamma_1$, and $\tilde A :=H_\Gamma - z - \omega_{\gamma_0} \lambda \tilde V$ is independent of $\omega_{\gamma_0}$. By Cramer's rule and Lemma \ref{lemma:det} we obtain for all $t \in (0,1)$
\begin{equation*}
 \mathbb{E}_{\{\gamma_0\}} \bigl ( \lvert G_\Gamma (z;\gamma_0,\gamma_1)\rvert^{t/(m+1)}  \bigr ) \leq
\Bigl | \prod_{k=0}^{m} u(k) \Bigr |^{-t/(m+1)} \Vert\rho\Vert_{\infty}^{t} \lambda^{-t} \frac{2^{t} t^{-t}}{1-t} .
\end{equation*}
We choose $t = s \frac{m+ 1}{n} \leq s$ and obtain Ineq.~\eqref{eq:finitness2} with the constants
\begin{equation} \label{eq:cucrhoplus}
 C_u^+ = \max_{i \in \Theta} \Bigl | \prod_{k=0}^{i} u(k) \Bigr |^{-s/n} \quad \text{and} \quad C_\rho^+ = \max \bigl \{ \Vert\rho\Vert_{\infty}^{s} , \Vert\rho\Vert_{\infty}^{s/n} \bigr \} \frac{2^{s} s^{-s}}{1-s} .
\end{equation}
In the final step we have used $s \geq t$ and the monotonicity of the function $(0,1) \ni x \mapsto 2^x x^{-x}/(1-x)$.
\end{proof}
\subsection{Exponential decay of fractional moments} \label{sec:exp_decay_d=1}
To conclude exponential decay from the boundedness estimates of Lemma \ref{lemma:finitness1}, we use so-called ``depleted'' Hamiltonians to formulate a geometric resolvent formula. Such Hamiltonians are obtained by setting to zero the ``hopping terms'' of the Laplacian along a collection of bonds. More precisely, let $\Lambda \subset \Gamma \subset \ZZ$ be arbitrary sets. We define the depleted Laplace operator $\Delta_\Gamma^\Lambda :\ell^2 (\Gamma) \to \ell^2 (\Gamma)$ by
\begin{equation*} \label{eq:de1}
 \sprod{\delta_x}{\Delta_\Gamma^\Lambda \delta_y} :=
\begin{cases}
  0 & \text{if $x \in \Lambda$, $y \in \Gamma \setminus \Lambda$ or $y \in \Lambda$, $x \in \Gamma \setminus \Lambda$} , \\
  \bigl \langle \delta_x , \Delta_\Gamma \delta_y \bigr \rangle & \text{else} .
\end{cases}
\end{equation*}
In other words, the hopping terms which connect $\Lambda$ with $\Gamma \setminus \Lambda$ or vice versa are deleted. The depleted Hamiltonian $H_\Gamma^\Lambda : \ell^2 (\Gamma) \to \ell^2 (\Gamma)$ is then defined by
\[
 H_\Gamma^\Lambda := -\Delta_\Gamma^\Lambda + V_\Gamma .
\]
Let further $T_\Gamma^\Lambda := \Delta_\Gamma - \Delta_\Gamma^\Lambda$ be the difference between the the ``full'' Laplace operator and the depleted Laplace operator.
Analogously to Eq.~\eqref{eq:greens} we use the notation $G_\Gamma^\Lambda (z) := (H_\Gamma^\Lambda - z)^{-1}$ and $G_\Gamma^\Lambda (z;x,y) := \bigl \langle \delta_x, G_\Gamma^\Lambda(z) \delta_y \bigr \rangle$. The second resolvent identity yields for arbitrary sets $\Lambda \subset \Gamma \subset \ZZ$
\begin{align}
 G_\Gamma (z)   & = G_\Gamma^\Lambda (z) + G_\Gamma (z) T_\Gamma^\Lambda G_\Gamma^\Lambda (z)             \label{eq:resolvent} \\[1ex]
        & = G_\Gamma^\Lambda (z) + G_\Gamma^\Lambda (z)T_\Gamma^\Lambda G_\Gamma (z) .
          \label{eq:resolvent2}
\end{align}
In the following we will use that $G_\Gamma^\Lambda (z;x,y) = G_\Lambda (z;x,y)$ for all $x,y \in \Lambda$ and that $G_\Gamma^\Lambda (z;x,y) = 0$ if $x \in \Lambda$ and $y \not \in \Lambda$ or vice versa. Thes two properties follow from the fact that $H_\Gamma^\Lambda$ is block-diagonal,
\begin{lemma} \label{lemma:iteration1}
 Let Assumption \ref{ass:d=1} be satisfied, $\Gamma \subset \ZZ$ be connected, and $s \in (0,1)$.
 Then we have for all $x,y \in \Gamma$ with $y-x \geq n$, $\Lambda = \{x+n,x+n+1,\dots\}\cap \Gamma$ and all $z \in \CC \setminus \RR$ the bound
\begin{equation*}
 \mathbb{E}_{\{x\}}\bigl(\lvert G_\Gamma (z;x,y)\rvert^{s/n}\bigr) \leq C_{\lambda , u,\rho} \cdot \lvert G_\Lambda (z;x+n,y)\rvert^{s/n} .
\end{equation*}
In particular,
\begin{equation}\label{eq:iteration1}
 \mathbb{E} \bigl(\lvert G_\Gamma (z;x,y)\rvert^{s/n}\bigr) \leq C_{\lambda , u,\rho} \cdot \mathbb{E} \bigl(\lvert G_\Lambda (z;x+n,y)\rvert^{s/n}\bigr) .
\end{equation}

\end{lemma}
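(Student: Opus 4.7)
The plan is to use the geometric resolvent equation with a depleted Hamiltonian to decouple the dependence on $\omega_x$, exploiting the one-dimensional topology of $\ZZ$. Split $\Gamma$ into the two pieces $\Lambda = \{x+n, x+n+1,\dots\} \cap \Gamma$ and $\Lambda' := \Gamma \setminus \Lambda = \{k \in \Gamma : k \leq x+n-1\}$; note $x \in \Lambda'$ while $y \in \Lambda$ (since $y - x \geq n$), and connectedness of $\Gamma$ guarantees $x+n-1, x+n \in \Gamma$.

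First I would apply the resolvent identity \eqref{eq:resolvent}, namely $G_\Gamma(z) = G_\Gamma^{\Lambda}(z) + G_\Gamma(z) T_\Gamma^{\Lambda} G_\Gamma^{\Lambda}(z)$, and evaluate matrix elements at $(x,y)$. Since $H_\Gamma^{\Lambda}$ is block-diagonal with respect to the decomposition $\Gamma = \Lambda' \cup \Lambda$, the term $G_\Gamma^{\Lambda}(z;x,y)$ vanishes. The operator $T_\Gamma^{\Lambda}$ is supported only on the two bonds $(x+n-1, x+n)$ and $(x+n, x+n-1)$, so the double sum in $[G_\Gamma T_\Gamma^{\Lambda} G_\Gamma^{\Lambda}](x,y)$ collapses to two terms; the one involving $G_\Gamma^{\Lambda}(z;x+n-1,y)$ vanishes by block-diagonality (since $x+n-1 \in \Lambda'$, $y \in \Lambda$), leaving the clean factorization
\[
G_\Gamma(z;x,y) = G_\Gamma(z;x,x+n-1) \cdot G_\Lambda(z;x+n,y),
\]
where I have used that $G_\Gamma^{\Lambda}(z;x+n,y) = G_\Lambda(z;x+n,y)$.

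Next I would take absolute values to the power $s/n$ and integrate over $\omega_x$. The crucial observation is that the second factor $G_\Lambda(z;x+n,y)$ is independent of $\omega_x$: for any $k \in \Lambda$, one has $k - x \geq n$, and since $\supp u = \{0,1,\dots,n-1\}$, the potential $V_\omega(k) = \sum_j \omega_j u(k-j)$ does not involve $\omega_x$. Hence this factor pulls out of $\EE_{\{x\}}$, giving
\[
\EE_{\{x\}}\bigl(\lvert G_\Gamma(z;x,y)\rvert^{s/n}\bigr) = \EE_{\{x\}}\bigl(\lvert G_\Gamma(z;x,x+n-1)\rvert^{s/n}\bigr) \cdot \lvert G_\Lambda(z;x+n,y)\rvert^{s/n}.
\]

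Finally, the first factor is bounded by $C_{\lambda,u,\rho}$ via the a priori estimate \eqref{eq:finitness1} of Lemma \ref{lemma:finitness1}(i), applied with the same $\Gamma$, yielding the claimed bound; Ineq.~\eqref{eq:iteration1} then follows by taking total expectation and using Fubini's theorem (since $G_\Lambda(z;x+n,y)$ is independent of $\omega_x$). The only delicate point is the bookkeeping that ensures exactly one term survives in the resolvent expansion — this is where one-dimensionality and the alignment of $\diam\Theta = n-1$ with the gap $y - x \geq n$ are both essential, and I would verify these reductions carefully before applying Lemma \ref{lemma:finitness1}.
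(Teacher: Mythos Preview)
Your proposal is correct and follows essentially the same route as the paper: apply the geometric resolvent identity \eqref{eq:resolvent} with $\Lambda=\{x+n,x+n+1,\dots\}\cap\Gamma$, use block-diagonality of $H_\Gamma^\Lambda$ to reduce to the single surviving term $G_\Gamma(z;x,x+n-1)\,G_\Lambda(z;x+n,y)$, observe that the second factor is independent of $\omega_x$, and then invoke Lemma~\ref{lemma:finitness1}(i). Your explicit verification that only one bond term survives and your justification of the $\omega_x$-independence are in fact slightly more detailed than the paper's own proof.
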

\begin{proof}
Our starting point is Eq.~\eqref{eq:resolvent}. Taking the matrix element $(x,y)$ yields
\begin{equation*}
 G_\Gamma (z;x,y) = G_\Gamma^{\Lambda} (z;x,y) + \bigl \langle \delta_x , G_\Gamma (z) T_\Gamma^\Lambda G_\Gamma^\Lambda (z) \delta_y \bigr \rangle .
\end{equation*}
Since $x \not \in \Lambda$ and $y \in \Lambda$, the first summand on the right vanishes as the depleted Green function $G_\Gamma^{\Lambda} (z;x,y)$ decouples $x$ and $y$. For the second summand we calculate
\begin{align}
 G_\Gamma (z;x,y) &= G_\Gamma (z;x,x+n-1) G_\Gamma^\Lambda (z;x+n,y) \nonumber \\
&= G_\Gamma (z;x,x+n-1) G_\Lambda (z;x+n,y) \label{e:geometric-resolvent}  .
\end{align}
The second factor is independent of $\omega_x$. Thus, taking expectation with respect to $\omega_x$ bounds the first factor using Ineq.\ \eqref{eq:finitness1} and the proof is complete.
\end{proof}
\begin{lemma} \label{lemma:iteration2}
 Let Assumption \ref{ass:d=1} be satisfied, $\Gamma = \{x,x+1,...\}$, $y \in \Gamma$ with $n \leq y - x < 2n$, and $s \in (0,1)$. Then we have for all $z \in \CC \setminus \RR$ the bound
\begin{equation} \label{eq:iteration2}
 \mathbb{E}_{\{y-n+1,x\}}\bigl(\lvert G_\Gamma (z;x,y)\rvert^{s/n}\bigr) \leq C_{\lambda , u,\rho}^+ C_{\lambda , u,\rho} .
\end{equation}
\end{lemma}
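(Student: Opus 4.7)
The plan is to proceed in the spirit of Lemma~\ref{lemma:iteration1}, but with the roles of the two endpoints reversed: I would decouple near $y$ instead of near $x$. This places a segment of length $m := y - x - n + 1 \in \{1, \ldots, n\}$ on the left---which is finite and connected, hence amenable to Lemma~\ref{lemma:finitness1}(ii)---and leaves on the right a residual ``full block'' of length $n$ in which the random variable $\omega_{y-n+1}$ couples to every site via the full single-site potential.

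First, I would apply the geometric resolvent identity~\eqref{eq:resolvent2} with the depleted set $\tilde\Lambda := \{x, x+1, \ldots, y-n\}$. Since $x \in \tilde\Lambda$ while $y, y-n+1 \notin \tilde\Lambda$, we have $G_\Gamma^{\tilde\Lambda}(z; x, y) = G_\Gamma^{\tilde\Lambda}(z; x, y-n+1) = 0$, and the only surviving hopping bond in $T_\Gamma^{\tilde\Lambda}$ contributes the pair $(y-n, y-n+1)$, producing
\[
G_\Gamma(z; x, y) = G_{\tilde\Lambda}(z; x, y-n) \cdot G_\Gamma(z; y-n+1, y).
\]

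Second, to estimate the $\omega_{y-n+1}$-expectation of $|G_\Gamma(z; y-n+1, y)|^{s/n}$, I would apply Lemma~\ref{lemma:schur1} with $\hat\Lambda := \{y-n+1, \ldots, y\}$ to write $G_\Gamma(z; y-n+1, y) = \langle \delta_{y-n+1}, (H_{\hat\Lambda} - B_\Gamma^{\hat\Lambda} - z)^{-1} \delta_y\rangle$, then proceed exactly as in the proof of Lemma~\ref{lemma:finitness1}(i): Cramer's rule, applied to the tridiagonal structure, reduces the modulus to $|\det(H_{\hat\Lambda} - B_\Gamma^{\hat\Lambda} - z)|^{-1}$, and I would decompose the matrix as $A + \omega_{y-n+1} \lambda V$, where $V$ is diagonal with entries $u(0), u(1), \ldots, u(n-1)$ (all nonzero by Assumption~\ref{ass:d=1}). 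The delicate bookkeeping step is to verify that $A$ is independent of $\omega_{y-n+1}$: this variable enters $V_\omega(k)$ only for $k \in \hat\Lambda$, hence it does not enter $B_\Gamma^{\hat\Lambda}$, which by \eqref{eq:bij} is built from $V_\omega$ on $\Gamma \setminus \hat\Lambda$. Then Lemma~\ref{lemma:det}, applied with dimension $n$ and exponent $s$, gives
\[
\mathbb{E}_{\{y-n+1\}}\bigl(|G_\Gamma(z; y-n+1, y)|^{s/n}\bigr) \leq \lambda^{-s} C_u C_\rho = C_{\lambda, u, \rho},
\]
the bound being uniform in the remaining random variables.

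Third, the factor $G_{\tilde\Lambda}(z; x, y-n)$ is independent of $\omega_{y-n+1}$ (since $\hat\Lambda \cap \tilde\Lambda = \emptyset$), and $\tilde\Lambda$ is connected with $\min \tilde\Lambda = x$, $\max \tilde\Lambda = y-n$, and cardinality $m \in \{1, \ldots, n\}$. Hence Lemma~\ref{lemma:finitness1}(ii) applies directly and yields $\mathbb{E}_{\{x\}}\bigl(|G_{\tilde\Lambda}(z; x, y-n)|^{s/n}\bigr) \leq C^+_{\lambda, u, \rho}$. Combining the two estimates via Fubini---integrating first over $\omega_{y-n+1}$ using the uniform-in-$\omega$ bound from the second step, then over $\omega_x$ using the third---gives~\eqref{eq:iteration2}. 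The main obstacle is the careful tracking of which random variables enter the Schur boundary operator $B_\Gamma^{\hat\Lambda}$ and the depleted Green's function $G_{\tilde\Lambda}$, which is exactly what ensures the independence structure exploited in each of the two successive integrations.
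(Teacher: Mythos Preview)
Your proposal is correct and follows essentially the same approach as the paper: decouple via the second resolvent identity~\eqref{eq:resolvent2} with $\Lambda=\{x,\dots,y-n\}$, bound the right factor $G_\Gamma(z;y-n+1,y)$ by $C_{\lambda,u,\rho}$ via the $\omega_{y-n+1}$-average, and bound the left factor $G_{\tilde\Lambda}(z;x,y-n)$ by $C_{\lambda,u,\rho}^+$ via Lemma~\ref{lemma:finitness1}(ii). The only difference is cosmetic: where the paper simply invokes Ineq.~\eqref{eq:finitness1} for the right factor, you redo the proof of Lemma~\ref{lemma:finitness1}(i) inline (Schur complement, Cramer's rule, Lemma~\ref{lemma:det}); this is fine but unnecessary, since $y-n+1,y\in\Gamma$ and $\Gamma$ is connected, so Lemma~\ref{lemma:finitness1}(i) applies directly.
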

\begin{proof}
The starting point is Eq.~\eqref{eq:resolvent2}. Choosing $\Lambda = \{x,\dots,y-n\}$ gives analoguously to Eq.~\eqref{e:geometric-resolvent}
\[
 G_\Gamma (z;x,y) = G_\Lambda (z;x,y-n) G_\Gamma (z;y-n+1,y) .
\]
Since $G_\Lambda (z;x,y-n)$ depends only on the potential values at lattice sites in $\Lambda$,
it is independent of $\omega_{y-n+1}$. We take expectation with respect to $\omega_{y-n+1}$ to bound the second factor of the above identity using Ineq.~\eqref{eq:finitness1}.
Since $1 \leq \abs{\Lambda} \leq n$ by assumption, we may apply Ineq.~\eqref{eq:finitness2} to $G_\Lambda (z;x,y-n)$ which finishes the proof.
\end{proof}
\begin{theorem} \label{theorem:exp_d=1}
Let Assumption \ref{ass:d=1} be satisfied, $\Gamma \subset \ZZ$ connected and $s \in (0,1)$. Assume
\begin{equation} \label{eq:disorder}
 \frac{\Vert\rho\Vert_\infty}{\lambda} < \frac{(1-s)^{1/s}}{2s^{-1}} \Bigl | \prod_{k=0}^{n-1} u(k) \Bigr|^{1/n} .
\end{equation}
Then $\mu = - \ln C_{\lambda , u,\rho}$ is strictly positive and we have for all $x,y \in \Gamma$ with $\lvert x-y \rvert \geq 2n$ and all $z \in \CC \setminus \RR$ the bound
\[
 \mathbb{E} \bigl( \lvert G_\Gamma (z;x,y)\rvert^{s/n} \bigr) \leq C_{\lambda , u,\rho}^+ \exp \Biggl\{-\mu \Biggl \lfloor \frac{\lvert x-y\rvert}{n} \Biggl\rfloor\Biggr\} .
\]
Here, $\lfloor \cdot \rfloor$ is for $m \in \RR$ defined by $\lfloor m \rfloor := \max\{k\in \ZZ \mid k\leq m\}$.
\end{theorem}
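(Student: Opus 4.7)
The plan is to iterate the geometric resolvent expansion of Lemmas~\ref{lemma:iteration1} and~\ref{lemma:iteration2} along the segment from $x$ to $y$, accumulating one factor of $C_{\lambda,u,\rho}$ per step of size $n$.

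First I would verify that the hypothesis \eqref{eq:disorder} is precisely the condition $C_{\lambda,u,\rho}<1$, so that $\mu=-\ln C_{\lambda,u,\rho}$ is strictly positive. Indeed, by Eq.~\eqref{eq:cucrho},
\[
C_{\lambda,u,\rho}=\frac{C_uC_\rho}{\lambda^s}=\frac{1}{1-s}\Bigl(\frac{2\lVert\rho\rVert_\infty}{s\lambda\,\lvert\prod_{k=0}^{n-1}u(k)\rvert^{1/n}}\Bigr)^s,
\]
and rearranging $C_{\lambda,u,\rho}<1$ gives exactly the bound in \eqref{eq:disorder}.

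Next, assume without loss of generality that $y>x$ and set $m=y-x\geq 2n$. Define the nested shrinking sets $\Lambda_j:=\{x+jn,x+jn+1,\ldots\}\cap\Gamma$ for $j=0,1,\ldots$, so that $\Lambda_0=\Gamma\cap\{x,x+1,\dots\}$ (which equals $\Gamma$ from $x$ onwards since $\Gamma$ is connected and contains $x,y$). I would apply Lemma~\ref{lemma:iteration1} iteratively along this chain: as long as $y-(x+jn)\geq n$, applying it to $G_{\Lambda_j}(z;x+jn,y)$ yields
\[
\EE\bigl(\lvert G_{\Lambda_j}(z;x+jn,y)\rvert^{s/n}\bigr)\leq C_{\lambda,u,\rho}\cdot\EE\bigl(\lvert G_{\Lambda_{j+1}}(z;x+(j+1)n,y)\rvert^{s/n}\bigr).
\]
Iterating this $k$ times, where $k:=\lfloor m/n\rfloor-1$, leaves a residual distance $m-kn\in[n,2n)$, for which Lemma~\ref{lemma:iteration2} applies to the half-line (or finite interval) $\Lambda_k$ and bounds the remaining expectation by $C^+_{\lambda,u,\rho}\,C_{\lambda,u,\rho}$. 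Multiplying, the total bound is $C^+_{\lambda,u,\rho}\,C_{\lambda,u,\rho}^{k+1}=C^+_{\lambda,u,\rho}\exp(-\mu\lfloor m/n\rfloor)$, which is the claim.

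The main subtlety will be ensuring the iteration chain is well-defined and that Lemma~\ref{lemma:iteration2} is applicable at the last step: one must verify that each $\Lambda_j$ contains the lattice stretch needed for both lemmas (in particular $\{x+jn,\ldots,x+(j+1)n-1\}\subset\Lambda_j$ for applying iteration1, and $\{y-n+1,\ldots,y\}\subset\Lambda_k$ for applying iteration2). This follows from connectedness of $\Gamma$ and the inclusions $x,y\in\Gamma$, with the same depleted-Hamiltonian decomposition used in the proofs of the two iteration lemmas working verbatim on the connected subset $\Lambda_k$ regardless of whether it is a half-line or a finite interval. A minor bookkeeping check is that the independence structure of the random variables revealed in each step of Lemma~\ref{lemma:iteration1} is compatible across iterations, which is automatic because the sites $\{x+jn\}_{j=0}^{k}$ are all distinct and the final application of Lemma~\ref{lemma:iteration2} only touches $\omega_{x+kn}$ and $\omega_{y-n+1}$, neither of which has been integrated out previously.
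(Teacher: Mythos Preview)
Your proposal is correct and follows essentially the same approach as the paper: iterate Lemma~\ref{lemma:iteration1} exactly $p=\lfloor(y-x)/n\rfloor-1$ times (your $k$) and close with Lemma~\ref{lemma:iteration2}, yielding $C_{\lambda,u,\rho}^{p+1}C_{\lambda,u,\rho}^+$. Your additional remarks---the explicit verification that \eqref{eq:disorder} is equivalent to $C_{\lambda,u,\rho}<1$, and the observation that Lemma~\ref{lemma:iteration2} applies to $\Lambda_k$ whether it is a half-line or a finite interval---are correct refinements that the paper leaves implicit; the only cosmetic slip is that the very first iteration step should be applied to $G_\Gamma$ rather than $G_{\Lambda_0}$ (since $\Lambda_0\subsetneq\Gamma$ if $\Gamma$ extends left of $x$), but Lemma~\ref{lemma:iteration1} handles this directly.
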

\begin{remark}
The statement of Theorem \ref{theorem:exp_d=1} holds also true if the density function $\rho$ is not compactly supported since Lemma \ref{lemma:det} holds true for non-compactly supported functions $\rho$. This is formulated in \cite{ElgartTV-11}. Notice, if $\rho$ (and hence the measure $\nu$) is not compactly supported, then one has to be careful about the domain of the family of self-adjoint operators $H_\omega$. In particular, $H_\omega$ is essentially self-adjoint on the compactly supported sequences, see \cite{Kirsch-08}.
\end{remark}

\begin{proof}[Proof of Theorem~\ref{theorem:exp_d=1}]
The constant $\mu$ is larger than zero since $C_{\lambda , u,\rho} < 1$ by assumption.
By symmetry we assume without loss of generality $y-x \geq 2n$.
In order to estimate $\mathbb{E} (\lvert G_\Gamma (z;x,y)\rvert^{s/n})$,
we iterate Eq.~\eqref{eq:iteration1} of Lemma \ref{lemma:iteration1}
and finally use Eq.~\eqref{eq:iteration2} of Lemma \ref{lemma:iteration2} for the last step.
Figure \ref{fig:iteration} shows this procedure schematically.
\begin{figure}[ht]
\centering
\begin{tikzpicture}[scale=1]
 \draw[very thick] (9.1,0) -- (11.1,0);
 \draw (-0.5,0)--(5.5,0);\draw[dotted] (5.5,0)--(6.5,0);\draw (6.5,0)--(13.5,0);
 \draw[thick] (0 cm,2.5pt) -- (0 cm,-2.5pt) node[anchor=north]  {\footnotesize $\phantom{I}x\phantom{I}$};
 \draw[thin] (13 cm,2.5pt) -- (13 cm,-2.5pt) node[anchor=north]{\footnotesize $\phantom{I}y\phantom{I}$};
 \draw[thick] (2 cm,2.5pt) -- (2 cm,-2.5pt) node[anchor=north]  {\footnotesize $\phantom{I}x+n\phantom{I}$};
 \draw[thick] (4 cm,2.5pt) -- (4 cm,-2.5pt) node[anchor=north]  {\footnotesize $\phantom{I}x+2n\phantom{I}$};
 \draw[thin] (11 cm,2.5pt) -- (11 cm,-2.5pt);
 \draw (11.25,-0.07) node[anchor=north]{\footnotesize $\phantom{I}y-n\phantom{I}$};
 \draw[thin] (9 cm,2.5pt) -- (9 cm,-2.5pt) node[anchor=north]  {\footnotesize $\phantom{I}y-2n\phantom{I}$};
 \draw[thick] (10 cm,2.5pt) -- (10 cm,-2.5pt);
 \draw (10.15,-0.07) node[anchor=north]{\footnotesize $\phantom{I}x+pn\phantom{I}$};
 \draw[thick] (8 cm,2.5pt) -- (8 cm,-2.5pt);
\draw (7.5,-0.07) node[anchor=north]{\footnotesize $\phantom{I}x+(p-1)n\phantom{I}$};
 \draw[-latex] (0,0) .. controls  (0.7,0.5) and (1.3,0.5) .. (2,0);  
 \draw[-latex] (2,0) .. controls  (2.7,0.5) and (3.3,0.5) .. (4,0);
 \draw         (4,0) .. controls  (4.25,0.2)  .. (4.5,0.3);
\draw[-latex] (8,0) .. controls  (8.7,0.5) and (9.3,0.5) .. (10,0);
 \draw[dotted] (4.5,0.3) .. controls  (5,0.4)  .. (5,0.4);
 \draw[dotted] (7,0.4) .. controls (7,0.4) .. (7.5,0.3);
 \draw[-latex]        (7.5,0.3) .. controls  (7.75,0.20)  .. (8,0);
\draw[-latex] (10,0) .. controls  (11,0.5) and (12,0.5)  .. (13,0);
\draw (4.5,0.8) node {\footnotesize Lemma \ref{lemma:iteration1}};
\draw (11.5,0.8) node {\footnotesize Lemma \ref{lemma:iteration2}};
\end{tikzpicture}
\caption{Illustration to the proof of Theorem \ref{theorem:exp_d=1}}
\label{fig:iteration}
\end{figure}
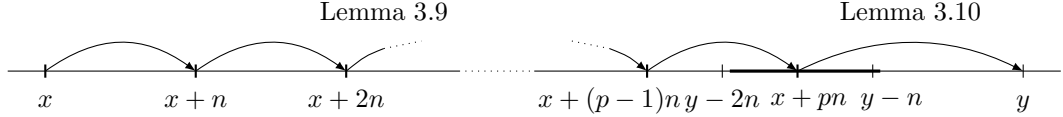
We choose $p = \lfloor (y-x)/n \rfloor - 1 \in \NN$ such that $y-2n < x+pn \leq y-n$.
We iterate Eq.~\eqref{eq:iteration1} exactly $p$ times and obtain
\[
 \mathbb{E} \bigl(\lvert G_\Gamma (z;x,y)\rvert^{s/n}\bigr) \leq C_{\lambda , u,\rho}^{p} \cdot \mathbb{E}\bigl(\lvert G_{\Lambda_{p}} (z;x+pn,y)\rvert^{s/n}\bigr )
\]
where $\Lambda_p = \{ x+pn, x+pn+1, \dots \}$. Now the first $p$ jumps of Fig.~\ref{fig:iteration} are done
and it remains to estimate $\mathbb{E}\bigl(\lvert G_{\Lambda_{p}} (z;x+pn,y)\rvert^{s/n}\bigr)$.
Since $n \leq y-(x+pn) < 2n$ and $\Lambda_p = \{ x+pn, x+pn+1, \dots \}$ we may apply Lemma \ref{lemma:iteration2} and get
\[
 \mathbb{E}\bigl(\lvert G_\Gamma (z;x,y)\rvert^{s/n}\bigr) \leq C_{\lambda , u,\rho}^{p+1} C_{\lambda , u,\rho}^+
= C_{\lambda , u,\rho}^+ \,\, {\rm e}^{(p+1) \ln C_{\lambda , u,\rho}} . \qedhere
\]
\end{proof}
\subsection{General single-site potentials} \label{sec:general_single_site}
We now want to get rid of the assumption that $\Theta = \{0,1,\ldots,n-1\}$, i.e.\ we want to consider the case where $\Theta$ is still finite but not necessarily connected, and prove an analogue of Theorem \ref{theorem:exp_d=1} in this case. By translation, we assume without loss of generality that $\min \Theta = 0$ and $\max \Theta = n-1$ for some $n \in \NN$. Furthermore, we define
\begin{equation} \label{eq:r}
 r := \max\big\{ n + 1 \in \NN \colon \exists \, a \in \{1,2,\ldots,n-1\} \colon \{a , \ldots , a+n\} \cap \Theta = \emptyset \big\} .
\end{equation}
Here we use the convention that $\max \emptyset = 0$. Thus $r$ is the number of elements of the largest gap in $\Theta$, and $r = 0$ if $\Theta$ is the connected set $\{0 , 1 , \ldots , n-1\}$.
\par
To illustrate the difficulties arising for non-connected supports $\Theta$ we consider an example.
Suppose $\Theta = \{0,2,3,\dots , n-1\}$ so that $r = 1$. If we set $\Lambda = \{0,\dots,n-1\}$ there is no decomposition $H_\Lambda - B_\Gamma^\Lambda = A + \omega_0 \lambda V$ with an invertible $V$. If we set $\Lambda = \{0,\dots,n-1+r\} = \{0,\dots,n\}$ we observe that every diagonal element of $H_\Lambda$ depends at least on one of the variables $\omega_0$ and $\omega_1 = \omega_r$, while the elements of $B_\Gamma^\Lambda$ (which appear after applying Lemma \ref{lemma:schur1}) are independent of $\omega_k$, $k \in \{0,\dots,r\} = \{0,1\}$. Thus we have a decomposition $H_\Lambda - B_\Gamma^\Lambda = A + \omega_0 \lambda V_0 + \omega_1 \lambda V_1$, where $A$ is independent of $\omega_k$, $k \in \{0,1\}$, and for all $i \in \Lambda$ either $V_0 (i)$ or $V_1 (i)$ is not zero.
As a consequence there is an $\alpha \in \RR$ such that $V_0 + \alpha V_1$ is invertible on $\ell^2 (\Lambda )$. This makes Lemma~\ref{lemma:detgen} applicable and we can prove the following analogues of Lemma~\ref{lemma:finitness1} and Theorem~\ref{theorem:exp_d=1}.
\begin{lemma} \label{lemma:finitness2}
 Let Assumption \ref{ass:d=1_2} be satisfied and $\Gamma \subset \ZZ$ be connected. Let further $r$ be as in Eq.~\eqref{eq:r} and $s \in (0,1)$.
Then there exists a constant $D = D(\lambda , u,\rho , s,r)$ such that for all $x,x+n-1+r \in \Gamma$ and $z \in \CC \setminus \RR$
\begin{equation} \label{eq:leqD}
\mathbb{E}_{\{x,\dots,x+r\}} \bigl ( \lvert G_\Gamma (z;x,x+n-1+r)\rvert^{s/(n+r)}  \bigr ) \leq D\, .
\end{equation}
 The constant $D$ is characterized in Eq.~\eqref{eq:defD} and estimated in Ineq.~\eqref{eq:D}. If $1 \leq \abs{\Gamma} \leq n+r$ with $\gamma_0 = \min \Gamma$ and $\gamma_1 = \max \Gamma$ there exists a constant $D^+ = D^+ (\lambda , u,\rho ,s,r)$ such that for all $z \in \CC \setminus \RR$
\begin{equation} \label{eq:leqDplus}
 \mathbb{E}_{\{\gamma_0,\dots,\gamma_0+r\}} \bigl ( \lvert G_\Gamma (z;\gamma_0,\gamma_1) \rvert^{s/(n+r)}  \bigr ) \leq D^+.
\end{equation}
The constant $D^+$ is characterized in Eq.~\eqref{eq:defDplus} and estimated in Ineq.~\eqref{eq:Dplus}.
\end{lemma}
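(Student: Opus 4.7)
The plan is to extend the strategy of Lemma~\ref{lemma:finitness1} from connected $\Theta$ to arbitrary finite $\Theta$ by integrating out \emph{several} consecutive random variables simultaneously, so that the generalized averaging estimate for determinants (Lemma~\ref{lemma:detgen}) takes the place of the single-variable bound (Lemma~\ref{lemma:det}). The number $r+1$ of variables to be averaged is dictated precisely by the largest gap appearing in $\Theta$.

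For~\eqref{eq:leqD} I take $\Lambda = \{x, x+1, \ldots, x+n-1+r\} \subset \Gamma$ (which fits since $\Gamma$ is connected and contains both $x$ and $x+n-1+r$), apply Lemma~\ref{lemma:schur1} to obtain
\[
G_\Gamma(z;x,x+n-1+r) = \bigl\langle \delta_x, (H_\Lambda - B_\Gamma^\Lambda - z)^{-1} \delta_{x+n-1+r} \bigr\rangle,
\]
and then invoke Cramer's rule. The tridiagonal shape of $H_\Lambda$ together with~\eqref{eq:bij2} (which confines $B_\Gamma^\Lambda$ to the corners of $\Lambda$) makes the numerator a lower-triangular determinant equal to $\pm 1$, so that $|G_\Gamma(z;x,x+n-1+r)| = |\det(H_\Lambda - B_\Gamma^\Lambda - z)|^{-1}$. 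The decisive decomposition is
\[
H_\Lambda - B_\Gamma^\Lambda - z = A + \lambda \sum_{i=0}^{r} \omega_{x+i} V_i,
\]
where $V_i$ is the diagonal matrix on $\ell^2(\Lambda)$ with entry $u(k-x-i)$ at position $k$, and $A$ is independent of $\omega_x, \ldots, \omega_{x+r}$. This independence comes from the construction of $B_\Gamma^\Lambda$ and from the fact that $V_\omega(k)$, $k \in \Lambda$, depends on $\omega_j$ only through $j \in k - \Theta$.

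The main obstacle is to produce coefficients $\alpha_0, \ldots, \alpha_r \in \RR$ with $\alpha_0 \neq 0$ such that $\sum_{i=0}^{r} \alpha_i V_i$ is invertible on $\ell^2(\Lambda)$. The diagonal entry at position $k = x + m$, $m \in \{0, \ldots, n-1+r\}$, is $\sum_{i=0}^{r} \alpha_i u(m-i)$. Here the definition of $r$ as the largest gap in $\Theta$ enters: for $m < r$ the term $i = m$ contains $u(0) \neq 0$, for $m \geq n$ the term $i = m - n + 1$ contains $u(n-1) \neq 0$, and for $r \leq m \leq n-1$ the gap bound forces $\{m-r, \ldots, m\} \cap \Theta \neq \emptyset$. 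Each of the $n+r$ invertibility conditions therefore defines a proper affine hyperplane in $\RR^{r+1}$, and a generic $\alpha$ (also avoiding $\{\alpha_0 = 0\}$) meets all of them; for the explicit constant in~\eqref{eq:defD} I would inspect the coefficients of $p_u(x) \cdot \sum_i \alpha_i x^i$ and pick $\alpha$ recursively so that none of its first $n+r$ coefficients vanish. With such $\alpha$ fixed, Lemma~\ref{lemma:detgen} with matrix dimension $n+r$, $N = r$, matrices $A$ and $\lambda V_i$, and exponent $t = s$ yields
\[
\EE_{\{x, \ldots, x+r\}} \bigl( |G_\Gamma(z;x,x+n-1+r)|^{s/(n+r)} \bigr) \leq \lambda^{-s} \Bigl| \det \bigl( \sum_{i=0}^{r} \alpha_i V_i \bigr) \Bigr|^{-s/(n+r)} C(\alpha, \rho, s, r),
\]
which is~\eqref{eq:leqD}.

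For the bound~\eqref{eq:leqDplus} the Schur reduction is unnecessary because $|\Gamma| \leq n+r$: Cramer's rule applied directly to the tridiagonal $H_\Gamma - z$ yields $|G_\Gamma(z; \gamma_0, \gamma_1)| = |\det(H_\Gamma - z)|^{-1}$. The analogous decomposition $H_\Gamma - z = \tilde A + \lambda \sum_{i=0}^{r} \omega_{\gamma_0 + i} \tilde V_i$ holds on $\ell^2(\Gamma)$, with the caveat that some $\tilde V_i$ may vanish identically when $\omega_{\gamma_0 + i}$ does not affect any potential value in $\Gamma$; the corresponding integration over $\omega_{\gamma_0+i}$ is then trivial. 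The invertibility conditions on $\sum_i \alpha_i \tilde V_i$ form a subset of those verified above (indexed by $m \in \{0, \ldots, |\Gamma|-1\}$), so the same $\alpha$ works, and a final application of Lemma~\ref{lemma:detgen} produces the constant $D^+$ of~\eqref{eq:defDplus}.
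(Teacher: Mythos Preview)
Your proposal is correct and follows essentially the same route as the paper: Schur reduction to $\Lambda=\{x,\ldots,x+n-1+r\}$, Cramer's rule to rewrite the corner Green function as an inverse determinant, the decomposition $A+\lambda\sum_{i=0}^r\omega_{x+i}V_i$, and Lemma~\ref{lemma:detgen} once a good $\alpha$ is found. Two minor points the paper makes explicit: for the quantitative bound~\eqref{eq:D} it selects $\alpha'\in[0,1]^{r+1}$ via a volume comparison so that $\alpha'$ keeps distance at least $(2(n+r)(r+1)^{r/2})^{-1}$ from every hyperplane $\{\sum_i\alpha_i u(m-i)=0\}$ (your recursive suggestion would give existence but not this explicit constant), and for~\eqref{eq:leqDplus} one must apply Lemma~\ref{lemma:detgen} with exponent $t=s\frac{\lvert\Gamma\rvert}{n+r}\le s$ (since the matrix dimension is $\lvert\Gamma\rvert$, not $n+r$) and then maximize the resulting bound over $\lvert\Gamma\rvert\in\{1,\ldots,n+r\}$.
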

\begin{proof}
 The proof is similar to the proof of Lemma \ref{lemma:finitness1}. Apply Lemma \ref{lemma:schur1} with $\Lambda = \{x,x+1,\dots,x+n-1+r\}$ and Cramer's rule to get $\lvert G_\Gamma (z;x,x+n-1+r)\rvert = 1/\abs{\det B}$ where $B = H_\Lambda - B_\Gamma^\Lambda - z$. Note that $B_\Gamma^\Lambda$ is independent of $\omega_k$, $k \in \{x,\dots,x+r\}$. We have the decomposition $B = A + \lambda \sum_{k=0}^{r} \omega_{x+k} V_k$ where the elements of the diagonal matrices $V_k \in \RR^{(n+r) \times (n+r)}$, $k = 0,\dots,r$, are given by $V_k(i) = u(i-k)$, $i=0,\dots,n-1+r$, and $A = B - \lambda \sum_{k=0}^r \omega_{x+k} V_k$ is independent of $\omega_k$, $k \in \{x,\dots,x+r\}$.
We apply Lemma \ref{lemma:detgen} and obtain for all $\alpha = (\alpha_k)_{k=0}^r \in M = \{\alpha \in \RR^{r+1} : \alpha_0 \not = 0,  \text{ $\sum_{k=0}^r \alpha_k V_k$ is invertible}\}$ the bound  $\mathbb{E}_{\{x,\dots,x+r\}} \bigl ( \lvert G_\Gamma (z;x,x+n-1+r)\rvert^{s/(n+r)} \bigr ) \leq D_\alpha$ where
\begin{equation*}
 D_\alpha  = \Vert\rho\Vert_\infty^{(r+1)s} (2R)^{rs} \frac{2^s s^{-s}}{1-s} \lvert \alpha_0\rvert^s \Biggl( 1+\max_{i\in \{1,\dots,r\}} \frac{\lvert\alpha_i\rvert}{\lvert\alpha_0\rvert} \Biggr)^{rs} \, \prod_{i=0}^{n-1+r} \Bigl| \sum_{k=0}^r \alpha_k \lambda u(i-k) \Bigr|^{-\frac{s}{n+r}} .
\end{equation*}
The set $M$ is non-empty and equal to the set $\{\alpha \in \RR^{r+1} : \alpha_0 \not = 0, D_\alpha \text{ is finite}\}$.
Thus Ineq.~\eqref{eq:leqD} holds with the constant
\begin{equation} \label{eq:defD}
 D := \inf_{\alpha \in M} D_\alpha .
\end{equation}
In the following we establish an upper bound for $D$. Using a volume comparison criterion we can find a vector $\alpha'=(\alpha_k')_{k=0}^r \in [0,1]^{r+1}$ which has to each hyperplane $\sum_{k=0}^r \alpha_k u(i-k) = 0$, $i=0,\dots,n-1+r$, at least the Euclidean distance $(2 (n+r) (r+1)^{r/2})^{-1}$, as outlined in Fig.~\ref{fig:volume}.
\begin{figure}[t]
\centering
\begin{tikzpicture}[scale=4]
\draw[dotted] (0,1)--(1,1);\draw[dotted] (1,1)--(1,0);\draw[dotted] (1,0)--(0,0);\draw[dotted] (0,0)--(0,1);
\draw[<->] (1.1207,0.9793)--(0.9793,1.1207); \draw (1.1,1.1) node {\small $\epsilon$};
\draw (0,0)--(0,1);\draw[dashed] (0.1,0)--(0.1,1);
\draw (0,0)--(1,1); \draw[dashed] (0,0.14)--(0.86,1); \draw[dashed] (0.14,0)--(1,0.86);
\draw (0,0)--(1,0.4); \draw[dashed] (0,0.108)--(1,0.508); \draw[dashed] (0.27,0)--(1,0.292);
\draw (0,1.1) node {\small $H_0^\epsilon$};
\draw (1.1,0.8) node {\small $H_1^\epsilon$};
\draw (1.185,0.4) node {\small $H_{n-1+r}^\epsilon$};
\filldraw (0.4,0.95) circle (0.4pt); \draw (0.4,0.9)  node {\small $\alpha'$};
\draw (2.2,0.8) node {\small $\operatorname{Vol}(W) = 1$};
\draw (2.2,0.6) node {\small $\operatorname{Vol}(\cup_i H_i^\epsilon) \leq (n+r) (r+1)^{r/2} \epsilon$};
\draw (2.2,0.4) node {\small $\operatorname{Vol}(W \setminus \cup_i H_i^\epsilon) \geq 1 - (n+r) (r+1)^{r/2} \epsilon$};
\end{tikzpicture}
\caption[\protect{Sketch of the existence of a vector $\alpha' \in W = [0,1]^{r+1}$} with the desired properties]{Sketch of the existence of a vector $\alpha' \in W = [0,1]^{r+1}$
with the desired properties: Let  $H_i^\epsilon$ denote the $\epsilon$-neighborhood of the hyperplane $H_i=\{ \alpha \in W \mid \sum_{k=0}^r \alpha_k u(i-k) = 0\}$
for  $i \in \{0,\dots,n-1+r\}$. Since the volume of $W \setminus \cup_i H_i^\epsilon$ is positive if $\epsilon$ is smaller  than $(n+r)^{-1} (r+1)^{-r/2} = d_0$, we conclude (using continuity) that there is a vector $\alpha'$ whose distance to each hyperplane $H_i$, $i \in \{0,\dots,n-1+r\}$, is at least $d_0/2$.}
\label{fig:volume}
\end{figure}
This implies $\alpha_0' \geq ( 2 (n+r) (r+1)^{r/2})^{-1}$ since the hyperplane for $i=0$ is $\alpha_0 = 0$. With this choice of $\alpha$ and the notation $u_i = (u(i-k))_{k=0}^r$, $i \in \{0,\dots,n-1+r\}$, we have
\begin{align} 
 \prod_{i=0}^{n-1+r} \Bigl| \sum_{k=0}^r \alpha_k' u(i-k) \Bigr|^{-\frac{s}{n+r}}  &=\prod_{i=0}^{n-1+r} \Bigl| \Vert u_i \Vert \sprod{\alpha'}{u_i/ \Vert u_i\Vert}_2 \Bigr|^{-\frac{s}{n+r}} \nonumber \\
&\leq
\frac{\bigl[ 2 (n+r) (r+1)^{r/2} \bigr]^s}{\Bigl| \prod_{i=0}^{n-1+r} \Bigl( \sum_{k=0}^r u(i-k)^2 \Bigr) \Bigr|^{\frac{s}{2(n+r)}}} \label{eq:volume}
\end{align}
where $\sprod{\cdot}{\cdot}_2$ denotes the standard Euclidean scalar product. Now we choose $\alpha = \alpha'$ and obtain
\begin{equation} \label{eq:D}
 D \leq  \Vert\rho\Vert_\infty^{(r+1)s} (2R)^{rs} \frac{2^s s^{-s}\bigl( 1+2 (n+r) (r+1)^{r/2} \bigr)^{rs} \bigl[ 2 (n+r) (r+1)^{r/2} \bigr]^s}{(1-s)\Bigl| \prod_{i=0}^{n-1+r} \Bigl( \sum_{k=0}^r u(i-k)^2 \Bigr) \Bigr|^{\frac{s}{2(n+r)}} \lambda^s} .
\end{equation}
\par
The proof of the second statement is similar but without use of Lemma~\ref{lemma:schur1}. By Cramer's rule we get $\lvert G_\Gamma (z;\gamma_0,\gamma_1)\rvert = 1/\lvert\det (H_\Gamma - z)\rvert$. Set $l = \gamma_1 - \gamma_0$. We have the decomposition $H_\Gamma - z = \tilde A + \lambda \sum_{k=0}^{r} \omega_{\gamma_0+k} \tilde V_k$, where the elements of the diagonal matrices $\tilde V_k \in \RR^{(l + 1) \times (l + 1)}$, $k = 0,\dots,r$, are given by $\tilde V_k(i) = u(i-k)$, $i  \in \{0,\dots,l\}$, and $\tilde A = H_\Gamma - z - \lambda \sum_{k=0}^r \omega_{\gamma_0+k} \tilde V_k$ is independent of $\omega_k$, $k \in \{x,\dots,x+r\}$.
We apply Lemma \ref{lemma:detgen} with $t = s \frac{l+1}{n+r}$ and obtain (using $s \geq t$) for all
$\alpha = (\alpha_k)_{k=0}^r \in \tilde M = \{\alpha \in \RR^{r+1} : \alpha_0 \not = 0,~ \text{$\sum_{k=0}^r \alpha_k \tilde V_k$ is invertible}\}$ that $\mathbb{E}_{\{\gamma_0,\dots,\gamma_0+r\}} \bigl ( \lvert G_\Gamma (z;\gamma_0,$ $\gamma_1) \rvert^{s/(n+r)} \bigr ) \leq D_\alpha^+ (l)$ where
\begin{multline*}
 D_\alpha^+ (l)  =  \Vert\rho\Vert_\infty^{s \frac{(r+1)(l+1)}{n+r}} (2R)^{s \frac{r(l+1)}{n+r}} \frac{2^s s^{-s}}{1-s} \lvert\alpha_0\rvert^{s \frac{l+1}{n+r}} \\
\cdot \Biggl( 1+\max_{i\in \{1,\dots,r\}} \frac{\lvert\alpha_i\rvert}{\lvert\alpha_0\rvert} \Biggr)^{s r} \prod_{i=0}^{l} \Bigl| \sum_{k=0}^r \alpha_k \lambda u(i-k) \Bigr|^{\frac{s}{n+r}} .
\end{multline*}
Since $\tilde M \supset M$ for each $l \in {0,\dots n-1+r}$ the set $\tilde M$ is non-empty. Thus Ineq.~\eqref{eq:leqDplus} holds with the constant
\begin{equation} \label{eq:defDplus}
 D^+ := \max_{l \in \{0,\dots,n-1+r\}\phantom{\tilde M}} \inf_{\alpha \in \tilde M} D_\alpha^+ (l) .
\end{equation}
We again choose $\alpha = \alpha'$ as in Fig.~\ref{fig:volume}, use $\alpha_k' \in [0,1]$ and $\alpha_0' \geq (2 (n+r) (r+1)^{r/2})^{-1}$, estimate $D_{\alpha'}^+ (l)$ similar to Ineq.~\eqref{eq:volume}, and obtain
\begin{multline} \label{eq:Dplus}
D^+\leq  \\ \max_{l \in \{0,\dots,n-1+r\}} \left\lbrace 
  \frac{\Vert\rho\Vert_\infty^{s\frac{(r+1)(l+1)}{n+r}} \bigl[1+2 (l+1) (r+1)^{r/2} \bigr]^{sr} \bigl[2(l+1)(r+1)^{r/2} \bigr]^s}{(2R)^{-s\frac{r(l+1)}{n+r}} 2^{-s} s^s (1-s) \Bigl| \prod_{i=0}^l \sum_{k=0}^r \lambda u(i-k) \Bigr|^{\frac{s}{2(n+r)}}}
\right\rbrace ,
\end{multline}
which ends the proof.
\end{proof}
\begin{theorem} \label{theorem:exp_d=1_2}
Let Assumption \ref{ass:d=1_2} be satisfied, $\Gamma \subset \ZZ$ connected, $s \in (0,1)$, $r$ as in Eq.~\eqref{eq:r}, and $D$ and $D^+$ the constants from Lemma \ref{lemma:finitness2}. Assume $D < 1$.
Then $m = - \ln D$ is strictly positive and we have the bound
\[
 \mathbb{E} \bigl( \lvert G_\Gamma (z;x,y)\rvert^{s/(n+r)} \bigr) \leq 
 D^+ \exp \Biggl\{-m \Biggl \lfloor \frac{\lvert x-y\rvert}{n+r} \Biggl\rfloor\Biggr\}
\]
for all $x,y \in \ZZ$ with $\lvert x-y\rvert\geq 2(n+r)$ and all $z \in \CC \setminus \RR$.
\end{theorem}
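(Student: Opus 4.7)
The plan is to mirror the argument of Theorem~\ref{theorem:exp_d=1}, replacing the step size $n$ by $n+r$ throughout and using the generalized a priori estimates from Lemma~\ref{lemma:finitness2} in place of Lemma~\ref{lemma:finitness1}. In particular, I would first establish two iteration lemmas analogous to Lemmas~\ref{lemma:iteration1} and~\ref{lemma:iteration2}: an \emph{interior step} that advances by $n+r$ lattice sites and an \emph{end step} that handles the final block.

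For the interior step, suppose $x,y\in\Gamma$ with $y-x\ge n+r$. I would set $\Lambda=\{x+n+r,x+n+r+1,\ldots\}\cap\Gamma$ and apply the geometric resolvent identity \eqref{eq:resolvent} via the depleted Hamiltonian cutting the bond between $x+n+r-1$ and $x+n+r$. Since $G_\Gamma^\Lambda(z;x,y)$ decouples $x\notin\Lambda$ from $y\in\Lambda$, this produces
\[
 G_\Gamma(z;x,y)=G_\Gamma(z;x,x+n+r-1)\,G_\Lambda(z;x+n+r,y).
\]
Because $\Theta\subset\{0,1,\ldots,n-1\}$, the potential on $\Lambda$ depends only on $\omega_k$ with $k\ge x+r+1$, so $G_\Lambda(z;x+n+r,y)$ is independent of $\omega_x,\ldots,\omega_{x+r}$. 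Averaging over those $r+1$ variables and invoking the bound \eqref{eq:leqD} of Lemma~\ref{lemma:finitness2} on the first factor yields
\[
\EE\bigl(\lvert G_\Gamma(z;x,y)\rvert^{s/(n+r)}\bigr)\le D\cdot \EE\bigl(\lvert G_\Lambda(z;x+n+r,y)\rvert^{s/(n+r)}\bigr).
\]

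For the end step, assume $n+r\le y-x<2(n+r)$ with $\Gamma=\{x,x+1,\ldots\}$. I would decouple from the other side: set $\Lambda=\{x,x+1,\ldots,y-n-r\}$ (note $1\le |\Lambda|\le n+r$), apply \eqref{eq:resolvent2}, and obtain
\[
 G_\Gamma(z;x,y)=G_\Lambda(z;x,y-n-r)\,G_\Gamma(z;y-n-r+1,y).
\]
The first factor depends only on $\omega_k$ with $k\le y-n-r$, so the two factors become independent after averaging over $\omega_{y-n-r+1},\ldots,\omega_{y-n+1}$ for the second factor and over $\omega_x,\ldots,\omega_{x+r}$ for the first. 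Applying \eqref{eq:leqD} to the second factor and \eqref{eq:leqDplus} to the first produces the uniform bound $D^+\cdot D$.

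Finally, I would combine the two steps as in the proof of Theorem~\ref{theorem:exp_d=1}. Choosing $p=\lfloor |y-x|/(n+r)\rfloor-1\in\NN$, iterating the interior step exactly $p$ times brings the problem to a configuration $(x+p(n+r),y)$ with distance in $[n+r,2(n+r))$; one application of the end step then completes the estimate, giving
\[
 \EE\bigl(\lvert G_\Gamma(z;x,y)\rvert^{s/(n+r)}\bigr)\le D^{p}\cdot D^{+}\cdot D = D^{+}\,\euler^{-m(p+1)} = D^{+}\,\exp\!\Bigl\{-m\bigl\lfloor |x-y|/(n+r)\bigr\rfloor\Bigr\}.
\]
The main obstacle is bookkeeping of the independence structure: each iteration averages the $r+1$ ``fresh'' random variables $\omega_{x+j(n+r)},\ldots,\omega_{x+j(n+r)+r}$, and one must verify that these do not interfere with later factors. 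This is precisely where the choice of step size $n+r$ (rather than $n$, as in Theorem~\ref{theorem:exp_d=1}) is forced, since the support $\Theta$ can have gaps of length up to $r$, and one needs a buffer wide enough to guarantee that the averaged variables are statistically decoupled from the remainder of the chain.
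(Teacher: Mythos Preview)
Your proposal is correct and follows essentially the same route as the paper: the paper's proof likewise establishes the interior step via \eqref{eq:resolvent} with $\Lambda=\{x+n+r,\dots\}\cap\Gamma$ and Lemma~\ref{lemma:finitness2}, the end step via \eqref{eq:resolvent2} with $\Lambda=\{x,\dots,y-(n+r)\}$, and then iterates the interior step $p=\lfloor(y-x)/(n+r)\rfloor-1$ times before applying the end step once. Your bookkeeping of the independence structure (in particular that $G_\Lambda$ in the interior step is independent of $\omega_x,\dots,\omega_{x+r}$, and that the first factor in the end step is independent of $\omega_{y-n-r+1},\dots,\omega_{y-n+1}$) is exactly what makes the argument go through.
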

\begin{remark}
 The assumption $D<1$ can be achieved by choosing $\lambda$ sufficiently large, see Ineq.~\eqref{eq:D}.
\end{remark}
\begin{remark}
The statement of Lemma~\ref{lemma:detgen} and so also Theorem~\ref{theorem:exp_d=1_2} can be proven if $\rho$ is not compactly supported, but in the Sobolev space $W^{1,1} (\RR)$. This is formulated in \cite{ElgartTV-10}.
\end{remark}

\begin{proof}[Proof of Theorem~\ref{theorem:exp_d=1_2}]
 The proof is similar to the proof of Theorem~\ref{theorem:exp_d=1}. We again assume $y>x$. Let $\Gamma_1 \subset \ZZ$ be connected. Using Eq.~\eqref{eq:resolvent} with $\Lambda = \{x+n+r,\dots\} \cap \Gamma_1$ and Lemma \ref{lemma:finitness2} we have for all pairs $x,y \in \Gamma_1$ with $y-x \geq n+r$
 \begin{equation} \label{eq:iteration3}
  \mathbb{E} \bigl( \lvert G_{\Gamma_1} (z;x,y)\rvert^{s/(n+r)}\bigr) \leq D \,\,  \mathbb{E} \bigl( \lvert G_\Lambda (z;x+n+r,y)\rvert^{s/(n+r)}\bigr )
\end{equation}
which is the analogue to Lemma \ref{lemma:iteration1}. Now, let $\Gamma_2 = \{x,x+1,\dots\}$ and $y \in \Gamma_2$ with $n+r \leq y-x < 2(n+r)$. By Eq.~\eqref{eq:resolvent2} with $\Lambda = \{x,\dots,y-(n+r)\}$ and Lemma \ref{lemma:finitness2} we have
\begin{equation} \label{eq:iteration4}
 \mathbb{E} \bigl( \lvert G_{\Gamma_2} (z;x,y)\rvert^{s/(n+r)}\bigr) \leq D D^+
\end{equation}
    which is the analogue of Lemma \ref{lemma:iteration2}. Iterating Eq.~\eqref{eq:iteration3} exactly $p=\lfloor(y-x)/(n+r)\rfloor - 1$ times, starting with $\Gamma_1 = \Gamma$, and finally using Eq.~\eqref{eq:iteration4} once gives the statement of the theorem.
\end{proof}
\subsection{A priori bound and exponential localization} \label{sec:loc_d=1}
The statements of Theorem \ref{theorem:exp_d=1} and \ref{theorem:exp_d=1_2} concern only off-diagonal elements of the Green function. The next theorem shows a global uniform bound on $(x,y) \mapsto \EE (\lvert G_\Gamma (z;x,y) \rvert^s)$ for $s>0$ sufficiently small. We use the notation $u_j (x) = u(x-j)$, $j,x \in \ZZ$, for the translated function as well as for the corresponding multiplication operator.
\begin{theorem}\label{t:a-priori_d=1}
Let Assumption \ref{ass:d=1_2} be satisfied, $\Gamma \subset \ZZ$ be connected and $s \in (0,1)$.
Then there is a positive constant $C = C(\lambda , u,\rho , s)$ such that for all $x,y \in \Gamma$ and all $z \in \CC \setminus \RR$ we have
\[
\mathbb{E} \bigl ( \lvert G_\Gamma (z;x,y) \rvert^{s/(4n)} \bigr ) \leq C .
\]
\end{theorem}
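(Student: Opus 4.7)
The plan is to establish a uniform bound on $\EE(|G_\Gamma(z;x,y)|^{s/(4n)})$ by combining the Schur complement (Lemma~\ref{lemma:schur1}) with Cramer's rule for the resulting tridiagonal matrix, and then controlling the denominator using the determinant averaging of Lemma~\ref{lemma:detgen}, in the spirit of the proof of Lemma~\ref{lemma:finitness2}. The numerator in the Cramer formula is again a product of determinants, which I would decouple from the denominator by a Cauchy--Schwarz step; this decoupling is responsible for the factor $4$ appearing in the exponent $s/(4n)$.

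By the symmetry $G_\Gamma(z;x,y) = G_\Gamma(z;y,x)$ I may assume $x \leq y$. I would choose a connected interval $\Lambda \subset \Gamma$ containing $\{x,y\}$ of length at most $2n$, extended to the right of $y$ so as to include at least $r+1$ consecutive sites for the spectral averaging. Since $\Lambda$ is an interval and $B_\Gamma^\Lambda$ modifies $H_\Lambda$ only on its interior boundary $\partial^{\rm i}\Lambda$, the operator $T := H_\Lambda - B_\Gamma^\Lambda - z$ is tridiagonal with off-diagonal entries $-1$. Cramer's rule then yields
\[
 G_\Gamma(z;x,y) \;=\; \frac{\det T_{\rm L}\,\det T_{\rm R}}{\det T},
\]
where $T_{\rm L}, T_{\rm R}$ are the tridiagonal submatrices of $T$ corresponding to the sites of $\Lambda$ strictly to the left of $x$ and strictly to the right of $y$ respectively, and both have size at most $|\Lambda| \leq 2n$. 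Writing $T = A + \lambda \sum_{k=0}^{r} \omega_{y+1+k} V_k$ with $A$ independent of $\omega_{y+1},\ldots,\omega_{y+1+r}$ and with diagonal matrices $V_k$ (exactly as in the proof of Lemma~\ref{lemma:finitness2}), I would apply Cauchy--Schwarz to obtain
\[
 \EE\bigl(|G_\Gamma(z;x,y)|^{s/(4n)}\bigr) \;\leq\; \sqrt{\EE\bigl(|\det T_{\rm L}\,\det T_{\rm R}|^{s/(2n)}\bigr)}\;\sqrt{\EE\bigl(|\det T|^{-s/(2n)}\bigr)}.
\]

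The first factor is bounded by a constant depending only on $\lambda$, $u$, $\rho$, $n$ and $s$: the matrices $T_{\rm L}, T_{\rm R}$ have entries uniformly bounded in $\omega$ (since $\supp \nu$ is compact) and size at most $2n$; the only potentially unbounded contribution enters via one diagonal entry of $T_{\rm R}$ coming from $B_\Gamma^\Lambda$, which is handled by a dichotomy on $|\im z|$ together with the trivial bound $|G_\Gamma(z;x,y)| \leq 1/|\im z|$ valid for $|\im z|$ large. The second factor $\EE(|\det T|^{-s/(2n)})^{1/2}$ is controlled by Lemma~\ref{lemma:detgen} applied to the decomposition $T = A + \lambda \sum_k \omega_{y+1+k} V_k$, after choosing a coefficient vector $\alpha$ with $\sum_k \alpha_k V_k$ invertible (such an $\alpha$ exists by the volume argument of Figure~\ref{fig:volume}, which also controls the resulting constant explicitly in $\lambda$, $u$ and $\rho$).

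The main obstacle is the potential unboundedness of the boundary term $B_\Gamma^\Lambda$ as $|\im z| \to 0$, which would otherwise spoil the deterministic bound on the numerator. The Cauchy--Schwarz step isolates this singular contribution into the numerator factor, where it can be treated by the dichotomy on $|\im z|$, while the spectral averaging of Lemma~\ref{lemma:detgen} controls the denominator uniformly in $z \in \CC \setminus \RR$. The loss of a factor of $2$ in the exponent from the Cauchy--Schwarz, combined with the factor of $2n$ rather than $n+r$ (coming from the possibly larger size of $\Lambda$ needed to accommodate both $x$ and $y$ together with $r+1$ extra averaging sites), accounts for the final exponent $s/(4n)$.
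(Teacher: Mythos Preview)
There are two genuine gaps.

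The main one is your treatment of the boundary contribution from $B_\Gamma^\Lambda$. Your dichotomy on $|\im z|$ only covers the easy direction: for $|\im z|$ bounded away from zero the trivial bound $|G_\Gamma(z;x,y)|\le 1/|\im z|$ already suffices and no Cramer argument is needed. The hard regime is $|\im z|\to 0$, and there the diagonal entry of $T_{\rm R}$ carrying $b_2=G_{\Gamma\setminus\Lambda}(z;\cdot,\cdot)$ can be of order $1/|\im z|$, so $\EE(|\det T_{\rm R}|^{s/(2n)})$ is \emph{not} bounded by any deterministic argument; after Cauchy--Schwarz you have separated this factor from anything that could compensate it. The paper's proof does not use Cramer's rule here. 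Instead it applies a \emph{double} Schur complement (Lemma~\ref{lemma:schur2}) with a slightly larger box $\Lambda_+\supset\Lambda$ whose interior boundary is disjoint from $\Lambda$: this isolates the dangerous boundary data into a $2\times2$ operator $K$ on $\partial^{\rm i}\Lambda_+$. The key point is that the diagonal entries of $K$ depend on two \emph{extra} random variables $\omega_{-1},\omega_{3n+1}$ (through $u(0)\neq 0$ and $u(n-1)\neq 0$) which are otherwise decoupled from the inner block; one-dimensional spectral averaging over these gives $\EE_{\{-1,3n+1\}}(\|(K-z)^{-1}\|^t)\le C$ uniformly in $z$, cf.\ Eq.~\eqref{eq:averageK}. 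The inner block is then handled by Lemma~\ref{lemma:averagenorm}, which bounds the full norm $\|(\cdot)^{-1}\|^{s/(4n)}$ rather than a single matrix entry, and which tolerates the presence of $\|(K-z)^{-1}\|$ inside $\|A'\|$ because that norm is averaged afterwards.

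A second, more local gap: the variables $\omega_{y+1},\ldots,\omega_{y+1+r}$ you propose to average over only influence the potential at sites $\ge y+1$. Since $\Theta\subset\{0,\ldots,n-1\}$, every diagonal matrix $V_k$ in your decomposition $T=A+\lambda\sum_k\omega_{y+1+k}V_k$ vanishes at site $x$ (and at every site $\le y$), so $\sum_k\alpha_k V_k$ is never invertible on $\ell^2(\Lambda)$ and Lemma~\ref{lemma:detgen} does not apply. One has to average over enough coupling constants that the translated supports $\Theta+k$ cover all of $\Lambda$; in the paper this is done with $\omega_0,\ldots,\omega_{3n}$ for $\Lambda=\{0,\ldots,4n-1\}$. (Note also that your choice ``$\Lambda$ of length at most $2n$ containing $\{x,y\}$'' presupposes $|x-y|<2n$; the paper first reduces to $|x-y|\le 4n-1$ via Theorem~\ref{theorem:exp_d=1_2}, and to $|z|$ bounded via $\|G_\omega(z)\|\le 1$ for $|z|\ge\|H_\omega\|+1$.)
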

\begin{proof}
To avoid notation we assume $\Gamma = \ZZ$. Since $\supp \rho \subset [-R,R]$, $H_\omega$ is a bounded operator. Set $m = \lVert H_\omega \rVert + 1$. If $\lvert z\rvert \geq m$, we use $\lVert G_\omega (z) \rVert = \sup_{\lambda \in \sigma (H_\omega)} \lvert \lambda - z \rvert^{-1} \leq 1$ and obtain the statement of the theorem. Thus it is sufficient to consider $\lvert z\rvert \leq m$.
If $\lvert x-y\rvert \geq 4n$ Theorem \ref{theorem:exp_d=1_2} applies, since $r \leq n$. We thus only consider the case $\lvert x-y\rvert \leq 4n-1$. By translation we assume $x = 0$ and by symmetry $y \geq 0$. Set $\Lambda_+ = \{-1,\dots,4n\}$ and $\Lambda = \{0,\dots,4n-1\}$. Lemma \ref{lemma:schur1} with $\Lambda_1 = \Lambda$ and $\Lambda_2 = \Lambda_+$ gives
\begin{equation} \label{eq:2xschur}
 \Pro_{\Lambda} G_\omega (z) \Inc_{\Lambda} = \Bigl(H_{\Lambda} - z - \Pro_{\Lambda} \Delta \Inc_{\partial^{\rm i}\Lambda_+} \, (K-z)^{-1} \, \Pro_{\partial^{\rm i}\Lambda_+} \Delta \Inc_{\Lambda} \Bigr)^{-1}
\end{equation}
where
\[
 K = H_{\partial^{\rm i}\Lambda_+} - \Pro_{\partial^{\rm i} \Lambda_+}^{\Lambda_+} \, B_\ZZ^{\Lambda_+} \, \Inc_{\partial^{\rm i} \Lambda_+}^{\Lambda_+} 
\]
and $\langle\delta_x , B_\Gamma^{\Lambda_+} \delta_y \rangle = \sum_{k \in \Gamma \setminus \Lambda_+ , \abs{k-x} = 1} \langle \delta_k , (H_{\Gamma \setminus \Lambda_+} - z)^{-1}\delta_k \rangle$ if $x=y$ and $x \in \partial \Lambda_+ = \{-1,4n\}$, and zero else.
Note that $B_\ZZ^{\Lambda_+}$ is independent of $\omega_k$, $k \in \{-1,\dots,3n+1\}$,
and $K$ is independent of $\omega_k$, $k \in \{0,\dots,3n\}$. Thus, in matrix representation with respect to the canonical basis, the operator $K:\ell^2 (\partial \Lambda_+) \to \ell^2 (\partial \Lambda_+)$ may be decomposed as
\begin{equation*}
K = \begin{pmatrix}
    \omega_{-1} \lambda u(0) & 0 \\
    0     & \omega_{3n+1} \lambda u(n-1)
   \end{pmatrix} -
\begin{pmatrix}
    f_1 & 0 \\
    0     & f_2
   \end{pmatrix}
\end{equation*}
where $f_1 := \sum_{k \in \ZZ \setminus \{-1\}} \omega_k \lambda u(-1-k) - \langle \delta_{-1}, B_\ZZ^{\Lambda_+} \delta_{-1} \rangle$
and $f_2 := \sum_{k \in \ZZ \setminus \{3n+1\}} \omega_k \lambda$ $u(4n-k) - \langle \delta_{4n},$ $B_\ZZ^{\Lambda_+} \delta_{4n} \rangle$
are independent of $\omega_{-1}$ and $\omega_{3n+1}$.
Standard spectral averaging or Lemma \ref{lemma:det} gives for all $t \in (0,1)$
\begin{equation} \label{eq:averageK}
 \mathbb{E}_{\{-1,3n+1\}} \Bigl(\bigl\Vert(K-z)^{-1}\bigr\Vert^t \Bigr) \leq \lambda^{-t}\bigl(\lvert u(0)\rvert^{-t}+\lvert u(n-1)\rvert^{-t}\bigr) \norm{\rho}_\infty^t \frac{2^t t^{-t}}{1-t} .
\end{equation}
Now, the operator $H_{\Lambda}$ can be decomposed as $H_{\Lambda} = A + \sum_{k=0}^{3n} \omega_k \lambda u_k$
where $A := H_{\Lambda} - \sum_{k=0}^{3n} \omega_k \lambda u_k$ is
independent of $\omega_k$, $k \in \{0,\dots,3n\}$. Let $\alpha := (\alpha_k)_{k=0}^{3n} \in [0,1]^{3n+1}$
with $\alpha_0 \not = 0$. Similarly to the proof of Lemma \ref{lemma:detgen}, we use the substitution $\omega_0 = \alpha_0 \zeta_0$ and $\omega_i = \alpha_i \zeta_0 + \alpha_0 \zeta_i$ for $i \in \{1,\dots,3n\}$ and obtain from Eq.~\eqref{eq:2xschur}
\begin{align*}
 E &:= \mathbb{E}_{\{0,\dots,3n\}} \Bigl( \bigl\Vert  \Pro_{\Lambda} G_\omega (z) \Inc_{\Lambda} \bigr\Vert^{s/(4n)} \Bigr) \\[1ex]
  &\leq \norm{\rho}_\infty^{3n+1} \!\!\!\!\!\!\!\! \int\limits_{[-R,R]^{3n+1}} \!\!\!\!\!\!  \Bigl\Vert  \Bigl(H_\Lambda - z - \Pro_{\Lambda} \Delta \Pro_{\partial\Lambda_+}^* \, (K-z)^{-1} \, \Pro_{\partial\Lambda_+}^\Lambda \Delta \Pro_{\Lambda}^* \Bigr)^{-1} \Bigr\Vert^{s/(4n)} \drm \omega_0 \dots \drm\omega_{3n} \\[1ex]
  &\leq \norm{\rho}_\infty^{3n+1} \int_{[-S, S]^{3n+1}}  \Bigl\Vert  \Bigl(A' + \zeta_0 \sum_{k=0}^{3n} \alpha_k \lambda u_k \Bigr)^{-1} \Bigr\Vert^{s/(4n)} \lvert \alpha_0\rvert^{3n+r} \drm \zeta_0 \dots \drm\zeta_{3n}
\end{align*}
where $S = R (1+\max_{i \in \{1,\dots,3n\}}$ $\lvert\alpha_i / \alpha_0 \rvert)/\lvert \alpha_0\rvert$ and $A' = A + \alpha_0 \sum_{k=1}^{3n} \zeta_k \lambda u_k - z - \Pro_{\Lambda} \Delta \Pro_{\partial\Lambda_+}^*$ $(K-z)^{-1} \Pro_{\partial\Lambda_+} \Delta \Pro_{\Lambda}^*$. Since $\bigcup_{i=0}^{3n} \supp u_i = \Lambda$, there exists an $\alpha \in [0,1]^{3n+1}$ such that $\sum_{k=0}^{3n} \alpha_k u_k$ is invertible on $\ell^2 (\Lambda)$, compare the proof of Lemma \ref{lemma:finitness2} and Figure \ref{fig:volume}. Thus we may apply Lemma \ref{lemma:averagenorm} and obtain
\begin{equation} \label{eq:Eint}
 E \leq \norm{\rho}_\infty^{3n+1} \int\limits_{[-S,S]^{3n}}  \frac{2 \Bigl(\norm{A'} + S \norm{\sum_{k=0}^{3n} \alpha_k \lambda u_k}\Bigr)^{s(4n-1)/(4n)}}{s^{s} S^{s-1}(1-s)\abs{\det \bigl( \sum_{k=0}^{3n} \alpha_k \lambda u_k \bigr)}^{s/(4n)}} \drm \zeta_1 \dots \drm \zeta_{3n} .
\end{equation}
Using $\zeta_k \in [-S,S]$ for $k \in \{1,\dots,3n\}$, $\omega_k \in [-R,R]$ for $k \in \ZZ \setminus \{0,\dots,3n\}$ and $\alpha_k \in [0,1]$ for $k \in \{0,\dots,3n\}$, the norm of $A'$ can be estimated as
\begin{align} \label{eq:A'}
 \norm{A'} &= \Bigl\Vert H_{\Lambda} - \sum_{k=0}^{3n} \omega_k \lambda u_k + \alpha_0 \sum_{k=1}^{3n} \zeta_k \lambda u_k - z - \Pro_{\Lambda} \Delta \Pro_{\partial\Lambda_+}^* \, (K-z)^{-1} \, \Pro_{\partial\Lambda_+} \Delta \Pro_{\Lambda}^* \Bigr\Vert \nonumber \\[1ex]
&\leq 2 + (n-1) R \lambda \lVert u\rVert_{\infty} + 3S n \lambda \lVert u\lVert_{\infty} + m + 4  \bigl\Vert(K-z)^{-1} \bigr\Vert \, .
\end{align}
All terms in the sum \eqref{eq:A'} are independent of $\zeta_k$, $k \in \{0,\dots,3n\}$.
Using $(\sum \lvert a_i\rvert)^t \leq \sum \lvert a_i\rvert^t$ for $t<1$ we see from Ineq.~\eqref{eq:Eint} and \eqref{eq:A'} that there are constants $C_1$ and $C_2$ such that $E \leq C_1 + C_2\Vert(K-z)^{-1} \Vert^{s(4n-1)/(4n)}$.
If we average over $\omega_{-1}$ and $\omega_{3n+1}$, Ineq.~\eqref{eq:averageK} gives the desired result.
\end{proof}
We can now conclude exponential localization for the one-dimensional case.
\begin{theorem}
 Let one of the following assumptions be satisfied:
\begin{enumerate}[(i)]
 \item Let Assumption \ref{ass:d=1} be satisfied and 
\begin{equation*}
 \frac{\Vert\rho\Vert_\infty}{\lambda} < \frac{(1-s)^{1/s}}{2s^{-1}} \Bigl | \prod_{k=0}^{n-1} u(k) \Bigr|^{1/n} .
\end{equation*}
\item Let Assumption \ref{ass:d=1_2} be satisfied and $\lambda$ be sufficiently large.
\end{enumerate}
Then, for almost all $\omega \in \Omega$, $H_\omega$ exhibits exponential localization.
\end{theorem}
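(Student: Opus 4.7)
The plan is to combine three tools already developed in this chapter: exponential decay of fractional moments of the Green function (Theorem \ref{theorem:exp_d=1} under (i) and Theorem \ref{theorem:exp_d=1_2} under (ii)), the uniform a priori bound of Theorem \ref{t:a-priori_d=1}, and the general criterion that exponential decay of fractional moments implies exponential localization (Theorem \ref{theorem:exp_decay_loc}). In case (i), Assumption \ref{ass:d=1} together with the disorder bound form precisely the hypotheses of Theorem \ref{theorem:exp_d=1}. In case (ii), Assumption \ref{ass:d=1_2} holds, and by the estimate \eqref{eq:D} the constant $D$ tends to zero as $\lambda \to \infty$, so for $\lambda$ sufficiently large we have $D < 1$, which is the standing hypothesis of Theorem \ref{theorem:exp_d=1_2}. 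Since $d = 1$, every finite-volume cube $\Lambda_{L,k}$ is a connected subset of $\ZZ$, so the three cited theorems apply uniformly with $\Gamma = \Lambda_{L,k}$ for every $L \in \NN$ and $k \in \ZZ$.

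Setting $N = n$ in case (i) and $N = n + r$ in case (ii), the cited decay theorems produce constants $C_1, \mu_1 > 0$ depending only on $\lambda, u, \rho, s$ such that
\[
\EE\bigl(\lvert G_{\Lambda_{L,k}}(z;x,y)\rvert^{s/N}\bigr) \leq C_1 \euler^{-\mu_1 |x-y|}
\quad \text{for } |x-y| \geq 2N,
\]
uniformly in $L$, $k$ and $z \in \CC \setminus \RR$. For $|x-y| < 2N$ the a priori bound of Theorem \ref{t:a-priori_d=1} provides a uniform ceiling on $\EE(\lvert G_{\Lambda_{L,k}}(z;x,y)\rvert^{s/(4n)})$. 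Picking a common fractional exponent $t \in (0,1)$ with $t \le \min\{s/N,\, s/(4n)\}$, I would apply Lyapunov's inequality to pass from the $s/N$-decay to a $t$-decay (at the cost of only shrinking the decay rate by a bounded factor) and from the $s/(4n)$ a priori bound to a $t$ a priori bound, and then patch the two pieces together by absorbing the bounded set $\{|x-y| < 2N\}$ into the prefactor. This produces
\[
\EE\bigl(\lvert G_{\Lambda_{L,k}}(E + \i \epsilon; x,y)\rvert^{t}\bigr) \leq C \euler^{-\mu |x-y|}
\]
valid for all $x,y \in \Lambda_{L,k}$, $k \in \ZZ$, $L \in \NN$, $E \in \RR$ and $\epsilon \in (0,1]$, with $C, \mu > 0$ depending only on $\lambda, u, \rho, s$.

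Finally, since Assumption \ref{ass:finite} is contained in each of Assumption \ref{ass:d=1} and Assumption \ref{ass:d=1_2}, Theorem \ref{theorem:exp_decay_loc} applies with $I = \RR$ and delivers exponential localization of $H_\omega$ for $\PP$-almost every $\omega$. The only delicate bookkeeping point I anticipate is verifying that the constants produced by Theorems \ref{theorem:exp_d=1}, \ref{theorem:exp_d=1_2} and \ref{t:a-priori_d=1} are genuinely independent of $L$, $k$ and $z$. This is the case because their proofs use only the connectedness of $\Gamma$, Schur-complement and geometric-resolvent identities valid for arbitrary $z \in \CC \setminus \RR$, and averaging estimates (Lemmata \ref{lemma:det}, \ref{lemma:detgen}, \ref{lemma:averagenorm}) whose quantitative bounds depend only on $u$, $\rho$, $s$ and $\lambda$.
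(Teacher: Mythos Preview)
Your proposal is correct and follows essentially the same route as the paper: combine the off-diagonal decay of Theorems \ref{theorem:exp_d=1} and \ref{theorem:exp_d=1_2} with the global a priori bound of Theorem \ref{t:a-priori_d=1} to verify the hypothesis of Theorem \ref{theorem:exp_decay_loc}. The paper's proof is a one-line reference to exactly these three ingredients; your additional bookkeeping (passing to a common fractional exponent via Lyapunov's inequality and absorbing the finite near-diagonal region into the prefactor) is the natural way to make that reference precise.
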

\begin{proof}
By our assumptions and Theorem \ref{theorem:exp_d=1}, \ref{theorem:exp_d=1_2} and \ref{t:a-priori_d=1}, we conclude that the hypothesis of Theorem \ref{theorem:exp_decay_loc} is satisfied. This gives the result.
\end{proof}
\subsection{Reduction to the monotone case} \label{sec:red_monotone}
In this subsection we discuss whether a special transformation of random variables allows us to extract a positive part of the potential, which makes monotone spectral averaging applicable. First we present a criterion which ensures that an appropriate one-parameter family of positive
potentials can be extracted from the random potential $V_\omega$.
\begin{lemma} \label{l:equivalence}
Let Assumption~\ref{ass:d=1_2} be satisfied, such that 
\[
u= \sum_{k=0}^{n-1}u(k) \delta_k\colon \ZZ\to \RR .
\]
Then the following statements are equivalent.
\begin{enumerate}[(a)]
\item
There exists an $N \in \NN$ and real $\alpha_0, \dots, \alpha_N$
such that $w := u \ast \alpha := \alpha_0 u_0 + \dots + \alpha_N u_N$ is a non-negative function and
$w(0)>0 $, $w(N+n-1)>0 $ hold.
\item
There exists an $M \in \NN$ and real $\gamma_0, \dots, \gamma_M$
such that $v := u \ast \gamma := \gamma_0 u_0 + \dots + \gamma_M u_M$ is a non-negative function and
$\supp v = \{0, \dots, M+n-1\}$ holds.
\item
The polynomial $\CC \ni z \mapsto p_u(z) := \sum_{k=0}^{n-1}u(k) z^k$ has no roots in $[0,\infty)$.
\end{enumerate}
\end{lemma}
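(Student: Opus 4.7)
The proof rests on the bijection between finitely supported real sequences $a = (a_0,\ldots,a_m)$ and polynomials $p_a(z) = \sum_{k=0}^m a_k z^k$: discrete convolution corresponds to polynomial multiplication, so $p_{u \ast \alpha} = p_u \cdot p_\alpha$, and non-negativity (resp.\ strict positivity) of the sequence matches non-negativity (resp.\ positivity) of all coefficients. With this dictionary, (b) $\Rightarrow$ (a) is immediate (take $N=M$, $\alpha_k=\gamma_k$, $w=v$), and (a) $\Rightarrow$ (c) follows by contradiction: if $p_u(r)=0$ for some $r \in [0,\infty)$, then $p_w(r)=p_u(r)\,p_\alpha(r)=0$, whereas $p_w(r) = \sum_j w(j)\,r^j \geq w(0)>0$, impossible.

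The main work lies in (c) $\Rightarrow$ (b). Since $p_u$ has no roots on $[0,\infty)$ and $p_u(0)=u(0)\neq 0$, continuity forces $p_u$ to have constant sign on $[0,\infty)$; after replacing $u$ by $-u$ (with a compensating sign change in the $\gamma_k$ at the end of the argument) one may assume $p_u > 0$ there. I would now invoke the classical theorem of P\'olya: for any real polynomial strictly positive on $[0,\infty)$, some power $(1+z)^{N_0}$ times it has all non-negative coefficients. Applying this to $p_u$ yields $\tilde v := u \ast \tilde \gamma$ with $\tilde\gamma_k = \binom{N_0}{k}$ that is non-negative, with positive endpoints $\tilde v(0)=u(0)$ and $\tilde v(N_0+n-1)=u(n-1)$, but possibly with internal zeros.

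To promote non-negativity to full support I would apply a gap-closing observation: multiplication by $(1+z)$ transforms the coefficients of any polynomial by $c_k \mapsto c_k+c_{k-1}$, so already-positive coefficients stay positive, no new zeros appear, and any internal gap $c_{j-1}>0,\ c_j=\cdots=c_{j+L-1}=0,\ c_{j+L}>0$ shrinks by one from the left (the new position $j$ becomes $c_{j-1}>0$). Thus $L_{\max}$ additional multiplications by $(1+z)$ suffice to close every gap, where $L_{\max}$ is the length of the longest gap of $(1+z)^{N_0} p_u$. Setting $M := N_0+L_{\max}$, $\gamma_k:=\binom{M}{k}$, and $v := u \ast \gamma$ then delivers (b). The only non-elementary input is P\'olya's theorem on polynomials positive on a half-line, which I would cite (e.g.\ P\'olya--Szeg\H{o}, \emph{Problems and Theorems in Analysis II}) rather than reprove; everything else is straightforward polynomial bookkeeping, and the main conceptual obstacle is simply recognizing the problem's polynomial nature.
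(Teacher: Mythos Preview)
Your proof is correct and follows the same overall strategy as the paper: translate to polynomials via the convolution--multiplication dictionary and invoke an external positivity result for the hard direction. The differences are organizational. The paper runs the cycle as (a)$\Rightarrow$(b) and (b)$\Leftrightarrow$(c), where (a)$\Rightarrow$(b) is done by the one-shot smearing $v(x)=\sum_{j=0}^{N+n-2}w(x-j)$ (polynomially: multiply by $1+z+\cdots+z^{N+n-2}$, which fills every gap at once because the two positive endpoints $w(0)$ and $w(N+n-1)$ slide across all positions), and for (c)$\Rightarrow$(b) it introduces the equivalent condition (d) ``$p_u\cdot q$ has \emph{strictly} positive coefficients for some $q$'' and cites Corollary~2.7 of Motzkin--Straus \cite{MotzkinS-69}. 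Your route (b)$\Rightarrow$(a)$\Rightarrow$(c)$\Rightarrow$(b) instead cites P\'olya's theorem with the specific multiplier $(1+z)^{N_0}$; note that the standard form of that theorem (via homogenization to a binary form positive on the closed simplex) already yields \emph{strictly} positive coefficients, so your gap-closing step, while correct, is not actually needed. Either external reference is legitimate and the remaining bookkeeping is the same.
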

Note, if $u(0) \not = 0$ and $u(n-1) \not = 0$, then $\{0, \dots, M+n-1\}$ is the union of the supports of $u_0, \dots, u_M$.
If (a) or (b) hold
we may assume that $\lvert \alpha_0\rvert $ or $ \lvert\gamma_0\rvert $ equals one.
\begin{proof}
If (a) holds, one may choose $v (x) = \sum_{j=0}^{N+n-2} w(x-j)$ to conclude (b). Thus it is sufficient to show (b)$\Leftrightarrow$(c). Using Fourier transform and the identity theorem for holomorphic functions one sees that (b) is equivalent to
\begin{enumerate}[(a)]
\item[(d)] There exists an $M \in \mathbb{N}$ and real $\gamma_0,\dots,\gamma_M$ such that all coefficients of the polynomial $p_u(z) \cdot \sum_{j=0}^M \gamma_j z^j$ are strictly positive.
\end{enumerate}
If (d) holds, $p_u(x) \cdot \sum_{j=0}^M \gamma_j x^j$ is strictly positive for $x \in [0,\infty)$. Thus its divisor $p_u$ has no root in $[0,\infty)$ and one concludes (c). Assuming (c), one infers from Corollary 2.7 of \cite{MotzkinS-69} that there exists a polynomial $p$ such that $p_u \cdot p$ has strictly positive coefficients. Choosing $M = \deg (p)$ and $\gamma_0,\dots,\gamma_M$ to be the coefficients of $p$ leads to (d).
\end{proof}
If the random potential $V_\omega$ contains a positive building block
$w$ as in (a) of the previous lemma, one
obtains exponential decay of fractional moments with the methods from \cite{AizenmanENSS-06}, as we outline now.
The crucial tool is Proposition 3.2 of \cite{AizenmanENSS-06}. Here are two direct consequences of the latter.
\begin{lemma}\label{l:AENSS}
Let $H$ be bounded and self-adjoint on $\ell^2(\ZZ)$,
$\phi,\psi\colon\ZZ \to [0,\infty)$ bounded, $z \in \CC$ with $\im z >0$, and $t,S\in (0,\infty)$. Then there is a universal constant $C_{\rm W}\in (0,\infty)$ such that 
\begin{enumerate}
\item[(i)] for all $x, y\in \ZZ$
\begin{multline*}
\sqrt{\phi(x)\psi(y)} \,\mathcal{L} \bigl\{ v_1,v_2 \in [-S,S] : \lvert \langle\delta_x , (H+z - v_1 \phi - v_2 \psi)^{-1}\delta_y \rangle \rvert >t  \bigr \}
\\ \leq 4 C_{\rm W} \frac{S}{t} ,
\end{multline*}
where $\mathcal{L}$ denotes Lebesgue measure.
\item[(ii)] If $\phi(x)\psi(y)\neq0$ and $s \in (0,1)$, we have for all $x, y\in \ZZ$
\begin{equation*}
\int_{[-S,S]^2}
\lvert\langle\delta_x , (H+z - v_1 \phi - v_2 \psi)^{-1}\delta_y \rangle \rvert^s \drm v_1 \drm  v_2
 \leq
\frac{4}{1-s} \left(\frac{C_{\rm W}}{\sqrt{\phi(x)\psi(y)}}\right)^s S^{2-s} .
\end{equation*}
\end{enumerate}
\end{lemma}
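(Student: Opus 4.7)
The entire lemma is a packaging of the two-parameter spectral averaging estimate, Proposition~3.2 of \cite{AizenmanENSS-06}. Interpreting $\phi$ and $\psi$ as non-negative diagonal multiplication operators on $\ell^2(\ZZ)$ and writing $G := (H+z - v_1\phi - v_2\psi)^{-1}$, that result provides, for $\im z > 0$, the weak-type bound
\[
 \mathcal{L} \bigl\{ (v_1,v_2) \in [-S,S]^2 : \bigl\lVert \phi^{1/2} G \psi^{1/2} \bigr\rVert > t \bigr\} \leq 4 C_{\rm W} \frac{S}{t} ,
\]
with $C_{\rm W}$ a universal constant. I would take this as the black box.

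For part~(i), diagonality gives $\phi^{1/2}\delta_x = \sqrt{\phi(x)}\,\delta_x$, so
\[
 \sqrt{\phi(x)\psi(y)}\,\bigl\lvert \sprod{\delta_x}{G\delta_y} \bigr\rvert = \bigl\lvert \sprod{\delta_x}{\phi^{1/2} G \psi^{1/2} \delta_y} \bigr\rvert \leq \bigl\lVert \phi^{1/2} G \psi^{1/2} \bigr\rVert .
\]
If $\phi(x)\psi(y) = 0$ the left-hand side of~(i) vanishes and the claim is trivial. Otherwise, applying the weak-type bound at the level $\tau := t\sqrt{\phi(x)\psi(y)}$ and multiplying the resulting inequality by $\sqrt{\phi(x)\psi(y)}$ yields exactly $4 C_{\rm W} S/t$.

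For part~(ii), under the assumption $\phi(x)\psi(y) \neq 0$, I would set $g(v_1,v_2) := \lvert \sprod{\delta_x}{G \delta_y} \rvert$ and use the layer-cake representation together with~(i):
\[
 \int_{[-S,S]^2} g^s \drm v_1 \drm v_2 = \int_0^\infty \mathcal{L}\bigl\{g > r^{1/s}\bigr\} \drm r \leq \int_0^\infty \min\!\Bigl\{(2S)^2 ,\, \frac{4 C_{\rm W} S}{r^{1/s} \sqrt{\phi(x)\psi(y)}}\Bigr\} \drm r .
\]
Splitting the integral at the crossover point $r_0 := \bigl(C_{\rm W}/(S\sqrt{\phi(x)\psi(y)})\bigr)^s$ and integrating each piece separately, the trivial bound contributes $4 S^{2-s} (C_{\rm W}/\sqrt{\phi(x)\psi(y)})^s$ on $[0,r_0]$ and the bound from~(i) contributes $\frac{s}{1-s}$ times the same expression on $[r_0,\infty)$; their sum produces the stated constant $\frac{4}{1-s}$.

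The only non-routine ingredient is the AENSS weak-type inequality; once it is granted, both parts follow from pairing it with a matrix-element Cauchy--Schwarz step and a standard layer-cake optimization, so I do not anticipate a genuine obstacle.
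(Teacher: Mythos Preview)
Your proposal is correct and matches the paper's approach essentially verbatim: the paper also takes Proposition~3.2 of \cite{AizenmanENSS-06} as a black box for (i), and derives (ii) from (i) via the layer-cake representation with the same splitting point $\kappa = (C_{\rm W}/(S\sqrt{\phi(x)\psi(y)}))^s$. Your explicit matrix-element reduction for (i) is a welcome clarification of what the paper leaves implicit.
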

To obtain statement (ii) from (i) use the layer cake representation
\[
\int_{[-S,S]^2}  \lvert f(v_1,v_2)\rvert^s \drm v_1 \drm  v_2
= \int_0^\infty  \mathcal{L}\{\lvert v_1\rvert,\lvert v_2\rvert \le S : \lvert f(v_1,v_2)\rvert^s >t\} \drm t
\]
and decompose the integration domain into $[0, \kappa]$ and $[\kappa, \infty)$ with the choice $\kappa = (C_{\rm W} / S \sqrt{\phi(x)\psi(y)} )^s$.

\begin{proposition} \label{p:bounded}
Let $\Gamma \subset \NN$ be connected and Assumption~\ref{ass:d=1_2} be satisfied. 
Assume that  $u$ satisfies condition (a) in Lemma \ref{l:equivalence}. Let $N \in \NN$ be the constant from condition (a) and set $\Lambda^x = \{x,\dots,x+N\}$ and $\Lambda^j=\{j-n+1-N,\dots,j-n+1\}$.
Then we have for all $x,j \in \Gamma$ with $\lvert j-x\rvert \geq 2(N+n) -1$
and all $z \in \CC$ with $\im z>0$
\[
 \mathbb{E}_\Lambda \bigl( \lvert G_\Gamma (z;x,j) \rvert^s  \bigr) \leq C
\]
where $C$ is defined in Eq.~\eqref{e:K} and $\Lambda= \Lambda^x \cup \Lambda^j$.
\end{proposition}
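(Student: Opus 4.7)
The plan is to extract two spatially separated non-negative pieces from the random potential and then apply the two-parameter spectral averaging of Lemma~\ref{l:AENSS}(ii). Condition (a) of Lemma~\ref{l:equivalence} yields coefficients $\alpha_0,\dots,\alpha_N \in \RR$ such that $w = \sum_{i=0}^N \alpha_i u_i$ is non-negative with $w(0)>0$ and $w(N+n-1)>0$. Since only $u_0$ contributes to $w(0)$ (because $u_i$ is supported in $\{i,\dots,i+n-1\}$), the value $w(0) = \alpha_0 u(0)$ forces $\alpha_0 \neq 0$, which is crucial for the change of variables below. I set $a = j-N-n+1$ and define the non-negative potentials $\phi = w(\cdot - x)$ and $\psi = w(\cdot - a)$, so that $\supp \phi = \{x,\dots,x+N+n-1\}$ and $\supp\psi = \{a,\dots,j\}$. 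The hypothesis $\lvert j-x\rvert \geq 2(N+n)-1$ guarantees that these supports are disjoint and both contained in the connected set $\Gamma$, and it also implies $\Lambda^x \cap \Lambda^j = \emptyset$ so that the random variables perturbed below are independent.

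Next I would implement the change of variables on the $2(N+1)$ coordinates of $\omega$ indexed by $\Lambda = \Lambda^x \cup \Lambda^j$. Introducing new coordinates $v_1,v_2$ and $\sigma_i^1,\sigma_i^2$ ($i=1,\dots,N$) via
\[
 \omega_{x} = \alpha_0 v_1,\quad \omega_{x+i} = \alpha_i v_1 + \sigma_i^1, \quad \omega_{a} = \alpha_0 v_2,\quad \omega_{a+i} = \alpha_i v_2 + \sigma_i^2,
\]
the Jacobian is $\lvert\alpha_0\rvert^2$, and the definition of $\phi$ and $\psi$ gives the algebraic identity $V_\omega = V' + v_1 \phi + v_2 \psi$, where $V'$ is independent of $(v_1,v_2)$. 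Consequently the restricted Hamiltonian satisfies $H_\Gamma = H_\Gamma' + \lambda v_1 \phi + \lambda v_2 \psi$ with $H_\Gamma'$ independent of $(v_1,v_2)$. Because $\supp\rho \subset [-R,R]$, the $\rho$-factors restrict every new coordinate to a bounded set: $\lvert v_i\rvert \leq S := R/\lvert\alpha_0\rvert$, and $\lvert\sigma_i^1\rvert, \lvert\sigma_i^2\rvert \leq R+\lvert\alpha_i\rvert S$. All density factors can then be estimated from above by $\lVert\rho\rVert_\infty$.

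Finally I substitute $\tilde v_i = -\lambda v_i$ (with Jacobian $\lambda^{-2}$, new domain $[-\lambda S, \lambda S]^2$) to bring the inner double integral into precisely the form covered by Lemma~\ref{l:AENSS}(ii), using the positivity $\phi(x) = w(0) > 0$ and $\psi(j) = w(N+n-1) > 0$. This yields
\[
 \int_{[-\lambda S,\lambda S]^2} \!\!\!\bigl\lvert G_\Gamma(z;x,j)\bigr\rvert^s \drm\tilde v_1\drm\tilde v_2 \leq \frac{4}{1-s}\Bigl(\frac{C_{\rm W}}{\sqrt{w(0)w(N+n-1)}}\Bigr)^{\!s} (\lambda S)^{2-s},
\]
so the integral in the original variables is bounded by $\lambda^{-s}$ times a constant depending only on $s$, $\rho$, $w$ and $R$. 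Integrating out the $2N$ bounded $\sigma$-coordinates and multiplying by $\lvert\alpha_0\rvert^2 \lVert\rho\rVert_\infty^{2(N+1)}$ gives a finite bound $C$ of the form advertised in Eq.~\eqref{e:K}. The main point requiring care is the algebraic identity $V_\omega = V' + v_1\phi + v_2 \psi$ together with the verification that $\alpha_0\neq 0$, the volume of the $\sigma$-box is finite, and the hypothesis on $\lvert j-x\rvert$ ensures $\supp\phi\cap\supp\psi=\emptyset$ and $\Lambda^x\cap\Lambda^j=\emptyset$; once these are in place, the application of Lemma~\ref{l:AENSS}(ii) is immediate.
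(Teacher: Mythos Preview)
Your proof is correct and follows essentially the same route as the paper's. Both arguments use the coefficients $\alpha_0,\dots,\alpha_N$ from condition (a) to perform a linear change of coordinates on the $2(N+1)$ random variables indexed by $\Lambda^x\cup\Lambda^j$, thereby isolating two new variables (your $v_1,v_2$; the paper's $\zeta_x,\zeta_{j-n+1-N}$) that couple to the non-negative translates $\phi=w(\cdot-x)$ and $\psi=w(\cdot-a)$, after which Lemma~\ref{l:AENSS}(ii) applies with $\phi(x)=w(0)>0$ and $\psi(j)=w(N+n-1)>0$. The only cosmetic differences are that the paper normalizes $\alpha_0=1$ from the start (you instead argue $\alpha_0\neq0$ from $w(0)=\alpha_0 u(0)>0$) and absorbs the rescaling $\tilde v_i=-\lambda v_i$ directly into the application of the lemma rather than making it explicit.
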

\begin{proof}
Without loss of generality we assume $j-x \geq 2(N+n) -1$ and $\alpha_0=1$.
By assumption $\Gamma \supset \{x,x+1,\dots,j\}$. Note that the operator
$A' = H_\Gamma - z -\sum_{k \in \Lambda^x} \omega_k u_k - \sum_{k \in \Lambda^j} \omega_k u_k$ is independent of $\omega_k$, $k \in \Lambda$.
To estimate the expectation
\begin{align*}
 E &= \mathbb{E}_{\Lambda} \Bigl( \bigl| G_\Gamma (z;x,j) \bigr|^s  \Bigr) \\ &= \int_{[-R,R]^{\lvert\Lambda\rvert}} \Bigl| \Bigl\langle\delta_x , \Bigl(A' + \lambda \sum_{k \in \Lambda^x} \omega_k u_k + \lambda \sum_{k \in \Lambda^j} \omega_k u_k \Bigr)^{-1} \delta_j\Bigr\rangle \Bigr|^s \prod_{k \in \Lambda} \rho(\omega_k) \drm \omega_k
\end{align*}
we use the substitutions
\[
 \begin{pmatrix}
  \omega_x \\[0.8ex]
  \omega_{x+1} \\
  \vdots \\
  \vdots \\
  \omega_{x+N}
 \end{pmatrix}
=
 T
\begin{pmatrix}
  \zeta_x \\[.2ex]
  \zeta_{x+1} \\[.2ex]
  \vdots \\[.2ex]
  \vdots \\[.2ex]
  \zeta_{x+N}
 \end{pmatrix}
\quad \text{ and } \quad
 \begin{pmatrix}
  \omega_{j-n+1-N} \\[0.8ex]
  \omega_{j-n+2-N} \\
  \vdots \\
  \vdots \\
  \omega_{j-n+1}
 \end{pmatrix}
=
 T
 \begin{pmatrix}
  \zeta_{j-n+1-N} \\[.2ex]
  \zeta_{j-n+2-N} \\[.2ex]
  \vdots \\[.2ex]
  \vdots \\[.2ex]
  \zeta_{j-n+1}
 \end{pmatrix}
\]
where the matrix $T$ is the same as in Lemma \ref{lemma:detgen}.
This gives the bound
\begin{multline*}
  E \leq \norm{\rho}_\infty^{\lvert\Lambda\rvert}  \int_{[-S,S]^{\lvert\Lambda\rvert}} 
\Bigl| \Bigl\langle \delta_x ,
    \Bigl(A + \zeta_x \lambda \sum_{k \in \Lambda^x} \alpha_{k-x} u_k \\ +  \zeta_{j-n+1-N} \lambda \sum_{k \in \Lambda^j} \alpha_{k-(j-n+1-N)} u_k \Bigr)^{-1}
\delta_j \Bigr\rangle \Bigr|^s
\drm \zeta_\Lambda .
\end{multline*}
where $\drm \zeta_\Lambda = \prod_{k \in \Lambda} \drm \zeta_k$,
$S = R (1+\max_{i\in \{1,\dots,N\}} \lvert\alpha_i\rvert)$, and
\[
A = A' + \lambda \sum_{k \in \Lambda^x \setminus\{x\}}  \zeta_k u_k + \lambda \!\!\!\!\!\!\!\!\! \sum_{k \in \Lambda^j \setminus\{j-n+1-N\}} \!\!\!\!\!\!\!\!\! \zeta_k u_k
\]
 is independent of $\zeta_x$ and $\zeta_{j-n+1-N}$.
By assumption the functions
\[ 
\phi= \sum_{k \in \Lambda^x} \alpha_{k-x} u_k \quad \text{and} \quad \psi= \sum_{k \in \Lambda^j} \alpha_{k-(j-n+1-N)} u_k
\]
are bounded and  non-negative, with $\phi(x)=  u(0) > 0$ and $\psi(j) = \alpha_N u(n-1) > 0$.
Using Lemma \ref{l:AENSS} we obtain
\begin{align*}
 E' &= \int_{[-S,S]^{2}} \Bigl| \Bigl\langle \delta_x , \bigl(A + \zeta_x \lambda \phi + \zeta_{j-n+1-N} \lambda \psi \bigr)^{-1} \delta_j \Bigr\rangle \Bigr|^s \drm \zeta_x \drm \zeta_{j-n+1-N} \\[1ex]
&\leq
\frac{4}{1-s} \left(\frac{C_W}{\lambda \sqrt{\phi(x)\psi(j)}}\right)^s S^{2-s} .
\end{align*}
Thus the original integral is estimated by
\begin{align}
  E &\leq
\norm{\rho}_\infty^{\lvert\Lambda\rvert}  (2S)^{\lvert\Lambda\rvert-2}
\frac{4}{1-s} \left(\frac{C_W}{\lambda \sqrt{\phi(x)\psi(j)}}\right)^s S^{2-s} \nonumber
 \\[1ex]
 &=
\frac{4}{1-s} \left(\frac{C_W}{\lambda \sqrt{u(0) \alpha_N u(n-1) }}\right)^s
\big(2S\norm{\rho}_\infty \big)^{2(N+1)} \frac{1}{S^{s}} =: C . \label{e:K} ,
\end{align}
which ends the proof.
\end{proof}
The last proposition and a formula analogous to \eqref{e:geometric-resolvent} now give for $j=x + 2(N+n)-1$ and
$x + 2(N+n) \le y$
\begin{align*}
\EE_\Lambda & \Bigl( \bigl| G_{\Gamma_0} (z;x,y) \bigr|^s  \Bigr)
\\ &=\EE_\Lambda \Bigl( \bigl| G_{\Gamma_0} (z;x,x + 2(N+n)-1) \bigr|^s  \Bigr)
 \bigl| G_{\Gamma_1} (z;x + 2(N+n),y) \bigr|^s \\
& \leq
C \bigl| G_{\Gamma_1} (z;x + 2(N+n),y) \bigr|^s
\end{align*}
where $\Gamma_0 = \ZZ$, $\Gamma_1 = \{x + 2(N+n), x + 2(N+n)+1, \dots\}$ and $\Lambda$ as in Lemma \ref{l:AENSS}. In an appropriate large disorder regime, where the constant $C$ in \eqref{e:K}
is smaller than one, exponential decay of fractional moments for the off-diagonal matrix elements now follows by iteration similarly as in Theorem \ref{theorem:exp_d=1} and Theorem~\ref{theorem:exp_d=1_2}. 
\par
To conclude exponential localization we also need the boundedness for fractional moments of the diagonal Green's function elements (cf.\ Theorem \ref{prop:replace-msa}), which is proven in Theorem~\ref{t:a-priori_d=1}. However, for the proof of Theorem~\ref{t:a-priori_d=1} we need non-monotone spectral averaging as formulated in Lemma~\ref{lemma:averagenorm}.
\section{Boundedness of fractional moments} \label{sec:boundedness}
In this section we prove the boundedness of an averaged fractional power of the Green function in arbitrary space dimension $d \in \NN$. In particular, the upper bound depends in a quantitative way on the disorder (the bound gets small in the high disorder regime). The estimate on fractional moments of the Green function is used iteratively in the next section, where we prove exponential decay of the Green function.
\par
We consider the situation when Assumption \ref{ass:monotone} holds.
Recall that $R = \max \{ \lvert \inf \supp \rho \rvert , \lvert \sup \supp \rho \rvert \}$ where $\rho$ is the probability density of the measure $\nu$.
\par
We also introduce some more notation. For $x \in \ZZ^d$ we denote by $\cN (x) = \{k \in \ZZ^d : |x-k|_1 = 1\}$ the neighborhood of $x$. We also define $\Lambda^+ = \Lambda \cup \partial^{\rm o} \Lambda$, $\Lambda_x = \Lambda + x = \{k \in \ZZ^d : k-x \in \Lambda\}$ for $\Lambda \subset \ZZ^d$, and $u_{\rm min}^\Lambda = \min_{k \in \Lambda} \lvert u(k) \rvert$.
\begin{lemma}[A priori bound] \label{lemma:bounded}
Let Assumption~\ref{ass:monotone} be satisfied, $\Gamma \subset \ZZ^d$, $m > 0$ and $s \in (0,1)$.
\begin{enumerate}[(a)]
 \item Then there is a constant $C_s$, depending only on $d$, $\rho$, $u$, $m$ and $s$, such that for all $z \in \CC \setminus \RR$ with $|z| \leq m$, all $x,y \in \Gamma$ and all $b_x,b_y \in \ZZ^d$ with $x \in \Theta_{b_x}$ and $y \in \Theta_{b_y}$
\[
\EE_{N} \Bigl( \bigl\lvert G_\Gamma (z;x,y) \bigr\rvert^{s/(2|\Theta|)} \Bigr) \leq C_s \Xi_s (\lambda),
\]
where $\Xi_s (\lambda) = \max \{ \lambda^{- s/(2 \lvert \Theta \rvert)} , \lambda^{-2s} \}$ and $N = \{b_x , b_y\} \cup \cN (b_x) \cup \cN (b_y)$.
\item Then there is a constant $D_s$, depending only on $d$, $\rho$, $u$ and $s$, such that for all $z \in \CC \setminus \RR$, all $x,y \in \Gamma$ and all $b_x,b_y \in \ZZ^d$ with
\[
x \in \Theta_{b_x} \cap \Gamma \subset \partial^{\rm i} \Theta_{b_x} \quad \text{and} \quad
y \in \Theta_{b_y} \cap \Gamma \subset \partial^{\rm i} \Theta_{b_y}
\]
we have
\[
 \EE_{\{b_x , b_y\}} \Bigl( \bigl\lvert G_\Gamma (z;x,y) \bigr\rvert^s \Bigr) \leq D_s \lambda^{-s} .
\]
\end{enumerate}
\end{lemma}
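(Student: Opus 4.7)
The two parts correspond to monotone and non-monotone spectral averaging, respectively. For part~(b), the hypothesis $\Theta_{b_x}\cap\Gamma\subset\partial^{\rm i}\Theta_{b_x}$ combined with Assumption~\ref{ass:monotone} forces the coefficient of $\omega_{b_x}$ in $V_\omega(k)$, $k\in\Gamma$, to be either zero (when $k\notin\Theta_{b_x}$) or $u(k-b_x)\geq u_{\rm min}^{\partial^{\rm i}\Theta}>0$ (when $k\in\Theta_{b_x}\cap\Gamma$); the analogous statement holds for $\omega_{b_y}$. Thus on $\ell^2(\Gamma)$ one has the decomposition $H_\Gamma-z=H_0+\omega_{b_x}\lambda\Phi_x+\omega_{b_y}\lambda\Phi_y$ with $H_0$ independent of $(\omega_{b_x},\omega_{b_y})$ and $\Phi_x,\Phi_y$ multiplication by non-negative functions satisfying $\Phi_x(x),\Phi_y(y)\geq u_{\rm min}^{\partial^{\rm i}\Theta}$. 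Applying the two-parameter spectral averaging bound of Lemma~\ref{l:AENSS}(ii) with $\phi=\lambda\Phi_x$, $\psi=\lambda\Phi_y$, and dominating the density by $\|\rho\|_\infty$, then yields the claim; the degenerate case $b_x=b_y$ is handled by the one-parameter analogue (Lemma~\ref{lemma:monotone2}).

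For part~(a), no monotonicity is available, so the argument proceeds through non-monotone spectral averaging via determinants. Reduce to a finite block by the Schur complement formula (Lemma~\ref{lemma:schur1}) with $\Lambda=(\Theta_{b_x}\cup\Theta_{b_y})\cap\Gamma$, obtaining $G_\Gamma(z;x,y)=\langle\delta_x,M^{-1}\delta_y\rangle$ for $M=H_\Lambda-B_\Gamma^\Lambda-z$ acting on $\ell^2(\Lambda)$ of dimension at most $2|\Theta|$. Collect the $\omega_N$-dependence as
\[
M=A_0+\lambda\sum_{j\in N}\omega_jV^{(j)},\qquad V^{(j)}(k,k)=u(k-j),
\]
where $A_0$ absorbs everything independent of $\omega_N$, and by a volume argument analogous to the one in Figure~\ref{fig:volume} choose $\alpha=(\alpha_j)_{j\in N}\in[0,1]^{|N|}$ with $\alpha_{b_x}$ bounded away from zero such that $V:=\sum_{j\in N}\alpha_jV^{(j)}$ is invertible on $\ell^2(\Lambda)$ with $|\det V|$ bounded below by a constant depending only on $d$ and $u$; this is possible since $\Lambda\subset\Theta_{b_x}\cup\Theta_{b_y}\subset\bigcup_{j\in N}\Theta_j$.

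The linear substitution $\omega_{b_x}=\alpha_{b_x}\zeta$, $\omega_j=\alpha_j\zeta+\alpha_{b_x}\zeta_j$ for $j\in N\setminus\{b_x\}$, recasts $M$ as $A'+\zeta\lambda V$ with $A'$ independent of $\zeta$. Combining the a priori inequality $\|M^{-1}\|\leq\|M\|^{|\Lambda|-1}/|\det M|$ of Ineq.~\eqref{eq:norm_estimate} with Lemma~\ref{lemma:det} applied to the polynomial $\zeta\mapsto\det M$ of degree $|\Lambda|$, and choosing $s'\in(0,1)$ so that $s'/|\Lambda|=s/(2|\Theta|)$ (possible because $|\Lambda|\leq 2|\Theta|$ gives $s'\leq s<1$), the $\omega_{b_x}$-integration yields
\[
\int_\RR\|M^{-1}\|^{s/(2|\Theta|)}\,\rho(\omega_{b_x})\,\drm\omega_{b_x}\leq C\,\frac{(\|A'\|+R\lambda\|V\|)^{s(|\Lambda|-1)/(2|\Theta|)}}{\lambda^{s'}\,|\det V|^{s/(2|\Theta|)}};
\]
the remaining $\zeta_j$-integrations contribute only multiplicative constants. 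Using the uniform bound $\|A'\|\leq C(m+\lambda)$, valid because $|z|\leq m$ and $|\omega_k|\leq R$, and distinguishing the regimes $\lambda\geq 1$ and $\lambda<1$ yields respectively $C_s\lambda^{-s/(2|\Theta|)}$ and $C_s\lambda^{-s'}\leq C_s\lambda^{-s}\leq C_s\lambda^{-2s}$, together dominated by $C_s\Xi_s(\lambda)$. The main obstacle is that no individual $V^{(j)}$ is invertible on $\ell^2(\Lambda)$, so one must first combine the $V^{(j)}$ through the volume argument in order to apply Lemma~\ref{lemma:det}; the small fractional exponent $s/(2|\Theta|)$ is the price for this reduction.
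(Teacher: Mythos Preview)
Your argument for part~(b) is essentially correct; the paper reduces first to a finite block via Lemma~\ref{lemma:schur1} and then applies Lemma~\ref{lemma:monotone2}, while you work directly on $\ell^2(\Gamma)$ via the two-parameter bound of Lemma~\ref{l:AENSS}. Both routes exploit the same monotonicity coming from $\Theta_{b_x}\cap\Gamma\subset\partial^{\rm i}\Theta_{b_x}$.

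Part~(a), however, has a genuine gap. With $\Lambda=(\Theta_{b_x}\cup\Theta_{b_y})\cap\Gamma$ and Lemma~\ref{lemma:schur1}, the Schur block is $M=H_\Lambda-B_\Gamma^\Lambda-z$, where $B_\Gamma^\Lambda$ is built from $G_{\Gamma\setminus\Lambda}(z;\cdot,\cdot)$. Two things go wrong. First, $B_\Gamma^\Lambda$ depends on the potential on $\Gamma\setminus\Lambda$, hence in general on $\omega_j$ for $j\in\cN(b_x)\cup\cN(b_y)$, and this dependence is \emph{nonlinear} (it sits inside an inverse). So the affine decomposition $M=A_0+\lambda\sum_{j\in N}\omega_jV^{(j)}$ with $A_0$ independent of $\omega_N$ is not available; at best you can extract the two variables $\omega_{b_x},\omega_{b_y}$. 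Second, and more seriously, your bound $\|A'\|\leq C(m+\lambda)$ is false: $A'$ contains $-B_\Gamma^\Lambda$, whose matrix elements are Green's function values of $H_{\Gamma\setminus\Lambda}$ and hence blow up like $|\im z|^{-1}$. The determinant/norm estimate of Lemma~\ref{lemma:averagenorm} therefore produces a right-hand side that is not uniform in $z$.

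The paper circumvents this by using the \emph{two-layer} Schur complement (Lemma~\ref{lemma:schur2}) with $\Lambda_1=\Lambda$ and $\Lambda_2=\Lambda_1^+\cap\Gamma$. This isolates the uncontrolled piece as $(K-z)^{-1}$, where $K$ lives on $\partial^{\rm o}\Lambda_1$, is independent of $\omega_{b_x},\omega_{b_y}$, and depends \emph{monotonically} on $\omega_j$, $j\in\cN(b_x)\cup\cN(b_y)$, by Assumption~\ref{ass:monotone}. A separate monotone averaging (Lemma~\ref{lemma:monotone2}) over these neighbour variables then yields $\EE_{N'}(\|(K-z)^{-1}\|^t)\leq C_1(t)\lambda^{-t}$, which replaces your unjustified pointwise bound on $\|A'\|$. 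Only after this step does the non-monotone averaging over $\omega_{b_x},\omega_{b_y}$ via Lemma~\ref{lemma:averagenorm} come in. This interplay of monotone and non-monotone averaging is the missing idea in your sketch.
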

\begin{proof}
First we prove (a). Fix $x,y\in\Gamma$ and choose $b_x, b_y \in \ZZ^d$ in such a way that $x \in \Theta_{b_x}$ and $y \in \Theta_{b_y}$. This is always possible, and sometimes even with a choice $b_x = b_y$. However, we assume $b_x \not = b_y$. The case $b_x = b_y$ is similar but easier. Let us note that $\Theta_{b_x}$ and $\Theta_{b_y}$ are not necessarily disjoint. We apply Lemma~\ref{lemma:schur2} with $\Lambda_1 = (\Theta_{b_x} \cup \Theta_{b_y}) \cap \Gamma$ and $\Lambda_2 = \Lambda_1^+ \cap \Gamma$ and obtain
\begin{equation} \label{eq:lemma:bounded1}
\Pro_{\Lambda_1}^{\Gamma}(H_\Gamma - z)^{-1} (\Pro_{\Lambda_1}^{\Gamma})^* =
\bigl( H_{\Lambda_1} - z + \Pro_{\Lambda_1} \Delta \Pro_{\partial^{\rm o} \Lambda_1}^* (K-z)^{-1} \Pro_{\partial^{\rm o} \Lambda_1} \Delta \Pro_{\Lambda_1}^* \bigr)^{-1}
\end{equation}
where
\[
 K = H_{\partial^{\rm o} \Lambda_1} - \Pro_{\partial^{\rm o} \Lambda_1}^{\Lambda_1^+} B_\Gamma^{\Lambda_1^+} (\Pro_{\partial^{\rm o} \Lambda_1}^{\Lambda_1^+})^* .
\]
We note that $B_\Gamma^{\Lambda_1^+}$ depends only on the potential values $V_\omega (k)$, $k \in \Gamma \setminus \Lambda_1^+$ and is hence independent of $\omega_k$, $k \in \{b_x , b_y\} \cup \cN (b_x) \cup \cN (b_y)$. We also note that $K$ is independent of $\omega_{b_x}$ and $\omega_{b_y}$, and that the potential values $V_\omega (k)$, $k \in \partial^{\rm o} \Lambda_1$ depend monotonically on $\omega_k$, $k \in \cN (b_x) \cup \cN (b_y) =: N'$, by Assumption~\ref{ass:monotone}. More precisely, we can decompose $K : \ell^2 (\partial^{\rm o} \Lambda_1) \to \ell^2 (\partial^{\rm o} \Lambda_1)$ according to
\[
 K = A + \lambda \sum_{k \in N'} \omega_k V_k
\]
with some $A,V_k : \ell^2 (\partial^{\rm o} \Lambda_1) \to \ell^2 (\partial^{\rm o} \Lambda_1)$ and the properties that $A$ is independent of $\omega_k$, $k \in N'$, and $V := \sum_{k \in N'} V_k$ is diagonal and strictly positive definite with $V \geq u_{\rm min}^{\partial^{\rm i} \Theta}$. We fix $v \in N'$ and obtain with the transformation $\omega_v = \zeta_v$ and $\omega_i = \zeta_v + \zeta_i$ for $i \in N' \setminus \{v\}$ for all $t \in (0,1)$
\begin{align}
\EE_{N'} \Bigl( \bigl\lVert (K-z)^{-1} \bigr\rVert^t \Bigr)
& = \!\!\!\!\!\!\!\!\! \int\limits_{[-R , R]^{\lvert N' \rvert}} \!\!\!\!\! \bigl\lVert (A - z + \lambda \sum_{k \in N'} \omega_k V_k)^{-1} \bigr\rVert^t \prod_{k \in N'} \rho (\omega_k) \drm \omega_k \nonumber \\[1ex]
& \leq \lVert \rho \rVert_\infty^{\lvert N' \rvert - 1} \!\!\!\!\!\!\!\!\! \int\limits_{[-S,S]^{\lvert N' \rvert}} \!\!\!\! \bigl\lVert (\tilde A + \zeta_v \lambda V)^{-1} \bigr\rVert^t \rho(\zeta_v) \drm \zeta_v \!\!\!\! \prod_{i \in N' \setminus \{v\}} \!\!\!\! \drm \zeta_i \label{eq:K}
\end{align}
where $S = 2R$ and $\tilde A = A - z + \lambda \sum_{k \in N' \setminus \{v\}} \zeta_i V_i$. The monotone spectral averaging estimate in Lemma~\ref{lemma:monotone2} gives for $t \in (0,1)$
\begin{equation*}
\EE_{N'} \Bigl( \bigl\lVert (K-z)^{-1} \bigr\rVert^t \Bigr) \leq
\frac{\lVert \rho \rVert_\infty^{|N'|-1} (4R)^{|N'|-1} (C_{\rm W} \lvert \partial^{\rm o} \Lambda_1 \rvert \lVert \rho \rVert_\infty)^t}{(u_{\rm min}^{\partial^{\rm i} \Theta} \lambda)^t (1-t)} .
\end{equation*}
Hence there is a constant $C_1 (t)$ depending only on $\rho$, $u$, $d$, $\Lambda_1$ and $t$, such that
\begin{equation}  \label{eq:lemma:bounded2}
 \EE_{N'} \Bigl( \bigl\lVert (K-z)^{-1} \bigr\rVert^t \Bigr) \leq \frac{C_1 (t)}{\lambda^t} .
\end{equation}
We use the notation $u_j$ for the translates of $u$, i.\,e. $u_j (x) = u(x-j)$ for all $j,x \in \ZZ^d$, as well as for the corresponding multiplication operator. The operator $H_{\Lambda_1} = -\Delta_{\Lambda_1} + V_{\Lambda_1}$ can be decomposed in $H_{\Lambda_1} = \tilde A' + \lambda \omega_{b_x} V_x + \lambda \omega_{b_y}  V_y$, where the multiplication operators $V_x,V_y : \ell^2 (\Lambda_1) \to \ell^2 (\Lambda_1)$ are given by $V_x (k) = u_{b_x} (k)$ and $V_y (k) = u_{b_y} (k)$, and where $\tilde A' = -\Delta_{\Lambda_1} + \lambda\sum_{k \in \ZZ^d \setminus \{b_x,b_y\}} \omega_k u_k$. Notice that $V_x$ is invertible on $\Theta_{b_x}$ and $V_y$ is invertible on $\Theta_{b_y}$. Hence there exists an $\alpha \in (0,1]$ such that $V_x + \alpha V_y$ is invertible on $\Lambda_1$. By Eq.~\eqref{eq:lemma:bounded1} and this decomposition we have for all $t \in (0,1)$
\begin{align*}
 E &= \EE_{\{b_x,b_y\}} \Bigl(\bigl\lVert \Pro_{\Lambda_1}^{\Gamma}(H_\Gamma - z)^{-1} (\Pro_{\Lambda_1}^{\Gamma})^* \bigr\rVert^{t / |\Lambda_1|} \Bigr) \\[1ex]
&= \int_{-R}^R \int_{-R}^R \bigl\lVert ( A' + \lambda \omega_{b_x} V_x + \lambda \omega_{b_y} V_y )^{-1} \bigr\rVert^{t/|\Lambda_1|} \rho(\omega_{b_x})\rho(\omega_{b_y}) \drm \omega_{b_x} \drm \omega_{b_y} ,
\end{align*}
where
\[
 A' = \tilde A' - z + \Pro_{\Lambda_1} \Delta \Pro_{\partial^{\rm o} \Lambda_1}^* (K-z)^{-1} \Pro_{\partial^{\rm o} \Lambda_1} \Delta \Pro_{\Lambda_1}^* .
\]
Notice that $\tilde A'$ and $K$ are independent of $\omega_{b_x}$ and $\omega_{b_y}$. Set $V := V_x + \alpha V_y$ with $\alpha \in (0,1]$ to be chosen later. We use the transformation $\omega_{b_x} = \zeta_{x}$, $\omega_{b_y} = \alpha \zeta_x + \zeta_y$ and obtain by Lemma~\ref{lemma:averagenorm}
\begin{align*}
 E &\leq \lVert \rho \rVert_\infty \int_{-2R}^{2R} \int_{-2R}^{2R} \bigl\lVert (A' + \zeta_y \lambda V_y + \zeta_x \lambda V)^{-1} \bigr\rVert^{t/|\Lambda_1|} \rho (\zeta_x) \drm \zeta_x \drm \zeta_y \\[1ex]
& \leq \lVert \rho \rVert_\infty \int_{-2R}^{2R}
  \frac{\lVert \rho \rVert_\infty^t \bigl( \lVert A' + \zeta_y \lambda V_y \rVert + 2R \lambda \lVert V \rVert \bigr)^{t(|\Lambda_1| - 1)/|\Lambda_1|}}
  {t^t 2^{-t} (1-t) \lambda^t \lvert \det V \rvert^{t/|\Lambda_1|}} \drm \zeta_y \\[1ex]
& \leq
  \frac{4R \lVert \rho \rVert_\infty^{t+1} \bigl( \lVert A' \rVert + 2R \lambda \lVert V_y \rVert + 2R \lambda \lVert V \rVert \bigr)^{t(|\Lambda_1| - 1)/|\Lambda_1|}}
  {t^t 2^{-t} (1-t) \lambda^t \lvert \det V \rvert^{t/|\Lambda_1|}} .
\end{align*}
The norm of $A'$ can be estimated as
\[
 \lVert  A' \rVert \leq 2d + (|\Theta| - 1) \lVert u \rVert_\infty + m + (2d)^2 \lVert (K-z)^{-1} \rVert .
\]
For the norm of $V_y$ and $V$ we have $\lVert V_y \rVert \leq \lVert u \rVert_\infty$ and $\lVert V \rVert \leq 2 \lVert u \rVert_\infty$. To estimate the determinant of $V$ we set $v_i = (u (i - b_x) , u(i - b_y))^{\rm T} \in \RR^2$ for $i \in \Lambda_1$, and $r = (1,\alpha)^{\rm T} \in \RR^2$. Then,
\[
 \lvert \det V \rvert =  \prod_{i \in \Lambda_1} \bigl\lvert u(i-b_x) + \alpha u(i-b_y) \bigr\rvert = \prod_{i \in \Lambda_1} \lVert v_i \rVert \bigl\lvert \langle r , v_i / \lVert v_i \rVert \rangle \bigr\rvert .
\]
Since we can choose $\alpha \in (0,1]$ in such a way that the distance of $r$ to each hyperplane $H_i = \{x_1,x_2 \in \RR : u(i-b_x)x_1 + u(i-b_y) x_2 = 0\}$, $i \in \Lambda_1$, is at least $d_0 = \sqrt{2} / (4(\lvert \Lambda_1 \rvert + 1))$, we conclude using $\lVert v_i \rVert \geq \sqrt{2} u_{\rm \min}^\Theta$
\[
  \lvert \det V \rvert \geq \prod_{i \in \Lambda_1} \lVert v_i \rVert d_0 \geq \left( \frac{u_{\rm \min}^\Theta}{2(\lvert \Lambda_1 \rvert + 1)} \right)^{\lvert \Lambda_1 \rvert} .
\]
Putting all together we see that there are constants $C_2 (t)$, $C_3 (t)$ and $C_4 (t)$ depending only on $\rho$, $u$, $d$, $m$, $\Lambda_1$ and $t$, such that
\begin{equation} \label{eq:withoutA}
 E \leq \frac{C_2(t)}{\lambda^t} + \frac{C_3(t)}{\lambda^{t/\lvert \Lambda_1 \rvert}} + \frac{C_4(t)}{\lambda^t} \lVert (K-z)^{-1} \rVert^{t \frac{|\Lambda_1| - 1}{|\Lambda_1|}} .
\end{equation}
If we average with respect to $\omega_k$, $k \in \cN(b_x)\cup \cN (b_y)$ we obtain by Eq.~\eqref{eq:lemma:bounded2}
\[
  \EE_{\cN(b_x)\cup \cN (b_y)} \bigl( E \bigr) \leq \frac{C_2(t)}{\lambda^t} + \frac{C_3 (t)}{\lambda^{t/\lvert \Lambda_1 \rvert}} + \frac{C_4 (t) C_1 (t(|\Lambda_1| - 1)/|\Lambda_1|)}{\lambda^t \lambda^{t(|\Lambda_1| - 1)/|\Lambda_1|}} .
\]
Notice that $1 \leq \lvert \Lambda_1 \rvert \leq 2 \lvert \Theta
\rvert$. Now we choose $t = s|\Lambda_1|/(2|\Theta|)$ and eliminate
$\Lambda_1$ from the constants $C_1(t)$, $C_2(t)$, $C_3(t)$ and
$C_4(t)$ by maximizing them with respect to $\lvert \Lambda_1 \rvert
\in \{1, \dots , 2 \lvert \Theta \rvert\}$. We obtain that there are
constants $\tilde C_1 (s) , \tilde C_2 (s)$ and $\tilde C_3 (s)$,
depending only on $\rho$, $u$, $d$, $m$, and $s$, such that
\begin{align*}
 \EE_N \Bigl(\bigl\lVert \Pro_{\Lambda_1}^{\Gamma}(H_\Gamma - z)^{-1} (\Pro_{\Lambda_1}^{\Gamma})^* \bigr\rVert^{\frac{s}{2\lvert \Theta \rvert}} \Bigr)
& \leq
\frac{\tilde C_1 (s)}{\lambda^{s\frac{\lvert \Lambda_1 \rvert}{2\lvert \Theta \rvert}}}+
\frac{\tilde C_2 (s)}{\lambda^{\frac{s}{2\lvert \Theta \rvert}}}+
\frac{\tilde C_3 (s)}{\lambda^{s \frac{2\lvert \Lambda_1 \rvert - 1}{2 \lvert \Theta \rvert}}} \\[1ex]
& \leq
(\tilde C_1 (s)+\tilde C_2 (s)+\tilde C_3 (s)) \Xi_s (\lambda) .
\end{align*}
In the last estimate we have distinguished the cases $\lambda \geq 1$ and $\lambda < 1$ and used the fact that $1 \leq \lvert \Lambda_1 \rvert \leq 2 \lvert \Theta
\rvert$. This completes the proof of part (a).
\par
To prove (b) we fix $x,y \in \Gamma$ and $b_x,b_y \in \ZZ^d$ with $x \in \Theta_{b_x} \cap \Gamma \subset \partial^{\rm i} \Theta_{b_x}$ and $y \in \Theta_{b_y} \cap \Gamma \subset \partial^{\rm i} \Theta_{b_y}$. We again assume $b_x \not = b_y$. The case $b_x = b_y$ is similar but easier. We apply Lemma \ref{lemma:schur1} with $\Lambda = (\Theta_{b_x} \cup \Theta_{b_y})\cap \Gamma$ and obtain
\[
\Pro_\Lambda^\Gamma (H_\Gamma - z)^{-1} (\Pro_\Lambda^\Gamma)^* =
(H_\Lambda - B_\Gamma^\Lambda - z)^{-1}.
\]
Notice that $B_\Gamma^\Lambda$ is independent of $\omega_k$, $k \in \{b_x,b_y\}$.
By assumption, the potential values in $\Lambda$ depend \textit{monotonically} on $\omega_{b_x}$ and $\omega_{b_y}$. More precisely, we can rewrite the potential in the form $V_\Lambda = A + \omega_{b_x} \lambda V_x + \omega_{b_y} \lambda V_y$ with the properties that $A$ is independent of $\omega_k$, $k \in \{b_x,b_y\}$, and $V = V_x + V_y$ is strictly positive definite with $V \geq u_{\rm min}^{\partial^{\rm i} \Theta}$. We proceed similarly as in Ineq.~\eqref{eq:K}, namely with the substitution $\omega_{b_x}=\zeta_x$ and $\omega_{b_y} = \zeta_x+\zeta_y$, and obtain using monotone spectral averaging from Lemma~\ref{lemma:monotone2} the estimate
\[
 \EE_{\{b_x,b_y\}} \Bigl( \bigl\lVert \Pro_\Lambda^\Gamma (H_\Gamma - z)^{-1} (\Pro_\Lambda^\Gamma)^* \bigr\rVert^s \Bigr) \leq \lVert \rho \rVert_\infty 4R \frac{(\lvert \Lambda \rvert u_{\rm min}^{\partial^{\rm i} \Theta} \lVert \rho \rVert_\infty)^s}{\lambda^s (1-s)} .
\]
We estimate $\lvert \Lambda \rvert \leq 2 \lvert \Theta \rvert$ and obtain part (b).
\end{proof}
\begin{remark}
 Note that even if Assumption \ref{ass:monotone} is not satisfied we obtain the bound \eqref{eq:withoutA}, namely
\begin{multline*}
 \EE_{\{b_x,b_y\}} \Bigl(\bigl\lVert \Pro_{\Lambda_1}^{\Gamma}(H_\Gamma - z)^{-1} (\Pro_{\Lambda_1}^{\Gamma})^* \bigr\rVert^{t / |\Lambda_1|} \Bigr) \\ \leq \frac{C_2(t)}{\lambda^t} + \frac{C_3(t)}{\lambda^{t/\lvert \Lambda_1 \rvert}} + \frac{C_4(t)}{\lambda^t} \lVert (K-z)^{-1} \rVert^{t \frac{|\Lambda_1| - 1}{|\Lambda_1|}} .
\end{multline*}
\end{remark}
\section{Exponential decay of fractional moments; proof of Theorem \ref{theorem:finite_volume} and \ref{theorem:exp_decay}} \label{sec:exp_decay}
In this section we show Theorem \ref{theorem:finite_volume}, i.e.\ that the so called finite volume criterion implies exponential decay of the Green function. Together with the a-priori bound (Lemma \ref{lemma:bounded}) this gives us Theorem~\ref{theorem:exp_decay} which will be proven at the end of this section.
\par
We again consider ``depleted'' Hamiltonians, as already introduced in the one-dimensional setting in Section~\ref{sec:d=1}, to formulate a geometric
resolvent formula. Let $\Lambda \subset \Gamma \subset \ZZ^d$ be
arbitrary sets. We define the depleted Laplace operator
$\Delta_\Gamma^\Lambda :\ell^2 (\Gamma) \to \ell^2 (\Gamma)$ by
\begin{equation*} 
 \sprod{\delta_x}{\Delta_\Gamma^\Lambda \delta_y} :=
\begin{cases}
  0 & \text{if $x \in \Lambda$, $y \in \Gamma \setminus \Lambda$ or $y \in \Lambda$,
  $x \in \Gamma \setminus \Lambda$} , \\
  \bigl \langle \delta_x , \Delta_\Gamma \delta_y \bigr \rangle & \text{else} 
\end{cases}
\end{equation*}
and the depleted Hamiltonian $H_\Gamma^\Lambda : \ell^2 (\Gamma) \to \ell^2 (\Gamma)$ by
\begin{equation*} \label{eq:depl}
 H_\Gamma^\Lambda := -\Delta_\Gamma^\Lambda + V_\Gamma .
\end{equation*}
Let further $T_\Gamma^\Lambda := \Delta_\Gamma -
\Delta_\Gamma^\Lambda$ be the difference between the the ``full''
Laplace operator and the depleted Laplace operator. For $z \in \CC \setminus \RR$ and $x,y \in \Gamma$ we use the notation $G_\Gamma^\Lambda (z) := (H_\Gamma^\Lambda - z)^{-1}$ and $G_\Gamma^\Lambda (z;x,y) := \bigl \langle \delta_x, G_\Gamma^\Lambda(z) \delta_y \bigr \rangle$. To formulate a geometric resolvent formula we apply the second resolvent identity and obtain for arbitrary sets $\Lambda \subset \Gamma \subset \ZZ^d$
\begin{equation}
G_\Gamma (z) = G_\Gamma^\Lambda (z) + G_\Gamma (z) T_\Gamma^\Lambda  G_\Gamma^\Lambda (z) = G_\Gamma^\Lambda (z) + G_\Gamma^\Lambda (z) T_\Gamma^\Lambda G_\Gamma (z) .
\label{eq:firstorder}
\end{equation}
%
In contrast to the one-dimensional case from Section~\ref{sec:d=1} it will be necessary to use an iterated version of this formula.
Namely, two applications of the resolvent identity give
\begin{equation}
G_{\Gamma} (z)  = G_{\Gamma}^{\Lambda} (z) + G_{\Gamma }^{\Lambda} (z) T_\Gamma^{\Lambda} G_{\Gamma}^{\Lambda} (z) + G_{\Gamma }^{\Lambda} (z)  T_\Gamma^{\Lambda} G_{\Gamma} (z) T_\Gamma^{\Lambda} G_{\Gamma}^{\Lambda} (z) .
\label{eq:secondorder}
\end{equation}
\begin{remark} \label{remark:depleted}
Notice that $G_\Gamma^\Lambda (z;x,y) = G_\Lambda (z;x,y)$ if $x,y \in \Lambda$, $G_\Gamma^\Lambda (z;x,y) = 0$ if $x \in \Lambda$ and $y \not \in \Lambda$ or vice versa, and that $G_\Gamma^\Lambda (z) = G_\Gamma^{\Lambda^{\rm c}} (z)$. If $\Gamma \setminus \Lambda$ decomposes into at least two components which are not connected, and $x$ and $y$ are not in the same component, then we also have $G_\Gamma^\Lambda (z;x,y) = 0$.
\par
Since $\Gamma$ is not necessarily the whole lattice
$\ZZ^d$, it may be that terms of the type $G_{\Gamma} (z;i,j)$ occur for
some $\Gamma \subset \ZZ^d$ and some $i \not \in \Gamma$ or $j \not \in \Gamma$. In this case we use the
convention that $G_\Gamma (z;i,j) = 0$.
\end{remark}
To formulate the results of this section we introduce the following
notation. For finite $\Gamma \subset \ZZ^d$ we denote by $\diam \Gamma$ the diameter of $\Gamma$ with respect to the supremum norm, i.e.\ $\diam \Gamma = \sup_{x,y\in \Gamma} \lvert x-y \rvert_\infty$. For $\Gamma \subset \ZZ^d$, $L \ge \diam \Theta + 2$ and
\[
B = \partial^{\rm i} \Lambda_L,
\]
we define for $x \in \ZZ^d$ the sets\footnote{Note that the sets $\hat \Lambda_x$, $\hat W_x$, $\Lambda_x$ and $W_x$ are not translates of certain sets $\hat \Lambda$, $\hat W$, $\Lambda$ and $W$. They are defined directly in dependence of $x \in \ZZ^d$ and $L \geq \diam \Theta + 2$. Contrary, the sets $B_x$, $\Theta_b$ are translates of the sets $B$ and $\Theta$ by the vextors $x,b \in \ZZ^d$.}
\[
\hat \Lambda_x := \{ k \in \Gamma : k \in \Theta_b \text{ for some $b \in \Lambda_{L,x}$} \}
\]
and
\begin{equation} \label{eq:Wx}
 \hat W_x := \{ k \in \Gamma : k \in \Theta_b \text{ for some $b \in B_x$} \} .
\end{equation}
Recall that $\Lambda_{L,x}= \{k \in \ZZ^d \colon \lvert k-x \rvert_\infty \leq L\}$, $\Lambda_L = \Lambda_{L,0}$, for $\Gamma \subset \ZZ^d$ and $x \in \ZZ^d$ we denote by $\Gamma_x = \Gamma + x = \{k \in \ZZ^d : k-x \in \Gamma\}$ the translate of $\Gamma$ and by $\Gamma^+ = \Gamma \cup \partial^{\rm o} \Gamma$ the thickened set. Hence $\hat W_x$ is the union of translates of $\Theta$ along the sides of $B_x$, restricted to the set $\Gamma$. For $\Gamma \subset \ZZ^d$ we can now introduce the sets
\[
\Lambda_x: = \hat \Lambda_x^+ \cap \Gamma \quad \text{and} \quad W_x := \hat W_x^+ \cap \Gamma
\]
which will play a role in the assertions below. See also Fig.~\ref{fig:geometric2} for an illustration of the set $W_x$. 
\begin{theorem}[Finite volume criterion] \label{theorem:finite_volume2}
Suppose that Assumption \ref{ass:monotone} is satisfied, let $\Gamma \subset \ZZ^d$, $L \geq \diam \Theta + 2$, $z
\in \CC \setminus \RR$ with $\lvert z \rvert \leq m$ and $s \in (0,1/3)$. Then there
exists a constant $B_s$ which depends only on $d$, $\rho$, $u$, $m$ and $s$,
such that if the condition
\begin{equation}\label{eq:fin_cond}
b_s(\lambda, L,\Lambda): = \frac{B_s L^{3(d-1)} \Xi_s (\lambda)}{\lambda^{2s/(2\lvert
\Theta \rvert)}}\, \sum_{w\in\partial^{\rm o} W_x}\mathbb{E}
\bigl(\lvert G_{\Lambda\setminus W_x} (z;x,w)\rvert^{s/(2\lvert
\Theta \rvert)}\bigr)< b
\end{equation}
is satisfied for some $b \in (0,1)$, arbitrary $\Lambda \subset \Gamma$, and all $x\in
\Lambda$, then for all $x,y \in \Gamma$
\begin{equation*} 
\mathbb{E} \bigl(\lvert G_\Gamma (z;x,y)\rvert^{s/(2\lvert \Theta
\rvert)} \bigr)\leq A \euler^{-\mu|x-y|_\infty} .
\end{equation*}
Here
\[
A=\frac{C_s \Xi_s (\lambda)}{b} \quad \text{and} \quad
\mu=\frac{\lvert \ln b \rvert}{L+\diam \Theta + 2}\,,\] with $C_s$ inherited
from the a priori bound (Lemma \ref{lemma:bounded}).
\end{theorem}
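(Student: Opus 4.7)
The plan is to apply the second-order geometric resolvent identity \eqref{eq:secondorder} with depletion of the annular region $W_x$, iterate it, and extract exponential decay from the hypothesis $b_s(\lambda,L,\Lambda)<b$. Concretely, the set $W_x \supset \hat W_x$ is designed as a ``shell'' separating $x$ from the outside of $\Lambda_x$: after deleting the hopping terms across $W_x$, the graph $\Gamma$ splits and the points of $\hat\Lambda_x\setminus W_x$ (which contains $x$) are disconnected from $\Gamma\setminus\Lambda_x$. Hence, for $y$ sufficiently far from $x$ so that $y \notin \hat\Lambda_x\setminus W_x$, the first two terms in \eqref{eq:secondorder} vanish and one is left with an identity of the schematic form
\[
G_\Gamma(z;x,y)=\sum_{w\in\partial^{\rm o}W_x}\sum_{w'\in\cN(w)\cap W_x} G_{\Lambda\setminus W_x}(z;x,w)\,G_\Gamma(z;w',\tilde w)\,G_{\Gamma\setminus \mathrm{core}}(z;\tilde w',y),
\]
where the last factor is a restricted Green function depending only on random variables away from $W_x$.

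Next I would take absolute value to the power $s/(2\lvert\Theta\rvert)<1$ and use subadditivity $(\sum a_i)^\sigma\le \sum a_i^\sigma$ to turn the sum into a sum of products of three fractional powers. The crucial observation is that the first factor $\lvert G_{\Lambda\setminus W_x}(z;x,w)\rvert^{s/(2\lvert\Theta\rvert)}$ depends only on $\omega_k$ for $k$ outside the annulus $B_x$ (since by construction $W_x$ absorbs all lattice sites whose potentials are affected by $\omega_{B_x}$), and similarly for the outer factor. This allows the expectation to be taken iteratively: first average the middle factor $\lvert G_\Gamma(z;w',\tilde w)\rvert^{s/(2\lvert\Theta\rvert)}$ over the random variables indexed by sites in $B_x$ and its neighborhood using the a priori bound Lemma~\ref{lemma:bounded}(a), which yields a factor $C_s \Xi_s(\lambda)/\lambda^{s/(2|\Theta|)}$; the remaining free $\omega$'s leave the first and third factors stochastically independent (up to conditioning), so the residual expectation factorizes. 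Summing over the boundary points $w,w'$ produces the polynomial factor $L^{3(d-1)}$, after which the hypothesis \eqref{eq:fin_cond} gives a bound
\[
\mathbb{E}\bigl(\lvert G_\Gamma(z;x,y)\rvert^{s/(2\lvert\Theta\rvert)}\bigr)\le b\cdot\sup_{y'}\mathbb{E}\bigl(\lvert G_\Gamma(z;x',y)\rvert^{s/(2\lvert\Theta\rvert)}\bigr),
\]
where $x'$ ranges over sites at distance at least $L+\diam\Theta+2$ from $x$ (the ``one-step'' gain of the annulus).

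The final step is straightforward iteration: iterating the above inequality $n=\lfloor\lvert x-y\rvert_\infty/(L+\diam\Theta+2)\rfloor$ times gives a factor $b^n$, and the last step is closed using the a priori bound Lemma~\ref{lemma:bounded}(a) which bounds the remaining Green's function by $C_s\Xi_s(\lambda)$. Rearranging yields the exponential estimate with rate $\mu=\lvert\ln b\rvert/(L+\diam\Theta+2)$ and prefactor $A=C_s\Xi_s(\lambda)/b$.

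The main obstacle is the decoupling step in the middle, because the alloy-type potential is not independent from site to site: $V_\omega(j)$ depends on all $\omega_k$ with $j-k\in\Theta$. The whole point of the geometric apparatus (the four sets $\hat W_x$, $W_x$, $\hat\Lambda_x$, $\Lambda_x$, with $W_x=\hat W_x^+\cap\Gamma$ fattened by one layer) is to ensure that the random variables determining the ``inner'' Green function $G_{\Lambda\setminus W_x}(z;x,w)$ and those entering the ``outer'' Green function are genuinely disjoint from the random variables in $B_x$ that are averaged against the a priori bound. Getting this bookkeeping right — identifying which $\omega_k$'s appear where, and verifying that after depletion the inner component coincides with $\Lambda\setminus W_x$ — is the delicate part of the argument; once this is in place, the remaining estimates are routine applications of subadditivity, Lemma~\ref{lemma:bounded}, and geometric iteration.
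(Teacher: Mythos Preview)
Your overall plan is right — second-order depletion, decoupling via the a priori bound, iteration — but the decoupling step as you describe it does not close, and the paper's proof differs from yours precisely at this point.

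You propose to deplete $W_x$ directly in the second-order identity and obtain three factors, then average the middle factor $G_\Gamma(z;w',\tilde w)$ using Lemma~\ref{lemma:bounded}(a). The problem is twofold. First, after depletion of $W_x$ the middle endpoints $w',\tilde w$ lie in $\partial^{\rm i}W_x\subset\partial^{\rm o}\hat W_x$, which need not be contained in any $\Theta_b$ with $b\in B_x$; so the choice of $b_{w'},b_{\tilde w}$ in Lemma~\ref{lemma:bounded}(a) may force you outside $B_x$. Second, and more seriously, Lemma~\ref{lemma:bounded}(a) requires averaging over $N=\{b_{w'},b_{\tilde w}\}\cup\cN(b_{w'})\cup\cN(b_{\tilde w})$, which contains sites in $\partial^{\rm o}B_x$. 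But $G_{\Gamma\setminus W_x}$ is \emph{not} independent of $\omega_k$ for $k\in\partial^{\rm o}B_x$: the set $W_x$ absorbs only the sites whose potential is affected by $\omega_{B_x}$, not by $\omega_{B_x^+}$. So your factorization of the expectation fails. (Incidentally, Lemma~\ref{lemma:bounded}(a) yields only $C_s\Xi_s(\lambda)$, not $C_s\Xi_s(\lambda)/\lambda^{s/(2|\Theta|)}$ as you write; the extra $\lambda$-power has a different origin.)

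The paper resolves this by a two-step depletion producing \emph{five} factors rather than three. One first applies the second-order identity with $\hat W_x$ (not $W_x$); this places the middle endpoints $u',v$ inside $\hat W_x$, so $b_{u'},b_v\in B_x$ and Lemma~\ref{lemma:bounded}(a) needs only $\omega_{B_x^+}$. The outer factors $G_{\Gamma\setminus\hat W_x}(z;x,u)$ and $G_{\Gamma\setminus\hat W_x}(z;v',y)$ still depend on $\omega_{B_x^+\setminus B_x}$, so one applies the first-order identity with $W_x$ to each of them. This peels off truly independent factors $G_{\Gamma\setminus W_x}(z;x,w)$ and $G_{\Gamma\setminus W_x}(z;r',y)$, at the cost of two additional ``sandwich'' factors $G_{\Gamma\setminus\hat W_x}(z;w',u)$ and $G_{\Gamma\setminus\hat W_x}(z;v',r)$ with endpoints in $\partial^{\rm o}\hat W_x$. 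These endpoints lie in $\Theta_b\cap(\Gamma\setminus\hat W_x)\subset\partial^{\rm i}\Theta_b$ for suitable $b\in\partial^{\rm o}B_x$, which is exactly the hypothesis of Lemma~\ref{lemma:bounded}(b). Part (b) uses Assumption~\ref{ass:monotone} (positivity of $u$ on $\partial^{\rm i}\Theta$) and requires averaging only over $\{b_1,b_2\}$ — no neighborhoods — yielding the bound $D_t\lambda^{-t}$. After H\"older on the three inner factors one gets $E_s\lambda^{-2s/(2|\Theta|)}\Xi_s(\lambda)$; the factor $\lambda^{-2s/(2|\Theta|)}$ in $b_s$ comes from these two applications of part (b), not from part (a). Your proposal omits Lemma~\ref{lemma:bounded}(b) entirely, and with it the mechanism by which Assumption~\ref{ass:monotone} enters the decoupling.
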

\begin{remark} \label{remark:finite_volume}
Note that $\Gamma \setminus W_x$ decomposes into  two components
which are not connected, so that the sum in \eqref{eq:fin_cond} runs
over the sites $r$ related to only one of these components, which is
always compact, regardless of the choice of $\Gamma$. It then
follows that in order to establish the exponential falloff of the
Green function it suffices to consider the decay properties of the
Green function for the Hamiltonians defined on finite sets. The
finite volume criterion derives its name from this fact.
\end{remark}
The strategy for the proof is reminiscent of the one developed in
\cite{AizenmanFSH-01} and is aimed to derive a following bound on
the average Green function.
\begin{lemma} \label{lemma:iteration3}
Let $\Gamma \subset \ZZ^d$, $s \in (0, 1/3)$, $m > 0$, Assumption \ref{ass:monotone} be satisfied and $b_s (\lambda,L,\Lambda)$ be the constant from Theorem \ref{theorem:finite_volume2}. Then we have for all $x,y \in \Gamma$ with $y \not \in \Lambda_x$ and all $z \in \CC \setminus \RR$ with $\lvert z \rvert \leq m$ the bound
\begin{equation}\label{eq:protobound}
\mathbb{E} \bigl(\lvert G_\Gamma (z;x,y)\rvert^{\frac{s}{2\lvert
\Theta \rvert}}\bigr)\leq
\frac{b_s(\lambda, L,\Gamma)}{|\partial^{\rm o} \Lambda_x|} \sum_{r\in\partial^{\rm o}
\Lambda_x} \mathbb{E}\bigl( \lvert G_{\Gamma\setminus
\Lambda_x}(z;r,y)\rvert^{\frac{s}{2\lvert \Theta \rvert}} \bigr) .
\end{equation}
\end{lemma}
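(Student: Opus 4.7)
The plan is to derive \eqref{eq:protobound} by applying the iterated geometric resolvent identity \eqref{eq:secondorder} with $\Lambda = W_x$ and then bounding the resulting three-factor expansion by combining the independence structure of the alloy-type potential with the a priori bound of Lemma~\ref{lemma:bounded}.

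First I would apply \eqref{eq:secondorder} with $\Lambda = W_x$. Because the thickened annulus $W_x$ separates $x$ (lying in the interior component $I_x = \Lambda_x \setminus W_x$ of $\Gamma\setminus W_x$) from $y$ (lying in the exterior component $\Gamma\setminus\Lambda_x$), the block-diagonal structure of $H_\Gamma^{W_x}$ annihilates the first two terms in \eqref{eq:secondorder}, leaving only the double-insertion term
\begin{equation*}
G_\Gamma(z;x,y) = \sum_{a,b,c,d} G_{\Gamma\setminus W_x}(z;x,a)\, G_\Gamma(z;b,c)\, G_{\Gamma\setminus\Lambda_x}(z;d,y),
\end{equation*}
where $a\in\partial^{\rm o}W_x\cap I_x$, $b\in W_x$ with $|a-b|_1=1$, $c\in W_x$ with $|c-d|_1=1$, and $d\in\partial^{\rm o}W_x\cap(\Gamma\setminus\Lambda_x)=\partial^{\rm o}\Lambda_x$ (this last identification being a direct geometric consequence of $\Lambda_x = W_x\cup I_x$). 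Raising to the power $t=s/(2|\Theta|)\in(0,1)$ and using subadditivity $|\sum\alpha_i|^t\le\sum|\alpha_i|^t$ yields a pointwise bound on $|G_\Gamma(z;x,y)|^t$ by a sum of triple products of $t$-th powers.

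Next I would take expectation. The definitions of $\hat\Lambda_x$ and $\hat W_x$, together with the thickening $(\cdot)^+$, are engineered precisely so that the $x$-side factor $G_{\Gamma\setminus W_x}(z;x,a)=G_{I_x}(z;x,a)$ depends only on $\omega_j$ with $j\in\Lambda_{L,x}$, whereas the $y$-side factor $G_{\Gamma\setminus\Lambda_x}(z;d,y)$ depends only on $\omega_j$ with $j\not\in\Lambda_{L,x}$; consequently these two factors are stochastically independent. For each quadruple $(a,b,c,d)$ I would choose the small set $N=\{\kappa_b,\kappa_c\}\cup\cN(\kappa_b)\cup\cN(\kappa_c)$ (with $b\in\Theta_{\kappa_b}$ and $c\in\Theta_{\kappa_c}$) and take the partial expectation $\mathbb E_N$ first: since $b,c\in W_x$ forces $N\subset\Lambda_{L,x}$ and $N$ is disjoint from the dependency sets of the two side factors, Lemma~\ref{lemma:bounded}(a) (with parameter $s$) gives $\mathbb E_N[|G_\Gamma(z;b,c)|^t]\le C_s\,\Xi_s(\lambda)$ uniformly in the other $\omega$'s, while the side factors slip out of $\mathbb E_N$. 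Taking the remaining expectations and using independence of the two side factors yields
\begin{equation*}
\mathbb E[|G_\Gamma(z;x,y)|^t]\le C_s\,\Xi_s(\lambda)\sum_{a,b,c,d}\mathbb E[|G_{\Gamma\setminus W_x}(z;x,a)|^t]\,\mathbb E[|G_{\Gamma\setminus\Lambda_x}(z;d,y)|^t].
\end{equation*}
The sums over $b\sim a$ and $c\sim d$ each contribute at most $2d$ terms, so the four-fold sum factorizes into $\sum_a\,\mathbb E[\cdot]$ times $\sum_d\,\mathbb E[\cdot]$ up to constants; rewriting the $d$-sum as $|\partial^{\rm o}\Lambda_x|\cdot|\partial^{\rm o}\Lambda_x|^{-1}\sum_r$ produces the averaged sum on the right-hand side of \eqref{eq:protobound}. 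Sharpening the middle-factor estimate via part (b) of Lemma~\ref{lemma:bounded}, applicable because Assumption~\ref{ass:monotone} forces $b,c$ to lie on $\partial^{\rm i}\Theta_{\kappa_b}$, $\partial^{\rm i}\Theta_{\kappa_c}$ by the definition of $W_x$, contributes the extra factor $\lambda^{-2s/(2|\Theta|)}$, while the combinatorial count of admissible boundary pairs in $\partial^{\rm o}W_x$ and $\partial^{\rm o}\Lambda_x$ absorbs the $L^{3(d-1)}$ volume factor; altogether this reproduces the prefactor $B_s L^{3(d-1)}\Xi_s(\lambda)\lambda^{-2s/(2|\Theta|)}$ that defines $b_s(\lambda,L,\Gamma)$.

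The main obstacle will be the careful bookkeeping of which subsets of $\{\omega_j\}_{j\in\ZZ^d}$ appear in each of the three factors after the expansion; in particular, one must verify that the thickenings in the definitions of $\Lambda_x = \hat\Lambda_x^+\cap\Gamma$ and $W_x = \hat W_x^+\cap\Gamma$ indeed render the $x$-side and $y$-side factors measurable with respect to disjoint sub-$\sigma$-algebras so that the crucial factorization after conditional averaging is clean. A secondary obstacle is the delicate interplay between parts (a) and (b) of Lemma~\ref{lemma:bounded} needed to match the exact power of $\lambda$ and the exact volume factor $L^{3(d-1)}$ demanded by \eqref{eq:fin_cond}.
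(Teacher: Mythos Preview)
Your overall strategy---iterated resolvent expansion plus the a~priori bound---is correct, but the single decoupling at $W_x$ is not enough, and this is why the paper instead decouples \emph{first at $\hat W_x$, then at $W_x$}, producing a five-factor expansion rather than your three-factor one.

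The concrete failure is in the sentence ``$b,c\in W_x$ forces $N\subset\Lambda_{L,x}$ and $N$ is disjoint from the dependency sets of the two side factors.'' Since $W_x=\hat W_x^+$, the endpoint $b$ may lie in $\partial^{\rm o}\hat W_x$ rather than in $\hat W_x=\bigcup_{b'\in B_x}\Theta_{b'}$ itself; in that case there is \emph{no} choice of $\kappa_b\in B_x$ with $b\in\Theta_{\kappa_b}$, and any admissible $\kappa_b$ (together with its neighbours $\cN(\kappa_b)$) can have $\Theta_{\kappa_b}$ or $\Theta_j$, $j\in\cN(\kappa_b)$, intersecting the $x$-component $I_x$. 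A one-dimensional example makes this explicit: with $\Theta=\{0,1,2\}$, $L=5$, $x=0$ one finds $I_x=\{-1,0,1,2,3\}$ and $a=3$, $b=4\in W_x\setminus\hat W_x$; every $\kappa_b\in\{2,3,4\}$ with $4\in\Theta_{\kappa_b}$ has $N\ni 3$, and $\omega_3$ genuinely enters $G_{I_x}$ through $V_\omega(3)$. Thus the partial expectation $\EE_N$ does \emph{not} slip past the $x$-side factor, and your factorisation breaks.

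The paper's remedy is to first apply \eqref{eq:secondorder} with $\Lambda=\hat W_x$, so the middle factor $G_\Gamma(u',v)$ has $u',v\in\hat W_x$ and one may take base points in $B_x$; then, because the two outer factors $G_\Gamma^{\hat W_x}(x,u)$, $G_\Gamma^{\hat W_x}(v',y)$ still depend on $\omega_k$, $k\in\partial^{\rm o}B_x$, one applies \eqref{eq:firstorder} with $\Lambda=W_x$ to peel off one more layer. This yields three ``middle'' factors, handled jointly via H\"older: Lemma~\ref{lemma:bounded}(a) for $G_\Gamma(u',v)$ and Lemma~\ref{lemma:bounded}(b) for the two flanking factors $G_{\Gamma\setminus\hat W_x}(w',u)$, $G_{\Gamma\setminus\hat W_x}(v',r)$ (it is these, not the central factor, that supply the extra $\lambda^{-2s/(2|\Theta|)}$). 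The two outermost factors are then independent of $\{\omega_k\}_{k\in B_x^+}$ because for every $k\in B_x^+$ the translate $\Theta_k$ lies entirely in $W_x$, hence is disjoint from both components of $\Gamma\setminus W_x$. Your acknowledged ``main obstacle'' is thus not a bookkeeping issue but a structural one: two decouplings are required, and the roles of $\hat W_x$ and $W_x$ cannot be collapsed into one.
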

\begin{figure}[t] \centering
\begin{tikzpicture}[scale=0.3]
\usetikzlibrary{patterns}
\draw[very thick] (10,10) rectangle (-10,-10);
\draw[very thin] (0,12)--(12,12)--(12,-4)--(14,-4)--(14,14)--(0,14)--(0,16)--(16,16)--(16,-6)--(12,-6)--(12,-12)--(0,-12)--(0,-8)--(8,-8)--(8,-6)--(0,-6)--(0,-4)--(8,-4)--(8,8)--(0,8)--(0,12);
\filldraw[very thin, opacity=0.07] (0,12)--(12,12)--(12,-4)--(14,-4)--(14,14)--(0,14)--(0,16)--(16,16)--(16,-6)--(12,-6)--(12,-12)--(0,-12)--(0,-8)--(8,-8)--(8,-6)--(0,-6)--(0,-4)--(8,-4)--(8,8)--(0,8)--(0,12);
\draw[dotted] (0,-10)--(0,15);
\filldraw (0,0) circle (5pt); \draw (1,0) node {$x$};
\filldraw (5,7.3) circle (5pt); \draw (3.9,7.3) node {$u$};
\filldraw (5,8.7) circle (5pt); \draw (4,8.7) node {$u'$};
\filldraw (15.3,4) circle (5pt); \draw (15.3,2.85) node {$v$};
\filldraw (16.7,4) circle (5pt); \draw (16.7,3) node {$v'$};
\filldraw (22,8) circle (5pt); \draw (23,8) node {$y$};
\draw (0,0)--(5,7.3);
\draw[very thin, opacity=0.5] (5,7.3)--(5,8.7);
\draw[very thick] (5,8.7)--(15.3,4);
\draw[very thin, opacity=0.5] (15.3,4)--(16.7,4);
\draw (16.7,4)--(22,8);
\end{tikzpicture}
\caption[Illustration of the geometric setting and Eq.~\eqref{eq:diagramatic}]{Illustration of the geometric setting and Eq.~\eqref{eq:diagramatic}
in the case $d = 2$, $\Gamma = \{x \in \mathbb{Z}^2 : x_1 \geq 0\}$, $x = 0$
and $\Theta = ([-2,2]^2 \cup [4,6]^2)\cap \mathbb{Z}^2$.
The light grey region is the set $\hat W_x$ and the black square is the sphere $B_x$.}
\label{fig:geometric2}
\end{figure}
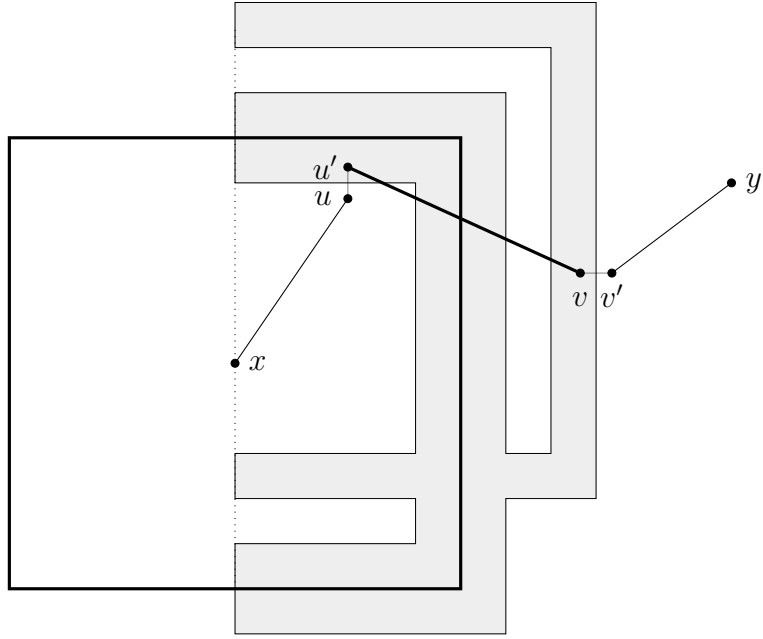
\begin{remark}
Inequality may be iterated, each iteration resulting in an
additional factor of $b_s(\lambda, L,\Gamma)$. Note that each
iteration step brings in Green functions that correspond to modified
domains.
\end{remark}
The finite volume criterion is a direct corollary of Lemma~\ref{lemma:iteration3}.
\begin{proof}[Proof of Theorem \ref{theorem:finite_volume2}]
Inequality~\eqref{eq:protobound} can be iterated as long as
the resulting sequence ($x, r^{(1)}, \ldots, r^{(n)}$) do not get
closer to $y$ than the distance $\tilde L = L + \diam \Theta + 2$.
\par
If $\lvert x-y \rvert_\infty \geq \tilde L$, we iterate Ineq.~\eqref{eq:protobound}
exactly $\lfloor \lvert x-y \rvert_\infty/ \tilde L \rfloor$ times, use the a-priori bound from
Lemma~\ref{lemma:bounded} and obtain
\begin{equation*}
\EE \Bigl ( \bigl| G_\Gamma(z;x,y) \bigr|^{\frac{s}{2\lvert \Theta \rvert}} \Bigr)
\leq
C_s \Xi_s (\lambda) \cdot b^{\textstyle \lfloor \lvert x-y \rvert_\infty/ \tilde L \rfloor}
\leq
\frac{C_s \Xi_s (\lambda)} {b} {\rm e}^{- \mu |x-y|_\infty} , \label{eq:thm1optimalbound}
\end{equation*}
with $\mu = \lvert \ln b \rvert / \tilde L$. If
$\lvert x-y \rvert_\infty < \tilde L$, we use
Lemma~\ref{lemma:bounded} and see that
\[
 \EE \Bigl ( \bigl| G_\Gamma(z;x,y) \bigr|^{s/(2\lvert \Theta \rvert)} \Bigr)
  \leq C_s \Xi_s (\lambda) \leq \frac{C_s \Xi_s (\lambda)}{b} {\rm e}^{- \mu |x-y|_\infty} . \qedhere
\]
\end{proof}
To facilitate the proof of Lemma \ref{lemma:iteration3} we introduce
some extra notation first. Namely, for a set $\Lambda \subset
\ZZ^d$, we define the bond-boundary $\partial^{\rm B} \Lambda$ of
$\Lambda$ as
\[
\partial^{\rm B} \Lambda = \left\{(u,u') \in \ZZ^d \times \ZZ^d :
u\in \Lambda,\ u'\in \ZZ^d \setminus \Lambda,\ \text{and} \ \lvert
u-u' \rvert_1 = 1\right\}\,.
\]
\begin{proof}[Proof of Lemma~\ref{lemma:iteration3}]
Fix $x,y \in \Gamma$ with $y \not \in \Lambda_x$ and set $n = 2\lvert \Theta \rvert$.
It follows from our definition, that the randomness of $H_\Gamma$ at sites $\partial^{\rm o} \hat W_x \cap \Gamma$ does not depend on the random variables $\omega_b$ for any $b\in B_x$, and depends {\it monotonically} on the random variables $\omega_k$ for $k\in \partial^{\rm o} B_x$ (by Assumption \ref{ass:monotone}). A similar statement holds for the randomness at sites $\partial^{\rm o} W_x \cap \Gamma$. We also note that $x,y \not \in W_x$ by our definition of $L$ and since $0 \in \Theta$.
We now choose $\Lambda = \hat W_x$ in Eq.~\eqref{eq:secondorder} and compute the Green function at $(x,y)$:
\begin{multline*}
G_{\Gamma}(z;x,y) =  G_{\Gamma}^{\hat W_x}(z;x,y) + \langle \delta_x , G_{\Gamma}^{\hat W_x} (z) T_{\Gamma}^{\hat W_x} G_{\Gamma}^{\hat W_x} (z) \delta_y \rangle \\
+ \langle \delta_x , G_{\Gamma}^{\hat W_x} (z) T_{\Gamma}^{\hat W_x} G_{\Gamma} (z) T_{\Gamma}^{\hat W_x} G_{\Gamma}^{\hat W_x} \delta_y \rangle .
\label{eq:secondordermatelts}
\end{multline*}
Using Remark \ref{remark:depleted} one can easily check that the first two contributions vanish, thus
\begin{equation}
G_{\Gamma}(z;x,y) = \sum_{\genfrac{}{}{0pt}{2}{(u',u) \in \partial^{\rm B} \hat W_x}{(v,v') \in  \partial^{\rm B} \hat W_x}}
    G_{\Gamma}^{\hat W_x}(z;x,u) G_\Gamma(z;u',v) G_{\Gamma}^{\hat W_x}(z;v',y) .
\label{eq:diagramatic}
\end{equation}
See Fig.~\ref{fig:geometric2} for the geometric setting and an illustration of Eq.~\eqref{eq:diagramatic}.
Note that $u, v' \in \partial^{\rm o} \hat W_x$, while $u', v \in
\hat W_x$. By  construction, the set $\Gamma \setminus \hat W_x$
decomposes into at least two components which are not connected:
One of them contains $x$, another $y$. More than two components may occur if $\Gamma$ or $\Theta$ are not connected, see again Fig.~\ref{fig:geometric2}. By Remark
\ref{remark:depleted}, the summands in Eq.~\eqref{eq:diagramatic}
are only non-zero if $u$ is in the $x$-component of $\Gamma
\setminus \hat W_x$ and $v'$ is in the $y$-component of $\Gamma
\setminus \hat W_x$. This leads us to the definition of a subset of
$\partial^{\rm B} \hat W_x$. We say that $(u,u') \in
\partial_x^{\rm B} \hat W_x$ if $(u,u') \in \partial^{\rm B} \hat
W_x$ and $u'$ is in the $x$-component of $\Gamma \setminus \hat
W_x$. For $\partial_y^{\rm B} \hat W_x$, $\partial_x^{\rm B} W_x$
and $\partial_y^{\rm B} W_x$ we use the analogous definitions.
\par
To get the estimate \eqref{eq:protobound} we first want to average
the fractional moment of the Green function with respect to random
variables $\{\omega_k\}_{k\in B_x^+}$. Note that Lemma~\ref{lemma:bounded} part (a) then guarantees that
\begin{equation} \label{eq:apriori}
\mathbb{E}_{B_x^+} \bigl(\lvert G_\Gamma (z;u',v)\rvert^{s/n}\bigr) \leq C_s \Xi_s (\lambda) .
\end{equation}
However, although the first and the last Green functions in \eqref{eq:diagramatic} do not depend on the random variables $\{\omega_k\}_{k\in B_x}$, they still depend on the random variables $\{\omega_k\}_{k\in B_x^+}$. To factor out this dependence, we apply \eqref{eq:firstorder} again, this time with $\Lambda = W_x$. Then we have for $u$, $v'$ as
above the equalities
\begin{align*}
G_{\Gamma}^{\hat W_x} (z;x,u) &= \sum_{(w',w) \in \partial_x^{\rm B} W_x}  G_{\Gamma}^{W_x}(z;x,w)G_{\Gamma}^{\hat W_x}(z;w',u)
\intertext{and}
G_{\Gamma}^{\hat W_x}(z;v',y)  &= \sum_{(r,r') \in \partial_y^{\rm B} W_x} G_{\Gamma}^{\hat W_x}(z;v',r) G_{\Gamma}^{W_x}(z;r',y) .
\end{align*}
Notice that for $w$ and $r'$ as above, the Green functions $G_{\Gamma}^{W_x}(z;x,w)$ and $G_{\Gamma}^{W_x}(z;$ $r',y)$ are independent of $\{\omega_k\}_{k \in B_x^+}$. Putting everything together, we obtain
\begin{multline}
\mathbb{E}_{B_x^+} \bigl(\lvert G_\Gamma (z;x,y)\rvert^{s/n}\bigr) \leq
\sum \lvert G_{\Gamma}^{W_x}(z;x,w)\rvert^{s/n} \, \lvert G_{\Gamma}^{W_x}(z;r',y)\rvert^{s/n} \\
\times \mathbb{E}_{B_x^+} \bigl(\lvert G_{\Gamma}^{\hat W_x}(z;w',u) G_\Gamma (z;u',v) G_{\Gamma}^{\hat W_x}(z;v',r) \rvert^{s/n}\bigr) ,
\label{eq:optimalbound}
\end{multline}
where  the sum on the right hand side runs over the bonds
\[
(u',u) \in \partial_x^{\rm B} \hat W_x, \ (v,v') \in  \partial_y^{\rm B} \hat W_x, \
(r,r') \in \partial_y^{\rm B} W_x,\ (w',w) \in \partial_x^{\rm B} W_x .
\]
To estimate the expectation of the product on the right hand side we
note that by H\"older inequality it suffices to show that each of
the Green functions raised to the fractional power $3s/n$ and averaged with respect to $B_x^+$ is bounded in an appropriate way. For $\EE_{B_x^+} (|G_\Gamma (z;u',v)|^{3s/n})$ this follows from the a-priory bound \eqref{eq:apriori}.
For the remaining two Green functions it seems at the first glance that we have a problem,
since we only average over $\{\omega_k\}_{k\in B_x^+}$, and Lemma~\ref{lemma:bounded}
in this context requires averaging with respect to $\{\omega_k\}_{k\in
B_x^{++}}$. What comes to our rescue is Assumption \ref{ass:monotone},
which ensures that the dependence on $\{\omega_k\}_{k\in B_x^+}$ is
actually monotone for these Green functions, and the standard
argument of \cite{AizenmanENSS-06} for the monotone case establishes
the required bounds. More precisely, we argue as follows. Since $w',u \in \Gamma \setminus \hat W_x$, we have due to Remark~\ref{remark:depleted} that
\[
G_{\Gamma}^{\hat W_x}(z;w',u) = G_{\Gamma\setminus\hat W_x}(z;w',u).
\]
Notice that $w',u \in \partial^{\rm o} \hat W_x$. Hence there are $b_{1},b_{2} \in \partial^{\rm o} B_x$, such that $w' \in \Theta_{b_1} \cap (\Gamma\setminus\hat W_x) \subset \partial^{\rm i} \Theta_{b_1}$ and $u \in \Theta_{b_2} \cap (\Gamma\setminus\hat W_x) \subset \partial^{\rm i} \Theta_{b_2}$. For the Green function at $(v',r)$ there exist $b_3,b_4 \in \partial^{\rm o} B_x$ with analoguous properties. Thus we may apply Lemma~\ref{lemma:bounded} part (b) and obtain for all $t \in (0,1)$
\begin{equation*}
 \mathbb{E}_{B_x^+} \bigl(\lvert G_{\Gamma}^{\hat W_x}(z;w',u)
 \rvert^{t}\bigr) \leq  D_t \lambda^{-t} \quad \text{and} \quad \mathbb{E}_{B_x^+}
 \bigl(\lvert G_{\Gamma}^{\hat W_x}(z;v',r)\rvert^{t}\bigr) \leq D_t\lambda^{-t} .\label{eq:mon_est}
\end{equation*}
The net result is a bound
\[
 \mathbb{E}_{B_x^+} \bigl(\lvert G_{\Gamma}^{\hat W_x}(z;w',u) G_\Gamma (z;u',v) G_{\Gamma}^{\hat W_x}(z;v',r) \rvert^{s/n}\bigr) \leq E_s \lambda^{-\frac{2s}{n}} \Xi_s (\lambda)
\]
where $E_s = \max\{D_{3s/n} , C_{3s}\}$. Substitution into Ineq.~\eqref{eq:optimalbound} leads to the estimate
\begin{multline} \label{eq:optimalbound1}
 \mathbb{E}_{B_x^+} \bigl(\lvert G_\Gamma (z;x,y)\rvert^{s/n}\bigr) \leq E_s \lambda^{-\frac{2s}{n}} \Xi_s (\lambda) \lvert \partial_x^{\rm B} \hat W_x \rvert \lvert \partial_y^{\rm B} \hat W_x \rvert \\
\times \sum_{\genfrac{}{}{0pt}{2}{(r,r') \in \partial_y^{\rm B} W_x}{(w',w) \in \partial_x^{\rm B} W_x}} \lvert G_{\Gamma}^{W_x}(z;x,w)\rvert^{s/n} \, \lvert G_{\Gamma}^{W_x}(z;r',y)\rvert^{s/n} .
\end{multline}
Now we are in position to perform the expectation with respect to
the rest of random  variables. Note that the two remaining  Green
functions in \eqref{eq:optimalbound1} are stochastically independent. We take expectation in Ineq.~\eqref{eq:optimalbound} and use Remark \ref{remark:depleted} to get
\[
\mathbb{E}\bigl(\lvert G_\Gamma (z;x,y)\rvert^{s/n}\bigr)
 \leq  \frac{E_s \tilde \Phi (\Theta, L)}{\lambda^{2s/n} \Xi_s^{-1} (\lambda)}  \cdot
 \sum_{(r,r')\in\partial_y^{\rm B}  W_x}
 \mathbb{E}\bigl
 ( \lvert G_{\Gamma\setminus W_x}(z;r',y)\rvert^{s/n} \bigr) \\[1ex]
\]
where
 \[
\tilde \Phi (\Theta, L) = \lvert \partial_x^{\rm B} \hat W_x \rvert
\lvert
\partial_y^{\rm B} \hat W_x \rvert \sum_{ (w',w) \in
\partial_x^{\rm B} W_x} \EE \bigl( \lvert G_{\Gamma \setminus W_x}(z;x,w)\rvert^{s/n} \bigr) .
\]
Now we use the fact that each point of  $\partial^{\rm o} \Lambda_x$ shares the bond with at most $2d$ neighbors. Hence, if we set
\[
\Phi (\Theta, L) = 4d^2\,\lvert \partial_x^{\rm B} \hat W_x \rvert
\lvert
\partial_y^{\rm B} \hat W_x \rvert \lvert\partial^{\rm o} \Lambda_x\rvert\,
\sum_{w \in\partial^{\rm o} W_x}
\mathbb{E}\bigl ( \lvert G_{\Gamma\setminus W_x}(z;x,w)
\rvert^{s/n} \bigr) ,
\]
we have the estimate
\begin{equation*}
\mathbb{E}\bigl(\lvert G_\Gamma (z;x,y)\rvert^{s/n}\bigr)
 \leq  \frac{E_s \Phi (\Theta, L) }{ \lambda^{2s/n} \Xi_s^{-1} (\lambda)}
\frac{1}{|\partial^{\rm o} \Lambda_x|} \sum_{r \in\partial^{\rm o}
\Lambda_x} \mathbb{E}\bigl ( \lvert G_{\Gamma\setminus
\Lambda_x}(z;r,y) \rvert^{s/n} \bigr) .
\end{equation*}
Finally, we can bound $\lvert \partial_x^{\rm B} \hat W_x \rvert $,
$\lvert\partial_y^{\rm B} \hat W_x \rvert$ and $\lvert\partial^{\rm o} \Lambda_x\rvert$
by $C_{d,\Theta} L^{d-1}$ with a constant $C_{d,\Theta}$ depending only on $d$ and $\Theta$.
Lemma \ref{lemma:iteration3} now follows by putting everything together.
\end{proof}
\begin{proof}[Proof of Theorem \ref{theorem:exp_decay}]
Notice that by Assumption \ref{ass:monotone}
the random potential is uniformly bounded.
Thus $K := \sup_{\omega \in \Omega} \lVert H_\omega \rVert <\infty$.
Choose $M \geq 1 $ and $ m= K+M$. For $|z| \leq m$ and each $b \in (0,1)$ 
we infer from the a-priori bound (Lemma \ref{lemma:bounded})
that condition \eqref{eq:fin_cond} from Theorem~\ref{theorem:finite_volume2} is satisfied if
$\lambda$ is sufficiently large.

For $|z| \geq m$ we have $\dist(z,\sigma (H_{\Gamma}))\geq M\geq 1$
for all $\omega$. A Combes-Thomas argument (see \cite{CombesT-73},
or Section 11.2 in \cite{Kirsch-08} for an explicit calculation in the discrete setting)
gives the bound
\[
 |G_\Gamma(z; x,y)| \le \frac{2}{M} \euler^{-\gamma |x-y|_1}
\]
for $|z| \geq m$ and arbitrary $x,y \in \Gamma$, where $\gamma :=\min \big(1, \ln \frac{M}{4d}\big)$. Now taking first the fractional power and then the mathematical
expectation gives the desired estimate on
$ \mathbb{E} \bigl(\lvert G_\Gamma (z;x,y)\rvert^{s/(2\lvert \Theta \rvert)}\bigr)$.
This finishes the proof.
\end{proof}
\section{Exponential localization and application to the strong disorder regime; proof of Theorem \ref{theorem:exp_decay_loc} and \ref{theorem:localization}\protect\sectionmark{Exponential localization and application to the strong disorder regime}} \label{sec:loc}
\sectionmark{Exponential localization and application to the strong disorder regime}
In this section we prove exponential localization in the case of
sufficiently large disorder, i.e. that the continuous spectrum of
$H_\omega$ is empty almost surely and that the eigenfunctions
corresponding to the eigenvalues of $H_\omega$ decay exponentially.
\par
The existing proofs of localization via the fractional moment method
use either the Simon Wolff criterion, see e.g.\
\cite{SimonW-86,AizenmanM-93,AizenmanFSH-01}, or the
RAGE-Theorem, see e.g.\
\cite{Aizenman-94,Graf-94,AizenmanENSS-06}. Neither dynamical nor
spectral localization can be directly inferred from the behavior of
the Green function using the existent methods for our model. The
reason is that the random variables $V_\omega (x)$, $x \in \ZZ^d$,
are not independent, while the dependence of $H_\omega$ on the
i.i.d.\ random variables $\omega_k$, $k \in \ZZ^d$, is not
monotone. In this section establish a new variant for concluding exponential localization from bounds on averaged fractional powers of Green function (without using the multiscale analysis). This is done by showing that fractional moment bounds imply the ``typical output'' of the multiscale analysis, i.e.\ the hypothesis of Theorem 2.3 in \cite{DreifusK-89}. Then one can conclude localization using existent methods. The results established in this section have already been published in \cite{ElgartTV-10} and \cite{ElgartTV-11}.
\par
Admittedly, for the discrete alloy-type model it is possible to show localization using the multiscale analysis. The two ingredients of the multiscale analysis are the initial length scale estimate and the Wegner estimate, compare assumptions (P1) and (P2) of \cite{DreifusK-89}. The initial length scale estimate is implied in the strong disorder regime by the exponential decay of an averaged fractional power of Green function, i.e.\ Theorem~\ref{theorem:exp_decay}, using Chebyshev's inequality. A Wegner
estimate for the models considered here follows also from Theorem~\ref{theorem:exp_decay} using Lemma~\ref{lemma:a-priori-wegner}, whereas the papers \cite{Veselic-10a,PeyerimhoffTV-11} establish a Wegner estimate for a larger class of single-site potentials, see also Chapter~\ref{chap:wegner}. Thus a variant of the multiscale analysis of \cite{DreifusK-89} yields pure point spectrum with exponential decaying eigenfunctions for almost all configurations of the randomness. 
\par
Apart from the strong disorder regime, the initial length scale estimate can be established in the weak disorder regime for low energies. In particular, Cao and Elgart \cite{CaoE-11} prove localization for the discrete alloy-type model on $\ell^2 (\ZZ^3)$ at weak disorder and low energies using the multiscale analysis. For the specific result on localization they assume one of the following assumptions:
\begin{enumerate}[(i)]
 \item the single-site potential either decays exponentially,
 \item $\Theta$ is finite, $k \in \ZZ^3$ is such that $(\Theta - i) \cap \Theta = \emptyset$ for all $0 \not = i \in k\ZZ^3$, and the random potential is defined by $V_\omega (x) = \sum_{i \in k \ZZ^3} \omega_k u(x-i)$.
\end{enumerate}
Assumption (i) corresponds to the maximal random setting, while assumption (ii) corresponds to non-overlapping potentials.
\par
In the strong disorder regime Kr\"uger \cite{Krueger-11} proves localization via multiscale analysis for a class of models including the discrete alloy-type model as a special case. Kr\"uger uses some ideas of \cite{Bourgain-09} and establishes a Wegner-like estimate without the use of monotonicity.
\par

\par
For $L >0$ and $x \in \ZZ^d$ we denote by $\Lambda_{L,x} = \{y \in \ZZ^d : \lvert x-y \rvert_\infty \leq L\}$ the cube of side length $2L+1$. Let further $m > 0$ and $E \in
\RR$. A cube $\Lambda_{L,x}$ is called \emph{$(m,E)$-regular} (for a fixed
potential), if $E \not \in \sigma (H_{\Lambda_{L,x}})$ and
\[
 \sup_{w \in \partial^{\rm i} \Lambda_{L,x}} \lvert G_{\Lambda_{L,x}} (E ; x,w) \rvert
\leq \euler^{-m L} .
\]
Otherwise we say that $\Lambda_{L,x}$ is \emph{$(m , E)$-singular}. The next
Proposition states that certain bounds on averaged fractional moments of the Green function imply the hypothesis of Theorem 2.3 in \cite{DreifusK-89} (without applying the induction step of the multiscale analysis). Recall that Assumption \ref{ass:finite} means that $\Theta$ is a finite set.
\begin{proposition} \label{prop:replace-msa}
Let Assumption \ref{ass:finite} be satisfied, $I \subset \RR$ be a bounded interval and $s \in (0,1)$. Assume the following two statements:
\begin{enumerate}[(i)]
\item There are constants $C,\mu \in (0,\infty)$ and $L_0 \in \NN_0$
such that
\[\EE \bigl( \lvert G_{\Lambda_{L,k}} (E + \i \epsilon;x,y) \rvert^{s}
\bigr)\ \leq \  C \euler^{-\mu \lvert x-y \rvert_\infty}\]
for all $k \in \ZZ^d$, $L \in \NN$, $x,y \in \Lambda_{L,k}$ with
$\lvert x-y \rvert_\infty \geq L_0$, all $\epsilon \in (0,1] $ and all $E \in I$.
\item There is a constant $C' \in (0,\infty)$ such that
\[\EE \bigl( \lvert G_{\Lambda_{L,k}} (E+\i \epsilon ;x,x) \rvert^{s}
\bigr) \leq C'\]
for all $k \in \ZZ^d$, $L \in \NN$, $x \in
\Lambda_{L,k}$, $E \in I$ and all $\epsilon\in (0,1]$ .
\end{enumerate}
Then we have for all $L \geq \max\{ 8\ln (8)/\mu , L_0 , -(8/5\mu)\ln (\lvert I \rvert / 2)\}$ and all $x,y \in \ZZ^d$ with $\lvert x-y\rvert_\infty \geq 2L+\diam \Theta + 1$ that
\begin{multline*}
 \PP \bigl(\{\omega \in \Omega \colon \forall \, E \in I \text{ either $\Lambda_{L,x}$ or $\Lambda_{L,y}$ is
$(\mu/8,E)$-regular} \}\bigr)  \\ \geq 1- 8 \lvert \Lambda_{L,x} \rvert(C\lvert I \rvert  + 4C'\lvert \Lambda_{L,x} \rvert / \pi ) \euler^{-\mu sL /8} .
\end{multline*}
\end{proposition}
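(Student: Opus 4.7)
The plan is to bound the complementary ``bad'' event
\[
\mathcal{B}=\{\omega\in\Omega:\exists\,E\in I\text{ at which both }\Lambda_{L,x}\text{ and }\Lambda_{L,y}\text{ are }(\mu/8,E)\text{-singular}\}
\]
by combining three ingredients: the stochastic independence of $H_{\Lambda_{L,x}}$ and $H_{\Lambda_{L,y}}$ coming from the separation of $x$ and $y$; Markov's inequality applied to the fractional-moment bound (i); and a Poisson-kernel/spectral-averaging argument based on the boundedness hypothesis (ii) to deal with the continuum of energies $E\in I$.

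First I would verify independence. The potential value $V_\omega(m)=\sum_k\omega_k u(m-k)$ depends only on $\{\omega_k:k\in m-\Theta\}$, so $H_{\Lambda_{L,x}}$ depends only on the variables indexed by $\Lambda_{L,x}-\Theta$. The finiteness of $\Theta$ (Assumption~\ref{ass:finite}) combined with $\lvert x-y\rvert_\infty\ge 2L+\diam\Theta+1$ forces $\Lambda_{L,x}-\Theta$ and $\Lambda_{L,y}-\Theta$ to be disjoint, so $H_{\Lambda_{L,x}}$ and $H_{\Lambda_{L,y}}$ are independent under $\PP$. Next, for each $E\in\RR$, $\epsilon\in(0,1]$, and $w\in\partial^{\rm i}\Lambda_{L,x}$ with $\lvert x-w\rvert_\infty=L\ge L_0$, Markov applied to hypothesis (i) yields
\[
\PP\bigl(\lvert G_{\Lambda_{L,x}}(E+\i\epsilon;x,w)\rvert>\tfrac12 \euler^{-\mu L/8}\bigr)\le 2^s C\,\euler^{-7\mu sL/8},
\]
with a corresponding bound for the $y$-cube. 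A union bound over the at most $\lvert\Lambda_{L,x}\rvert$ boundary sites, followed by passing $\epsilon\downarrow 0$ via Fatou on the (random) resolvent set, controls $\PP(E\in\tilde\Sigma_x)$, where $\tilde\Sigma_x:=\{E\in I:\max_w\lvert G_{\Lambda_{L,x}}(E;x,w)\rvert>\euler^{-\mu L/8}\}$ is the ``boundary-bad'' set of the $x$-cube.

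The main obstacle is lifting this pointwise bound to a statement uniform in $E\in I$. I would decompose $\Sigma_x=\tilde\Sigma_x\cup(\sigma(H_{\Lambda_{L,x}})\cap I)$ and split $\mathcal{B}$ into an interior/interior piece and three pieces involving at least one spectral component. For the interior/interior piece, independence and Fubini give $\EE\lvert\tilde\Sigma_x\cap\tilde\Sigma_y\rvert=\int_I\PP(E\in\tilde\Sigma_x)\PP(E\in\tilde\Sigma_y)\,dE$; combined with the fact that at the resolution scale $\epsilon=\euler^{-\mu L/8}$ a nonempty $\tilde\Sigma$-intersection forces a Lebesgue measure of at least order $\epsilon$ (since the sublevel sets of $|G|$ near its poles have width of that order), Markov translates this expectation into a bound on $\PP(\tilde\Sigma_x\cap\tilde\Sigma_y\ne\emptyset)$ producing the announced $\lvert I\rvert$-linear contribution. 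The spectral pieces are handled via the Herglotz representation
\[
\im G_{\Lambda_{L,x}}(E+\i\epsilon;x,x)=\sum_k\frac{\epsilon\lvert\phi_k(x)\rvert^2}{(E-E_k)^2+\epsilon^2}
\]
together with Markov applied to hypothesis (ii), which yields $\PP(\dist(E,\sigma(H_{\Lambda_{L,x}}))<\epsilon)\le C'(\sqrt{2}\,\epsilon)^s$; the identity $\int_\RR\epsilon/((E-E_k)^2+\epsilon^2)\,dE=\pi$ then produces the factor $4C'\lvert\Lambda_{L,x}\rvert/\pi$ after summing over the at most $\lvert\Lambda_{L,y}\rvert$ eigenvalues of the $y$-cube and invoking independence.

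The losses incurred by the Markov rescaling at scale $\epsilon=\euler^{-\mu L/8}$ combine with the $\euler^{-7\mu sL/8}$ decay from hypothesis (i) to produce the announced net decay $\euler^{-\mu sL/8}$. The lower bounds $L\ge 8\ln(8)/\mu$, $L\ge L_0$, and $L\ge -\tfrac{8}{5\mu}\ln(\lvert I\rvert/2)$ are precisely those required to absorb the discretisation prefactors and the combinatorial factor of $8$ from assembling the four pieces of $\mathcal{B}$ for the two cubes, finally giving $\PP(\mathcal{B})\le 8\lvert\Lambda_{L,x}\rvert(C\lvert I\rvert+4C'\lvert\Lambda_{L,x}\rvert/\pi)\euler^{-\mu sL/8}$.
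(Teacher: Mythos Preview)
Your proposal has the right ingredients---independence of the two box Hamiltonians, Markov against hypothesis (i), and a Wegner-type input from hypothesis (ii)---but the crucial step of passing from a pointwise-in-$E$ bound to a uniform-in-$E$ statement is not justified, and this is where the paper's proof does real work.

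The gap is your ``interior/interior'' argument: you assert that ``a nonempty $\tilde\Sigma$-intersection forces a Lebesgue measure of at least order $\epsilon$,'' justifying this by a remark about sublevel sets near poles. But $\tilde\Sigma_x$ is a \emph{superlevel} set of $|G_{\Lambda_{L,x}}(\cdot;x,w)|$, and by your own decomposition you are looking at energies \emph{away} from the spectrum; there is no a priori reason why such a superlevel set should contain an interval of any prescribed length around a given point. The paper handles exactly this by a two-threshold trick: one introduces $\Delta_\omega^k=\{E:|G|>\euler^{-\mu L/8}\}$ and the weaker $\tilde\Delta_\omega^k=\{E:|G|>\euler^{-\mu L/4}\}$, controls $\PP(\tilde B_k):=\PP(|\tilde\Delta_\omega^k|>\euler^{-5\mu L/8})$ by Fubini against hypothesis (i), and then uses the first resolvent identity (with $\|G(E)\|\le\dist(E,\sigma)^{-1}$) to show that any $E\in\Delta_\omega^k$ with $\dist(E,\sigma)>2\euler^{-\mu L/8}$ forces an interval of length $4\euler^{-5\mu L/8}$ inside $\tilde\Delta_\omega^k$. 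This yields the implication $\omega\notin\tilde B_k\Rightarrow\Delta_\omega^k\subset\bigcup_i[E_{\omega,k}^i-\delta,E_{\omega,k}^i+\delta]$, so the entire ``bad'' energy set sits near eigenvalues and there is no separate interior piece at all.

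A second, smaller issue: your claimed bound $\PP(\dist(E,\sigma(H_{\Lambda_{L,x}}))<\epsilon)\le C'(\sqrt 2\,\epsilon)^s$ does not follow from hypothesis (ii) by Markov at a single site, because the Herglotz formula only gives $\im G(E+\i\epsilon;x,x)\ge|\phi_k(x)|^2/(2\epsilon)$, and the eigenfunction value $|\phi_k(x)|$ need not be bounded below. The paper instead integrates Stone's inequality over $E$ and sums over $x\in\Lambda_{L,k}$ (so that $\sum_x|\phi_k(x)|^2=1$ is used), obtaining a genuine Wegner bound $\EE(\Tr\chi_{[a,b]}(H_{\Lambda_{L,k}}))\le (4C'/\pi)|b-a|^s|\Lambda_{L,k}|$; this is then combined with Chebyshev and independence to bound the resonance event $\{I_{\omega,x}(\delta)\cap I_{\omega,y}(\delta)\cap I\ne\emptyset\}$.
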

For the proof we shall need that the boundedness of a fractional power of a diagonal Green's function element implies a Wegner estimate. Let us note that a Wegner estimate implies the boundedness of an averaged fractional power of the (finite-volume) Green function. At the moment we only know a proof where the bound depends polynomially on the volume of the cube.
\begin{lemma} \label{lemma:a-priori-wegner}
Let $I \subset \RR$ be an interval, $s \in (0,1)$, $c > 0$, $L \in \NN$ and $k \in \ZZ^d$. Assume there is a constant $C > 0$ such that
\[
 \EE \bigl( \lvert G_{\Lambda_{L,k}} (E+\i \epsilon ; x,x) \rvert^s \bigr ) \leq C
\]
for all $x \in \Lambda_{L,k}$, $E \in I$ and all $\epsilon \in (0,c]$. Then we have for all $[a,b] \subset I$ with $0< b-a \leq c$ that
\begin{equation*} 
\EE \bigl( \Tr \chi_{[a,b]}(H_{\Lambda_{L,k}})  \bigr)
\leq \frac{4C}{\pi}  \lvert b-a \rvert^{s}  \lvert \Lambda_{L,k} \rvert
 .
\end{equation*}
\end{lemma}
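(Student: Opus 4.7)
The plan is to run the standard Poisson-kernel Wegner argument, with the fractional-moment hypothesis entering only at the very end through the pointwise inequality $\im G \leq |G|^{s}|G|^{1-s} \leq \epsilon^{s-1}|G|^{s}$.

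The first step is to establish, for $\epsilon := b-a$, the pointwise inequality
\[
\chi_{[a,b]}(\lambda) \leq \frac{4}{\pi}\int_{a}^{b}\frac{\epsilon}{(\lambda-E)^{2}+\epsilon^{2}}\,\drm E
\qquad(\lambda\in\RR).
\]
For $\lambda\notin[a,b]$ this is trivial; for $\lambda\in[a,b]$ the integral equals $\arctan\bigl(\tfrac{b-\lambda}{\epsilon}\bigr)+\arctan\bigl(\tfrac{\lambda-a}{\epsilon}\bigr)$, and with $t:=(\lambda-a)/(b-a)\in[0,1]$ this becomes $\arctan(t)+\arctan(1-t)$. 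Concavity of $\arctan$ on $[0,\infty)$ shows that this function of $t$ attains its minimum over $[0,1]$ at the endpoints, where it equals $\pi/4$. Hence the inequality holds with constant $4/\pi$.

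The second step is to integrate this against the spectral measures $\mu_{x}$ of $H_{\Lambda_{L,k}}$ at the canonical basis vectors $\delta_{x}$ and use the identity
\[
\im G_{\Lambda_{L,k}}(E+\i\epsilon;x,x)=\int\frac{\epsilon}{(\lambda-E)^{2}+\epsilon^{2}}\,\drm\mu_{x}(\lambda).
\]
Summing over $x\in\Lambda_{L,k}$ gives
\[
\Tr\chi_{[a,b]}(H_{\Lambda_{L,k}}) \leq \frac{4}{\pi}\int_{a}^{b}\sum_{x\in\Lambda_{L,k}}\im G_{\Lambda_{L,k}}(E+\i\epsilon;x,x)\,\drm E.
\]

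The third step interpolates between the a priori bound $\|G_{\Lambda_{L,k}}(E+\i\epsilon)\|\leq 1/\epsilon$ (so that $|G_{\Lambda_{L,k}}(E+\i\epsilon;x,x)|\leq 1/\epsilon$) and the fractional-moment hypothesis:
\[
\im G_{\Lambda_{L,k}}(E+\i\epsilon;x,x) \leq |G_{\Lambda_{L,k}}(E+\i\epsilon;x,x)| \leq \epsilon^{s-1}|G_{\Lambda_{L,k}}(E+\i\epsilon;x,x)|^{s}.
\]
Taking expectations, using that $\epsilon=b-a\in(0,c]$ and $E\in[a,b]\subset I$ so the hypothesis applies, and performing the integration yields
\[
\EE\bigl(\Tr\chi_{[a,b]}(H_{\Lambda_{L,k}})\bigr)
\leq \frac{4}{\pi}\,(b-a)\,|\Lambda_{L,k}|\,\epsilon^{s-1}\,C
= \frac{4C}{\pi}\,(b-a)^{s}\,|\Lambda_{L,k}|,
\]
which is the claimed bound. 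There is no real obstacle; the only mildly delicate point is the geometric constant in the first step, but the concavity argument handles it cleanly and delivers exactly the constant $4/\pi$ claimed in the statement.
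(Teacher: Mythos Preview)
Your proof is correct and follows essentially the same route as the paper's: both use the $\arctan$/Poisson-kernel lower bound with constant $\pi/4$ to obtain an inequality version of Stone's formula, then interpolate via $|G|\le \epsilon^{s-1}|G|^s$ and set $\epsilon=b-a$. The only cosmetic differences are that you justify the $\pi/4$ constant by a concavity argument (the paper simply asserts it) and fix $\epsilon=b-a$ at the outset rather than optimizing at the end.
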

\begin{proof}
 Let $[a,b] \subset I$ with $0<b-a
\leq c$. Since we have for any $\lambda \in \RR$ and $0<\epsilon \leq b-a$
\[
 \arctan \left( \frac{\lambda - a}{\epsilon} \right) - \arctan \left( \frac{\lambda - b}{\epsilon} \right) \geq \frac{\pi}{4} \ \chi_{[a,b]}(\lambda) ,
\]
one obtains an inequality version of Stones formula:
\[
 \langle \delta_x , \chi_{[a,b]} (H_{\Lambda_{L,k}}) \delta_x \rangle
\leq \frac{4}{\pi} \int_{[a,b]} \im \left\{ G_{\Lambda_{L,k}} (E+ \i \epsilon ; x,x) \right\} \drm E \quad \forall \, \epsilon \in (0, b-a] .
\]
Using triangle inequality, $\lvert \im z\rvert \leq \lvert z\rvert$
for $z \in \CC$, Fubini's theorem, $\lvert G_{\Lambda_{L,k}} (E+\i
\epsilon ; x,x) \rvert^{1-s} \leq \dist (\sigma(H_{\Lambda_{L,k}}) ,
E+i \epsilon)^{s-1} \leq \epsilon^{s-1}$ and the hypothesis of the lemma, we
obtain for all $\epsilon \in (0,b-a]$
\begin{align*}
\EE \bigl( \Tr \chi_{[a,b]}(H_{\Lambda_{L,k}}) \bigr) & \leq \EE \Bigl( \sum_{x \in \Lambda_{L,k}} \frac{4}{\pi} \int_{[a,b]} \im \left\{ G_{\Lambda_{L,k}} (E+\i \epsilon ; x,x) \right\} \drm E  \Bigr) \\
&  \leq  \frac{\epsilon^{s-1}}{\pi / 4}  \sum_{x \in \Lambda_{L,k}} \int_{[a,b]} \EE \Bigl( \bigl|  G_{\Lambda_{L,k}} (E+\i \epsilon ; x,x)  \bigr|^{s} \Bigr) \drm E   \\
& \leq 4\pi^{-1}\epsilon^{s-1}  \lvert \Lambda_{L,k} \rvert \, \lvert b-a \rvert C .
\end{align*}
We minimize the right hand side by choosing $\epsilon = b-a$ and obtain the statement of the lemma.
\end{proof}
\begin{proof}[Proof of Proposition~\ref{prop:replace-msa}]
By assumption (and Lemma \ref{lemma:a-priori-wegner}) a Wegner estimate holds. Therefore, for any $L \in \NN$ and any $k \in \ZZ^d$ the probability of finding an eigenvalue of $H_{\Lambda_{L,k}}$ in $[a,b] \subset I$ shrinks to zero as $b-a \to 0$. Hence, for each $E \in I$ there is a set $\Omega_E \subset \Omega$ with $\PP (\Omega_E) = 1$, such that for all $k \in \ZZ^d$, $L\in\NN$ and $\omega \in \Omega_E$ we have that $E$ is not an eigenvalue of $H_{\Lambda_{L,x}}$ and the resolvent of $H_{\Lambda_{L,x}}$ at $E$ is well defined. Lebesgues Theorem gives for all $E \in I$
\begin{align*}
 C\euler^{- \mu \lvert x-y \rvert_\infty} &\geq \lim_{\epsilon \to 0} \EE (\lvert G_{\Lambda_{L,x}} (E + \i \epsilon ; x,y) \rvert^s) = \lim_{\epsilon \to 0} \int_{\Omega_E} \lvert G_{\Lambda_{L,x}} (E + \i \epsilon ; x,y)  \rvert^s \PP (\drm \omega) \\
&= \int_{\Omega_E} \lvert G_{\Lambda_{L,x}} (E ; x,y)  \rvert^s \PP (\drm \omega) 
 = \EE (\lvert G_{\Lambda_{L,x}} (E ; x,y) \rvert^s) 
\end{align*}
Note that for each $E \in I$, the function $\omega \mapsto G_{\Lambda_{L,x}} (E ; x,y)$ is defined on a set of full $\PP$-measure. 
\par
Set $n = \diam \Theta + 1$. Fix $L \in \NN$ with $L \geq \max\{8 \ln (8)/\mu , L_0\}$ and $x,y \in \ZZ^d$ such that $\lvert x-y \rvert_\infty \geq 2L+n$. For $\omega \in \Omega$ and $k \in
\{x,y\}$ we define the sets
\begin{align}
 \Delta_\omega^k &:= \{E \in I : \sup_{w \in \partial^{\rm i} \Lambda_{L,k}} \lvert
G_{\Lambda_{L,k}} (E ; k,w) \rvert > \euler^{-\mu L /8}\}, \nonumber \\
\tilde \Delta_\omega^k &:= \{E \in I : \sup_{w \in \partial^{\rm i} \Lambda_{L,k}}
\lvert G_{\Lambda_{L,k}} (E ; k,w) \rvert > \euler^{-\mu L/4 }\}, \nonumber \\
\text{and} \quad \tilde B_k &:= \{\omega \in \Omega : \mathcal{L} \{\tilde \Delta_\omega^k\} >
 \euler^{-5\mu L /8} \} . \label{eq:deltatilde}
\end{align}
Since the resolvent of $H_{\Lambda_{L,k}}$ at $E$ is not defined if $E$ is an eigenvalue of $H_{\Lambda_{L,k}}$, let us emphasize that we want the eigenvalues of $H_{\Lambda_{L,k}}$ to be included in the sets $\Delta_\omega^k$ and $\tilde\Delta_\omega^k$, $k \in \{x,y\}$.
For $\omega \in \tilde B_k$ we have
\begin{align*}
\sum_{w \in \partial^{\rm i} \Lambda_{L,k}}  \int_I \lvert G_{\Lambda_{L,k}} (E ; k,w) \rvert^{s} \drm E 
&= \sum_{w \in \partial^{\rm i} \Lambda_{L,k}}  \int_I \lvert G_{\Lambda_{L,k}} (E ; k,w) \rvert^{s}  \drm E \\
& \geq
\int_{\tilde\Delta_\omega^k} \sup_{w \in \partial^{\rm i} \Lambda_{L,k}} \lvert G_{\Lambda_{L,k}} (E ; k,w) \rvert^{s/N} \drm E \\[1ex]
& > \euler^{-5 \mu L / 8} \euler^{-\mu L s/ 4} > \euler^{-7 \mu L / 8}.
\end{align*}
Note again, that the integrands of the above equation are defined on a set of full Lebesgue measure.
Using Hypothesis (i) of the assertion, we obtain
\begin{align*}
 \PP (\tilde B_k) &< \euler^{7\mu L / 8} \sum_{w \in \partial^{\rm i} \Lambda_{L,k}} \int_{\tilde B_k} \int_I \lvert G_{\Lambda_{L,k}} (E ; k,w) \rvert^{s}\chi_{\{E \not \in \sigma (H_{\Lambda_{L,k}})\}}(E) \drm E \PP (\drm \omega)  \\
& \leq \euler^{7\mu L / 8} \sum_{w \in \partial^{\rm i} \Lambda_{L,k}}  \int_I \int_{\Omega} \lvert G_{\Lambda_{L,k}} (E ; k,w) \rvert^{s}\chi_{\{E \not \in \sigma (H_{\Lambda_{L,k}})\}}(E)  \PP (\drm \omega) \drm E   \\
&\leq \lvert \Lambda_{L,k} \rvert \, \lvert I \rvert C \euler^{-\mu L /8} .
\end{align*}
For $k \in \{x,y\}$ we denote by $\sigma (H_{\Lambda_{L,k}}) =
\{E_{\omega,k}^i\}_{i=1}^{\lvert \Lambda_{L,k} \rvert}$ the spectrum of
$H_{\Lambda_{L,k}}$. We claim that for $k \in \{x,y\}$,
\begin{equation} \label{eq:claim}
 \omega \in \Omega \setminus \tilde B_k \quad \Rightarrow
 \quad \Delta_\omega^k \subset \bigcup_{i=1}^{\lvert \Lambda_{L,k} \rvert}
 \bigl[E_{\omega,k}^i-\delta , E_{\omega,k}^i + \delta \bigr] =:
 I_{\omega,k}(\delta),
\end{equation}
where $\delta = 2\euler^{-\mu L / 8}$. Indeed, suppose that $E\in \Delta_\omega^k \setminus\{E_{\omega , k}^1, \ldots , E_{\omega , k}^{\lvert \Lambda_{L,k} \rvert}\}$ and $\dist\big(E,\sigma
(H_{\Lambda_{L,k}})\big)>\delta$. Then there exists $w \in \partial^{\rm i} \Lambda_{L,k}$ such
that $\lvert G_{\Lambda_{L,k}} (E;k,w) \rvert > \euler^{-\mu L / 8}$.
For any $E'$ with $\lvert E-E'\rvert \le 2\euler^{-5\mu L / 8}$ we have
$\delta -\lvert E-E'\rvert\ge \euler^{-\mu L / 8}\ge 2\euler^{-3\mu L / 8}  $ since $L > 8 \ln (8) / \mu$.
Moreover, the first resolvent identity and the estimate $\lVert (H-E)^{-1} \rVert \leq \dist
(E,\sigma (H))^{-1}$ for selfadjoint $H$ and $E \in \CC\setminus \sigma (H)$ implies
\begin{align*}
\lvert G_{\Lambda_{L,k}} (E ; k,w) - G_{\Lambda_{L,k}} (E' ; k,w)\rvert &\leq \ \lvert E-E'\rvert \cdot \lVert G_{\Lambda_{L,k}} (E)\rVert\cdot\lVert G_{\Lambda_{L,k}} (E') \rVert \\[1ex]
& \leq \frac{1}{2} \euler^{-\mu L / 8} ,
\end{align*}
 and hence
\[
 \lvert G_{\Lambda_{L,k}} (E' ; k,w)\rvert
\ > \ \frac{\euler^{-\mu L / 8}}{2} \geq \euler^{-\mu L / 4}
\]
for $L \geq 8 \ln (8) / \mu$.
We infer that $[E-2\euler^{-5\mu L / 8},E+2\euler^{-5\mu L / 8}]\cap I \subset  \tilde \Delta_\omega^k$ and conclude $\mathcal{L} \{\tilde \Delta_\omega^k\} \ge 2\euler^{-5\mu L / 8}$, since $\lvert I \rvert \geq 2 \euler^{-5\mu L / 8}$ by assumption.
This is however impossible if
$\omega\in \Omega \setminus \tilde B_k$ by \eqref{eq:deltatilde},
hence the claim \eqref{eq:claim} follows.
\par
In the following step we use Hypothesis (ii) of the assertion and Lemma~\ref{lemma:a-priori-wegner} to deduce a Wegner-type estimate. More presicely, we have for all $[a,b] \subset I$ with $0 < b-a \leq 1$ the Wegner estimate
\begin{equation} \label{eq:wegner}
\EE \bigl( \Tr \chi_{[a,b]}(H_{\Lambda_{L,x}}) \bigr)
\leq 4\pi^{-1} C' \lvert b-a \rvert^{s}  \lvert \Lambda_{L,x} \rvert
=: C_{\rm W} \lvert b-a \rvert^{s}  \lvert \Lambda_{L,x} \rvert .
\end{equation}
Now we want to estimate the probability of the event $B_{\rm res} := \{\omega \in \Omega :  I \cap I_{\omega,x}(\delta) \cap I_{\omega,y}(\delta) \not = \emptyset \}$
that there are ``resonant'' energies for the two box Hamiltonians $H_{\Lambda_{L,x}}$ and $H_{\Lambda_{L,y}}$.
For this purpose we denote by $\Lambda_{L,x}'$ the set of all lattice sites $k \in \ZZ^d$
whose coupling constant $\omega_k$ influences the potential in $\Lambda_{L,x}$,
i.\,e. $\Lambda_{L,x}' = \cup_{x \in \Lambda_{L,x}} \{k \in \ZZ^d : u(x-k) \not = 0)\}$.
Notice that the expectation in Ineq.~\eqref{eq:wegner} may therefore be replaced by $\EE_{\Lambda_{L,x}'}$.
Moreover, since $\lvert x-y \rvert_\infty \geq 2L + n$, the operator $H_{\Lambda_{L,y}}$ and hence the eigenvalues $E_{\omega , y}^i$, $i \in \{1,\ldots , \lvert \Lambda_{L,y} \rvert\}$ are independent of $\omega_k$, $k \in \Lambda_{L,x}'$. To estimate the probability of $B_{\rm res}$ we use the product structure of the measure and denote 
\[
\PP_{\Gamma} = \prod_{k \in \Gamma} \mu \text{ for } \Gamma \subset \ZZ^d, \quad \Omega \ni \omega = (\omega_1 , \omega_2) \in \Omega_{\Lambda_{L,x}'} \times \Omega_{\ZZ^d \setminus \Lambda_{L,x}'},
\]
and for each $\omega_2 \in \Omega_{\ZZ^d \setminus \Lambda_{L,x}'}$ we define $\tilde B_{\rm res}(\omega_2) = \{\omega_1 \in \Omega_{\Lambda_{L,x}'} \colon (\omega_1 , \omega_2) \in B_{\rm res}\}$.
Since the eigenvalues $E_{\omega , y}^i$, $i \in \{1,\ldots , \lvert \Lambda_{L,y} \rvert\}$ are independent of $\omega_k$, $k \in \Lambda_{L,x}'$, we obtain for any $\omega_2 \in \Omega_{\ZZ^d \setminus \Lambda_{L,x}'}$ using \v Ceby\v sev's inequality and the estimate \eqref{eq:wegner} that
\begin{align*}
\nonumber
\PP_{\Lambda_{L,x}'} (\tilde B_{\rm res} (\omega_2) )& \leq \sum_{i=1}^{\lvert \Lambda_{L,y} \rvert}
\PP_{\Lambda_{L,x}'} \bigl(\{ \omega_1 \in \Omega_{\Lambda_{L,x}'} \colon \Tr  \chi_{I \cap [E_{\omega,y}^i-2\delta , E_{\omega,y}^i + 2\delta ]} (H_{\Lambda_{L,x}}) \geq 1 \} \bigr) \\
& \leq \sum_{i=1}^{\lvert \Lambda_{L,y} \rvert}
\EE_{\Lambda_{L,x}'} \bigl( \Tr \bigl( \chi_{I \cap [E_{\omega,y}^i-2\delta , E_{\omega,y}^i + 2\delta ]}  (H_{\Lambda_{L,x}}) \bigr) \bigr) \\
&\leq \lvert \Lambda_{L,y} \rvert  C_{\rm W} (4\delta)^{s}\lvert \Lambda_{L,x} \rvert.
\end{align*}
Consequently, we get by Fubini's theorem
\begin{equation}\label{eq:wegner-application}
 \PP (B_{\rm res}) \leq \lvert \Lambda_{L,y} \rvert  C_{\rm W} (4\delta)^{s}\lvert \Lambda_{L,x} \rvert .
\end{equation}
Notice that $4\delta \leq 1$, since $L \geq 8 \ln 8$. Consider now an $\omega \not \in \tilde B_x \cup \tilde B_y$. Recall that \eqref{eq:claim} tells us that $\Delta_\omega^x \subset  I_{\omega,x}(\delta)$
and $\Delta_\omega^y \subset  I_{\omega,y}(\delta)$. If additionally $\omega \not \in B_{\rm  res}$ then no $E \in I$ can be in
$\Delta_\omega^x $ and $\Delta_\omega^y$ simultaneously. Hence for each $E \in I$ either  $\Lambda_{L,x}$ or $\Lambda_{L,y}$
is $(\mu/8,E)$-regular. A contraposition gives us
\begin{align*}
\PP \bigl(\bigl\{&\omega \in \Omega \colon \text{$\exists \, E \in I$,
$\Lambda_{L,x}$ and $\Lambda_{L,y}$ are $(\mu/8,E)$-singular} \bigr\}\bigr) \\
&\le \PP (\tilde B_x) +\PP (\tilde B_y ) + \PP (B_{\rm res} )
\\ &\leq 2\lvert \Lambda_{L,x} \rvert \, \lvert I \rvert C
\euler^{-\mu L /8} + \lvert \Lambda_{L,y} \rvert  C_{\rm W}
(4\delta)^{s}\lvert \Lambda_{L,x} \rvert,
\end{align*}
from which the result follows.
\end{proof}
In the proof of Proposition \ref{prop:replace-msa} its Hypothesis (ii)
was only used to obtain a Wegner estimate, i.e.\ Eq.~\eqref{eq:wegner}.
Hence, if we know that a Wegner estimate holds
for some other reason, e.g.~from \cite{Veselic-10a},
 we can relinquish the Hypothesis (ii) and
skip the corresponding argument in the proof of Proposition
\ref{prop:replace-msa}. Specifically, the following assertion holds
true.
\begin{proposition} \label{prop:replace-msa-Wegner}
Let Assumption \ref{ass:finite} be satisfied, $I \subset \RR$ be a bounded interval and $s \in (0,1)$. Assume the following two statements:
\begin{enumerate}[(i)]
\item There are constants $C,\mu \in (0,\infty)$ and $L_0 \in \NN_0$ such that
\[\EE \bigl(\lvert G_{\Lambda_{L,k}} (E;x,y) \rvert^{s} \bigr)
\leq C \euler^{-\mu \lvert x-y \rvert_\infty}\]
for all $k \in \ZZ^d$, $L \in \NN$, $x,y \in \Lambda_{L,k}$ with
$\lvert x-y \rvert_\infty \geq L_0$, and all $E \in I$.
\item There are constants $C_{\rm W}\in (0,\infty)$, $ \beta \in (0,1]$, and $D \in \NN$
such that
\[
\PP\bigl(\{\omega \in \Omega \colon \sigma(H_{\Lambda_{L}} ) \cap
[a,b]\not=\emptyset\}\bigr) \leq C_{\rm W}{\lvert b-a\rvert}^\beta \,
L^D
\]
for all $L \in \NN$ and all $[a,b]\subset I$.
\end{enumerate}
Then we have for all $L \geq \max\{8 \ln (2)/\mu , L_0 , -(8/5\mu)\ln (\lvert I \rvert / 2)\}$ and all $x,y \in \ZZ$ with $\lvert x-y\rvert_\infty \geq 2L+\diam \Theta + 1$ that
\begin{multline*}
 \PP \bigl( \{\omega \in \Omega \colon \forall \, E \in I \text{ either $\Lambda_{L,x}$ or $\Lambda_{L,y}$ is
$(\mu/8,E)$-regular} \} \bigr)  \\ \geq 1- 8(2L+1)^d\rvert(C \, \lvert I \rvert  + C_{\rm W}L^D  ) \euler^{-\mu \beta L/8} .
\end{multline*}
\end{proposition}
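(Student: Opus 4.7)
The plan is to follow the proof of Proposition~\ref{prop:replace-msa} essentially verbatim, with two cosmetic simplifications: since Hypothesis~(i) is now stated for real energies $E$, there is no need for the dominated-convergence argument that converted the $\epsilon\to 0$ limit into a statement about real $E$; and the Wegner-type bound \eqref{eq:wegner} that was previously derived via Lemma~\ref{lemma:a-priori-wegner} is now supplied directly by Hypothesis~(ii), with exponent $\beta$ and polynomial volume factor $L^D$ in place of $s$ and $|\Lambda_{L,x}|$.

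Concretely, for fixed $L$ and $x,y\in\ZZ^d$ with $|x-y|_\infty\ge 2L+\diam\Theta+1$, I would define for each $k\in\{x,y\}$ the sets $\Delta^k_\omega$, $\tilde\Delta^k_\omega$ and $\tilde B_k$ exactly as in \eqref{eq:deltatilde}, with the thresholds $e^{-\mu L/8}$, $e^{-\mu L/4}$ and $e^{-5\mu L/8}$ respectively. A direct application of Chebyshev and Hypothesis~(i), summed over $w\in\partial^{\rm i}\Lambda_{L,k}$, gives $\PP(\tilde B_k)\le |\Lambda_{L,k}|\,|I|\,C\,e^{-\mu L/8}$. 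The geometric claim that $\omega\notin \tilde B_k$ implies $\Delta^k_\omega\subset I_{\omega,k}(\delta)$ with $\delta=2e^{-\mu L/8}$ is established exactly as in the original proof, using the first resolvent identity together with $\|(H_{\Lambda_{L,k}}-E)^{-1}\|\le\dist(E,\sigma(H_{\Lambda_{L,k}}))^{-1}$ and the lower bound $L\ge 8\ln(2)/\mu$ to ensure that the resolvent differences stay below half the threshold.

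The resonance event $B_{\rm res}=\{\omega\colon I\cap I_{\omega,x}(\delta)\cap I_{\omega,y}(\delta)\ne\emptyset\}$ is then handled by the same Fubini-plus-independence argument: because $|x-y|_\infty\ge 2L+n$, the eigenvalues $E^i_{\omega,y}$ of $H_{\Lambda_{L,y}}$ are measurable with respect to the coupling constants outside $\Lambda_{L,x}'$, so one can condition on these, apply Hypothesis~(ii) to the (at most $|\Lambda_{L,y}|$) intervals $[E^i_{\omega,y}-2\delta,E^i_{\omega,y}+2\delta]\cap I$, and obtain $\PP(B_{\rm res})\le |\Lambda_{L,y}|\,C_{\rm W}(4\delta)^\beta L^D$. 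Contrapositively, for $\omega\notin\tilde B_x\cup\tilde B_y\cup B_{\rm res}$ no $E\in I$ can lie in $\Delta^x_\omega\cap\Delta^y_\omega$, hence at least one of the two boxes is $(\mu/8,E)$-regular.

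Combining the three probability estimates, substituting $\delta=2e^{-\mu L/8}$, and absorbing the constants $2$, $2|I|C$, $C_{\rm W}8^\beta$ into the factor $8$ of the stated bound yields the claim, where the exponential decay rate $\mu\beta/8$ and the prefactor $(2L+1)^d\,L^D$ arise directly from the Wegner hypothesis. I do not foresee a genuine obstacle here; the only care required is bookkeeping, in particular ensuring that $4\delta\le 1$ (guaranteed by $L\ge 8\ln 2/\mu$), that the hypothesis $|I|\ge 2e^{-5\mu L/8}$ used in the $\Delta^k_\omega\subset I_{\omega,k}(\delta)$ step is implied by $L\ge -(8/5\mu)\ln(|I|/2)$, and that the exponent $\beta$ from Hypothesis~(ii) is tracked through the final estimate rather than being replaced with $s$.
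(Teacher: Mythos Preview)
Your proposal is correct and follows exactly the approach the paper takes: the paper's own proof consists of a single sentence saying ``proceed as in the proof of Proposition~\ref{prop:replace-msa}, but replace Ineq.~\eqref{eq:wegner-application} by $\PP_{\Lambda_{L,x}'}(B_{\rm res})\le |\Lambda_{L,y}|\,C_{\rm W}(4\delta)^\beta L^D$,'' which is precisely your plan. One inconsequential slip: $L\ge 8\ln(2)/\mu$ gives only $4\delta\le 4$, not $4\delta\le 1$; but since Hypothesis~(ii) here is stated for all $[a,b]\subset I$ without a length restriction, the bound $4\delta\le 1$ is not actually needed in this proposition.
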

\begin{proof}
We proceed as in the proof of Proposition \ref{prop:replace-msa}, but replace Ineq.~\eqref{eq:wegner-application} by
\begin{align*}
\PP_{\Lambda_{L,x}'} (B_{\rm res})&\leq \sum_{i=1}^{\lvert \Lambda_{L,y} \rvert}
\PP_{\Lambda_{L,x}'} \bigl(\{\omega \in \Omega \colon  I \cap \sigma(H_{\Lambda_{L,x}}) \cap [E_{\omega,y}^i-2\delta , E_{\omega,y}^i + 2\delta ] \neq \emptyset\}\bigr) \\
 &\leq \lvert \Lambda_{L,y} \rvert  C_{\rm W} (4\delta)^{\beta} L^D
\end{align*}
to obtain the desired bound.
\end{proof}
\begin{remark} \label{remark:assii}
Note that the conclusions of Proposition \ref{prop:replace-msa} and \ref{prop:replace-msa-Wegner} tell us that the probabilities of $\{\forall \, E \in I \text{ either $\Lambda_{L,x}$ or $\Lambda_{L,y}$ is $(\mu/8,E)$-regular} \}$ tend to one exponentially fast as $L$ tends to infinity. In particular, for any $p>0$ there is some $\tilde L \in \NN$ such that for all $L \ge \tilde L$:
\[
 \PP \bigl(\{\omega \in \Omega \colon \forall \, E \in I \text{ either $\Lambda_{L,x}$ or $\Lambda_{L,y}$ is $(m,E)$-regular} \}\bigr)\ge 1- L^{-2p}.
\]
\end{remark}
We will conclude exponential localization from the estimates provided
by Proposition \ref{prop:replace-msa} and \ref{prop:replace-msa-Wegner} using Theorem 2.3 in
\cite{DreifusK-89}. More precisely, Theorem 2.3 in \cite{DreifusK-89} was stated for the case $u(0) = 1$ and $u (k) = 0$ for $k \in \ZZ^d \setminus \{0\}$, wherefore we need a slight extension, which can be proven with the same arguments as the
original result. For completeness and convenience of the reader we will give a proof. Let us emphasize that Theorem~\ref{thm:vDK-2.3} does not need any assumption on the single-site potential $u$ or the measure $\mu$. In particular, the single-site potential may have unbounded support.

\begin{theorem} \label{thm:vDK-2.3}
Let $a\in \NN$, $l \in \NN_0$, $I\subset \RR$ be an interval and let $p>d$, $ L_0>1$, $\alpha \in (1,2p/d)$ and $m>0$.
Set $L_{k} =L_{k-1}^\alpha$, for $k \in \NN$. Suppose that for any $k \in \NN_0$ and any $x,y \in \ZZ^d$ with $\lvert x-y\rvert_\infty \geq aL_k + l$
\[
 \PP \bigl( \{\omega \in \Omega \colon \forall \, E \in I \text{ either $\Lambda_{L_k,x}$ or $\Lambda_{L_k,y}$ is $(m,E)$-regular} \} \bigr)\ge 1- L_k^{-2p} .
\]
Then $H_\omega$ exhibits exponential localization in $I$ for almost all $\omega \in \Omega$.
\end{theorem}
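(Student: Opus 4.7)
The plan is to follow the classical multiscale analysis conclusion argument of von Dreifus--Klein~\cite{DreifusK-89}, combining a Borel--Cantelli extraction of a deterministic geometric statement with a Schnol-type theorem and an iterated geometric resolvent inequality for polynomially bounded generalized eigenfunctions.

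First I would apply Borel--Cantelli. For $x,y\in\ZZ^d$ and $k\in\NN_0$ set
\[
B_{k,x,y}=\bigl\{\omega\colon \exists\, E \in I \text{ such that both } \Lambda_{L_k,x} \text{ and } \Lambda_{L_k,y} \text{ are } (m,E)\text{-singular}\bigr\}.
\]
By hypothesis $\PP(B_{k,x,y})\le L_k^{-2p}$ whenever $\lvert x-y\rvert_\infty\ge aL_k+l$. Fix $y_0\in\ZZ^d$ and introduce the annulus $\mathcal{A}_{k,y_0}=\{x\in\ZZ^d\colon aL_k+l\le\lvert x-y_0\rvert_\infty\le 2L_{k+1}\}$. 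A union bound gives $\PP\bigl(\bigcup_{x\in\mathcal{A}_{k,y_0}}B_{k,x,y_0}\bigr)\le CL_{k+1}^d L_k^{-2p}=CL_k^{d\alpha-2p}$, which is summable in $k$ because $\alpha<2p/d$. Borel--Cantelli, combined with a countable union over $y_0\in\ZZ^d$, then yields a full-measure set $\tilde\Omega\subset\Omega$ such that for every $\omega\in\tilde\Omega$ and every $y_0\in\ZZ^d$ there exists $k_1=k_1(\omega,y_0)$ with: for all $k\ge k_1$, all $x\in\mathcal{A}_{k,y_0}$ and all $E\in I$, at least one of $\Lambda_{L_k,x}$, $\Lambda_{L_k,y_0}$ is $(m,E)$-regular.

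Second, I would invoke a Schnol-type theorem to reduce to a deterministic decay statement: the spectral measure of $H_\omega$ in $I$ is supported on those $E\in I$ admitting a nonzero polynomially bounded solution $\psi$ of $H_\omega\psi=E\psi$, say $\lvert\psi(x)\rvert\le C_\psi(1+\lvert x\rvert_\infty)^N$ for some $N=N(d)$. Hence it suffices to prove, for every $\omega\in\tilde\Omega$, that every such generalized eigenfunction with $E\in I$ decays exponentially; this yields pure point spectrum in $I$ with exponentially decaying eigenfunctions, i.e.\ exponential localization in $I$ in the sense of Definition~\ref{def_spectral_loc}. To run this, fix $\omega\in\tilde\Omega$, $E\in I$, and $\psi\not\equiv 0$ polynomially bounded with $H_\omega\psi=E\psi$, and pick $y_0$ with $\psi(y_0)\neq 0$. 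Whenever $\Lambda_{L_k,z}$ is $(m,E)$-regular, a geometric resolvent identity of the same form as~\eqref{eq:firstorder} (applied with $\Lambda=\Lambda_{L_k,z}$) yields
\[
\lvert\psi(z)\rvert\le C L_k^{d-1}\,\euler^{-m L_k}\sup_{v\in\partial^{\rm o}\Lambda_{L_k,z}}\lvert\psi(v)\rvert.
\]
Applied at $z=y_0$ this contradicts $\psi(y_0)\neq 0$ as soon as $CL_k^{d-1+N}(2+L_k)^N\,\euler^{-mL_k}<\lvert\psi(y_0)\rvert$; hence for all $k\ge k_2(\omega,y_0,E)$ the box $\Lambda_{L_k,y_0}$ is $(m,E)$-singular, and by Step~1 every $\Lambda_{L_k,x}$ with $x\in\mathcal{A}_{k,y_0}$ is $(m,E)$-regular. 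For $x\in\ZZ^d$ with $\lvert x-y_0\rvert_\infty$ large, choose $k$ so that $x\in\mathcal{A}_{k,y_0}$ (roughly $L_k\sim\lvert x-y_0\rvert_\infty^{1/\alpha}$), apply the bound, and iterate along a chain of overlapping regular boxes; the polynomial bound on $\psi$ controls the supremum at the final step and converts the scale-dependent prefactor into a single exponential $\lvert\psi(x)\rvert\le \tilde C\,\euler^{-\tilde m\lvert x-y_0\rvert_\infty}$ for some $\tilde m>0$ and all sufficiently distant $x$. Thus $\psi\in\ell^2(\ZZ^d)$, $E$ is a genuine eigenvalue of $H_\omega$, and the corresponding eigenfunction decays exponentially.

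The hard part is the last step, namely transferring the scale-dependent bound $CL_k^{d-1}\euler^{-m L_k}$ into a clean single exponential rate in $\lvert x-y_0\rvert_\infty$ despite the geometric growth $L_{k+1}=L_k^{\alpha}$ and the polynomial prefactors $L_k^{d-1+N}$ arising after each application of the resolvent bound. This telescoping needs the exponent $\tilde m$ to remain strictly positive after swallowing all polynomial factors, and it is precisely the constraints $\alpha\in(1,2p/d)$ and $p>d$ that simultaneously secure Borel--Cantelli summability in Step~1 and the positivity of $\tilde m$ in Step~3.
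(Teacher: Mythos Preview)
Your proposal is correct and follows essentially the same route as the paper's proof: Borel--Cantelli on annular bad events, a Schnol-type input producing polynomially bounded generalized eigenfunctions, singularity of the center box at large scales forced by $\psi(y_0)\neq 0$, and then iteration of the geometric resolvent bound inside an annulus of regular boxes. The paper makes the ``hard part'' you flag precise by introducing a free outer-radius parameter $b$ and a sub-annulus $A_{k+1}'(x_0)\subset A_{k+1}(x_0)$ governed by $\rho\in(0,1)$, so that any $x\in A_{k+1}'(x_0)$ has distance at least $\rho\lvert x-x_0\rvert_\infty$ to $\partial^{\rm i}A_{k+1}(x_0)$; this guarantees at least $\lfloor \rho\lvert x-x_0\rvert_\infty/(L_k+1)\rfloor$ iterations of the box bound before the chain can exit the annulus, which is exactly what turns $e^{-mL_k}$ into $e^{-\tilde m\lvert x-x_0\rvert_\infty}$. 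One small misattribution: the hypothesis $p>d$ is not what secures positivity of $\tilde m$ in the decay step (that step uses only $m>0$ and the iteration count); rather, $p>d$ ensures that the interval $(1,2p/d)$ for $\alpha$ is nonempty and gives room in the Borel--Cantelli estimate.
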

\begin{proof}
Let $b$ be a positive integer to be chosen later on. For $x_0 \in \ZZ^d$ let
\[
 A_{k+1} (x_0) = \Lambda_{b(a L_{k+1} + l) , x_0} \setminus \Lambda_{aL_{k} + l , x_0}
\]
for $k\in \NN_0$. Let us define the event
\begin{multline*}
 E_k (x_0 ) = \{\omega \in \Omega \colon \text{$\Lambda_{L_k , x_0}$ and $\Lambda_{L_k , x}$ are $(m,E)$-singular} \\ \text{for some $E \in I$ and some $x \in A_{k+1}(x_0)$}\} .
\end{multline*}
By construction we have for each $x \in A_{k+1}(x_0)$ that $\lvert x-x_0 \rvert_\infty \geq aL_k + l$. Hence, we obtain by our hypothesis
\[
 \PP \bigl( E_k (x_0) \bigr) \leq \sum_{x \in A_{k+1}(x_0)} L_k^{-2p} 
 \leq \frac{(2baL_{k+1} + 2bl + 1)^d}{L_k^{2p}} \leq \frac{(2ba + 2bl L_0^{-1} + L_0^{-1})^d}{L_k^{2p - \alpha d}}
\]
for all $k \in \NN_0$. Since $\alpha d < pd$, we have $\sum_{k=0}^\infty \PP ( E_k (x_0) )  < \infty$. It follows from Borel Cantelli Lemma that for each $x_0 \in \ZZ^d$ we have $\PP \{ E_k (x_0)$ occurs infinitely often$\}=0$. Since a countable union of sets of measure zero has measure zero, we obtain
\[
 \PP \bigl(\{\omega \in \Omega \colon \exists \, x_0 \in \ZZ^d\text{ such that $E_k (x_0)$ occurs for infinitely many $k \in \NN$} \}\bigr) = 0 .
\]
If we let $\Omega_0 = \{\omega \in \Omega \colon \text{for all $x_0 \in \ZZ^d$, $E_k (x_0)$ occurs only finitely many times} \}$, we have that $\PP (\Omega_0) = 1$. In particular, for each $\omega \in \Omega_0$ and $x_0 \in \ZZ^d$ there is a $k_1 = k_1 (\omega, x_0) \in \NN$ such that if $k \geq k_1$ then $E_k (x_0)$ does not occur.
\par
Now let $\omega \in \Omega_0$, $E \in I$ be a generalized eigenvalue of $H_\omega$ with the corresponding non-zero polynomially bounded generalized eigenfunction $\psi$, i.e.\ $H_\omega \psi = E \psi$, $\lvert \psi (x) \rvert \leq C (1 + \lvert x \rvert)^t$ for some positive constant $C$ and positive integer $t$, and we now choose $x_0 \in \ZZ^d$ such that $\psi (x_0) \not = 0$.
\par
If $\Lambda_{L_k,x_0}$ would be $(m,E)$-regular, then $E \not \in \sigma (H_{\Lambda_{L_k},x_0})$ and therefore we can recover $\psi$ from its boundary values, i.e.
\begin{align} \label{eq:boundary}
 \lvert \psi (x_0) \rvert &= \Biggl\lvert \sum_{i \in \partial^{\rm i} \Lambda_{L_k,x_0}} G_{\Lambda_{L_k,x_0}} (E;x_0,i) \sum_{y \in (\Lambda_{L_k,x_0})^{\rm c} :  \lvert i-y \rvert = 1} \psi (y) \Biggr\rvert\\
&\leq \sum_{i \in \partial^{\rm i} \Lambda_{L_k,x_0}} \euler^{-mL_k} 2d C(2+L_k+\lvert x_0 \rvert)^t \nonumber.
\end{align}
Since $\psi (x_0) \not = 0$, it follows that there exists $k_2 = k_2 (\omega , E, x_0) \in \NN$ such that $\Lambda_{L_k,x_0}$ is $(m,E)$-singular for all $k \geq k_2$. Let $k_3 = k_3 (\omega,E,x_0) = \max\{k_1,k_2\}$. If $k \geq k_3$ we conclude that $\Lambda_{L_k,x}$ is $(m,E)$-regular for all $x \in A_{k+1} (x_0)$.
\par
Now let $\rho \in (0,1)$ be given. We pick $b > (1 + \rho) /(1-\rho)$ and define
\[
 A_{k+1}' (x_0) = \Lambda_{b(a L_{k+1} + l)/(1+\rho) , x_0} \setminus \Lambda_{(aL_{k} + l)/(1-\rho) , x_0} .
\]
We claim that
\begin{enumerate}[(i)]
 \item $A_{k+1}' (x_0) \subset A_{k+1} (x_0)$ for $k \in \NN_0$,
 \item if $x \in A_{k+1}' (x_0)$ then $\dist (x,\partial^{\rm i} A_{k+1} (x_0)) \geq \rho \lvert x-x_0 \rvert_\infty$, and
 \item if $x \not \in \Lambda_{(aL_0 + l)/(\rho - 1) , x_0}$ then $x \in A_{k+1}'$ for some $k \in \NN_0$.
\end{enumerate}
Here $\dist(m,A) = \inf_{k \in A} \lvert m-k \rvert_{\infty}$ for $k \in \ZZ^d$ and $A \subset \ZZ^d$.
Claim (i) and (iii) are obvious. To see (ii) we estimate the distance of $x\in A_{k+1}' (x_0)$ to the inner and outer boundary of $A_{k+1} (x_0)$, respectively. For the inner boundary we use $\lvert x-x_0 \rvert_\infty \leq aL_k + l + \dist (x,\Lambda_{aL_k + l , x_0})$ and $\lvert x-x_0 \rvert_\infty \geq (aL_k + l)/(1-\rho)$ to conclude 
\[
\dist (x , \Lambda_{aL_k + l , x_0}) \geq \lvert x-x_0 \rvert_\infty - (1-\rho) \lvert x-x_0 \rvert_\infty =  \rho \lvert x-x_0 \rvert_\infty .
\]
For the outer boundary we use the triangle inequality $\dist (x_0 , \partial^{\rm i} \Lambda_{b(aL_{k+1}+l) , x_0}) \leq \lvert x-x_0 \rvert_\infty + \dist (x,\partial^{\rm i} \Lambda_{b(aL_{k+1}+l) , x_0})$, $\lvert x-x_0 \rvert_\infty \leq b(aL_{k+1}+l)/(1+\rho)$ and $\dist(x_0, \partial^{\rm i} \Lambda_{b(aL_{k+1}+l) , x_0}) = b(aL_{k+1}+l)$ to conclude
\[
 \dist (x,\partial^{\rm i} \Lambda_{b(aL_{k+1}+l) , x_0}) \geq \dist (x_0 , \partial^{\rm i} \Lambda_{b(aL_{k+1}+l) , x_0}) -  \lvert x-x_0 \rvert_\infty
 \geq \rho \lvert x-x_0 \rvert_\infty .
\]
Hence the claim (2) follows.
\par
Now let $k \geq k_3$, so that $\Lambda_{L_k , y}$ is $(m,E)$-regular for any $y \in A_{k+1} (x_0)$. Let $x \in A_{k+1}' (x_0) \subset A_{k+1} (x_0)$. Again by by Eq.~\eqref{eq:boundary},
\[
 \lvert \psi (x) \rvert \leq (2L_k + 1)^d \euler^{-mL_k} 2d \lvert \psi (u_1) \rvert
\]
for some $u_1 \in \partial^{\rm o} \Lambda_{L_k + 1 ,x}$. If $u_1 \in A_{k+1} (x_0)$ we obtain
\[
  \lvert \psi (x) \rvert \leq \bigl[(2L_k + 1)^d \euler^{-mL_k} 2d\bigr]^2 \lvert \psi (u_2) \rvert
\]
for some $u_2 \in \partial^{\rm o} \Lambda_{L_k , u_1}$. By claim (ii) we can repeat this procedure at least $\lfloor\rho \lvert x-x_0 \rvert_\infty /(L_k+1)\rfloor$ times, use the polynomial bound on $\psi$ and obtain for all $k \geq k_3$ and all $x \in A_{k+1}' (x_0)$ the inequality
\[
  \lvert \psi (x) \rvert \leq \bigl[(2L_k + 1)^d \euler^{-mL_k} 2d\bigr]^{\left\lfloor\rho \lvert x-x_0 \rvert_\infty /(L_k+1)\right\rfloor} C \bigl(1+\lvert x_0 \rvert_\infty + b(aL_{k+1} + l) \bigr)^t .
\]
We can rewrite the above inequality  as
\begin{multline*}
  \lvert \psi (x) \rvert \leq 
\exp\Biggl\{ -  \left\lfloor \frac{\rho \lvert x-x_0 \rvert_\infty}{L_k + 1} \right\rfloor \rho m L_k \Biggl\} 
\exp\Biggl\{  \left\lfloor \frac{\rho \lvert x-x_0 \rvert_\infty}{L_k + 1} \right\rfloor \bigg[d \ln (2L_k + 1)\Biggr. \\ + \ln (2d)  - (1-\rho)m L_k \bigg]   \Biggl.+ t \ln \left(C(1+\lvert x_0 \rvert_\infty + b(aL_{k+1} + l)\right) \Biggr\} .
\end{multline*}
Since $(aL_k + l)/(1-\rho)\leq\lvert x-x_0 \rvert_\infty \leq b(a L_k^\alpha + l)/(1+\rho)$, the second exponential function gets smaller than one if $k$ is sufficiently large. Let $\rho' \in (0,1)$ and choose $\rho$ such that $\rho > 1/(1+a-\rho' a)$. We obtain that the first exponential function is bounded from above by
\begin{align*}
&\phantom{\leq} \exp\Biggl\{ -  \left( \frac{\rho \lvert x-x_0 \rvert_\infty}{L_k + 1} - 1 \right ) \rho m L_k \Biggl\}\\ 
&\leq \exp \Biggl \{ \rho m L_k \Biggr\} \exp \Biggl \{- \rho^2 m\lvert x-x_0 \rvert_\infty \frac{L_k}{L_k + 1} \Biggr\} \\
& \leq \exp \Biggl \{ \rho m L_k \biggl[ 1 - (1-\rho') \rho  \frac{\lvert x-x_0 \rvert_\infty}{L_k + 1} \biggr] \Biggr\} \exp \Biggl \{- \rho^2 \rho' m\lvert x-x_0 \rvert_\infty \frac{L_k}{L_k + 1} \Biggr\} .
\end{align*}
Again, using the lower bound on $\lvert x-x_0 \rvert_\infty$ and the relation between $\rho$ and $\rho'$, we see that the first exponential function gets smaller than one if $k$ is sufficiently large. Hence, if we pick $\rho'' \in (0,1)$ we find $k_4 \in \NN$ such that for all $k \geq k_4$ and all $x \in A_{k+1}' (x_0)$ we have
\begin{equation} \label{eq:ende}
 \lvert \psi (x) \rvert \leq \exp \Biggl \{- \rho^2 \rho' m\lvert x-x_0 \rvert_\infty \rho'' \Biggr\} .
\end{equation}
By claim (iii) we conclude that for all $x \in \ZZ^d \setminus \Lambda_{(aL_{k_4} + l)/(1-\rho) , x_0}$ we have Ineq.~\eqref{eq:ende}.
\par
We have shown that all generalized eigenfunctions to eigenvalues in $I$ are in $\ell^2 (\ZZ^d)$ and decay exponentially fast. To end the proof we use the well known fact that if every generalized eigenfunction in $I$ is in $\ell^2 (\ZZ^d)$, then there is no continuous spectrum in $I$, see e.g.\ \cite[Theorem 1.2]{FroehlichMSS-85}. This uses the fact that there is a spectral measure, such that (with respect to this spectral measure) almost all energies are generalized eigenvalues \cite{Berezanskii-68,Simon-82}. 
\end{proof}

\begin{proof}[Proof of Theorem \ref{theorem:exp_decay_loc}]
We assume first that $I$ is a \emph{bounded} interval.
In this case the assumptions of Proposition~\ref{prop:replace-msa} are satisfied. Combining the
latter with Theorem~\ref{thm:vDK-2.3} and Remark~\ref{remark:assii}
we arrive  to the desired result.

If $I$ is an \emph{unbounded} interval, we can cover it by a countable collection
of bounded intervals. In each of those, exponential localization holds
by the previous arguments for all $\omega$ outside a set of zero measure.
Since the collection of intervals is countable, we have exponential localizaition in $I$ almost surely.
\end{proof}

\begin{proof}[Proof of Theorem~\ref{theorem:localization}]
We use Theorem~\ref{theorem:exp_decay}  to verify that the hypothesis of
Theorem~\ref{theorem:exp_decay_loc} is satisfied with $I=\RR$. This yields
the desired result.
\end{proof}
\chapter{Wegner estimate for discrete alloy-type models} \label{chap:wegner}
In this chapter we prove a Wegner estimate\index{Wegner estimate} for the discrete alloy-type model under Assumption~\ref{ass:exponential}, i.e. under the assumption that the measure $\nu$ has a density $\rho \in \BV (\RR)$ and there are constants $C,\alpha>0$ such that for all $k \in \ZZ^d$ we have $\lvert u(k) \rvert \leq C \euler^{-\alpha \lvert k \rvert_1}$. This result relies on a joint work with Norbert Peyerimhoff and Ivan Veseli\'c and has already been published in \cite{PeyerimhoffTV-11}.
\par
A Wegner estimate \cite{Wegner-81} is an upper bound on the expected number of eigenvalues of finite box Hamiltonians $H_{\Lambda_l}$ in a bounded energy interval $[E-\epsilon , E+\epsilon] \subset \RR$. Wegner estimates are inequalities of the type
\[
 \forall \, l\in\NN,\ E\in\RR, \ \epsilon>0 \colon \quad
 \EE \bigl( \Tr \chi_{[E-\epsilon , E + \epsilon]} (H_{\Lambda_L}) \bigr) \leq C_{\rm W} (2\epsilon)^a \, (2l+1)^{bd}
\]
with some (Wegner-)constant $C_{\rm W} > 0$, some $a \in (0,1]$ and some $b \in [1,\infty)$. The exponent $a$ determines the quality of the estimate with respect to the length of the energy interval and $b$ the quality with respect to the volume of the cube $\Lambda_l$. The best possible estimate is obtained in the case $a=1$ and $b=1$. 
\par
Wegner estimates are required as an input of the multiscale analysis to prove localization. More precisely, multiscale analysis yields exponential localization in any energy interval where a Wegner estimate and an initial length scale estimate\index{initial length scale estimate}, the other ingredient of the multiscale analysis, holds.
\section{Abstract Wegner estimate and the proof of Theorem~\ref{theorem:wegner}} \label{sec:abstract_wegner}
In \cite{KostrykinV-06} an abstract Wegner estimate for the continuous alloy-type model was established. A discrete analogue of this result will be applicable in our situation. The proof is a straight forward adaptation of \cite{KostrykinV-06} to the discrete setting. For completeness we will give a short proof.
\par
Let us also recall the definition of the space $\BV (\RR)$, see e.g.\ \cite{Ziemer-89}. The space of functions of finite total variation\index{finite total variation} $\BV (\RR)$ is 
the set of integrable functions $f : \RR \to \RR$ whose distributional derivative\index{distributional derivative} is a signed Borel measure with finite variation, i.e.
\[
 \BV (\RR) := \{f : \RR \to \RR  \colon f \in L^1 (\RR),\ D f \ \text{is a measure},\ \lvert D f \rvert (\RR) < \infty\} .
\]
To say that a distributional derivative $Df$ of an integrable function $f:\RR \to \RR $ is a measure means that there exists a signed Borel measure $Df$ on $\RR$ such that
\[
 \int_\RR \phi \drm Df = - \int_\RR f \phi' \drm x
\]
for all $\phi \in C_{\rm c}^\infty (\RR)$. Here $\phi \in C_{\rm c}^\infty (\RR)$ denotes the set of real continuous functions on $\RR$ with compact support which have derivatives of all orders. 
A norm on $\BV (\RR)$ is defined by 
\[
\lVert f \rVert_{\BV} := \lVert f \rVert_{L^1} + \lVert f \rVert_{\rm Var}, 
\]
where
\[
\lVert f \rVert_{\rm Var} := \lvert D f \rvert (\RR) = \sup \Bigl\{ \int_\RR f v' \drm x \colon v \in C_{\rm c}^\infty (\RR) , \ \lvert v \rvert \leq 1 \Bigr\} .
\]
Note that if $f \in W^{1,1} (\RR)$ then $f \in \BV (\RR)$. In particular, if $f \in W^{1,1} (\RR)$ we have $\lVert f \rVert_{W^{1,1}} = \lVert f \rVert_{\BV}$ and $\lVert f' \rVert_{L^1} = \lVert f \rVert_{\rm Var}$.
\begin{theorem} \label{theorem:abstract2}
Assume $\rho \in \mathrm{BV}(\RR)$ and that there is a number $l_0 \in \NN$ such that for arbitrary $l \geq l_0$ and every $j \in \Lambda_{l}$ there is a compactly supported $t_{j,l} \in \ell^1 (\ZZ^d)$ such that
\begin{equation}\label{eq:wegner:ass}
\sum_{k \in \ZZ^d} t_{j,l} (k) u(x-k) \geq \delta_j (x) \quad \text{for all} \quad x \in \Lambda_{l} .
\end{equation}
Let further $I:=[E_1,E_2]$ be an arbitrary interval. Then for any $l \geq l_0$
\begin{equation*}
\EE ( \Tr \chi_I (H_{\Lambda_{l}})) \leq \frac{1}{2\lambda}\lVert \rho \rVert_{\rm Var} \lvert I \rvert \sum_{j \in \Lambda_l} \lVert t_{j,l} \rVert_{\ell^1} .
\end{equation*}
\end{theorem}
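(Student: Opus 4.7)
My plan is to adapt the strategy of \cite{KostrykinV-06}, which establishes an analogous Wegner estimate for continuum alloy-type Schr\"odinger operators, to the present discrete setting. The proof combines three ingredients: a trace decomposition into diagonal matrix elements indexed by lattice sites $j \in \Lambda_l$, a pointwise operator inequality coming from the covering assumption \eqref{eq:wegner:ass}, and a spectral averaging estimate for $\BV$-densities applied in the direction $t_{j,l}$ in the random-parameter space. I would first write
\[
\EE\bigl(\Tr \chi_I(H_{\Lambda_l})\bigr) = \sum_{j \in \Lambda_l} \EE\bigl(\langle\delta_j, \chi_I(H_{\Lambda_l})\delta_j\rangle\bigr),
\]
reducing the task to bounding $\EE\bigl(\langle\delta_j, \chi_I(H_{\Lambda_l})\delta_j\rangle\bigr) \leq \tfrac{|I|}{2\lambda}\|\rho\|_{\rm Var}\|t_{j,l}\|_{\ell^1}$ for each fixed $j$.

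Fix $j \in \Lambda_l$ and let $W_j$ denote the multiplication operator on $\ell^2(\Lambda_l)$ associated with the function $x \mapsto \sum_k t_{j,l}(k) u(x-k)$. The hypothesis \eqref{eq:wegner:ass} translates precisely into the operator inequality $W_j \geq P_j$ on $\ell^2(\Lambda_l)$, where $P_j := |\delta_j\rangle\langle\delta_j|$. Since $\chi_I(H_{\Lambda_l}) \geq 0$, trace monotonicity yields
\[
\langle\delta_j, \chi_I(H_{\Lambda_l})\delta_j\rangle = \Tr\bigl(P_j \chi_I(H_{\Lambda_l})\bigr) \leq \Tr\bigl(W_j \chi_I(H_{\Lambda_l})\bigr).
\]
Observe next that shifting $\omega \mapsto \omega + \tau t_{j,l}$ induces the affine one-parameter family $H_{\Lambda_l}(\omega + \tau t_{j,l}) = H_{\Lambda_l}(\omega) + \lambda \tau W_j$, so the direction $t_{j,l}$ is one along which $H_{\Lambda_l}$ moves monotonically dominated by $\lambda P_j$. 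By a change of coordinates on the finite set $\RR^{\supp t_{j,l}}$ that isolates a variable $\tau$ aligned with $t_{j,l}$ (using an arbitrary $k_0 \in \supp t_{j,l}$ with $t_{j,l}(k_0) \neq 0$), the product density $\prod_k \rho(\omega_k)$ decomposes; marginalizing out the orthogonal directions leaves a density in $\tau$ given by a convolution of rescaled copies of $\rho$, whose total variation is bounded by $\|\rho\|_{\rm Var}$ thanks to the standard estimate $\|f * g\|_{\rm Var} \leq \|f\|_{L^1}\|g\|_{\rm Var}$ combined with $\|\rho\|_{L^1} = 1$.

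Applying a rank-one $\BV$-spectral averaging lemma of Combes-Hislop type in the $\tau$-variable, with reference vector $\delta_j$ and perturbation $\lambda W_j \geq \lambda P_j$, would then produce the per-site bound $\EE\bigl(\langle\delta_j, \chi_I(H_{\Lambda_l})\delta_j\rangle\bigr) \leq \tfrac{|I|}{2\lambda}\|\rho\|_{\rm Var}\|t_{j,l}\|_{\ell^1}$; summing over $j$ concludes the proof. The main obstacle is the careful bookkeeping in the change of variables and the application of the spectral averaging lemma, so that the factor $\|t_{j,l}\|_{\ell^1}$ emerges correctly from the Jacobian and the $\BV$-propagation through convolution while preserving the sharp constant $\tfrac{1}{2\lambda}\|\rho\|_{\rm Var}|I|$. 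Notably, this scheme bypasses the potentially sign-changing nature of the individual single-site potentials $u_k$: they enter the proof only through the collective direction $W_j$, whose positivity and dominance over the rank-one projection $P_j$ is guaranteed by \eqref{eq:wegner:ass} rather than by any sign constraint on $u$ itself.
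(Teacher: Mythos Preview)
Your high-level plan---trace decomposition over $j\in\Lambda_l$, the pointwise domination $W_j\ge P_j$ coming from \eqref{eq:wegner:ass}, a linear change of variables isolating a scalar $\tau$ in the direction $t_{j,l}$, and then Combes--Hislop type spectral averaging in $\tau$---is exactly the strategy of the paper. The gap is in the step where you say ``marginalizing out the orthogonal directions leaves a density in $\tau$ given by a convolution of rescaled copies of $\rho$, whose total variation is bounded by $\|\rho\|_{\rm Var}$'': this is where the factor $\|t_{j,l}\|_{\ell^1}$ has to appear, and neither a marginal density, nor a convolution bound, nor a Jacobian produces it.

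After the change of variables (the paper takes $\omega_o=t(o)\tau$, $\omega_k=t(k)\tau+t(o)\eta_k$ for $k\neq o$, with $o$ a fixed site where $t(o)\neq 0$), the Hamiltonian still depends on \emph{all} transverse variables $\eta_k$, so you cannot integrate them out before averaging: the spectral averaging inequality \eqref{eq:average_proj} must be applied for each fixed $(\eta_k)_{k\neq o}$, and what it leaves behind is $\frac{|I|}{\lambda}\sup_\tau k(\tau,\eta)$, where $k(\tau,\eta)=|t(o)|^{|\Sigma|}\prod_{k\in\Sigma}\rho(\omega_k)$ is the transformed density (a \emph{product}, not a convolution). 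The actual marginal of $\tau$ is just $|t(o)|\,\rho(t(o)\,\cdot)$, with variation $\|\rho\|_{\rm Var}$ and no trace of $\|t\|_{\ell^1}$---but that marginal is irrelevant here. What the paper does is bound $\sup_\tau k\le\tfrac12\int|\partial_\tau k|\,\drm\tau$ and then apply the \emph{product rule}: since each factor $\rho(\omega_k)$ depends on $\tau$ through $\omega_k=t(k)\tau+\cdots$, one gets
\[
\partial_\tau k \;=\; |t(o)|^{|\Sigma|}\sum_{k\in\Sigma} t(k)\,\rho'(\omega_k)\prod_{i\neq k}\rho(\omega_i),
\]
and integrating over all variables (back in $\omega$-coordinates) yields exactly $\sum_k|t(k)|\,\|\rho'\|_{L^1}=\|t_{j,l}\|_{\ell^1}\|\rho\|_{\rm Var}$. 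The $\BV$ case is then obtained by approximating $\rho$ by $C_{\rm c}^\infty$ densities with controlled variation (Lemma~\ref{lemma:approx}). So the missing idea is precisely this product-rule mechanism; once you insert it in place of the convolution/marginalization claim, your sketch becomes the paper's proof.
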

For the proof of Theorem \ref{theorem:abstract2} we will use an estimate on averages of spectral projections of certain self-adjoint operators. More precisely, let $\mathcal{H}$ be a Hilbert space and consider the following operators on $\mathcal{H}$. Let $H$ be self-adjoint, $W$ symmetric and $H$-bounded, $J$ bounded and non-negative with $J^2 \leq W$, $H (\zeta) = H + \zeta W$ for $\zeta \in \RR$, and $\chi_I (H(\zeta))$ the corresponding spectral projection onto an Interval $I \subset \RR$. Then, for any $g \in L^\infty (\RR) \cap L^1 (\RR)$, $\psi \in \mathcal{H}$ with $\lVert \psi \rVert = 1$ and bounded interval $I \subset \RR$,
\begin{equation} \label{eq:average_proj}
\int_\RR \bigl\langle \psi , J \chi_I (H(\zeta)) J \psi \bigr\rangle g(\zeta) \drm \zeta \leq \lVert g \rVert_\infty \lvert I \rvert .
\end{equation}
For a proof of Ineq.~\eqref{eq:average_proj} we refer to \cite{CombesH-94} where compactly supported $g$ is considered. The vadility of Ineq.~\eqref{eq:average_proj} for the non-compactly supported case was first noted in \cite{FischerHLM-97}. For a detailed proof we refer to \cite[Lemma~5.3.2]{Veselic-08}.
\begin{proof}[Proof of Theorem \ref{theorem:abstract2}]
We follow the arguments in \cite{KostrykinV-06}.
 In order to estimate the terms of the sum in the expectation $\EE (\Tr \chi_I (H_{\Lambda_l})) = \sum_{j \in \Lambda_{l}} \EE (\lVert \chi_I(H_{\Lambda_l}) \delta_j \rVert^2)$ we fix $l \geq l_0$ and $j \in \Lambda_{l}$, and set $\Sigma = \supp t_{j,l} \subset \ZZ^d$ and $t = t_{j,l}$. Recall that
\[
 H_{\Lambda_{l}} = -\Pro_{\Lambda_{l}}\Delta \Inc_{\Lambda_{l}} + \lambda \sum_{k \in \ZZ^d \setminus \Sigma} \omega_k u (\cdot - k) +  \lambda \sum_{k \in  \Sigma} \omega_k u (\cdot - k).
\]
We pick some $o \in \Sigma$ with $t (o) \not = 0$ and denote by $M$ the finite dimensional linear transformation $(\eta_k)_{k \in \Sigma} \mapsto (\omega_k)_{k \in \Sigma} = M (\eta_k)_{k \in \Sigma}$ defined as follows: $\omega_o = t (o) \eta_o$ and $\omega_k = t (k)  \eta_o + t(o) \eta_k$  for $k \in \Sigma \setminus \{o\}$. Note that $M$ is invertible and $\lvert \det M\rvert = \lvert t(o) \rvert^{\lvert \Sigma \rvert}$. With this transformation there holds for arbitrary fixed $(\omega_k)_{k \in \ZZ^d \setminus \Sigma}$
\begin{align*}
\int_{\RR^{\lvert \Sigma \rvert}} \bigl\lVert \chi_I (H_{\Lambda_l}) \delta_j \bigr\rVert^2 \prod_{k \in \Sigma} \rho (\omega_k) \drm \omega_k = 
\int_{\RR^{\lvert \Sigma \rvert}} \bigl\lVert \chi_I (H_{\Lambda_l}^\eta) \delta_j \bigr\rVert^2 k (\eta) \drm \eta , 
\end{align*}
where $\drm \eta = \prod_{k \in \Sigma} \drm \eta_k$, $k (\eta) = \lvert t (o) \rvert^{\lvert \Sigma \rvert} \rho (t(o) \eta_o) \prod_{k \in \Sigma \setminus\{o\}} \rho (t(k) \eta_o + t(o) \eta_k)$, and
\begin{multline*}
H_{\Lambda_l}^\eta = -\Pro_{\Lambda_{l}} \Delta \Inc_{\Lambda_{l}} + \lambda \sum_{k \in \ZZ^d \setminus \Sigma} \omega_k u (\cdot - k) +  t(o) \lambda \sum_{k \in \Sigma \setminus \{o\}} \eta_k u(\cdot-k) \\ + \eta_o \lambda \sum_{k \in \Sigma} t(k) u(\cdot-k) .
\end{multline*}
We denote by $P_j : \ell^2 (\ZZ^d) \to \ell^2 (\ZZ^d)$ the orthogonal projection given by $P_j \phi = \phi (j) \delta_j$ and apply Ineq.~\eqref{eq:average_proj} with the choice $H = H_{\Lambda_l}^\eta - \eta_o \lambda \sum_{k \in \Sigma} t(k) u(\cdot-k)$, $W = \lambda \sum_{k \in \Sigma} t(k) u(\cdot-k)$, $\zeta = \eta_o$ and $J = \sqrt{\lambda} P_j$. This gives by Lebesgue's theorem and the hypothesis of the theorem the estimate
\begin{align} 
\int_{\RR^{\lvert \Sigma \rvert}} \bigl\lVert \chi_I (H_{\Lambda_l}) \delta_j \bigr\rVert^2 \prod_{k \in \Sigma} \rho (\omega_k) \drm \omega_k 
&=\frac{1}{\lambda} \int_{\RR^{\lvert \Sigma \rvert}} \bigl\langle \delta_j , P_j \chi_I (H_{\Lambda_l}) P_j \delta_j \bigr\rangle \prod_{k \in \Sigma} \rho (\omega_k) \drm \omega_k \nonumber \\
&\leq \frac{\lvert I \rvert}{\lambda} \int_{\RR^{\lvert \Sigma \rvert-1}} \sup_{\eta_o \in \RR} \lvert k(\eta) \rvert \prod_{k \in \Sigma\setminus \{o\}}\drm\eta_k . \label{eq:afteraverage}
\end{align}
If $\rho \in W^{1,1} (\RR)$, we use $\sup_{\eta_o \in \RR} \lvert k(\eta) \rvert \leq \frac{1}{2} \int_\RR \lvert \partial_o k \rvert \drm \eta_o$. By the product rule we obtain for the partial derivative (while substituting back into original coordinates)
\[
 \partial_0 k = \frac{\partial}{\partial \eta_o} k(\eta) = \lvert t(o) \rvert^{\lvert \Sigma \rvert} \sum_{k \in \Sigma} t(k) \rho' (\omega_k) \prod_{j \in \Sigma \setminus \{k\}} \rho (\omega_j) .
\]
Hence, the right hand side of Ineq.~\eqref{eq:afteraverage} is bounded by $(2\lambda)^{-1} \lvert I \rvert \lVert \rho' \rVert_{L^1} \sum_{k \in \Sigma} \lvert t(k) \rvert$. Since all the steps were independent of $j \in \Lambda_{l}$, we in turn obtain the statement of the theorem in the case $\rho \in W^{1,1} (\RR)$.
\par
For $\rho$ of bounded total variation (note that $\rho$ has compact support since the measure $\nu$ has compact support) we use the fact that there is sequence $\rho_k \in C_{\rm c}^\infty (\RR)$, $k \in \NN$, such that $\lVert \rho_k \rVert_{L^1} = 1$ for all $k \in \NN$, $\lim_{k \to \infty} \lVert \rho_k \rVert_{\rm Var} = \lVert \rho \rVert_{\rm Var}$ and $\lim_{k \to \infty} \lVert \rho_k - \rho \rVert_{L^1} = 0$, see Lemma~\ref{lemma:approx} below. Since $\lVert \rho_k \rVert_{\rm Var} = \lVert \rho_k' \rVert_{L^1}$ for $\rho_k \in C_{\rm c}^\infty (\RR)$, the same consideration as above gives for all $k \in \NN$
\begin{equation}\label{eq:afteraverage2}
\int_{\RR^{\lvert \Sigma \rvert}} \bigl\lVert \chi_I (H_{\Lambda_l}) \delta_j \bigr\rVert^2 \prod_{i \in \Sigma} \rho_k (\omega_i) \drm \omega_i  \leq \frac{\lvert I \rvert}{2\lambda} \lVert \rho_k \rVert_{\rm Var} \sum_{i \in \Sigma} \lvert t(i) \rvert .
\end{equation}
%
Using a limiting argument we now show the assertion. We have for all $k \in \NN$
\begin{multline*}
 \int_{\RR^{\lvert \Sigma \rvert}} \bigl\lVert \chi_I (H_{\Lambda_l}) \delta_j \bigr\rVert^2 \prod_{i \in \Sigma} \rho (\omega_i) \drm \omega_i \leq \int_{\RR^{\lvert \Sigma \rvert}} \bigl\lVert \chi_I (H_{\Lambda_l}) \delta_j \bigr\rVert^2 \prod_{i \in \Sigma} \rho_k (\omega_i) \drm \omega_i \\
+ \int_{\RR^{\lvert \Sigma \rvert}} \bigl\lVert \chi_I (H_{\Lambda_l}) \delta_j \bigr\rVert^2 \Biggl[ \prod_{i \in \Sigma} \rho (\omega_i) - \prod_{i \in \Sigma} \rho_k (\omega_i) \Biggr] \prod_{i \in \Sigma} \drm \omega_i .
\end{multline*}
The first integral on the right hand can be bounded from above by Ineq.~\eqref{eq:afteraverage2}. To estimate the second integral we denote the elements of the set $\Sigma$ by $\sigma_i$, $i \in \{1,2,\ldots , n\}$ where $n = \lvert \Sigma \rvert$, i.e.\ $\Sigma = \{\sigma_1 , \sigma_2 , \ldots , \sigma_n\}$. We use a telescoping argument and obtain in a first step
\begin{align*}
L &= \prod_{i \in \Sigma} \rho (\omega_i) - \prod_{i \in \Sigma} \rho_k (\omega_i)  \\
& = \bigl[\rho (\omega_{\sigma_n}) - \rho_k (\omega_{\sigma_n})\bigr] \prod_{i=1}^{n-1} \rho (\omega_{\sigma_i})  
+ \rho_k (\omega_{\sigma_n}) \Biggl[ \prod_{i=1}^{n-1} \rho (\omega_{\sigma_i}) - \prod_{i=1}^{n-1} \rho_k (\omega_{\sigma_i}) \Biggr]
\end{align*}
Iterating this procedure we get
\[
 L = \sum_{i=1}^n \Biggl[ \prod_{l=1}^{i-1} \rho (\omega_{\sigma_l}) \Biggr] \Biggl[ \prod_{l=i+1}^{n} \rho_k (\omega_{\sigma_l}) \Biggr] \bigl( \rho (\omega_{\sigma_i}) - \rho_k (\omega_{\sigma_i}) \bigr).
\]
Here we use the convention that a product where the index set is empty equals one. Putting everything together by using $\lVert \rho \rVert_{L^1} = \lVert \rho_k \rVert_{L^1} = 1$ for $k \in \NN$ and $\lVert \chi_I (H_{\Lambda_l}) \delta_j \rVert \leq 1$, we obtain for all $k \in \NN$
\[
 \int_{\RR^{\lvert \Sigma \rvert}} \bigl\lVert \chi_I (H_{\Lambda_l}) \delta_j \bigr\rVert^2 \prod_{i \in \Sigma} \rho (\omega_i) \drm \omega_i
 \leq  \frac{\lvert I \rvert}{2\lambda} \lVert \rho_k \rVert_{\rm Var} \sum_{i \in \Sigma} \lvert t(i) \rvert + n \lVert \rho - \rho_k \rVert_{L^1} .
\]
Letting $k$ go to infinity we obtain by using Lemma~\ref{lemma:approx} that
\[
 \EE (\lVert \chi_I(H_{\Lambda_l}) \delta_j \rVert^2) = \int_{\RR^{\lvert \Sigma \rvert}} \bigl\lVert \chi_I (H_{\Lambda_l}) \delta_j \bigr\rVert^2 \prod_{i \in \Sigma} \rho (\omega_i) \drm \omega_i
 \leq  \frac{\lvert I \rvert}{2\lambda} \lVert \rho \rVert_{\rm Var} \sum_{i \in \Sigma} \lvert t(i) \rvert .
\]
Since all the steps were independent of $j \in \Lambda_l$ we achieve the statement of the theorem.
\end{proof}
\begin{lemma} \label{lemma:approx}
 Let $u: \RR \to \RR_0^+$ be a function of finite variation and bounded support. Assume additionally $\lVert u \rVert_{L^1} = 1$. Then there exists a sequence $u_k \in C_{\rm c}^\infty$, $k \in \NN$, such that $\lVert u_k \rVert_{L^1} = 1$ for all $k \in \NN$, 
\begin{equation} \label{eq:conv_d}
 \lim_{k \to \infty} \lVert u_k \rVert_{\rm Var} = \lVert u \rVert_{\rm Var} 
\end{equation}
and
\begin{equation} \label{eq:L1}
\lim_{k \to \infty} \lVert u_k - u \rVert_{L^1} = 0 .
\end{equation}
\end{lemma}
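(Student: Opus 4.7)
The plan is to construct the sequence $u_k$ by mollification. Pick a standard mollifier $\phi \in C_{\rm c}^\infty(\RR)$ with $\phi \geq 0$, $\supp \phi \subset [-1,1]$, $\int_\RR \phi \, \drm x = 1$, and set $\phi_k(x) = k\phi(kx)$. Define $u_k := u * \phi_k$. Then $u_k \in C^\infty(\RR)$ by standard properties of convolution, and since both $u$ and $\phi_k$ have bounded supports, so does $u_k$, hence $u_k \in C_{\rm c}^\infty(\RR)$.

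First I would verify the two easy requirements. The $L^1$-normalization $\|u_k\|_{L^1} = 1$ follows from Fubini and non-negativity of $u$ and $\phi_k$: $\int u_k = \int\int u(y)\phi_k(x-y)\,\drm y\,\drm x = \int u(y)\,\drm y \cdot \int \phi_k = 1$. The $L^1$-convergence \eqref{eq:L1} is the classical approximation-of-identity statement for $u\in L^1(\RR)$.

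The heart of the proof is \eqref{eq:conv_d}, which I would establish by combining an upper and a lower bound. For the upper bound $\|u_k\|_{\rm Var} \le \|u\|_{\rm Var}$: for any test function $v \in C_{\rm c}^\infty(\RR)$ with $|v|\le 1$, write
\[
\int_\RR u_k(x) v'(x)\,\drm x = \int_\RR u(y)\,(\check\phi_k * v)'(y)\,\drm y,
\]
where $\check\phi_k(x) = \phi_k(-x)$, using $(u*\phi_k)' = u*\phi_k'$ plus Fubini. Since $\check\phi_k * v \in C_{\rm c}^\infty(\RR)$ with $|\check\phi_k * v|\le \|\phi_k\|_{L^1}\|v\|_\infty \le 1$, taking the supremum over $v$ yields $\|u_k\|_{\rm Var}\le \|u\|_{\rm Var}$. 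For the lower bound I would invoke lower semicontinuity of total variation under $L^1$-convergence: since $u_k\to u$ in $L^1$, for every admissible $v$ we have $\int u\,v' = \lim_k \int u_k\,v' \le \liminf_k \|u_k\|_{\rm Var}$, so taking the supremum over $v$ gives $\|u\|_{\rm Var}\le \liminf_k \|u_k\|_{\rm Var}$. Combined with the upper bound, $\lim_k \|u_k\|_{\rm Var} = \|u\|_{\rm Var}$.

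No step is really an obstacle; the only thing to watch is that lower semicontinuity (equivalently, that $\|u\|_{\rm Var}$ is defined via duality with $C_{\rm c}^\infty$ test functions) interacts cleanly with $L^1$-convergence, which it does by the argument above. Since $\|u_k\|_{L^1}=1$ holds exactly, no renormalization is needed and the $u_k$ produced satisfy all requirements.
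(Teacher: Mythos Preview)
Your proof is correct and follows essentially the same approach as the paper: mollify $u$ by a non-negative compactly supported bump, get $\lVert u_k\rVert_{L^1}=1$ from Fubini and non-negativity, $L^1$-convergence from the standard approximation-of-identity argument, and $\lVert u_k\rVert_{\rm Var}\to\lVert u\rVert_{\rm Var}$ by combining lower semicontinuity of the total variation under $L^1$-convergence with the uniform upper bound $\lVert u_k\rVert_{\rm Var}\le\lVert u\rVert_{\rm Var}$ obtained via the same Fubini/duality trick. Your use of $\check\phi_k$ is in fact slightly more careful than the paper, which tacitly treats the mollifier as even.
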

\begin{proof}
 Let $\phi \in C_{\rm c}^\infty (\RR)$ be non-negative with $\supp \phi \subset [-1,1]$ and $\lVert \phi \rVert_{L^1} = 1$. For $\epsilon > 0$ set $\phi_\epsilon : \RR \to \RR_0^+$, $\phi_\epsilon (x) = \epsilon^{-1} \phi (x/\epsilon)$. The function $\phi_\epsilon$ belongs to $C_{\rm c}^\infty (\RR)$ and fulfills $\lVert \phi_\epsilon \rVert_{L^1} = 1$. Now consider $u_\epsilon : \RR \to \RR_0^+$, 
\[
 u_\epsilon (x) = \int_\RR \phi_\epsilon (x-y) u(y) \drm y .
\]
Obviously, $u_\epsilon \in C_{\rm c}^\infty (\RR)$ and by Fubini's theorem $\lVert u_\epsilon \rVert_{L^1} = 1$. The proof of the relation \eqref{eq:L1} is due to Theorem 1.6.1 in \cite{Ziemer-89}. For the proof of the relation \eqref{eq:conv_d}, first note 
\begin{align*}
 \lVert u \rVert_{\rm Var} &= \lvert Du \rvert (\RR) = \sup \Bigl\{ \int_\RR u v' \drm x \colon v \in C_{\rm c}^\infty (\RR) , \ \lvert v \rvert \leq 1 \Bigr\} \\
&= \sup \Bigl\{ \lim_{\epsilon \searrow 0} \int_\RR u_\epsilon v' \drm x \colon v \in C_{\rm c}^\infty (\RR) , \ \lvert v \rvert \leq 1 \Bigr\} \\[1ex]
& \leq \liminf_{\epsilon \searrow 0} \lvert D u_\epsilon \rvert (\RR) =  \liminf_{\epsilon \searrow 0} \lVert u_\epsilon \rVert_{\rm Var} ,
\end{align*}
since $u_\epsilon$ converges to $u$ in $L^1(\RR)$ and $v'$ is bounded. Let now $\psi \in C_{\rm c}^\infty (\RR)$ with $\vert\psi\rvert \leq 1$ and set $\psi_\epsilon = \phi_\epsilon * \psi$. Then we have by Fubini's theorem
\begin{align*}
 \lVert u \rVert_{\rm Var} &\geq \Bigl\lvert\int_\RR u \psi_\epsilon' \drm x \Bigr\rvert = \Bigl\lvert\int_\RR u (\psi * \phi_\epsilon)' \drm x \Bigr\rvert =  \Bigl\lvert\int_\RR  \psi'(u * \phi_\epsilon) \drm x \Bigr\rvert = \Bigl\lvert\int_\RR u_\epsilon \psi' \drm x \Bigr\rvert .
\end{align*}
Taking supremum over all such $\psi$ gives $\lVert u \rVert_{\rm Var} \geq \lVert u_\epsilon \rVert_{\rm Var}$. This proves the lemma.
\end{proof}
Assume Assumption~\ref{ass:exponential}, i.e. that there are $C,\alpha \in (0,\infty)$ such that $\lvert u(x) \rvert \leq C \euler^{-\alpha \lvert x \rvert_1}$ for all $x \in \ZZ^d$. Let $I_0$ and $c_u \not = 0$ be as in Eq.~\eqref{eq:cF}. In Section \ref{sec:transformation} we will construct for each $l \in \NN$ a number $R_l > 0$ given in Eq.~\eqref{eq:RLrel} such that
\begin{equation} \label{eq:vpos}
 \frac{2}{c_u} \sum_{k \in \Lambda_{R_l}} k^{I_0} u(x-k) \geq 1 \quad \text{for all $x \in \Lambda_{l}$} .
\end{equation}
Note that this lower bound is uniform in $x \in \Lambda_l$ and thus much stronger than condition \eqref{eq:wegner:ass}. If Assumption \ref{ass:ubar} holds one can use in Ineq.\ \eqref{eq:vpos} exponentially decaying coefficients rather than $k \mapsto k^{I_0}$, cf.\ Appendix \ref{chap:non_local}. Inequality \eqref{eq:vpos} is proven in Proposition \ref{prop2} and we will apply it for the discrete alloy-type model with exponential decaying single-site potential to verify the hypothesis of Theorem \ref{theorem:abstract2}. Before we prove Proposition~\ref{prop2}, we give the 
\begin{proof}[Proof of Theorem \ref{theorem:wegner}] \label{proof:wegner_d}
By Ineq.~\eqref{eq:vpos} (respectively Proposition \ref{prop2}), the hypothesis of Theorem~\ref{theorem:abstract2} is satisfied with the choice $l_0 = 1$ and $t_{j,l} \in \ell^1 (\ZZ^d)$ given by 
\[
t_{j , l} (k) = \begin{cases}
                 2 k^{I_0} / c_{u} & \text{if $k \in \Lambda_{R_l}$}, \\
		0 & \text{else},
                \end{cases}
\]
for $l \in \NN$ and $j \in \Lambda_{l}$. The constants $c_u \not =0$ and $I_0 \in \NN_0^d$ depend only on the single site potential $u$ and are defined in Eq.~\eqref{eq:cF}. It follows for all $l \in \NN$ and $j \in \Lambda_{l}$ that
\begin{align*}
 \sum_{j \in \Lambda_{l}} \lVert t_{j,l} \rVert_{\ell^1} &= \frac{2}{\lvert c_{u} \rvert} (2l+1)^d \sum_{k \in \Lambda_{R_l}} \lvert k^{I_0} \rvert \leq 
\frac{2}{\lvert c_{u} \rvert} (2l+1)^d (2R_l + 1)^d R_l^{\lvert I_0 \rvert} .
\end{align*}
By Proposition \ref{prop2}, $R_l = \max \{2l + D , D'\} < 2l+D+D'$ with $D$ and $D'$ depending only on the single-site potential $u$. Hence there is a constant $C(u) > 0$ depending only on the single site potential $u$ such that
\begin{equation*} \label{eq:volume2} 
 \sum_{j \in \Lambda_{l}} \lVert t_{j,l} \rVert_{\ell^1} \leq
C(u) (2l+1)^{2d + \lvert I_0 \rvert}.
\end{equation*}
By Theorem~\ref{theorem:abstract2}, this completes the proof.
\end{proof}
\section{Positive combinations of translated single-site potentials} \label{sec:transformation}
In this section we consider (possibly infinite) linear combinations of translates of the single-site potential $u$. We assume that $u$ decays exponentially and is distinct from the zero function. Under these hypotheses we identify a sequence of coefficients such that the resulting linear combination is uniformly positive on the whole space $\ZZ^d$ (cf. Proposition \ref{prop1}) or on some finite set $\Lambda \subset \ZZ^d$ (cf. Proposition~\ref{prop2}).
\begin{remark}[Preliminaries]
%
We introduce the following multi-index notation: If $I
=(i_1,\dots,i_d) \in \ZZ^d$ and $z \in \CC^d$, we define
\[
z^I = z_1^{i_1} \cdot z_2^{i_2} \cdot \ldots \cdot z_d^{i_d},
\]
and if $I \in \NN_0^d$, we define
\begin{align*}
  |I|   = \sum_{r=1}^d i_r, \quad
  D_z^I = \frac{\partial^{i_1}}{\partial z_1^{i_1}} \cdot \frac{\partial^{i_2}}{\partial z_2^{i_2}} \cdot \ldots \cdot \frac{\partial^{i_d}}{{\partial z_d}^{i_d}}, \quad
  I! = i_1! \cdot i_2! \cdot \ldots \cdot i_d! \,. 
\end{align*}
We also introduce comparison symbols for multi-indices: If $I, J \in \NN_0^d$, we write $J \le I$ if we have $j_r \le i_r$ for all
$r=1,2,\ldots,d$, and we write $J < I$ if $J \le I$ and $|J| <
|I|$. For $J \le I$, we use the short hand notation
\[ 
\binom{I}{J} = \binom{i_1}{j_1} \cdot \binom{i_2}{j_2} \cdot \ldots \cdot \binom{i_d}{j_d}. 
\]
Finally, ${\mathbf 0}, {\mathbf 1}$ denote the vectors $(0,\dots,0)$
and $(1,\dots,1) \in \CC^d$, respectively.
\par
We also recall the following facts from multidimensional complex
analysis. Let $\cD \subset \CC^d$ be open. We call a complex valued function $f : \cD \to \CC$ holomorphic, if every point $w \in \cD$ has an open neighbourhood $U$, $w \in U \subset \cD$, such that $f$ has a power series expansion around $w$, which converges to $f(z)$ for all $z \in U$.
Osgood's lemma tells us that, if $f : \cD \to \CC$ is continuous and holomorphic in each variable separately (in the sense of one-dimensional complex analysis), then $f$ is holomorphic, see \cite{GunningR-09}.
\par
Let $f_n : \cD \to \CC$ be a sequence of holomorphic functions. We say that $\sum_n f_n$ converges normally in $\cD$, if for every $w \in \cD$ there is an open neighbourhood $U$, $w \in U \subset \cD$, such that $\sum_n \lVert f_n \rVert_{U,\infty} < \infty$. Normally convergent sequences of holomorphic functions can be rearranged arbitrarily, the limit is again holomorphic, and differentiation can be carried out termwise, which follows from Weierstrass' theorem, see \cite[p. 226]{Remmert-84} for the one-dimensional case and \cite[p. 7]{Narasimhan-95} for the higher dimensional case.
\end{remark}

\begin{remark}[Notation]
Let $u: \ZZ^d \to \RR$ be a function satisfying Assumption \ref{ass:exponential}, i.e., there are constants $C,\alpha > 0$ such that
\begin{equation} \label{eq:exponential}
\lvert u(k) \rvert \leq C \euler^{-\alpha \lvert k \rvert_1}
\end{equation}
for all $k \in \ZZ^d$. For $\delta \in (0, 1-\euler^{-\alpha})$ we consider the associated generating function $F : \cD_\delta \subset \CC^d \to \CC$, 
\begin{equation} \label{eq:Fz}
\cD_\delta = \{ z \in \CC^d : \lvert z_1 - 1 \rvert < \delta , \ldots , \lvert z_d - 1 \rvert < \delta \}, \quad F(z) = \sum_{k \in \ZZ^d} u(-k) z^k .
\end{equation}
Here $\lvert z \rvert = \sqrt{z \overline z}$ for $z \in \CC$. Notice that the sum $\sum_{k \in \ZZ^d} u(-k) z^k$ is normally convergent in $\cD_\delta$ by our choice of $\delta$ and the exponential decay condition \eqref{eq:exponential}. By Weierstrass' theorem, $F$ is a holomorphic function. Since $F$ is holomorphic and not identically zero, we have $(D_z^I F) (\mathbf{1}) \not = 0$ for at least one $I \in \NN_0^d$. Therefore, there exists a multi-index $I_0 \in \NN_0^d$ (not necessarily unique), such that we have
\begin{equation} \label{eq:cF}
  (D_z^I F)({\mathbf 1}) = 
		\begin{cases} 
			c_u \neq 0, & \text{if $I = I_0$,} \\
			0,          & \text{if $I < I_0$.} 
		\end{cases}  
\end{equation}
(Such a $I_0$ can be found by diagonal inspection: Let $n \ge 0$ be the largest integer such that $D_z^IF({\mathbf 1}) = 0$ for all $|I| < n$.  Then choose a multi-index $I_0 \in \NN_0^d$, $|I_0| = n$ with $(D_z^{I_0} F)({\mathbf 1}) \neq 0$.)
\end{remark}
\begin{proposition} \label{prop1}
Let $u$, $F$, $c_u$ and $I_0$ be as in \eqref{eq:exponential}, \eqref{eq:Fz}, and \eqref{eq:cF}. Let further $I \in \NN_0^d$ with $I \le I_0$, and define $a: \ZZ^d \to \ZZ$ by 
\[
a(k) = k^I.
\]
Then we have for all $x \in \ZZ^d$
\begin{equation} \label{eq:akuxk} 
    \sum_{k \in \ZZ^d} a(k) u(x-k) = \begin{cases} 0, & \text{if $I < I_0$,} \\
    c_u, & \text{if $I = I_0$.} \end{cases}
\end{equation}
\end{proposition}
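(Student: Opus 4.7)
The plan is to reduce \eqref{eq:akuxk} to a moment computation on $u$, evaluate the moments through the generating function $F$, and then invoke the vanishing hypothesis \eqref{eq:cF}.

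First I would make the substitution $m = x-k$ and expand $(x-m)^I$ by the multi-binomial theorem, which is legal since the series converges absolutely by Assumption~\ref{ass:exponential}. This rewrites the target sum as
\begin{equation*}
\sum_{k \in \ZZ^d} k^I u(x-k) \;=\; \sum_{J \le I} \binom{I}{J}(-1)^{|J|}\, x^{I-J} M_J,
\qquad M_J := \sum_{m\in\ZZ^d} m^J u(m),
\end{equation*}
so the proposition reduces to showing $M_J = 0$ for every $J < I_0$ and $M_{I_0} = (-1)^{|I_0|} c_u$. Indeed, for $I < I_0$ every $J \le I$ automatically satisfies $J < I_0$, so the whole sum collapses to zero; for $I = I_0$ only $J = I_0$ survives and contributes $\binom{I_0}{I_0} x^{\mathbf{0}} (-1)^{|I_0|} \cdot (-1)^{|I_0|} c_u = c_u$, which is in particular $x$-independent as required.

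Next I would handle the moments $M_J$ through the Euler operators $\mathcal{D}_r := z_r \partial_{z_r}$. These act diagonally on $z^{-m}$ by the scalar $-m_r$, so termwise differentiation of the normally convergent series defining $F$ on $\cD_\delta$ gives $M_J = (-1)^{|J|}\, (\mathcal{D}_1^{j_1}\cdots \mathcal{D}_d^{j_d} F)(\mathbf{1})$. The one-dimensional operator identity $(z\partial_z)^n = \sum_{k=0}^{n} S(n,k)\, z^k \partial_z^k$ (Stirling numbers of the second kind, with $S(n,n)=1$, proved by a short induction) tensors coordinate-wise to
\begin{equation*}
\bigl(\mathcal{D}_1^{j_1}\cdots \mathcal{D}_d^{j_d} F\bigr)(\mathbf{1}) \;=\; \sum_{L \le J} \Bigl(\prod_{r=1}^d S(j_r,l_r)\Bigr)(D_z^L F)(\mathbf{1}).
\end{equation*}
For $J \le I_0$ every index $L$ on the right obeys $L \le I_0$ and $|L| \le |J|$. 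Consequently, if $J < I_0$ then each such $L$ satisfies $L < I_0$ and \eqref{eq:cF} kills every summand, yielding $M_J = 0$; if $J = I_0$ the same argument kills every $L < I_0$ and leaves only the leading $L = I_0$ term, whose coefficient is $\prod_r S(i_{0,r},i_{0,r}) = 1$, so that $M_{I_0} = (-1)^{|I_0|} c_u$.

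The argument is structurally simple; the one place that merits attention is organizing the multi-index Stirling/Leibniz bookkeeping cleanly so that the coefficient at $L = J$ is exactly $1$ (ensuring the nonvanishing case gives exactly $c_u$, not some multiple). A slightly different route that bypasses Stirling numbers is to notice the factorization $\sum_k u(x-k) z^k = z^x F(z)$, apply $\mathcal{D}^I = \mathcal{D}_1^{i_1}\cdots\mathcal{D}_d^{i_d}$ via Leibniz to obtain $\sum_{J \le I} \binom{I}{J} x^J (\mathcal{D}^{I-J} F)(\mathbf{1})$ after evaluating at $\mathbf{1}$, and then reduce the vanishing of $(\mathcal{D}^L F)(\mathbf{1})$ for $L < I_0$ by exactly the same tensorized Stirling identity; both routes lead to the same conclusion.
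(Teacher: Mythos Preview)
Your proof is correct. It is essentially the same computation as the paper's, packaged a little differently. The paper avoids Stirling numbers by passing to exponential coordinates $z=e^s$, so that the Euler operator $z_r\partial_{z_r}$ becomes the plain derivative $\partial_{s_r}$; it then uses the factorization $\sum_k u(x-k)e^{\langle k,s\rangle}=e^{\langle x,s\rangle}F(e^s)$ and a Leibniz expansion in $s$, together with the chain-rule identity $D_s^I(F(e^s))=\sum_{J\le I}c_J\,(D_z^JF)(e^s)e^{\langle J,s\rangle}$ with $c_I=1$. Your multi-binomial reduction to moments $M_J$ plus the tensorized Stirling identity is exactly the $z$-variable version of this: the chain-rule coefficients $c_J$ in the paper are (products of) Stirling numbers of the second kind, and your binomial step plays the role of their Leibniz step. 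The exponential-coordinate route is marginally slicker because it turns the triangular Euler/Stirling relation into an ordinary chain-rule computation, but your argument is equally valid and the bookkeeping you flag (the leading coefficient being exactly $1$) is handled correctly.
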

\begin{proof}
We introduce, again, a bit of notation. For $s \in \CC^d$ and $k \in \ZZ^d$ let
\[ 
  e^s  =  (e^{s_1},\dots,e^{s_d}) \quad \text{and} \quad \langle k,s \rangle  =  \sum_{r=1}^d k_r s_r.
\] 
Let $I \le I_0$. Then the chain rule yields (for all $s \in C_\delta := \{s \in \CC^d : {\rm e}^s \in \cD_\delta\}$)
\begin{align*}
D_s^I(F(e^s)) &= \sum_{J \le I} c_J\, (D_z^J F)(e^s)\, e^{\langle J,s \rangle} \\
&=(D_z^I F)(e^s)\, e^{\langle I,s \rangle}+  \sum_{J <I} c_J\, (D_z^J F)(e^s)\, e^{\langle J,s \rangle},
\end{align*}
with suitable integers $c_J \ge 1$ and, in particular, $c_I = 1$. This and Eq.~\eqref{eq:cF} imply that
\begin{equation} \label{eq:diffFes}
  D_s^I(F(e^s))\big\vert_{s={\mathbf 0}} = \begin{cases} 0, & \text{if
      $I < I_0$,}\\ c_{I_0}\, (D^{I_0}_zF)({\mathbf 1}) = c_u, & \text{if
      $I = I_0$.} \end{cases}
\end{equation}
Next, we use the identity $a(k) = k^I = D_s^I e^{\langle k,s   \rangle}\big\vert_{s={\mathbf 0}}$. Note that the series $\sum_{k\in \ZZ^d} u(x-k)e^{\langle k,s \rangle}$ converges normally on the domain
\[
E_\alpha = \{ s \in \CC^d \mid - \alpha < {\rm Re}(s_j) < \alpha \ \text{for all $j=1,2,\dots,d$} \}, 
\]
Therefore, we can rearrange arbitrarily, differentiate componentwise, and obtain for all $s \in C_\delta \cap E_\alpha$ by substitution $\nu = k-x$ and the product rule
\begin{align*}
\sum_{k \in \ZZ^d} u(x-k) D_s^I e^{\langle k,s \rangle} &= D_s^I \sum_{k \in \ZZ^d} u(x-k) e^{\langle k,s \rangle} \\
&= D_s^I \Bigl( e^{\langle x,s \rangle}  \sum_{\nu \in \ZZ^d} u(-\nu) e^{\langle \nu,s \rangle} \Bigr) = D_s^I \Bigl( F(e^s) e^{\langle x,s \rangle} \Bigr) \\
&= \sum_{J \le I} \binom{I}{J} \bigl( D_s^J F(e^s) \bigr) D_s^{I-J} e^{\langle x,s \rangle} . 
\end{align*} 
Finally, evaluating at $s = \mathbf 0$ and using \eqref{eq:diffFes} yields
\begin{align*}
\sum_{k \in \ZZ^d} a(k) u(x-k) &= \sum_{J \le I} \binom{I}{J} \bigl( D_s^J F(e^s)\bigr) \big\vert_{s={\mathbf 0}} \bigl( D_s^{I-J}(e^{\langle x,s \rangle} \bigr)\big\vert_{s={\mathbf 0}} \\
&= \begin{cases} 0, & \text{if $I < I_0$}, \\
        c_u, & \text{if $I = I_0$.} \end{cases} \qedhere
\end{align*}
\end{proof}

In Proposition \ref{prop1} we identified a sequence of coefficients such that the associated linear combination of translated single site potentials is positive on the whole of $\ZZ^d$. However, the sequence cannot be used for Theorem \ref{theorem:abstract2} directly since it is not summable. This problem can be resolved if we take into consideration that the positivity in Theorem \ref{theorem:abstract2} concerns lattice sites in $\Lambda_{n}$ only.
\par
Recall that the constants $d, \alpha, C$ and $c_u$ are all determined by the choice of the exponentially decreasing function $u: \ZZ^d \to \RR$. Now we choose $I=I_0$ in Proposition~\ref{prop1}.
The next proposition tells us, for all integer vectors $x$ in the box $\Lambda_l$, how far we have to exhaust $\ZZ^d$ in the sum \eqref{eq:akuxk}, in order to guarantee that the result is $\ge
\frac{c_u}{2}$ (assuming for a moment that $c_u > 0$). The exhaustion is described by the integer indices in another box $\Lambda_R$, and the proposition describes the relation between the sizes $l$ and $R$. For large enough $l$, this relation is linear.
\begin{proposition} \label{prop2}
Let $u$, $c_u$ and $I_0$ be as in \eqref{eq:exponential} and \eqref{eq:cF}. Let further $l \in \NN$ and define
\begin{equation} \label{eq:RLrel} 
  R_l := \max \left\{ 2l + \frac{2}{\alpha} \ln \frac{2\cdot 3^d\, C}
  {\lvert c_u \rvert (1-e^{-\alpha/2})}, \frac{8 (d+| I_0 |)^2}{\alpha^2} \right\}. 
\end{equation}
Then we have for all $x \in \Lambda_{l}$
\[ 
  \frac{2}{c_u} \sum_{k \in \Lambda_{R_l}}  k^{I_0}\, u(x-k) \ge 1.
\] 
\end{proposition}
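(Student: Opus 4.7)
The plan is to reduce the claimed lower bound to a tail estimate, using Proposition 2.15 as the starting point. Since that result gives the exact identity
\[
\sum_{k \in \ZZ^d} k^{I_0} u(x-k) = c_u \quad \text{for every } x \in \ZZ^d,
\]
it suffices to show that for every $x \in \Lambda_l$ the truncation error
\[
T(x) := \Bigl| \sum_{k \in \ZZ^d \setminus \Lambda_{R_l}} k^{I_0} u(x-k) \Bigr|
\]
is bounded by $\lvert c_u\rvert / 2$. Indeed, combining this with the identity above gives $\sum_{k \in \Lambda_{R_l}} k^{I_0} u(x-k) \in [c_u - \lvert c_u\rvert/2,\, c_u + \lvert c_u\rvert/2]$, which immediately yields $(2/c_u)\sum_{k \in \Lambda_{R_l}} k^{I_0} u(x-k) \ge 1$ regardless of the sign of $c_u$.

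To estimate $T(x)$, first I would use the crude bounds $\lvert k^{I_0}\rvert \le \lvert k\rvert_\infty^{|I_0|}$ and $\lvert x-k\rvert_1 \ge \lvert x - k\rvert_\infty \ge \lvert k\rvert_\infty - l$ (valid for $x \in \Lambda_l$), together with the exponential decay from Assumption~\ref{ass:exponential}:
\[
\lvert k^{I_0} u(x-k)\rvert \le C\, \lvert k\rvert_\infty^{|I_0|}\, \euler^{-\alpha(\lvert k\rvert_\infty - l)}.
\]
Slicing the lattice by $\{k \in \ZZ^d : \lvert k\rvert_\infty = m\}$, whose cardinality is at most $(2m+1)^d - (2m-1)^d \le 3^d m^{d-1}$ for $m\ge 1$, I obtain
\[
T(x) \le C\, 3^d\, \euler^{\alpha l} \sum_{m > R_l} m^{d-1+|I_0|}\, \euler^{-\alpha m}.
\]

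The next step is to control the series by splitting $\euler^{-\alpha m} = \euler^{-\alpha m /2}\cdot \euler^{-\alpha m/2}$ and showing that the polynomial factor is absorbed by one of the halves. A standard calculus argument gives $m^{d-1+|I_0|} \le \euler^{\alpha m/2}$ as soon as $m \ge 8(d+|I_0|)^2/\alpha^2$; this is precisely the role of the second lower bound imposed on $R_l$ in \eqref{eq:RLrel}. Under this condition the tail reduces to a geometric series:
\[
T(x) \le C\, 3^d\, \euler^{\alpha l} \sum_{m > R_l} \euler^{-\alpha m/2} \le \frac{C\, 3^d\, \euler^{\alpha l}\, \euler^{-\alpha R_l /2}}{1 - \euler^{-\alpha/2}}.
\]

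The conclusion is then a direct consequence of the first lower bound in the definition of $R_l$: substituting $R_l \ge 2l + (2/\alpha)\ln\!\bigl(2\cdot 3^d C / (\lvert c_u\rvert (1-\euler^{-\alpha/2}))\bigr)$ into the exponent gives $\euler^{\alpha l - \alpha R_l/2} \le \lvert c_u\rvert (1-\euler^{-\alpha/2}) / (2 \cdot 3^d C)$, so $T(x) \le \lvert c_u\rvert/2$, as required. The only genuinely delicate ingredient is choosing the constants so that both halves of the definition of $R_l$ play their designated roles, the quadratic threshold $8(d+|I_0|)^2/\alpha^2$ handling the polynomial prefactor $m^{d-1+|I_0|}$ and the linear-in-$l$ threshold absorbing the shift $\euler^{\alpha l}$ caused by translating the estimate from $0$ to $x \in \Lambda_l$; the rest is bookkeeping.
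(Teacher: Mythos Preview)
Your argument is correct and follows essentially the same route as the paper: invoke Proposition~\ref{prop1} to reduce to a tail bound, estimate the tail by the exponential decay of $u$ and a shell decomposition, absorb the polynomial prefactor into half of the exponential via the threshold $8(d+|I_0|)^2/\alpha^2$, and finish with the geometric series and the linear-in-$l$ part of $R_l$. The only cosmetic difference is that you use the sharper shell count $(2m+1)^d-(2m-1)^d\le 3^d m^{d-1}$, whereas the paper bounds the $r$-shell crudely by $(2r+1)^d\le 3^d r^d$; this changes $m^{d-1+|I_0|}$ into $r^{d+|I_0|}$, but both are handled by the same calculus lemma, so nothing is gained or lost.
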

\begin{proof}
We know from Proposition \ref{prop1} that 
\[
\frac{1}{c_u} \sum_{k \in \ZZ^d} k^{I_0}\, u(x-k) = 1, 
\]
for all $x \in \ZZ^d$. Thus we need to prove, for $x \in \Lambda_{l} = \ZZ^d \cap [-l,l]^d$, that
\begin{equation} \label{eq:desired}
\Big  | \sum_{k \in \ZZ^d \setminus \Lambda_{R_l} } k^{I_0}\, u(x-k)\Big | \le \frac{\lvert c_u \rvert}{2}.
\end{equation}
Using the triangle inequality $\vert k-x \vert_\infty + \vert x \vert_\infty \ge \vert k \vert_\infty$, $\vert k \vert_\infty \le \vert k \vert_1$, and that $u$ is exponentially  decreasing, we obtain
\begin{align*}
\Bigl\lvert \sum_{k \in \ZZ^d \setminus \Lambda_{R_l} } k^{I_0}\, u(x-k) \Bigr\rvert &\le C \sum_{\vert k \vert_\infty \ge R_l} \vert k \vert_\infty^{|I_0|} \, {\rm e}^{-\alpha \vert x-k \vert_\infty} \\
    &\le C e^{\alpha \vert x \vert_\infty} \sum_{\vert k \vert_\infty \ge R_l} 
    \vert k \vert_\infty^{|I_0|} \, e^{-\alpha \vert k \vert_\infty} \\
    &\le C e^{\alpha l} \sum_{r=R_l}^\infty (2r+1)^d \, r^{|I_0|} \, e^{-\alpha r} \\
    &\le C 3^d e^{\alpha l} \sum_{r=R_l}^\infty r^{d+|I_0|} e^{-\alpha r}.
\end{align*}
   Using Lemma \ref{lem:nexp} below and $r \ge R_l
   \ge\frac{8(d+|I_0|)^2}{\alpha^2}$, we conclude that $r^{d+|I_0|}
   \le e^{\alpha r/2}$, which implies that
   \[
  \Big| \sum_{k \in \ZZ^d \setminus \Lambda_{R_l} } k^{I_0}\, u(x-k)\Big |
   \le
  C 3^d e^{\alpha l} \sum_{r=R_l}^\infty e^{-\alpha r/2} = C 3^d
  e^{\alpha l} \frac{e^{-\alpha R_l/2}}{1-e^{-\alpha/2}}.   
   \]
Finally, using $R_l \ge 2l + \frac{2}{\alpha} \ln (2 \cdot 3^d C / (\lvert c_u \rvert (1-e^{-\alpha/2}))$, we conclude Ineq.~\eqref{eq:desired} which ends the proof.
\end{proof}

\begin{lemma} \label{lem:nexp}
  Let $M,\alpha > 0$. Then
  $$ 
  n \ge \frac{8M^2}{\alpha^2} \quad \Rightarrow \quad n^M < e^{\alpha n/2}. 
  $$ 
\end{lemma}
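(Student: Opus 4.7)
The plan is to reduce the claim to an elementary one-variable calculus inequality. Taking logarithms, the assertion $n^M < e^{\alpha n/2}$ is equivalent to
\[
  M \ln n < \frac{\alpha n}{2}.
\]
So I would just have to show that the hypothesis $n \ge 8M^2/\alpha^2$ forces this estimate.

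The key tool I would use is the elementary inequality $\ln n < \sqrt{n}$, valid for all $n \ge 1$. This follows from considering $g(x)=\sqrt{x}-\ln x$: one has $g(1)=1>0$, $g'(x)=1/(2\sqrt{x})-1/x$ vanishes only at $x=4$, and $g(4)=2-\ln 4>0$ is the global minimum, so $g(x)>0$ throughout $[1,\infty)$. Combining this with the hypothesis gives
\[
  M \ln n < M \sqrt{n} \leq \frac{\alpha}{2\sqrt{2}}\,n\cdot \frac{\sqrt{n}}{\sqrt{n}} = \frac{\alpha}{2\sqrt{2}}\,n < \frac{\alpha n}{2},
\]
where the middle step uses $\sqrt{n}\ge 2\sqrt{2}\,M/\alpha$, which is precisely the hypothesis rewritten in the form $\sqrt{n}\ge \sqrt{8M^2/\alpha^2}$. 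Exponentiating yields the claimed strict inequality $n^M<e^{\alpha n/2}$.

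There is no substantial obstacle here; the only point that requires a tiny bit of care is preserving the \emph{strict} inequality, which is ensured by the fact that $\sqrt{n}-\ln n>0$ on all of $[1,\infty)$ (the minimum value being $2-\ln 4>0$). In fact the bound $8M^2/\alpha^2$ in the statement has some slack built in: the argument above goes through already for $n\ge 4M^2/\alpha^2$, but the stated threshold $8M^2/\alpha^2$ is what is convenient for the application in the proof of Proposition~\ref{prop2}, where it is used to conclude $r^{d+|I_0|}\le e^{\alpha r/2}$ with $M=d+|I_0|$.
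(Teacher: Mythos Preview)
Your proof is correct. The paper takes a slightly different route: instead of passing to logarithms and using $\ln n<\sqrt{n}$, it argues directly via the Taylor expansion
\[
e^{\alpha n/(2M)} = 1 + \frac{\alpha n}{2M} + \frac{\alpha^2 n^2}{8M^2} + \cdots > \frac{\alpha^2 n^2}{8M^2},
\]
and then observes that the hypothesis $n\ge 8M^2/\alpha^2$ is exactly the statement $n\le \alpha^2 n^2/(8M^2)$, whence $n<e^{\alpha n/(2M)}$ and so $n^M<e^{\alpha n/2}$. Both arguments are one-liners; yours is perhaps the more natural first move (take logs), while the paper's is marginally slicker in that it requires no auxiliary calculus lemma and makes transparent why the constant $8$ appears---it is precisely the $2!\cdot 2^2$ in the denominator of the quadratic Taylor coefficient. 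Your observation that $4M^2/\alpha^2$ already suffices is correct and is a small gain of your approach.
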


\begin{proof}
  If $n \ge \frac{8M^2}{\alpha^2}$ then
  $$ n \le \frac{\alpha^2 n^2}{8 M^2}. $$ 
  Since
  $$ e^{\frac{\alpha n}{2M}} = 1 + \frac{\alpha n}{2M} + 
  \frac{\alpha^2 n^2}{8 M^2} + \dots > \frac{\alpha^2 n^2}{8 M^2 }, $$
  we conclude that $n \le e^{\frac{\alpha n}{2M}}$, or, equivalently, $n^M \le e^{\frac{\alpha n}{2}}$.
\end{proof}
\appendix
%
\chapter{A non-local a priori bound} \label{chap:non_local}
An important step in the proof of exponential decay of fractional moments is the so-called a priori bound, cf. 
Lemma~\ref{lemma:bounded}. It was this step where Assumption~\ref{ass:monotone} enters the proof of exponential localization formulated in Theorem~\ref{theorem:localization}.
\par
In this appendix we will proof an alternative a priori bound which holds under a much milder hypothesis on $u$, see Assumption~\ref{ass:ubar} in Section~\ref{sec:model}. By milder we do not mean that this covers the whole class of models where Assumption~\ref{ass:monotone} is satisfied, but rather it holds generically in the class of compactly supported single-site potentials. However, the drawback of the a priori bound of this appendix is that it is non-local in the sense that it requires averaging over the entire disorder present in the model. At the moment we do not know how to conclude exponential decay of fractional moments relying on this version of the a priori bound.
\par
The result of this appendix extends Theorem 2.3 in \cite{TautenhahnV-10} and has been published in \cite{ElgartTV-11}.
\section{Result} \label{sec:result_ubar}
The precise assumption we will need is Assumption~\ref{ass:ubar}, which states that the measure $\nu$ has a density in the Sobolev space $W^{1,1} (\RR)$, $\Theta$ is finite and the single-site potential satisfies $\bar u := \sum_{k \in \ZZ^d} u(k) \not = 0$.
\begin{remark}
Note that without loss of generality the assumption $\bar u \not = 0$ can be replaced by $\bar u > 0$, since
\[
V_\omega (x) = \sum_{k \in \ZZ^d} \omega_k u (x-k) = 
\sum_{k \in \ZZ^d} \bigl(-\omega_k\bigr) \bigl(- u (x-k) \bigr) .
\]
We will assume this for the rest of this chapter.
\end{remark}
The main result of this appendix is 
\begin{theorem} \label{theorem:ubar}
 Let $\Lambda \subset \ZZ^d$ finite, $s \in (0,1)$ and Assumption~\ref{ass:ubar} be satisfied. Then we have for all $x,y \in \Lambda$ and $z \in \CC \setminus \RR$
\[
\mathbb{E} \Bigl( \bigl\lvert G_\Lambda (z;x,y) \bigr\rvert^s \Bigr)  \leq \frac{8}{(\overline{u})^s} \frac{s^{-s}}{1-s} \lVert \rho' \rVert_{L^1}^s C^s \frac{1}{\lambda^s} ,
\]
where $C$ is the constant from Eq.~\eqref{eq:C}.
\end{theorem}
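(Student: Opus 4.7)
The plan is to exploit the hypothesis $\bar u\neq0$ to convert the computation of $\EE(|G_\Lambda(z;x,y)|^s)$ into a one-dimensional fractional moment integral, by means of a global change of variables that makes the dependence on a single distinguished coupling constant a pure scalar shift.

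\medskip

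\textbf{Step 1 (finite dependence set).} Since $\Theta$ is finite, so is the set
\[
\tilde\Lambda := \bigcup_{x\in\Lambda}(x-\Theta)\subset\ZZ^d,
\]
and $V_\omega|_\Lambda$ depends only on $(\omega_k)_{k\in\tilde\Lambda}$. Moreover $x-\Theta\subset\tilde\Lambda$ for every $x\in\Lambda$, so the telescoping identity
\[
\sum_{k\in\tilde\Lambda}u(x-k)=\sum_{k\in\ZZ^d}u(x-k)=\bar u\qquad\text{for all }x\in\Lambda
\]
holds uniformly on $\Lambda$.

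\textbf{Step 2 (scalar-shift change of variables).} Fix any $j_0\in\tilde\Lambda$ and introduce $\zeta_{j_0}:=\omega_{j_0}$ and $\zeta_k:=\omega_k-\omega_{j_0}$ for $k\in\tilde\Lambda\setminus\{j_0\}$. The Jacobian is one and, on $\Lambda$,
\[
V_\omega(x)=\bar u\,\zeta_{j_0}+\!\!\sum_{k\in\tilde\Lambda\setminus\{j_0\}}\!\!\zeta_k\,u(x-k)=:\bar u\,\zeta_{j_0}+\tilde V(x),
\]
so that $H_\Lambda=\tilde H+\lambda\bar u\,\zeta_{j_0}\,\mathbb I$ with $\tilde H:=-\Delta_\Lambda+\lambda\tilde V$ independent of $\zeta_{j_0}$ and
\[
G_\Lambda(z;x,y)=G_{\tilde H}\!\bigl(z-\lambda\bar u\,\zeta_{j_0};x,y\bigr).
\]

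\textbf{Step 3 (reduction to a 1D integral).} Denoting the transformed density by $g(\zeta):=\rho(\zeta_{j_0})\prod_{k\neq j_0}\rho(\zeta_k+\zeta_{j_0})$, Fubini yields
\[
\EE(|G_\Lambda(z;x,y)|^s)=\int\!\!\int_\RR\!|G_{\tilde H}(z-\lambda\bar u\,\zeta_{j_0};x,y)|^s\,g(\zeta)\,d\zeta_{j_0}\!\!\prod_{k\neq j_0}\!d\zeta_k.
\]
The inner integral is a one-dimensional fractional-moment integral of the resolvent of the self-adjoint operator $\tilde H$ against a compactly supported $W^{1,1}$-weight in $\zeta_{j_0}$.

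\textbf{Step 4 (main 1D estimate).} The key technical claim is that, uniformly in the frozen variables $\zeta^\perp$ and in the self-adjoint operator $\tilde H$,
\[
\int_\RR|G_{\tilde H}(z-\lambda\bar u\,t;x,y)|^s\,\rho(t)\,dt\leq\frac{8}{(\bar u\lambda)^s}\frac{s^{-s}}{1-s}\,\|\rho'\|_{L^1}^s\,C^s.
\]
This is established by writing $\rho(t)=\int_{-\infty}^{t}\rho'(\tau)\,d\tau$, exchanging the order of integration, and bounding the tail integral of the resolvent on the optimized window of length $\propto\|\rho'\|_{L^1}/(\lambda\bar u)$ with the help of the trivial estimate $\|(A-w)^{-1}\|\le|\mathrm{Im}(w)|^{-1}$ outside the window and the layer-cake / H\"older splitting of Lemma~\ref{lemma:det} inside it; the resulting constant $C$ is exactly the one in \eqref{eq:C}. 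The product density $g$ is then reduced to the single factor $\rho(\zeta_{j_0})$ by the telescoping argument already used at the end of the proof of Theorem~\ref{theorem:abstract2}, and the outer integration over $\zeta^\perp$ contributes only $|\tilde\Lambda|-1$ factors equal to $\|\rho\|_{L^1}=1$, yielding the claim of the theorem.

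\medskip

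\textbf{Main obstacle.} The perturbation $\lambda\bar u\,\zeta_{j_0}\,\mathbb I$ is a multiple of the identity, so it has neither rank structure nor a sign-definite direction in $\ell^2(\Lambda)$; in particular the rank-one spectral averaging of \cite{AizenmanENSS-06} and the monotone bounds from Section~\ref{sec:spectral_averaging} are not available. The difficulty concentrated in Step~4 is therefore to bound an integral of $|G_{\tilde H}|^s$ along a line in $\CC$ parallel to and at distance $|\mathrm{Im}(z)|$ from the whole spectrum of $\tilde H$; all usable information must come from the one-sided analyticity of $\zeta_{j_0}\mapsto G_{\tilde H}(z-\lambda\bar u\,\zeta_{j_0};x,y)$ combined with the integrability of $\rho'$, which is precisely why the estimate is \emph{non-local} in the random variables and why the improved a priori bound of Lemma~\ref{lemma:bounded} is not available here.
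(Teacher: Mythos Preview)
Your change of variables is correct and the structure of the argument (reduce to a one-parameter fractional moment integral, then apply a sup-bound on the transformed density via $\sup g\le\frac12\int|g'|$) is exactly what the paper does. However, the specific choice $\zeta_k=\omega_k-\omega_{j_0}$ with all coefficients equal to $1$ cannot give the stated bound: after the product-rule computation of $\partial_{\zeta_{j_0}}g$ and integration over the remaining variables, you obtain
\[
\int_{\Omega_{\tilde\Lambda}}\Bigl|\frac{\partial g}{\partial\zeta_{j_0}}\Bigr|\,d\zeta_{\tilde\Lambda}
=\lVert\rho'\rVert_{L^1}\,\lvert\tilde\Lambda\rvert,
\]
so the final estimate carries $\lvert\tilde\Lambda\rvert^s$, not the $\Lambda$-independent constant $C$ of Eq.~\eqref{eq:C}. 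Your assertion in Step~4 that ``the resulting constant $C$ is exactly the one in \eqref{eq:C}'' is unjustified: that constant is the $\ell^1$-norm of an exponentially decaying sequence and cannot arise from the uniform weights you use. The telescoping argument from Theorem~\ref{theorem:abstract2} does not help either; it was an approximation device for $\rho\in\BV$, not a reduction of the product density.

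The missing idea is that the linear combination of translated single-site potentials must be chosen with \emph{summable} coefficients $\alpha^{x,y}(k)$ that simultaneously (i) make the effective potential $W^{x,y}=\sum_k\alpha^{x,y}(k)u(\cdot-k)$ strictly positive on all of $\ZZ^d$, (ii) keep $W^{x,y}(x),W^{x,y}(y)$ bounded below by a fixed constant, and (iii) have $\sum_k|\alpha^{x,y}(k)|$ equal to a universal constant independent of $x,y,\Lambda$. The paper achieves this with the $(x,y)$-centred exponential weights $\alpha^{x,y}(k)=\tfrac12(\euler^{-c|k-x|_1}+\euler^{-c|k-y|_1})$, where $c$ is tuned (Lemma~\ref{lemma:Wij}) so that the positivity in (i) holds; then $\sum_k\alpha^{x,y}(k)=C$ by \eqref{eq:C}. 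With this choice the monotone averaging bound of Theorem~\ref{theorem:monotone_average} applies directly (contrary to your ``main obstacle'' paragraph, $W^{x,y}$ is a positive multiplication operator), and the $\kappa$-optimisation yields the stated bound. Your uniform weights satisfy (i) and (ii) trivially but fail (iii), which is why the volume leaks into the constant.
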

The proof relies on a special transformation of the random variables $\omega_k$, $k \in \Lambda_+$, where $\Lambda_+ = \cup_{k \in \Lambda} \{x \in \ZZ^d \colon u(x-k) \not = 0\}$ denotes the set of lattice sites whose coupling constant influences the potential in $\Lambda$. If $\Theta$ is finite, let us denote by $n$ the diameter of $\Theta$ with respect to the $\ell^1$-norm, i.e. $n := \max_{i,j \in \Theta} \lvert i-j \rvert_1$. For $x,y \in \ZZ^d$ we define $\alpha^{x,y} : \ZZ^d \to \RR^+$ by
\begin{equation} \label{eq:alpha}
\alpha^{x,y} (k) = \frac{1}{2} \left( {\rm e}^{-c \lvert k-x \rvert_1} + {\rm e}^{-c \lvert k-y \rvert_1} \right) \quad \text{with} \quad c:=  \frac{1}{n} \ln \left( 1 + \frac{\overline{u}}{2 \lVert u \rVert_{\ell^1}} \right) .
\end{equation}
Notice that the $\ell^1$-norm of $\alpha^{x,y}$ is independent of $x,y \in \ZZ^d$, i.e.
\begin{equation} \label{eq:C}
 C := C(n,\bar u , \lVert u \rVert_{\ell^1}) = \sum_{k \in \ZZ} \lvert \alpha^{x,y} (k) \rvert = \sum_{k \in \ZZ^d} {\rm e}^{-c\lvert k \rvert_1} = \left( \frac{{\rm e}^{c} + 1}{{\rm e}^{c} - 1} \right)^d .
\end{equation}
With the help of the coefficients $\alpha^{x,y} (k)$, $k \in \ZZ^d$, we will define a linear transformation of the variables $\omega_k$, $k \in \Lambda_+$. Some part of the ``new'' potential will then be given by $W^{x,y} : \ZZ^d \to \RR$,
\begin{equation} \label{eq:Wij}
 W^{i,j} (x) = \sum_{k \in \ZZ^d} \alpha^{i,j} (k) u(x-k),
\end{equation}
where indeed only the values $x \in \Lambda$ are relevant. In order to use monotone spectral averaging as presented in Section~\ref{sec:monotone_average}, it is important that $W^{x,y}$ is positive and that $W^{x,y} (k) \geq \delta > 0$ for $k \in \{x,y\}$ where $\delta$ is independent of $x,y \in \Lambda$. This is done by
\begin{lemma} \label{lemma:Wij}
Let Assumption \ref{ass:ubar} be satisfied. Then we have for all $x,y,k \in \ZZ^d$
\[
 W^{x,y} (k) \geq \alpha^{x,y} (k) \frac{\overline{u}}{2} > 0 .
\]
In particular, $W^{x,y} (k) \geq \overline{u} / 4$ for $k \in \{x,y\}$.
\end{lemma}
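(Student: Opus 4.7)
The plan is to write $W^{x,y}(k)$ as a main term $\alpha^{x,y}(k)\bar u$ plus an error, and then show the error is small because $\alpha^{x,y}$ varies slowly on the scale $n = \diam \Theta$ (with respect to $|\cdot|_1$). Concretely, using $\bar u = \sum_{j \in \ZZ^d} u(k-j)$, we have the exact decomposition
\[
W^{x,y}(k) = \alpha^{x,y}(k)\,\bar u + \sum_{j \in \ZZ^d}\bigl(\alpha^{x,y}(j) - \alpha^{x,y}(k)\bigr)\,u(k-j).
\]
Since $\supp u \subset \Theta$ and $\Theta$ has $|\cdot|_1$-diameter $n$, only summands with $|j-k|_1 \le n$ contribute.

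The key estimate is a pointwise comparison of $\alpha^{x,y}(j)$ with $\alpha^{x,y}(k)$ whenever $|j-k|_1 \le n$. By the triangle inequality $\bigl||j-x|_1 - |k-x|_1\bigr| \le |j-k|_1 \le n$, so $\euler^{-c|j-x|_1} \in \bigl[\euler^{-cn}\euler^{-c|k-x|_1},\,\euler^{cn}\euler^{-c|k-x|_1}\bigr]$, and similarly with $y$ in place of $x$. Averaging gives the two-sided bound $\euler^{-cn}\alpha^{x,y}(k) \le \alpha^{x,y}(j) \le \euler^{cn}\alpha^{x,y}(k)$, hence
\[
|\alpha^{x,y}(j) - \alpha^{x,y}(k)| \le \alpha^{x,y}(k)(\euler^{cn} - 1)
\]
for all $j$ with $|j-k|_1 \le n$, where I use $\euler^{cn} - 1 \ge 1 - \euler^{-cn}$.

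Inserting this into the decomposition yields
\[
W^{x,y}(k) \ge \alpha^{x,y}(k)\Bigl(\bar u - (\euler^{cn}-1)\,\lVert u\rVert_{\ell^1}\Bigr).
\]
By the definition of $c$ in Eq.~\eqref{eq:alpha}, $\euler^{cn} = 1 + \bar u / (2\lVert u\rVert_{\ell^1})$, so $(\euler^{cn}-1)\lVert u\rVert_{\ell^1} = \bar u / 2$, giving the desired bound $W^{x,y}(k) \ge \alpha^{x,y}(k)\,\bar u/2 > 0$. For the last claim, evaluating at $k=x$ (and symmetrically $k=y$) gives $\alpha^{x,y}(x) = \tfrac12(1 + \euler^{-c|x-y|_1}) \ge 1/2$, whence $W^{x,y}(x) \ge \bar u / 4$.

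The proof is essentially routine once one has the right identity; the main (modest) obstacle is choosing the right form of the perturbative bound on $\alpha^{x,y}(j) - \alpha^{x,y}(k)$ so that the constant coming from the geometric variation of $\alpha^{x,y}$ matches exactly the reserve $\bar u / (2\lVert u\rVert_{\ell^1})$ built into the definition of $c$. No randomness enters this lemma at all; it is purely a deterministic estimate that paves the way for monotone spectral averaging with respect to the transformed random coordinates.
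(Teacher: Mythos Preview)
Your proof is correct and follows essentially the same approach as the paper: the identical decomposition $W^{x,y}(k) = \alpha^{x,y}(k)\bar u + \sum_j(\alpha^{x,y}(j)-\alpha^{x,y}(k))u(k-j)$, the same bound $|\alpha^{x,y}(j)-\alpha^{x,y}(k)| \le \alpha^{x,y}(k)(\euler^{cn}-1)$ for $|j-k|_1 \le n$, and the same use of the definition of $c$ to get exactly $\bar u/2$. The only cosmetic difference is that you obtain the variation bound via the two-sided inequality $\euler^{-cn}\alpha^{x,y}(k) \le \alpha^{x,y}(j) \le \euler^{cn}\alpha^{x,y}(k)$, whereas the paper phrases it via convexity/monotonicity of $t \mapsto \euler^{-ct}$.
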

\begin{proof}[Proof of Lemma~\ref{lemma:Wij}]
Recall that $n = \max_{i,j \in \Theta} \lvert i-j \rvert_1$ and that we have assumed $0 \in \Theta$.
For $k \in \ZZ^d$ let $B_n (k) = \{j \in \ZZ^d : \lvert j-k \rvert_1 \leq n\}$. The triangle inequality gives us for all $k \in \ZZ^d$
\begin{align*}
M &= \max_{j \in B_n (k)} \lvert \alpha^{x,y}(k) - \alpha^{x,y} (j) \rvert  \\ 
  &\leq \frac{1}{2} \max_{j \in B_n (k)} \bigl\lvert \euler^{-c \lvert k-x \rvert_1} - \euler^{-c \lvert j-x \rvert_1} \bigr\rvert + 
  \frac{1}{2} \max_{j \in B_n (k)} \bigl\lvert \euler^{-c \lvert k-y \rvert_1} - \euler^{-c \lvert j-y \rvert_1} \bigr\rvert .
\end{align*}
Since $\RR \ni t \mapsto {\rm e}^{-ct}$ is a convex and strictly monotonic decreasing function, we have for all $x \in \ZZ^d$
\begin{align}
M &\leq \frac{1}{2}  \bigl\lvert \euler^{-c \lvert k-x \rvert_1} - \euler^{-c (\lvert k-x \rvert_1 - n)} \bigr\rvert + 
\frac{1}{2}  \bigl\lvert \euler^{-c \lvert k-y \rvert_1} - \euler^{-c (\lvert k-y \rvert_1 - n)} \bigr\rvert \nonumber \\
&\leq \alpha^{x,y} (k) ({\rm e}^{cn} - 1) \label{eq:convex} .
\end{align}
We use Ineq.~\eqref{eq:convex} and that $u (k-j) = 0$ for $k-j \not \in \Theta$, and obtain the estimate
\begin{align*}
 W^{x,y} (k) &= \sum_{j \in \ZZ^d} \alpha^{x,y} (k) u (k-j) + \sum_{j \in \ZZ^d} \bigl[\alpha^{x,y} (j) - \alpha^{x,y} (k)\bigr] u (k-j) \\
&\geq \alpha^{x,y} (k) \overline u - \sum_{j \in \ZZ^d} \lvert \alpha^{x,y} (k) - \alpha^{x,y} (j) \rvert \lvert u (k-j) \rvert \\
& \geq \alpha^{x,y} (k) \bar u  - \alpha^{x,y} (k) (\euler^{cn} - 1) \lVert u \rVert_{\ell^1} .
\end{align*}
This implies the statement of the lemma due to the choice of $c$.
\end{proof}
\begin{proof}[Proof of Theorem \ref{theorem:ubar}] Without loss of generality we assume $z \in \CC^- := \{z \in \CC \colon \im z < 0\}$. Fix $x,y \in \Lambda$ and recall that $\Lambda_+ = \cup_{k \in \Lambda} \{x \in \ZZ^d \mid u(x-k) \not = 0\}$ is the set of lattice sites whose coupling constant influences the potential in $\Lambda$. We consider the expectation
\[
 E = \EE \Bigl ( \bigl| G_\Lambda (z;x,y) \bigr|^s \Bigr) = 
\int_{\Omega_{\Lambda_+}} \bigl| \bigl\langle \delta_x , (H_\Lambda - z)^{-1} \delta_y \bigr\rangle \bigr|^s k(\omega_{\Lambda_+}) \drm \omega_{\Lambda_+} ,
\]
where $\Omega_{\Lambda_+} = \times_{k \in \Lambda_+} \RR$, $\omega_{\Lambda_+} = (\omega_k)_{k \in \Lambda_+}$, $k (\omega_{\Lambda_+}) = \prod_{k \in \Lambda_+} \rho (\omega_k)$ and $\drm \omega_{\Lambda_+} = \prod_{k \in \Lambda_+} \drm \omega_k$. Fix $v \in \Lambda_+$. We introduce the change of variables
\[
 \omega_v = \alpha^{x,y} (v) \zeta_v , \quad \text{and} \quad \omega_k = \alpha^{x,y} (k) \zeta_v + \alpha^{x,y} (v) \zeta_k
\]
for $k \in \Lambda_+ \setminus \{v\}$, where $\alpha^{x,y} : \ZZ^d \to \RR^+$ is defined in Eq.~\eqref{eq:alpha}. With this transformation we obtain
\begin{align*}
E &= \int_{\Omega_{\Lambda_+}} \bigl| \bigl\langle \delta_x , (-\Delta_\Lambda + \lambda V_\Lambda - z)^{-1} \delta_y \bigr\rangle \bigr|^s k(\omega_{\Lambda_+}) \drm \omega_{\Lambda_+} \\
&= \int_{\Omega_{\Lambda_+}} \bigl|\bigl\langle \delta_x , \bigl(A + \zeta_v \lambda W^{x,y} \bigr)^{-1} \delta_y \bigr\rangle \bigr|^s \tilde k (\zeta_{\Lambda_+})  \drm \zeta_{\Lambda_+} ,
\end{align*}
where $\zeta_{\Lambda_+} = (\zeta_k)_{k \in \Lambda_+}$,
\[
\tilde k (\zeta_{\Lambda_+}) = \lvert \alpha^{x,y} (v) \rvert^{\lvert \Lambda_+ \rvert}\rho (\alpha^{x,y} (v) \zeta_v) \prod_{k \in \Lambda_+ \setminus \{v\}} \rho (\alpha^{x,y} (k) \zeta_v + \alpha^{x,y} (v) \zeta_k) , 
\] 
$\drm \zeta_{\Lambda_+} = \prod_{k \in \Lambda_+} \drm \zeta_k$, $A= -\Delta_\Lambda - z + \lambda \alpha^{x,y} (v) \sum_{k \in \Lambda_+ \setminus \{v\}} \zeta_k u(\cdot - k)$ and $W^{x,y} : \ell^2 (\Lambda) \to \ell^2 (\Lambda)$ is the multiplication operator with multiplication function given by Eq.~\eqref{eq:Wij}. Notice that $A$ is independent of $\zeta_v$ and $W^{x,y}$ is positive by Lemma \ref{lemma:Wij}. We use Fubini's theorem to integrate first with respect to $\zeta_v$, Theorem \ref{theorem:monotone_average} and the estimate $\sup_{x \in \RR} g(x) \leq \frac{1}{2} \int_\RR \lvert g'(x)\rvert \drm x$ for $g \in W^{1,1} (\RR)$ and obtain for all $\kappa > 0$ and $z \in \CC^-$
\begin{align*}
E & \leq \frac{8 \cdot 4^{-s} \lambda^{-s}}{[W^{x,y}(x)W^{x,y}(y)]^{s/2}} \left[ \kappa^{-s} \int_{\Omega_{\Lambda_+}} \tilde k (\zeta_{\Lambda_+}) \drm \zeta_{\Lambda_+} + \frac{ \kappa^{1-s}}{1-s} \int_{\Omega_{\Lambda_+}} \biggl| \frac{\partial \tilde k (\zeta_{\Lambda_+})}{\partial \zeta_v} \biggr| \drm \zeta_{\Lambda_+} \right] \\
&= \frac{8 \cdot 4^{-s} \lambda^{-s}}{[W^{x,y}(x)W^{x,y}(y)]^{s/2}} \left[ \kappa^{-s} + \frac{ \kappa^{1-s}}{1-s} \int_{\Omega_{\Lambda_+}} \biggl| \frac{\partial \tilde k (\zeta_{\Lambda_+})}{\partial \zeta_v} \biggr| \drm \zeta_{\Lambda_+} \right] .
\end{align*}
For the partial derivative we calculate
\[
\frac{\partial \tilde k (\zeta_{\Lambda_+})}{\partial \zeta_v} = \lvert \alpha^{x,y} (v) \rvert^{\lvert \Lambda_+ \rvert} \sum_{l \in \Lambda_+} \alpha^{x,y} (l) \rho' (\omega_l) \prod_{\genfrac{}{}{0pt}{}{k \in \Lambda_+}{k \not = l}} \rho (\omega_k) ,
\]
which gives (while substituting back into original coordinates)
\begin{align*}
E & \leq \frac{8 \cdot 4^{-s} \lambda^{-s}}{[W^{x,y}(x)W^{x,y}(y)]^{s/2}} \left[ \kappa^{-s} + \frac{ \kappa^{1-s}}{1-s} \sum_{l \in \Lambda_+} \lvert \alpha^{x,y} (l) \rvert \int_{\Omega_{\Lambda_+}} \lvert \rho'(\omega_l) \rvert \prod_{k \not = l} \rho (\omega_k)  \drm \omega_{\Lambda_+} \right] \\
& \leq   \frac{8 \cdot 4^{-s} \lambda^{-s}}{[W^{x,y}(x)W^{x,y}(y)]^{s/2}} \left[ \kappa^{-s} + \frac{ \kappa^{1-s}}{1-s} \lVert \rho' \rVert_{L^1} \sum_{l \in \ZZ^d} \lvert \alpha^{x,y} (l) \rvert   \right] \\
& \leq  \frac{8 \cdot 4^{-s} \lambda^{-s}}{[W^{x,y}(x)W^{x,y}(y)]^{s/2}} \left[ \kappa^{-s} + \frac{ \kappa^{1-s}}{1-s} \lVert \rho' \rVert_{L^1} C \right] ,
\end{align*}
where $C$ is the constant from Eq.~\eqref{eq:C} and where we have used that $W^{x,y}(x)$ and $W^{x,y}(y)$ are bounded from below by $\overline{u}/4$ by Lemma \ref{lemma:Wij}. If we choose $\kappa = s/(\lVert \rho' \rVert_{L^1} C)$ we obtain the statement of the theorem.
\end{proof}
\section{Monotone spectral averaging} \label{sec:monotone_average}
In this section we prove Theorem \ref{theorem:monotone_average} which we have used in Section~\ref{sec:result_ubar} to prove boundedness of an averaged fractional power of Green's function under Assumption \ref{ass:ubar}.
\begin{theorem} \label{theorem:monotone_average}
 Let $X$ be a countable set, $A = B + \i C$ be an operator on $\ell^2 (X)$ where $B$ is self-adjoint and $C$ is bounded with $C \geq 0$. Let also $W : \ell^2 (X) \to \ell^2 (X)$ be a bounded and positive multiplication operator, $s \in (0,1)$ and $0 \leq \rho \in L^1 (\RR) \cap L^\infty (\RR)$. Then we have for all $x,y \in X$, $z \in \CC^-$ and $\kappa > 0$
\begin{align}
 \int_\RR \bigl\lvert \bigl\langle \delta_x , (A + t W - z)^{-1} \delta_y \bigr\rangle \bigr\rvert^s \rho (t) \drm t 
 &\leq  \frac{8\cdot 4^{-s}}{[W(x)W(y)]^{s/2}} \lVert \rho \rVert_\infty^s \frac{2^s s^{-s}}{1-s} \label{eq:thm1} \\[1ex]
 &\leq \frac{8\cdot 4^{-s}}{[W(x)W(y)]^{s/2}} \left( \frac{\lVert \rho \rVert_{L^1}}{\kappa^s} + \lVert \rho \rVert_\infty \frac{2 \kappa^{1-s}}{1-s} \right) \label{eq:thm2} .
\end{align}
\end{theorem}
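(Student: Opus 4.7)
The plan is to conjugate the resolvent by $W^{1/2}$ to expose a dissipative structure, polarize to reduce to diagonal matrix elements, and finally invoke weak-$L^1$ bounds from Herglotz theory combined with a layer-cake argument in the spirit of Lemma~\ref{lemma:det}. Assume $W(x), W(y) > 0$ (otherwise \eqref{eq:thm1} is vacuous), and let $V$ be the positive diagonal operator $V(k) = W(k)^{1/2}$, restricted by a routine Schur-complement step to the invariant subspace $\ell^2(\{k:W(k)>0\})$ if necessary. Setting $\tilde T := V^{-1}(A - z)V^{-1}$, one computes $\im \tilde T = V^{-1}(C + |\im z| I)V^{-1} > 0$ strictly, so $R(t) := (\tilde T + t)^{-1}$ is a bounded operator-valued function of $t \in \RR$, and
\[
\langle \delta_x, (A + tW - z)^{-1}\delta_y\rangle = [W(x)W(y)]^{-1/2}\,\langle \delta_x, R(t)\delta_y\rangle.
\]

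Next I would polarize: $\langle\delta_x, R(t)\delta_y\rangle = \frac{1}{4}\sum_{a \in \{1,i,-1,-i\}}\bar a\, h_a(t)$, where $h_a(t) := \langle v_a, R(t)v_a\rangle$, $v_a := \delta_x + a\delta_y$, and $\|v_a\|^2 = 2$ for $x \neq y$ (the case $x = y$ is easier). The key observation is $\im h_a(t) = -\|(\im\tilde T)^{1/2}R(t)^* v_a\|^2 \leq 0$, which forces the split $h_a = h_a^r - i h_a^i$ with $h_a^i \geq 0$. Two successive applications of the subadditivity $(\sum u_k)^s \leq \sum u_k^s$ for $s \in (0,1)$ then give
\[
|\langle \delta_x, R(t)\delta_y\rangle|^s \leq 4^{-s}\sum_{a}\bigl(|h_a^r(t)|^s + h_a^i(t)^s\bigr),
\]
which accounts for the combinatorial prefactor $8 \cdot 4^{-s}$ in \eqref{eq:thm1}.

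The heart of the argument is a fractional-moment estimate for each of the 8 pieces. Because $-h_a(w)$ maps the upper half-plane $\{\im w > 0\}$ into itself and decays like $\|v_a\|^2/|w|$ at infinity, the Nevanlinna representation simplifies to $-h_a(w) = \int_\RR(\lambda - w)^{-1}\drm\sigma_a(\lambda)$ with $\sigma_a$ a positive Borel measure of total mass $\sigma_a(\RR) = \|v_a\|^2$; strict dissipativity of $\tilde T$ additionally guarantees that $\sigma_a$ is absolutely continuous with continuous density, so $h_a^r$ and $h_a^i$ extend continuously to the real line. Boole's identity yields $\mathcal L\{t : |h_a^r(t)| > \eta\} \leq 2\sigma_a(\RR)/\eta$, while the identity $\int h_a^i(t)\,\drm t = \pi\sigma_a(\RR)$ combined with Chebyshev yields the analogous weak-$L^1$ bound for $h_a^i$. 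Applying to each $F \in \{|h_a^r|, h_a^i\}$ the layer-cake split at threshold $\kappa$,
\[
\int F^s\rho\,\drm t \leq \|\rho\|_{L^1}\kappa^s + \frac{2c\,\|\rho\|_\infty\,\kappa^{1-s}}{1-s},
\]
exactly as in the proof of \eqref{eq:det1}--\eqref{eq:det2}, and summing the 8 pieces after reinstating the $[W(x)W(y)]^{-s/2}$ prefactor yields \eqref{eq:thm2}; optimizing over $\kappa$ via $\kappa = s\|\rho\|_{L^1}/(2c\|\rho\|_\infty)$ then produces \eqref{eq:thm1} with the characteristic constant $\frac{2^s s^{-s}}{1-s}$.

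The main obstacle will be justifying the Nevanlinna representation with $\sigma_a(\RR) = \|v_a\|^2$ together with the $L^1$ identity $\int h_a^i(t)\,\drm t = \pi\|v_a\|^2$, since $\tilde T$ is not self-adjoint; both follow either from the Sz.-Nagy--Foias dilation theorem or from a direct contour-shift argument based on the asymptotic expansion $R(iy) = -(iy)^{-1}I + O(y^{-2})$ as $y \to \infty$. A secondary subtlety is the Schur-complement reduction needed to invert $V$ on the subspace $\{k: W(k) > 0\}$, which is routine but must be spelled out to preserve dissipativity of the effective operator on the reduced space.
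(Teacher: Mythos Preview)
Your proposal is correct and identifies the same key object as the paper --- the conjugated operator $\tilde T=W^{-1/2}(A-z)W^{-1/2}$ is exactly the Birman--Schwinger operator $A_{\rm BS}$ that the paper introduces via Lemma~\ref{lemma:aizenman}. From there the two arguments diverge. The paper invokes the Sz.-Nagy--Foia\c{s} dilation theorem to produce a self-adjoint operator $L$ on an enlarged Hilbert space with $(A_{\rm BS}-\zeta)^{-1}=P(L-\zeta)^{-1}P^*$ and $P^*$ isometric; the diagonal matrix element then becomes $\int(\kappa+t+\i\epsilon)^{-1}\,\drm\mu_\psi(\kappa)$ for the genuine spectral measure $\mu_\psi$ of $L$, and the bound follows from the elementary estimate \eqref{eq:graf} after Fubini. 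Polarization handles $x\neq y$ exactly as you do, and an $\i\epsilon$-regularization plus dominated convergence removes the auxiliary parameter.

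Your route bypasses dilation by appealing directly to the Nevanlinna representation of the Herglotz function $-h_a$, splitting into real and imaginary parts, and using Boole's identity for the real part and the $L^1$ mass identity for the imaginary part. This is a legitimate alternative; the representing measure $\sigma_a$ is nothing but the spectral measure of the dilated operator with respect to $P^*v_a$, so the two arguments are dual. What the dilation buys the paper is that it avoids your separate real/imaginary treatment and the regularity discussion of boundary values --- once $L$ is self-adjoint, the spectral measure is automatically a probability measure (for $\|\psi\|=1$) and no absolute continuity of $\sigma_a$ is needed. Conversely, your approach avoids quoting the dilation theorem and the Birman--Schwinger lemma from \cite{AizenmanENSS-06}, trading them for Boole and a contour-shift asymptotic. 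You correctly flag the two genuine technical points (the mass normalization $\sigma_a(\RR)=\|v_a\|^2$ and the $L^1$ identity for $h_a^i$), and you note that dilation is one way to settle them --- which is precisely the route the paper takes.
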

In the case where $\supp \rho$ is bounded, the result of Theorem~\ref{theorem:monotone_average} follows directly from a weak $L^1$-bound which may be found in \cite{AizenmanENSS-06}. Indeed, \cite[Proposition 3.1]{AizenmanENSS-06} gives that for a maximally dissipative operator $A$ on some Hilbert space $\mathcal H$ and Hilbert-Schmidt operators $M_1,M_2 : \mathcal H \to \mathcal{H}_1$, where $\mathcal{H}_1$ is some Hilbert space, there is a constant $C_{\rm W}$ (independent of $A$, $M_1$ and $M_2$) such that
\[
\mathcal{L} \left(\{v \in [-1,1] : \bigl \lVert M_1 (A - v + {\rm i}0)^{-1} M_2 \bigr \rVert_{\rm HS} > t \}\right) \leq C_{\rm W} \lVert M_1 \rVert_{\rm HS} \lVert M_2 \rVert_{\rm HS} \frac{1}{t} .
\]
Here $\mathcal{L}$ denotes the Lebesgue measure. By the layer cake representation this implies the statement of Theorem~\ref{theorem:monotone_average} (with different constants) in the case where $\supp \rho$ is bounded. However, for two reasons we give a proof of Theorem~\ref{theorem:monotone_average}. The first one is that the weak $L^1$ bound is only valid for compactly supported measures, while Theorem~\ref{theorem:monotone_average} holds true also for $\rho$ not having bounded support. The second reason is that since our setting is not that general, we can give a short and direct proof, and the details of the proof are not that much involved as in the above mentioned weak $L^1$-bounds.
\par
Recall, a densely defined operator $T$ on some Hilbert space $\mathcal{H}$ with inner product $\langle \cdot , \cdot \rangle_{\mathcal{H}}$ is called \emph{dissipative} if $\im \langle x,Tx \rangle_{\mathcal{H}} \geq 0$ for all $x \in D(T)$. $T$ is called \emph{maximally dissipative} if it has no proper dissipative extension. We will use that any maximally dissipative operator $T$ on $\mathcal{H}$ has a self-adjoint dilation, i.\,e. there is a Hilbert space $\mathcal{H}_+$ with inner product $\langle \cdot , \cdot \rangle_{\mathcal{H}_+}$, a self-adjoint operator $L$ on $\mathcal{H}_+$ and an operator $P : \mathcal{H}_+ \to \mathcal{H}$, such that
\[
 (T - z)^{-1} = P(L - z)^{-1} P^*
\]
for all $z \in \CC$ with $\im z < 0$. Moreover, the adjoint $P^* : \mathcal{H} \to \mathcal{H}_+$ is an isometry, i.\,e. $\langle P^* x , P^*y \rangle_{\mathcal{H}_+} = \langle x , y \rangle_{\mathcal{H}}$. For a proof of this result see \cite{Kuzhel-96}, or \cite{NagyF-70} where the equivalent theory of unitary dilations of contractions is presented.
\par
Let us also cite a special case of \cite[Lemma B.1]{AizenmanENSS-06} which we will use in the proof of Theorem \ref{theorem:monotone_average}.
\begin{lemma}[\cite{AizenmanENSS-06}] \label{lemma:aizenman}
Let $X$ be a countable set and $A_0 = B + \i C$ be an operator on $\ell^2 (X)$, where $B$ is self-adjoint and $C$ is bounded with $C \geq \delta$ for some $\delta > 0$. Let also $W$ be a bounded non-negative operator in $\ell^2 (X)$. Then the Birman-Schwinger operator
\[
 A_{\rm BS} = (W^{1/2} A_0^{-1} W^{1/2})^{-1}
\]
is maximally dissipative in $(\ker W)^\perp$, with $D(A_{\rm BS}) = \ran (W^{1/2} A_0^{-1} W^{1/2})$. Moreover, its resolvent set $\rho (A_{\rm BS})$ includes $\overline{\CC^-}$ and the Birman-Schwinger relation
\[
 (A_{\rm BS} - \zeta)^{-1} = W^{1/2} (A_0 - \zeta W)^{-1} W^{1/2} 
\]
holds in $(\ker W)^\perp$ for all $\zeta \in \overline{\CC^-}$.
\end{lemma}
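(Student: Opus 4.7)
The plan is to derive everything from the single coercivity estimate
\begin{equation*}
\im\langle\psi,(A_0-\zeta W)\psi\rangle = \langle\psi,C\psi\rangle - \im\zeta\,\langle\psi,W\psi\rangle \geq \delta\|\psi\|^2,
\end{equation*}
valid for all $\zeta\in\overline{\CC^-}$ and $\psi\in D(A_0)$ because $W\ge 0$, $\im\zeta\le 0$, and $C\ge\delta$. This forces $A_0-\zeta W$ to be injective with closed range; combined with the analogous estimate for the adjoint it yields that $A_0-\zeta W$ is boundedly invertible with $\|(A_0-\zeta W)^{-1}\|\le 1/\delta$. Specialising to $\zeta=0$ shows $A_0^{-1}$ is bounded. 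I then introduce the bounded operators
\begin{equation*}
R(\zeta):=W^{1/2}(A_0-\zeta W)^{-1}W^{1/2},\qquad T:=R(0),
\end{equation*}
whose ranges lie in $\overline{\ran W^{1/2}}=(\ker W)^\perp$ (using $\ker W=\ker W^{1/2}$).

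Next I would verify the algebraic resolvent identity
\begin{equation*}
R(\zeta_1)-R(\zeta_2)=(\zeta_1-\zeta_2)R(\zeta_1)R(\zeta_2),
\end{equation*}
by sandwiching the standard identity for $(A_0-\zeta_1 W)^{-1}-(A_0-\zeta_2 W)^{-1}$ between two copies of $W^{1/2}$. Taking $\zeta_2=0$ gives $R(\zeta)=T+\zeta R(\zeta)T=T+\zeta T R(\zeta)$, hence $R(\zeta)T=TR(\zeta)$ and, crucially, $\ran R(\zeta)\subset\ran T$. Injectivity of $T$ on $(\ker W)^\perp$ is the first delicate point: if $T\psi=0$ with $\psi\in(\ker W)^\perp$, then $\eta:=A_0^{-1}W^{1/2}\psi$ satisfies $W^{1/2}\eta=0$, and pairing $A_0\eta=W^{1/2}\psi$ with $\eta$ yields $\langle\eta,A_0\eta\rangle=\langle W^{1/2}\eta,\psi\rangle=0$; taking imaginary parts gives $\langle\eta,C\eta\rangle=0$, so $\eta=0$ by $C\ge\delta$, hence $W^{1/2}\psi=0$ and therefore $\psi=0$. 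This allows me to define $A_{\rm BS}:=T^{-1}$ with $D(A_{\rm BS})=\ran T\subset(\ker W)^\perp$, viewed as an operator in $(\ker W)^\perp$; it is automatically closed as the algebraic inverse of a bounded injective operator.

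For the dissipativity of $A_{\rm BS}$, I compute directly: for $\phi=T\psi\in D(A_{\rm BS})$ with $\psi\in(\ker W)^\perp$, setting $\chi:=W^{1/2}\psi$ and $\eta:=A_0^{-1}\chi$, one has
\begin{equation*}
\langle\phi,A_{\rm BS}\phi\rangle=\langle T\psi,\psi\rangle=\langle A_0^{-1}\chi,\chi\rangle=\langle\eta,A_0\eta\rangle,
\end{equation*}
whose imaginary part equals $\langle\eta,C\eta\rangle\ge 0$ since $A_0=B+\i C$ with $B$ self-adjoint. To identify the resolvent on $(\ker W)^\perp$, the inclusion $\ran R(\zeta)\subset\ran T=D(A_{\rm BS})$ together with $T^{-1}R(\zeta)=I+\zeta R(\zeta)$ (read off from $R(\zeta)=T(I+\zeta R(\zeta))$) yields $(A_{\rm BS}-\zeta)R(\zeta)=I$ on $(\ker W)^\perp$; conversely, for $\phi=T\chi\in D(A_{\rm BS})$ one has $(A_{\rm BS}-\zeta)\phi=(I-\zeta T)\chi$, and the identity $R(\zeta)-T=\zeta R(\zeta)T$ gives $R(\zeta)(I-\zeta T)\chi=T\chi=\phi$. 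Thus $\overline{\CC^-}\subset\rho(A_{\rm BS})$ with $(A_{\rm BS}-\zeta)^{-1}=R(\zeta)$, which is the Birman--Schwinger relation. Maximality of the dissipative operator $A_{\rm BS}$ follows from the Lumer--Phillips criterion, since $A_{\rm BS}$ is closed, dissipative, and $\ran(A_{\rm BS}-\zeta)=(\ker W)^\perp$ for any $\zeta\in\CC^-$.

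The main obstacle will be the bookkeeping between operators acting on the full space $\ell^2(X)$ and those restricted to $(\ker W)^\perp$, together with verifying that the injectivity argument for $T$ uses precisely the strict positivity $C\ge\delta>0$; without this strict lower bound, $T$ could fail to be injective on $(\ker W)^\perp$ and the inverse $A_{\rm BS}$ would not be well defined as a densely defined operator on that subspace.
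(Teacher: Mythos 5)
Your proof is correct, but note that the paper itself does not prove this lemma at all: it is quoted verbatim as a special case of Lemma B.1 in \cite{AizenmanENSS-06}, so there is no in-paper argument to compare against. What you supply is a self-contained derivation, and it hangs together: the coercivity bound $\im\langle\psi,(A_0-\zeta W)\psi\rangle\geq\delta\lVert\psi\rVert^2$ (plus the mirrored bound for the adjoint) gives bounded invertibility of $A_0-\zeta W$ uniformly on $\overline{\CC^-}$; the sandwiched resolvent identity $R(\zeta_1)-R(\zeta_2)=(\zeta_1-\zeta_2)R(\zeta_1)R(\zeta_2)$ yields $\ran R(\zeta)\subset\ran T$ and both one-sided inverse relations, which is exactly the Birman--Schwinger formula; your injectivity argument for $T$ on $(\ker W)^\perp$ correctly isolates where $C\geq\delta>0$ (rather than merely $C\geq 0$) is used; and dissipativity plus surjectivity of $A_{\rm BS}-\zeta$ for a single $\zeta\in\CC^-$ does rule out proper dissipative extensions. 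The one point you should tighten is the density of $D(A_{\rm BS})=\ran T$ in $(\ker W)^\perp$: the paper's definition of a (maximally) dissipative operator, and the self-adjoint dilation theorem this lemma feeds into in the proof of Theorem~\ref{theorem:monotone_average}, presuppose a densely defined operator, and your write-up never checks it. It is a quick fix: if $\phi\in(\ker W)^\perp$ is orthogonal to $\ran T$ then $T^*\phi=W^{1/2}(A_0^*)^{-1}W^{1/2}\phi=0$, and since $A_0^*=B-\i C$ satisfies the same coercivity estimate with $C$ replaced by $-C\leq-\delta$, your injectivity argument applied verbatim to $T^*$ forces $\phi=0$; alternatively, dissipative operators on a Hilbert space with $\ran(A_{\rm BS}-\zeta)$ equal to the whole space for some $\zeta\in\CC^-$ are automatically densely defined by reflexivity. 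With that sentence added, the proof is complete.
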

\begin{proof}[Proof of Theorem \ref{theorem:monotone_average}]
Let $\epsilon > 0$. For all $x,y \in X$ we have
\begin{align*}
I_{x,y} &= \int_\RR \bigl\lvert \bigl\langle \delta_x , (A + (t+\i \epsilon) W - z)^{-1} \delta_y \bigr\rangle \bigr\rvert^s \rho (t) \drm t \\[1ex]
&= [W(x)W(y)]^{-s/2} \int_\RR \bigl\lvert \bigl\langle \delta_x , W^{1/2} (\tilde A + (t + \i \epsilon)W)^{-1} W^{1/2} \delta_y \bigr\rangle \bigr\rvert^s \rho (t) \drm t  ,
\end{align*}
where $\tilde A = A - z$. We infer from Lemma \ref{lemma:aizenman} that the operator $W^{1/2} \tilde A^{-1} W^{1/2}$ is invertible and that the inverse $A_{\rm BS} = (W^{1/2} \tilde A^{-1} W^{1/2})^{-1}$ is maximally dissipative with domain $\ran (W^{1/2} \tilde A^{-1} W^{1/2})$. Moreover, the resolvent set of $A_{\rm BS}$ includes $\overline{\CC^-}$ and we have for $\zeta \in \overline{\CC^-}$ the relation
\[
 (A_{\rm BS} - \zeta)^{-1} = W^{1/2}(\tilde A - \zeta W)^{-1}W^{1/2} .
\]
Since $A_{\rm BS}$ is maximally dissipative, there exists a Hilbert space $\mathcal{H}$ with inner product $\langle \cdot , \cdot \rangle_{\mathcal{H}}$, containing $\ell^2 (X)$ as a subspace, and a self-adjoint operator $L$ on $\mathcal{H}$, such that 
\[
 (A_{\rm BS} - \zeta)^{-1} = P(L - \zeta)^{-1} P^*
\]
holds for all $\zeta \in \CC^-$. Here, $P$ is an operator from $\mathcal{H}$ to $\ell^2 (X)$ and the adjoint $P^*$ is an isometry. Hence,
\[
I_{x,y} =  [W(x)W(y)]^{-s/2} \int_\RR \bigl\lvert \bigl\langle \delta_x , P(L + (t+\i \epsilon))^{-1} P^* \delta_y \bigr\rangle \bigr\rvert^s \rho (t) \drm t . 
\]
Let $\psi = P^* \delta_x \in \mathcal{H}$ and $\drm \mu_\psi$ be the non-negative spectral measure of $L$ with respect to $\psi$. By the spectral theorem, triangle inequality and Fubini's theorem we have for $x = y$
\begin{equation*}
 I_{x,x} \leq [W(x)W(x)]^{-s/2} \int_\RR \int_\RR \frac{1}{\lvert \kappa + (t + \i \epsilon) \rvert^s} \rho (t) \drm t \, \drm \mu_\psi (\kappa) .
\end{equation*}
Since $\int_\RR (1/\lvert x-\alpha \rvert^s)  g(x) \drm x \leq \kappa^{-s} \lVert g \rVert_{L^1} + 2 \kappa^{1-s} \lVert g \rVert_\infty / (1-s)$ for $0 \leq g \in L^\infty (\RR) \cap L^1 (\RR)$, $s \in (0,1)$, $\kappa > 0$ and $\alpha \in \CC$, see e.\,g.\ \cite{Graf-94}, we have for all $\kappa > 0$
\[
I_{x,x} \leq [W(x)W(x)]^{-s/2} \langle \psi , \psi \rangle_\mathcal{H} \left( \kappa^{-s} \lVert \rho \rVert_{L^1} + \frac{2 \kappa^{1-s}}{1-s} \lVert \rho \rVert_\infty  \right) .
\]
Notice that $\langle \psi , \psi \rangle_\mathcal{H} = 1$ since $P^*$ is an isometry. In the case where $x \not = y$ we use polarization identity and triangle inequality and obtain analogously for all $x,y \in X$
\begin{equation} \label{eq:Ixy}
I_{x,y} \leq \frac{8 \cdot 4^{-s}}{[W(x)W(x)]^{-s/2}} \left( \kappa^{-s} \lVert \rho \rVert_{L^1} + \frac{2 \kappa^{1-s}}{1-s} \lVert \rho \rVert_\infty  \right) .
\end{equation}
To conclude Ineq.~\eqref{eq:thm2}, it remains to show that $\lim_{\epsilon \searrow 0} I_{x,y}$ equals to the left hand side of Ineq.~\eqref{eq:thm1}. Notice that for every $t \in \RR$ and $\epsilon > 0$
\[
\bigl\lvert \bigl\langle \delta_x , (A + (t+\i \epsilon) W - z)^{-1} \delta_y \bigr\rangle \bigr\rvert^s \leq \frac{1}{\lvert \im z \rvert^s},
\]
which is an $\rho \drm t$-integrable majorant. Hence the dominated con\-vergence theorem applies and gives
\begin{align*}
\lim_{\epsilon \searrow 0} I_{x,y} &= \int_\RR \lim_{\epsilon \searrow 0} \bigl\lvert \bigl\langle \delta_x , (A + (t+{\rm i} \epsilon) W - z)^{-1} \delta_y \bigr\rangle \bigr\rvert^s \rho (t) \drm t  \\[1ex]
& = \int_\RR \bigl\lvert \bigl\langle \delta_x , (A + t W - z)^{-1} \delta_y \bigr\rangle \bigr\rvert^s \rho (t) \drm t  .
\end{align*}
This and Ineq.~\eqref{eq:Ixy} leads to Ineq.~\eqref{eq:thm2}. To show Ineq.~\eqref{eq:thm1} we minimize the right hand side of Ineq.~\eqref{eq:thm2} by choosing $\kappa = s \lVert \rho \rVert_{L^1} / (2 \lVert \rho \rVert_\infty)$. 
\end{proof}
%
%
%
%
%
%
\chapter{Some results for the alloy-type model} \label{sec:cont_results}
Some results and methods established for the discrete alloy-type model in Chapter~\ref{chap:fmm} and \ref{chap:wegner} can also be applied for the continuous analogue, the (continuous) \textit{alloy-type model} with sign-changing single-site potential. We will introduce the alloy-type model in Section \ref{sec:cont_model}.
\par
In Chapter \ref{chap:wegner} we proved a Wegner estimate for the discrete alloy-type model with exponentially decaying singe-site potentials. The methods used in the proof also apply to proof a Wegner estimate for the alloy-type model with a single-site potential of a so-called generalized step function type. This is worked out in detail in Section \ref{sec:wegner_cont}. The result is based on a joint work with Norbert Peyerimhoff and Ivan Veseli\'c, and has already been publishen in the preprint \cite{PeyerimhoffTV-11}.
\par
In Section \ref{sec:loc} we established a new method to conclude exponential localization from fractional moment bounds for the discrete alloy-type model with sign-changing single-site potentials. Let us emphasize that earlier techniques, see e.g. \cite{AizenmanM-93,Graf-94,BoutetNSS-06,AizenmanENSS-06}, are only applicable in the case of non-negative single-site potentials. It turns out that the method from Section \ref{sec:loc} also applies for the alloy-type model. In Section \ref{sec:loc_cont} we prove analogues of Proposition \ref{prop:replace-msa}, Proposition \ref{prop:replace-msa-Wegner} and Theorem \ref{theorem:exp_decay_loc} for the alloy-type model with sign-changing single-site potential.
\section{The alloy-type model} \label{sec:cont_model}
The \emph{alloy-type model} is given by the family of Schr\"o{}dinger operators
\begin{equation*}
H_\omega := H_0 + V_\omega, \quad H_0:= -\Delta + V_0, \quad \omega \in \Omega ,
\end{equation*}
on $L^2 (\RR^d)$, where $-\Delta$ is the negative Laplacian, $V_0$ a $\ZZ^d$-periodic potential, and $V_\omega$ denotes the multiplication by the $\ZZ^d$-metrically transitive random field
\begin{equation*}
V_\omega (x) := \sum_{k \in \ZZ^d} \omega_k U(x-k) .
\end{equation*}
Recall that the space $\Omega = \times_{k \in \ZZ^d} \RR$ is equipped with the probability measure $\PP (\drm \omega) = \prod_{k \in \ZZ^d} \nu (\drm \omega_k)$, hence the sequence $\omega = (\omega_k)_{k \in \ZZ^d}$ may be interpreted as a collection of i.i.d.\ random variables. We assume that $V_0$ and $V_\omega$ are infinitesimally bounded with respect to $\Delta$ and that the corresponding constants can be chosen uniform in $\omega \in \Omega$. Therefore, $H_\omega$ is self-adjoint (on the domain of $\Delta$) and bounded from below (uniform in $\omega \in \Omega$).
\par
Recall that any real-valued function on $\RR^d$ that is uniformly locally $L^p$, with $p = 2$ for $d \leq 3$ and $p>d/2$ for $d \geq 4$, is infinitesimally bounded with respect to the self-adjoint Laplacian $\Delta$ on $W^{2,2} (\RR^d)$, see e.g.\ \cite[Theorem XIII.96]{ReedS-78d}. This is satisfied for $V_\omega$ if $U$ is a so-called \emph{generalized step-function}.
\begin{definition}[Generalized step-function]
Let $L_{\rm c}^p (\RR^d) \ni w \geq \kappa \chi_{(-1/2,1/2)^d}$ with $\kappa > 0$ and $p = 2$ for $d \leq 3$ and $p>d/2$ for $d \geq 4$, where $L_{\rm c}^p (\RR)$ denotes the vector space of $L^p (\RR)$ functions with compact support. Let $u \in \ell^1 (\ZZ^d ; \RR)$. A function of the form $U : \RR^d \to \RR$,
\[
U (x) = \sum_{k \in \ZZ^d} u (k) w(x-k) ,
\]
is called \emph{generalized step-function} and the function $u:\ZZ^d \to \RR$ a \emph{convolution vector}. 
\end{definition}
Indeed, if $U$ is a generalized step-function, then we have for any unit cube $C \subset \RR^d$
\begin{align*}
 \int_{C} \lvert V_\omega (x) \rvert^p \drm x &= \int_C \Bigl\lvert \sum_{k \in \ZZ^d} \omega_k \sum_{l \in \ZZ^d} u(l-k) w(x-l) \Bigr\rvert^p \drm x \\
& \leq \omega_+ c_p \lVert u \rVert_{\ell^1} \sum_{l \in \ZZ^d} \int_C \lvert w(x-l) \rvert^p \drm x = \omega_+ c_p \lVert u \rVert_{\ell^1 } \lVert w \rVert_{L^p}^p ,
\end{align*}
where $\omega_+ = \sup \{\lvert t \rvert : t \in \supp \rho\}$ and $c_p$ is some constant depending on $p$ and the support of $w$. Note that the upper bound is uniform in $\omega \in \Omega$.
Hence, if $U$ is a generalized step-function, $V_\omega$ is infinitesimally bounded with respect to $\Delta$ and the corresponding constants can be chosen uniform in $\omega \in \Omega$.
\par
The estimates we want to prove concern finite box restrictions of the operator $H_\omega$. For $l > 0$ and $x \in \RR^d$ we denote by 
\[
 \cc_{l,x} := \bigr\{y \in \RR^d : \max_{i=1,\ldots,d} \lvert y_i - x_i \rvert < l\bigl\}
\]
the open cube of side length $2l$ centered at $x$. Recall that $\Lambda_{l,x} = \{y \in \ZZ^d : \lvert x-y \rvert_\infty \leq l\}$ and $\Lambda_l = \Lambda_{l,0}$. We will use the notation $\cc_l = \cc_{l,0}$ for the cube centered at the origin. For an open set $\cc \subset \RR^d$ we denote by $H_\cc$ the restriction of the operator $H_\omega$ to $L^2 (\cc)$ with Dirichlet boundary conditions on $\partial \cc$. For the resolvents we use the notations $G_\omega (z) = (H_\omega - z)^{-1}$ and $G_\cc (z) = (H_\cc - z)^{-1}$ for $z \in \CC \setminus \RR$.
\begin{remark}
Note that we have used the same symbol for the alloy-type model as well as for the discrete alloy-type model. In Appendix~\ref{sec:cont_results}, $H_\omega$ will always stand for the (continuous) alloy-type model. Note also, that the function $u$, which was used for single-site potential in the discrete setting, serves as a convolution vector to generate the (continuous) single-site potential $U$ in form of a generalized step-function.
\end{remark}

\section{Wegner estimate for the alloy-type model} \label{sec:wegner_cont}
Our main result of this section is a Wegner estimate for the alloy-type model in the case where the single-site potential is a generalized step-function with an exponentially decaying convolution vector and where the measure $\nu$ has a probability density of finite total variation. The space $\BV (\RR)$ of functions of finite total variation is defined in Section~\ref{sec:abstract_wegner}.
\begin{assumption} \label{ass:exp_cont}
Assume that the measure $\nu$ has a probability density $\rho \in \BV (\RR)$, $U$ is a generalized step-function and there are constants $C,\alpha > 0$ such that for all $k \in \ZZ^d$ we have
\begin{equation*} 
\lvert u(k) \rvert \leq C \euler^{-\alpha \lvert k \rvert_1} .
\end{equation*}
\end{assumption}
Our main result in this section is the following theorem.
\begin{theorem} \label{theorem:wegner_c} 
Let Assumption \ref{ass:exp_cont} be satisfied.
Then there exists $C(U)>0$ and $I_0 \in \NN_0^d$ both depending only on $U$, such that for any $l\in \NN$ and any bounded interval $I=[E_1,E_2] \subset \RR$ we have the estimate
\[
\EE \bigl (\Tr \bigl(\chi_I (H_{\cc_l}) \bigr)\bigr)
\le
\euler^{E_2} C(U) \lVert \rho \rVert_{\rm Var} \lvert I \rvert (2l+1)^{2d + \lvert I_0 \rvert} .
\]
A precise definition of $I_0 \in \NN_0^d$ is given in Eq.~\eqref{eq:cF}.
\end{theorem}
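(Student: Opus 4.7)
The plan is to mirror the two-step strategy used for Theorem~\ref{theorem:wegner}: first establish a continuous analogue of the abstract Wegner estimate of Theorem~\ref{theorem:abstract2}, then feed in the coefficients produced by Proposition~\ref{prop2}.

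For the abstract estimate I would proceed as follows. Suppose that for each $l \in \NN$ and each $j \in \Lambda_l$ one has a compactly supported $t_{j,l} \in \ell^1(\ZZ^d)$ satisfying $\sum_{k \in \ZZ^d} t_{j,l}(k) U(y-k) \geq \chi_{\cc_{1/2,j}}(y)$ for a.e.\ $y \in \cc_l$. Using the semigroup bound $\chi_I(H_{\cc_l}) \leq \euler^{E_2}\,\euler^{-H_{\cc_l}}\chi_I(H_{\cc_l})$ together with a uniform (in $\omega$) heat-kernel estimate valid because $V_\omega$ is uniformly infinitesimally $\Delta$-bounded, I would write
\[
\Tr \chi_I(H_{\cc_l}) \leq \euler^{E_2}\, C_1 \sum_{j \in \Lambda_l}\bigl\langle \chi_{\cc_{1/2,j}},\,\chi_I(H_{\cc_l})\,\chi_{\cc_{1/2,j}}\bigr\rangle_{\mathrm{HS}},
\]
after tiling $\cc_l$ by the unit cubes $\cc_{1/2,j}$. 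For each fixed $j$, I would then repeat the change of variables in the proof of Theorem~\ref{theorem:abstract2}: pick $o \in \supp t_{j,l}$ with $t_{j,l}(o)\neq 0$ and set $\omega_o = t_{j,l}(o)\eta_o$, $\omega_k = t_{j,l}(k)\eta_o + t_{j,l}(o)\eta_k$ for $k\in\supp t_{j,l}\setminus\{o\}$. After this substitution, $\eta_o$ couples to the potential precisely through $\lambda\eta_o \sum_k t_{j,l}(k) U(\cdot-k)$, which dominates $\lambda\eta_o \chi_{\cc_{1/2,j}}$. Applying the continuous Combes--Hislop spectral averaging inequality \eqref{eq:average_proj} (which holds verbatim with $J = \chi_{\cc_{1/2,j}}$ and $W = \lambda\sum_k t_{j,l}(k) U(\cdot-k)$ bounding $J^2$ from above), followed by the $W^{1,1}$/$\BV$ approximation used in the proof of Theorem~\ref{theorem:abstract2}, I obtain
\[
\EE\bigl(\Tr \chi_I(H_{\cc_l})\bigr) \leq \frac{\euler^{E_2}}{2\lambda}\, C_2\, \lVert\rho\rVert_{\mathrm{Var}}\,\lvert I\rvert \sum_{j \in \Lambda_l} \lVert t_{j,l}\rVert_{\ell^1}.
\]

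For the second step I would construct $t_{j,l}$ by copying the discrete recipe and exploiting the generalized step-function form of $U$. Apply Proposition~\ref{prop2} to the convolution vector $u$, with $l$ replaced by a slightly enlarged scale $l' = l + L_0$ (where $L_0$ absorbs $\diam\supp w$), and set $t_{j,l}(k) = 2 k^{I_0}/c_u$ on $\Lambda_{R_{l'}}$ and zero elsewhere. Using $U(y-k) = \sum_{m\in\ZZ^d} u(m) w(y-k-m)$ and Fubini,
\[
\sum_{k} t_{j,l}(k) U(y-k) = \sum_{n\in\ZZ^d} \Bigl(\sum_{k} t_{j,l}(k) u(n-k)\Bigr) w(y-n).
\]
Proposition~\ref{prop2} applied with $I=I_0$ makes the bracketed coefficient $\geq 1$ for all $n\in\Lambda_{l'}$; for $y\in\cc_l$ only cubes $\cc_{1/2,n}$ with $n\in\Lambda_{l'}$ contribute, and on the cube $\cc_{1/2,j}$ one keeps the summand $n=j$ for which $w(y-j) \geq \kappa$. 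This yields $\sum_k t_{j,l}(k) U(y-k) \geq \kappa \chi_{\cc_{1/2,j}}(y)$ on $\cc_l$, after rescaling $t_{j,l}$ by $\kappa^{-1}$. Inserting into the abstract estimate and summing exactly as in the proof of Theorem~\ref{theorem:wegner} bounds $\sum_{j\in\Lambda_l}\lVert t_{j,l}\rVert_{\ell^1}$ by $C(U)\,(2l+1)^{2d+|I_0|}$, which produces the claimed estimate.

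The main obstacle is the continuous incarnation of the abstract step, specifically getting from $\Tr\chi_I(H_{\cc_l})$ (whose naive bound by volume alone fails because $H_{\cc_l}$ is unbounded) to an expression to which the scalar averaging inequality \eqref{eq:average_proj} applies. The semigroup trick supplies both the $\euler^{E_2}$ factor and a uniform $L^2$-trace-class bound on $\euler^{-H_{\cc_l}}\chi_{\cc_{1/2,j}}$; once this piece is in place, the change-of-variables computation and the positivity construction proceed almost identically to the discrete proof.
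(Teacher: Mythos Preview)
Your proposal is correct and follows essentially the same route as the paper. The only difference is that the paper does not re-derive the continuous abstract Wegner estimate: it quotes it as a black box (Theorem~\ref{theorem:abstract1}, taken from \cite{KostrykinV-06}), whereas you sketch its proof via the semigroup trick and Combes--Hislop spectral averaging---which is indeed the argument behind that theorem and correctly explains the $\euler^{E_2}$ factor. The second step, constructing $t_{j,l}$ from Proposition~\ref{prop2} applied to the convolution vector $u$ at the enlarged scale $l+r$ (with $r$ accounting for $\supp w$) and then pushing the positivity through the generalized step-function structure of $U$, is carried out in the paper exactly as you describe.
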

\begin{remark}
Let us compare the result of Theorem \ref{theorem:wegner_c} to earlier ones of Wegner estimates for alloy-type models with sign-changing single-site potential. 
\par
The paper \cite{Klopp-95a} concerns alloy-type Schr\"o\-ding\-er operators on $L^2(\RR^d)$ with exponentially decaying single-site potential $U$. The main result is a Wegner estimate for low energies which is polynomially in the volume and H\"older continuous in the length of the energy interval.
\par
The paper \cite{HislopK-02} studies a class alloy-type models assuming that the single-site potential $U$ is continuous and compactly supported. The obtained Wegner estimate is valid in a neighborhood of the infimum of the spectrum, linear in the volume and H\"older continuous in the energy variable.
\par
In contrast to the results of \cite{Klopp-95a,HislopK-02}, the Wegner estimate from Theorem~\ref{theorem:wegner_c} is valid on the whole energy axis.
\par
Let us now refer to the papers \cite{Veselic-02a,KostrykinV-06,Veselic-10b} which are closely related to the result presented in Theorem~\ref{theorem:wegner_c}. All three papers study alloy-type models with a single-site potential of a generalized step function form. The results apply also to the discrete alloy-type model, but we discuss here only the results obtained for operators on $L^2 (\RR^d)$. The paper \cite{Veselic-02a} establishes a Wegner estimate which is linear in the volume and linear in the length of the energy interval. It applies to the case where the convolution vector satisfies 
\begin{equation} \label{cond:v02}
\lvert u(0)\rvert > \sum_{k \not = 0} \lvert u(k) \rvert. 
\end{equation}
The papers \cite{KostrykinV-06,Veselic-10b} consider the case where the convolution vector is compactly supported and satisfies
\begin{equation} \label{cond:kv06}
s\colon \theta\mapsto s(\theta) :=\sum_{k \in \ZZ^d} u(k) \mathrm e^{-\mathrm i k \cdot \theta}
\text{ does not vanish on $[0, 2\pi  )^d $. }
\end{equation}
Under this condition they prove a Wegner estimate which is linear in the volume and linear in the length of the energy interval. The result was obtained in \cite{KostrykinV-06} in one and two dimensions and generalized in \cite{Veselic-10b} to arbitrary space dimension. The paper \cite{KostrykinV-06} includes also results in arbitrary dimension which we do not discuss here. Note that condition \eqref{cond:v02} implies condition \eqref{cond:kv06}, and if $\sum_{k \in \ZZ^d} u(k) = 0$ then $s(0)$ vanishes.
\end{remark}
\begin{remark}
The Wegner estimate from Theorem \ref{theorem:wegner_c} is linear in the energy-interval length and polynomial in the volume of the cube. Moreover, the Wegner bound is valid on the whole energy axis. Hence one can prove localization via the multiscale analysis in any energy region where the initial length scale estimate holds. Since the single-site potential does not have compact support, one has to use an enhanced version of the multiscale analysis, see \cite{KirschSS-98b}.
\end{remark}
For the proof of Theorem~\ref{theorem:wegner_c} we use an abstract Wegner estimate established in \cite{KostrykinV-06}, which is formulated in Theorem~\ref{theorem:abstract1}. Theorem \ref{theorem:abstract1} was stated in \cite{KostrykinV-06} for compactly supported $U:\RR^d \to \RR$ only. However, the proof directly applies for non-compactly supported single-site potentials. Before we state the theorem, let us fix some notation. For an open set $\cc \subset \RR^d$, $\tilde \cc$ is the set of lattice sites $j \in \ZZ^d$ such that the characteristic function of the cube $\cc_{1/2} (j)$ does not vanish identically on $\cc$. For $j \in \ZZ^d$ we denote by $\chi_j$ the characteristic function of the cube $\cc_{1/2} (j)$.
\begin{theorem}[\cite{KostrykinV-06}] \label{theorem:abstract1}
Assume there is a number $l_0 \in \NN$ such that for arbitrary $l \geq l_0$ and every $j \in \tilde{\cc}_l$ there is a compactly supported $t_{j,l} \in \ell^1 (\ZZ^d ; \RR)$ such that
\begin{equation*}
\sum_{k \in \ZZ^d} t_{j,n} (k) U(x-k) \geq \chi_j (x) \quad \text{for all} \quad x \in \cc_l .
\end{equation*}
Let further $I = [E_1,E_2]$ be an arbitrary interval. Then for any $l \geq l_0$
\begin{equation*}
\EE \bigl( \Tr \bigl( \chi_I (H_{\cc_l}) \bigr)\bigr) \leq C \euler^{E_2} \lVert \rho \rVert_{\rm Var} \lvert I \rvert \sum_{j \in \tilde{\cc}_l} \lVert t_{j,l} \rVert_{\ell^1} ,
\end{equation*}
where $C$ is a constant independent of $l$ and $I$.
\end{theorem}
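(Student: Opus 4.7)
My plan is to follow the blueprint of the discrete proof (Theorem~\ref{theorem:abstract2}) while adding two continuum-specific ingredients: a partition of unity to localize the trace, and an exponential spectral weight to keep the resulting traces finite uniformly in $\omega$. Since $\sum_{j\in\tilde{\cc}_l}\chi_j=1$ on $\cc_l$, the first step will be to decompose
\[
\Tr\bigl(\chi_I(H_{\cc_l})\bigr) \;=\; \sum_{j\in\tilde{\cc}_l}\Tr\bigl(\chi_j\,\chi_I(H_{\cc_l})\,\chi_j\bigr),
\]
reducing the task to bounding $\EE\bigl(\Tr(\chi_j\chi_I(H_{\cc_l})\chi_j)\bigr)$ for a single cube $\chi_j$. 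The hypothesis gives an operator majorant $\chi_j\le W_j$, where $W_j(x):=\sum_{k}t_{j,l}(k)\,U(x-k)\ge\chi_j(x)$ on $\cc_l$, and this is what will let us bring in the spectral averaging.

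The core of the proof replays the discrete argument. Fix $j$, set $\Sigma=\supp t_{j,l}$ and pick $o\in\Sigma$ with $t_{j,l}(o)\neq 0$. As in the proof of Theorem~\ref{theorem:abstract2}, I will perform the invertible linear change of variables $\omega_o=t_{j,l}(o)\eta_o$ and $\omega_k=t_{j,l}(k)\eta_o+t_{j,l}(o)\eta_k$ for $k\in\Sigma\setminus\{o\}$; in the new coordinates the random operator decomposes as $H_{\cc_l}=A+\eta_o W_j$ with $A$ independent of $\eta_o$. Since $\chi_j\le W_j$ as multiplication operators on $L^2(\cc_l)$, the sandwich operator $J=\chi_j$ satisfies $J^2\le W_j$, so the spectral averaging bound \eqref{eq:average_proj} applies with $W=W_j$, $\zeta=\eta_o$ and $H=A$. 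After this averaging the $\eta_o$-integration produces the factor $|I|$, while the remaining $\eta_k$-integrals are controlled as in the discrete case: the supremum bound $\sup_{\eta_o}\tilde k(\eta)\le\tfrac12\int|\partial\tilde k/\partial\eta_o|\,d\eta_o$ (proved for $\rho\in W^{1,1}$ and extended to $\rho\in\BV$ by the $C_{\rm c}^\infty$ approximation argument of Lemma~\ref{lemma:approx}) produces the factor $\|\rho\|_{\rm Var}\cdot\sum_{k}|t_{j,l}(k)|=\|\rho\|_{\rm Var}\|t_{j,l}\|_{\ell^1}$ after transforming back.

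The new ingredient, and the main technical obstacle, is the factor $\euler^{E_2}$. In the discrete setting $\Tr(\chi_I(H_{\Lambda_l})\delta_j)=\|\chi_I(H_{\Lambda_l})\delta_j\|^2\le 1$ gives a free a~priori bound at the single-site level, whereas in the continuum $\Tr(\chi_j\chi_I(H_{\cc_l})\chi_j)$ is not automatically $O(1)$ and a naive summation of \eqref{eq:average_proj} over an orthonormal basis of $L^2(\cc_l)$ diverges. The remedy will be the operator inequality $\chi_I(H_{\cc_l})\le \euler^{E_2}\euler^{-H_{\cc_l}/2}\chi_I(H_{\cc_l})\euler^{-H_{\cc_l}/2}$, which follows from $\chi_I(\lambda)\le\euler^{E_2-\lambda}$. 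Inserting this before applying the spectral averaging and then replacing $\euler^{-H_{\cc_l}/2}$ by $\euler^{-H_0/2}$ at the cost of a bounded multiplicative factor (uniform in $\omega$, using that $V_\omega$ is infinitesimally $\Delta$-bounded uniformly in $\omega$ when $U$ is a generalized step-function and $\supp\nu$ is compact) reduces all $\omega$-dependent traces to multiples of $\Tr(\chi_j\euler^{-H_0}\chi_j)$, which is bounded by a universal constant $C$ via standard heat-kernel estimates. Keeping the $\euler^{-H_{\cc_l}/2}$ factors independent of the averaging variable $\eta_o$ at this step is the delicate point; once it is carried out, assembling the per-cube bounds and summing over $j\in\tilde{\cc}_l$ yields the claimed estimate with the constant $C\euler^{E_2}$ and the combinatorial factor $\sum_{j}\|t_{j,l}\|_{\ell^1}$.
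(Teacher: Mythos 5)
First, note that the paper does not prove Theorem~\ref{theorem:abstract1} at all: it is quoted from \cite{KostrykinV-06} (with the remark that the proof there extends to non-compactly supported $U$), and the only proof given in the thesis is for the discrete analogue, Theorem~\ref{theorem:abstract2}. Your first two steps — the partition of unity $\sum_j\chi_j=1$, the change of variables $\omega_o=t_{j,l}(o)\eta_o$, $\omega_k=t_{j,l}(k)\eta_o+t_{j,l}(o)\eta_k$ giving $H_{\cc_l}=A+\eta_o W_j$ with $A$ independent of $\eta_o$, the choice $J=\chi_j$ with $J^2\le W_j$ on $L^2(\cc_l)$, and the $\BV$-approximation of $\rho$ via Lemma~\ref{lemma:approx} — do mirror that strategy faithfully and are fine.

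The gap is exactly at the point you yourself flag as "delicate", and your proposed resolution does not close it. The averaging bound \eqref{eq:average_proj} requires the fixed operator $J$ to sit directly adjacent to $\chi_I(H(\zeta))$ and the test vector (equivalently, a Hilbert--Schmidt factor obtained by summing over such vectors) to be independent of the averaging variable $\zeta=\eta_o$. If you insert $\chi_I(H_{\cc_l})\le\euler^{E_2}\euler^{-H_{\cc_l}/2}\chi_I(H_{\cc_l})\euler^{-H_{\cc_l}/2}$ \emph{before} averaging, the operator next to the projection becomes $\euler^{-H_{\cc_l}/2}\chi_j$, which depends on $\eta_o$ and does not satisfy the $J^2\le W$ hypothesis, so \eqref{eq:average_proj} no longer applies; if instead you average first with $J=\chi_j$, the subsequent sum over an orthonormal basis of $\operatorname{Ran}\chi_j$ diverges, as you note. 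Your way out — "replacing $\euler^{-H_{\cc_l}/2}$ by $\euler^{-H_0/2}$ at the cost of a bounded multiplicative factor uniform in $\omega$" — is not a valid operator statement: $x\mapsto\euler^{-x}$ is not operator monotone, so $H_\omega\ge H_0-c$ does not give $\euler^{-H_\omega}\le\euler^{c}\euler^{-H_0}$, and there is no uniformly bounded $B_\omega$ with $\euler^{-H_{\cc_l}/2}=B_\omega\euler^{-H_0/2}$, since $V_\omega$ (a generalized step function with $w\in L^p_{\rm c}$) is only infinitesimally $\Delta$-bounded, not bounded. What is true (via Feynman--Kac and $\omega$-uniform Kato-class bounds) is a pointwise heat-kernel domination giving $\Tr(\chi_j\euler^{-H_{\cc_l}}\chi_j)\le C$ uniformly in $\omega$, but plugging that in naively forces you to discard $\chi_I(H_{\cc_l})$ by $\lVert\chi_I(H_{\cc_l})\rVert\le1$ and you lose the factor $\lvert I\rvert$; the entire content of the continuum proof in \cite{KostrykinV-06} (following Combes--Hislop and Fischer--Hupfer--Leschke--M\"uller) is the careful bookkeeping that produces an $\eta_o$-independent trace-class regularizer compatible with the spectral averaging, e.g.\ via trace estimates against $\omega$-independent reference operators and controlled resolvent/semigroup comparisons. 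As it stands, your sketch reduces the theorem to precisely this unproven step, so the proposal has a genuine gap rather than a complete argument.
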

Let Assumption \ref{ass:exp_cont} be satisfied. In Section~\ref{sec:transformation} we showed that there are constants $c_u \not = 0$ and $I_0 \in \NN_0^d$ (given in Eq.~\eqref{eq:cF}), such that for each $l \in \NN$ there is a number $R_l >0$ (given in Eq.~\eqref{eq:RLrel}) such that
\begin{equation} \label{eq:pos_cont}
 \frac{2}{c_u} \sum_{k \in \Lambda_{R_l}} k^{I_0} u (x-k) \geq 1 \quad \text{for all $x \in \Lambda_l$}.
\end{equation}
This fact is proven in Proposition \ref{prop2} and we will apply it for the (continuous) alloy-type model if $U$ is a generalized step-function with an exponential decaying convolution vector $u$ to verify the hypothesis of Theorem \ref{theorem:abstract1}. 
\begin{proof}[Proof of Theorem \ref{theorem:wegner_c}]
Recall that $U$ is a generalized step function and that $w$ has compact support. Also recall, that for an open set $\cc \subset \RR^d$, $\tilde \cc$ is the set of lattice sites $j \in \ZZ^d$ such that the characteristic function of the cube $\cc_{1/2} (j)$ does not vanish identically on $\cc$. 
We set $r = \sup \{ \lvert r \rvert + 1 : w(r) \not = 0\}$. Let $l_0 = 1$ and $t_{j,l} \in \ell^1 (\ZZ^d)$ given by 
\[
t_{j , l} (k) = 
\begin{cases}
2 k^{I_0} / (c_{u}\kappa) & \text{if $k \in \Lambda_{R_{l+r}}$}, \\
0                         & \text{else} ,
\end{cases}
\]
for $l \in \NN$ and $j \in \tilde \cc_l$. By Ineq.~\eqref{eq:pos_cont} (see also Proposition \ref{prop2}) we have for all $l \in \NN$, $j \in \tilde \cc_l$ and $x \in \cc_l$
\begin{align*}
 \sum_{k \in \ZZ^d} t_{j,n} (k) U(x-k) 
&= \sum_{i \in \ZZ^d} w (x-i) \sum_{k \in \ZZ^d} t_{j,l} (k)  u (i-k) \\
&\geq \frac{1}{\kappa}\sum_{i \in \Lambda_{l+r}} w(x-i) 
+ \sum_{i \in \ZZ^d \setminus \Lambda_{l+r}} w (x-i) \sum_{k \in \ZZ^d} t_{j,l} (k)  u (i-k) \\
&= \frac{1}{\kappa} \sum_{i \in \Lambda_{l+r}} w (x-i) \geq \chi_j (x) .
\end{align*}
Here we have used that $w (x-i) = 0$ for $x \in \cc_l$ and $i \not \in \Lambda_{l+r}$. Hence the assumption of Theorem \ref{theorem:abstract1} is satisfied. Analogous to the proof of Theorem~\ref{theorem:wegner} on page \pageref{proof:wegner_d} there is a constant $C(U)$ depending only on the single-site potential $U$ such that
\begin{equation*}
 \sum_{j \in \tilde \cc_l} \lVert t_{j,l} \rVert_{\ell^1} \leq
C(U) (2l+1)^{2d + \lvert I_0 \rvert}.
\end{equation*}
This completes the proof by using Theorem \ref{theorem:abstract1}.
\end{proof}
\section{Localization via fractional moments for the alloy-type model} \label{sec:loc_cont}
The fractional moment method, introduced for the discrete Anderson model in \cite{AizenmanM-93} was adopted to the (continuous) alloy-type model in \cite{AizenmanENSS-06,BoutetNSS-06}. The typical output of the fractional moment method for the alloy-type model on $L^2 (\RR^d)$ is the following: There exists $s \in (0,1)$, $\mu > 0$, $C<\infty$ and $I \subset \RR$, such that for all open sets $\cc \subset \RR^d$, all $x,y\in \RR^d$, all $\epsilon > 0$ and all $E \in I$ we have
\begin{equation} \label{eq:fmb_cont}
 \EE \Bigl( \bigl\lVert \chi_x (H_\Lambda - E - \i \epsilon)^{-1} \chi_y \bigr\rVert^s \Bigr) \leq C \euler^{-\mu \lVert x-y \rVert_\infty} .
\end{equation}
Here $\chi_x$ denotes the multiplication operator in $L^2 (\cc)$ by the characteristic function of the unit cube $\cc_{1/2, x}$. Note that $\chi_x (H_\cc - E - \i \epsilon)^{-1} \chi_y = 0$ if $\cc_{1/2 , x} \cap \cc$ or $\cc_{1/2 , y} \cap \cc$ has measure zero. For $x \in \RR^d$ we denote by $$\lVert x \rVert_\infty = \max_{i=1,\ldots , d} \lvert x_i \rvert$$ the supremum norm. The norm $\lVert \cdot \rVert$ on the left hand side of Ineq.~\eqref{eq:fmb_cont} denotes operator norm in $L^2 (\cc)$.

We refer to \cite{AizenmanENSS-06} and \cite{BoutetNSS-06} where such a fractional moment bound was shown under suitable conditions on the measure $\nu$ and the single-site potential $U$. In particular, it is assumed that the single-site potential is non-negative. It is also well known that the bound \eqref{eq:fmb_cont} implies spectral and dynamical localization in $I$ under appropriate assumptions on $\nu$ and $U$, see \cite{AizenmanENSS-06} and \cite{BoutetNSS-06} for details. In particular, the non-negativity of the single-site potential $U$ again plays a crucial role for the proof of this implication.
\par
In this section we show that the fractional moment bound as described in Ineq.~\eqref{eq:fmb_cont} implies spectral localization for alloy-type models with sign-changing single-site potentials. Our main result of this section is the following theorem which is proven at the end of this section.
\begin{theorem}\label{thm:fmb_loc_cont}
Let $\Theta := \supp U$ be a bounded set, $I \subset \RR$ be an interval and $C,\mu \in (0,\infty)$. Assume that for all $l \in 3\NN + 3/2$, $k \in \ZZ^d$, all $x,y\in\Lambda_{l,x}$, all $\epsilon \in (0,1]$ and all $E \in I$ we have
\[
 \EE \Bigl( \bigl\lVert \chi_x G_{\cc_{l,k}} (E+\i \epsilon) \chi_y \bigr \rVert^s \Bigr) \leq C \euler^{-\mu \lVert x-y \rVert_\infty} .
\]
Then, for almost all $\omega \in \Omega$, $H_\omega$ exhibits exponential localization in $I$.
\end{theorem}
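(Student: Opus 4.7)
The plan is to mimic the strategy from Section~\ref{sec:loc}, namely to show that the fractional moment bound implies the input of the (continuum) multiscale analysis, i.e., a statement of the form ``for any pair of sufficiently distant cubes $\cc_{l,x}$, $\cc_{l,y}$, with high probability at least one of them is $(m,E)$-regular for every $E \in I$''. Once this is established, an adaptation of Theorem~\ref{thm:vDK-2.3} to the continuum (in the spirit of \cite{DreifusK-89,Stollmann-01,GerminetK-01}) yields exponential localization in $I$ almost surely.

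First I would prove the continuum analogue of Lemma~\ref{lemma:a-priori-wegner}: from the diagonal bound $\EE(\lVert \chi_x G_{\cc_{l,k}}(E+\i\epsilon)\chi_x\rVert^s)\le C$, an inequality-version of Stone's formula together with Fubini and the trivial bound $\lVert\chi_x G_{\cc_{l,k}}(E+\i\epsilon)\chi_x\rVert^{1-s}\le \epsilon^{s-1}$ produces a Wegner estimate polynomial in $l$ and H\"older in the length of the energy interval. Next I would set up an appropriate notion of $(m,E)$-regular cube: $\cc_{l,k}$ is $(m,E)$-regular if $E\notin\sigma(H_{\cc_{l,k}})$ and $\sup_{y\in \partial_{\rm out} \cc_{l,k}}\lVert \chi_k G_{\cc_{l,k}}(E)\chi_y\rVert \le \euler^{-ml}$, where $\partial_{\rm out}$ means a suitable inner-boundary layer of unit cubes; the restriction $l\in 3\NN+3/2$ is used exactly to tile the box by unit cells and have a clean boundary layer.

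Then, by repeating the argument of Proposition~\ref{prop:replace-msa} (i.e., Chebyshev plus a Combes--Thomas-type resolvent perturbation step to convert $L^1$-smallness of the exceptional set of energies into pointwise regularity), the fractional moment hypothesis yields, for every $L\in 3\NN+3/2$ large enough,
\[
\PP\bigl(\exists\, E\in I:\ \cc_{L,x}\text{ and }\cc_{L,y}\text{ both }(m',E)\text{-singular}\bigr)\le C'L^{2d}\euler^{-\mu' L},
\]
for any $x,y\in\ZZ^d$ with $\lVert x-y\rVert_\infty \ge 2L+\diam\Theta+1$. The joining of the two-box estimates from the Wegner bound and the single-box regularity is exactly as in the proof of Proposition~\ref{prop:replace-msa}; the essential independence input is that because $\Theta=\supp U$ is bounded, the operators $H_{\cc_{L,x}}$ and $H_{\cc_{L,y}}$ depend on disjoint families of random variables $\{\omega_k\}$ once $\lVert x-y\rVert_\infty\ge 2L+\diam\Theta+1$.

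Finally, feeding this probabilistic conclusion into a continuum multiscale-analysis induction (no further induction is actually required; one only needs the deterministic argument of Theorem~\ref{thm:vDK-2.3} applied scale by scale along $L_{k+1}=L_k^\alpha$ with $\alpha\in(1,2p/d)$) gives, via a Shnol--Simon-type generalized-eigenfunction expansion valid for Schr\"odinger operators with locally $L^p$ potentials, that every polynomially bounded generalized eigenfunction with eigenvalue in $I$ decays exponentially. This implies the absence of continuous spectrum in $I$ and thus exponential localization. The extension from bounded $I$ to unbounded $I$ is by countable exhaustion, as in the proof of Theorem~\ref{theorem:exp_decay_loc}.

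The main obstacle is the continuum adaptation of the ``geometric resolvent--regularity'' step: in the discrete case the hopping terms are bounded and the boundary of a box consists of finitely many sites, whereas here one must work with traces on the boundary layer $\partial_{\rm out}\cc_{l,k}$ and control a sum of operator-norm fractional moments indexed by unit cubes, using the quantization induced by $l\in 3\NN+3/2$. Once this bookkeeping is in place, the remaining arguments (Chebyshev, Combes--Thomas perturbation to pass from Lebesgue-small exceptional energy sets to pointwise regularity, Wegner input for the two-box estimate, and Shnol--Simon to conclude exponential localization) are standard and follow the structure already laid out in Section~\ref{sec:loc}.
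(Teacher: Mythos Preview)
Your proposal is correct and follows essentially the same route as the paper: derive a Wegner estimate from the diagonal fractional-moment bound (the paper's Lemma~\ref{lemma:apriori_wegner_cont}), combine it with the off-diagonal fractional-moment decay to obtain the two-box regularity estimate (the paper's Proposition~\ref{prop3}, the continuum analogue of Proposition~\ref{prop:replace-msa}), and then invoke the standard deterministic von~Dreifus--Klein/Stollmann argument to conclude exponential localization, treating unbounded $I$ by countable exhaustion. The only notable difference is cosmetic: the paper defines $(m,E)$-regularity via $\lVert \chi_{l,k}^{\rm out} G_{\cc_{l,k}}(E)\chi_{l,k}^{\rm int}\rVert$ with $\chi_{l,k}^{\rm int}=\chi_{\cc_{l/3,k}}$ (hence the restriction $l\in 3\NN+3/2$), rather than your $\sup_y\lVert \chi_k G_{\cc_{l,k}}(E)\chi_y\rVert$ over a boundary layer, and the perturbation step you call ``Combes--Thomas-type'' is in the paper simply the first resolvent identity, but these choices lead to the same conclusion.
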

For $l \geq 3$ and $x \in \ZZ^d$ we introduce the notation $\cc^{\rm out}_{l,x} = \cc_{l,x} \setminus \cc_{l-1,x}$ and $\cc^{\rm int}_{l,x} = \cc_{l/3,x}$, and denote by
\[
 \chi_{l,x}^{\rm out} = \chi_{\cc^{\rm out}_{l,x}}\quad \text{and} \quad \chi_{l,x}^{\rm int} = \chi_{\cc^{\rm int}_{l,x}}
\]
the characteristic functions as well as the corresponding multiplication operators. For $\Lambda \subset \RR^d$ finite we denote by $\diam \Lambda = \sup_{x,y \in \Lambda} \lVert x-y \rVert_\infty$ the diameter of $\Lambda$ with respect to the supremum norm.
\begin{definition}
Let $m > 0$, $l \geq 3$ and $E \in \RR$. A cube $\cc_{l,x}$ is called $(m,E)$-regular for $\omega \in \Omega$, if $E \not \in \sigma (H_{\cc_{l,x}})$ and
\[
 \bigl \lVert \chi_{l,x}^{\rm out} G_{\cc_{l , x}} (E) \chi_{l,x}^{\rm int} \bigr \rVert \leq \euler^{-m l} .
\]
\end{definition}
\begin{proposition}  \label{prop3}
Let $\Theta = \supp U$ be a bounded set, $I \subset \RR$ be a bounded interval, $s \in (0,1)$ and $l \geq \max \{3, 8\ln (8) / \mu ,  -(8/5\mu)\ln (\lvert I \rvert / 2) \}$ where $\mu$ is the constant from assumption (i). Assume
\begin{enumerate}[(i)]
 \item There are constants $C,\mu \in (0,\infty)$ such that for all $k \in \ZZ^d$, all $E \in I$ and all $\epsilon \in (0,1]$ we have
\[
\EE \Bigl( \bigl\lVert \chi_{l,k}^{\rm out} G_{\cc_{l,k}} (E + \i \epsilon) \chi_{l,k}^{\rm int} \bigr\rVert^s\Bigr) \leq C \euler^{-\mu l} .
\]
 \item There are constants $C_{\rm W}\in (0,\infty)$, $\beta \in (0,1]$ and $D \in \NN$ such that for all $x \in \ZZ^d$ we have
\[
 \EE \bigl(\Tr \chi_{I} (H_{\cc_{l,x}})\bigr) \leq C_{\rm W} \lvert I \rvert^\beta l^D .
\]
\end{enumerate}
Then there is a constant $F$ depending only on $I$, $U$, $\mu$, $C$ and $C_{\rm W}$ such that for all $x,y \in \ZZ^d$ with $\lvert x-y \rvert_\infty \geq 2l + \diam \Theta + 1$ we have
\[
 \PP \bigl(\{\omega \in \Omega \colon \forall E \in I : \text{$\cc_{l,x}$ or $\cc_{l,y}$ is $(\mu / 8,E)$-regular}\}\bigr) \geq 1-F l^{d+D} \euler^{- \mu \beta l / 8} .
\]
\end{proposition}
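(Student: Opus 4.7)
The plan is to mimic the argument of Proposition \ref{prop:replace-msa-Wegner}, replacing scalar matrix elements of the Green function by the operator norms $\lVert \chi^{\rm out}_{l,k} G_{\cc_{l,k}}(E) \chi^{\rm int}_{l,k} \rVert$. First, for $k \in \{x,y\}$ I would define the sets of ``bad'' energies
\[
\Delta_\omega^k = \bigl\{E \in I : \lVert \chi^{\rm out}_{l,k} G_{\cc_{l,k}}(E)\chi^{\rm int}_{l,k} \rVert > \euler^{-\mu l / 8}\bigr\}
\]
and $\tilde\Delta_\omega^k$ defined analogously with the threshold $\euler^{-\mu l / 4}$, together with the event $\tilde B_k = \{\omega : \mathcal{L}(\tilde\Delta_\omega^k) > \euler^{-5\mu l / 8}\}$. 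Using the Chebyshev inequality, Fubini's theorem, hypothesis (i), and the fact that the limit of $\EE(\lVert \cdot \rVert^s)$ as $\epsilon \searrow 0$ exists outside the discrete spectrum of $H_{\cc_{l,k}}$ (whose contribution is negligible by the Wegner estimate (ii)), one bounds $\PP(\tilde B_k) \le C \lvert I \rvert \euler^{(\frac{7}{8}-1)\mu l} = C \lvert I \rvert \euler^{-\mu l /8}$.

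The second step is to show that $\omega \notin \tilde B_k$ forces $\Delta_\omega^k \subset I_{\omega,k}(\delta) := \bigcup_i [E_{\omega,k}^i - \delta, E_{\omega,k}^i + \delta]$ where $\{E_{\omega,k}^i\}$ is the spectrum of $H_{\cc_{l,k}}$ in a neighbourhood of $I$ and $\delta = 2\euler^{-\mu l / 8}$. The argument is essentially the one from the proof of Proposition \ref{prop:replace-msa}: if $E \in \Delta_\omega^k$ has distance larger than $\delta$ from $\sigma(H_{\cc_{l,k}})$, then the first resolvent identity and the bound $\lVert G_{\cc_{l,k}}(E) \rVert \le \dist(E,\sigma(H_{\cc_{l,k}}))^{-1}$ give, for $|E-E'| \le 2\euler^{-5\mu l / 8}$ and $l$ large enough,
\[
\bigl\lVert \chi^{\rm out}_{l,k} G_{\cc_{l,k}}(E')\chi^{\rm int}_{l,k} \bigr\rVert \ge \tfrac{1}{2} \euler^{-\mu l / 8} \ge \euler^{-\mu l / 4},
\]
so that $[E-2\euler^{-5\mu l/8}, E+2\euler^{-5\mu l/8}]\cap I \subset \tilde\Delta_\omega^k$, contradicting $\omega \notin \tilde B_k$ thanks to the lower bound on $l$ relative to $\lvert I \rvert$.

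The third step is to estimate the probability of the resonance event
\[
B_{\rm res} = \{\omega : I \cap I_{\omega,x}(\delta) \cap I_{\omega,y}(\delta) \ne \emptyset\}.
\]
Since $\Theta = \supp U$ is bounded and $\lvert x - y \rvert_\infty \ge 2l + \diam \Theta + 1$, the operators $H_{\cc_{l,x}}$ and $H_{\cc_{l,y}}$ depend on disjoint families of random variables $\omega_k$. Conditioning on those $\omega$'s that fix $H_{\cc_{l,y}}$, enumerating its eigenvalues in a neighbourhood of $I$ (whose number is controlled by (ii) with $\lvert I \rvert$ replaced by the diameter of a slightly enlarged interval), and applying hypothesis (ii) in each window $[E_{\omega,y}^i - 2\delta, E_{\omega,y}^i + 2\delta]$ to $H_{\cc_{l,x}}$, a Fubini argument yields
\[
\PP(B_{\rm res}) \le C_{\rm W}' l^{D'} (4\delta)^\beta = C_{\rm W}'' l^{D'} \euler^{-\mu\beta l/8},
\]
where $D'$ and the constants absorb factors polynomial in $l$. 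The conclusion then follows from
\[
\PP\{\exists E \in I : \text{both $\cc_{l,x}$ and $\cc_{l,y}$ are $(\mu/8,E)$-singular}\} \le \PP(\tilde B_x) + \PP(\tilde B_y) + \PP(B_{\rm res}),
\]
combining the three bounds into a single $F l^{d+D}\euler^{-\mu\beta l/8}$ estimate.

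The main obstacle is the third step: care is needed to properly count eigenvalues of $H_{\cc_{l,y}}$ in a neighbourhood of $I$ large enough to contain all $E_{\omega,y}^i$ whose $\delta$-neighbourhood meets $I$, and to verify that the Wegner input (ii) can be applied to the resulting (slightly enlarged) interval so as to produce only an additional polynomial factor in $l$. The independence between $H_{\cc_{l,x}}$ and $H_{\cc_{l,y}}$, which is granted by the condition $\lvert x - y \rvert_\infty \ge 2l + \diam \Theta + 1$ together with the boundedness of $\Theta$, is the geometric ingredient that makes the Fubini/Chebyshev step in the $B_{\rm res}$ estimate go through.
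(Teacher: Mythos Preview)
Your plan matches the paper's proof almost step for step: the definitions of $\Delta_\omega^k$, $\tilde\Delta_\omega^k$, $\tilde B_k$, the Chebyshev/Fubini bound on $\PP(\tilde B_k)$, the first-resolvent-identity argument for the inclusion $\Delta_\omega^k \subset I_{\omega,k}(\delta)$, and the decomposition $\PP(\tilde B_x)+\PP(\tilde B_y)+\PP(B_{\rm res})$ are exactly what the paper does.

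The one place where you diverge from the paper, and where your plan has a genuine gap, is the eigenvalue count in the $B_{\rm res}$ estimate. You propose to control the number of eigenvalues of $H_{\cc_{l,y}}$ near $I$ by invoking hypothesis (ii) on an enlarged interval. This is problematic for two reasons: (ii) is only assumed for the fixed interval $I$ (not for enlargements of it), and (ii) bounds an \emph{expectation}, not a deterministic eigenvalue count, so it cannot be plugged directly into the sum $\sum_{i=1}^{N_y}$ over conditionally fixed eigenvalues $E_{\omega,y}^i$. The paper resolves this differently: it uses a deterministic Weyl-type bound
\[
N_y \leq C_{\rm Weyl}\,\lvert \cc_{l,y}\rvert\, m^{d/2},
\]
valid for all $\omega$, where $m = \max\{\lvert\inf I\rvert, \lvert\sup I\rvert\}$ and the eigenvalues are counted in the $2$-neighbourhood $B_2(I)$ of $I$. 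This is precisely the source of the extra $l^d$ in the final bound $F\,l^{d+D}\,\euler^{-\mu\beta l/8}$. With this single modification your argument goes through as written.
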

\begin{proof}
By the same arguments as in the proof of Proposition \ref{prop:replace-msa} we infer from hypothesis (ii) that for each $E \in I$ and $x \in \ZZ^d$ the resolvent of $H_{\cc_{l,x}}$ at $E$ is well defined for almost all $\omega \in \Omega$. Hypothesis (i) and Lebesgue’s Theorem gives us for all $E \in I$
\[
 \EE \Bigl( \bigl\lVert \chi_{l,k}^{\rm out} G_{\cc_{l,k}} (E) \chi_{l,k}^{\rm int} \bigr\rVert^s \Bigr) \leq C \euler^{-\mu l} .
\]
Fix $x,y \in \ZZ^d$ with $|x-y|_\infty \geq 2l + \diam \Theta + 1$. For $\omega \in \Omega$ and $k \in \{x,y\}$ we define
\begin{align}
 \Delta_\omega^k &:= \bigl\{E \in I : \lVert \chi_{l,k}^{\rm out} G_{\cc_{l,k}} (E) \chi_{l,k}^{\rm int} \rVert > \euler^{-\mu l /8}\bigr\}, \nonumber \\[1ex]
\tilde \Delta_\omega^k &:= \bigl\{E \in I : \lVert \chi_{l,k}^{\rm out} G_{\cc_{l,k}} (E) \chi_{l,k}^{\rm int} \rVert > \euler^{-\mu l/4 }\bigr\}, \nonumber \\[1ex]
\text{and} \quad \tilde B_k &:= \bigl\{\omega \in \Omega : \mathcal{L} (\{\tilde \Delta_\omega^k\}) >
 \euler^{-5\mu L /8} \bigr\} . \label{eq:deltatilde2}
\end{align}
Since the resolvent of $H_{\cc_{l,k}}$ at $E$ is not defined if $E$ is an eigenvalue of $H_{\cc_{l,k}}$, let us emphasize that we want the eigenvalues of $H_{\cc_{l,k}}$ to be included in the sets $\Delta_\omega^k$ and $\tilde\Delta_\omega^k$, $k \in \{x,y\}$.
For $\omega \in \tilde B_k$ we have
\begin{align*}
\int_I \bigl\lVert \chi_{l,k}^{\rm out} G_{\cc_{l,k}} (E) \chi_{l,k}^{\rm int} \bigr\rVert^s \drm E
&\geq  \int_{\tilde \Delta_\omega^k} \bigl\lVert \chi_{l,k}^{\rm out} G_{\cc_{l,k}} (E) \chi_{l,k}^{\rm int} \bigr\rVert^s \drm E  \\[1ex]
&\geq \euler^{-\mu l s /4} \euler^{-5\mu l / 8} .
\end{align*}
Using Hypothesis (i) and Fubini's theorem we obtain
\begin{align*}
 \PP (\tilde B_k) &\leq  \euler^{7 \mu l / 8}
\int_I \int_\Omega \bigl\lVert \chi_{k} G_{\cc_{l,k}} (E) \chi_{w} \rVert^s  \PP (\drm \omega) \drm E 
\leq C \lvert I \bigr\rvert \euler^{-\mu l / 8} .
\end{align*}
Set $m = \max\{\lvert \inf I \rvert , \lvert \sup I \rvert\}$ and denote for $k \in \{x,y\}$ by $\{E_{\omega , k}^i\}_{i=1}^{N_k}$ the eigenvalues of $H_{\omega}^{\cc_{l,k}}$ within the interval $B_2 (I) := \{E \in \RR \colon \exists x \in I : \lvert x-E \rvert \leq 2\}$. Note that the number of eigenvalues depends on $\omega$ but satisfies the Weyl-bound $N_k \leq C_{\rm Weyl} \lvert \cc_{l,k} \rvert m^{d/2}$ for all $\omega \in \Omega$ with a constant $C_{\rm Weyl}$ depending only on the single-site potential $U$ and the distribution $\nu$, see e.g. \cite{PasturF-92}. We claim that for $k \in \{x,y\}$,
\begin{equation} \label{eq:claim2}
 \omega \in \Omega \setminus \tilde B_k \quad \Rightarrow
 \quad \Delta_\omega^k \subset \bigcup_{i=1}^{N_k}
 \bigl[E_{\omega,k}^i-\delta , E_{\omega,k}^i + \delta \bigr] =:
 I_{\omega,k}(\delta),
\end{equation}
where $\delta = 2\euler^{-\mu l / 8}$. Indeed, fix $\omega \in \Omega \setminus \tilde B_k$ and suppose that $E\in \Delta_\omega^k \setminus\{E_{\omega , k}^1 , \ldots , E_{\omega , k}^{N_k}\}$ with $\lvert E - E_{\omega,k}^i \rvert >\delta$ for some $i \in \{1,\ldots , N_k\}$. It follows that $\lVert \chi_{l,k}^{\rm out} G_{\cc_{l,k}} (E) \chi_{l,k}^{\rm int} \rVert > \euler^{-\mu l / 8}$ and for any $E'\in \RR$ with $\lvert E-E'\rvert \le 2\euler^{-5\mu l / 8}$ we have
$\delta -\lvert E-E'\rvert\ge \euler^{-\mu l /8} \geq 2 \euler^{-3\mu l / 8}$ since $l \geq 8 \ln 8 / \mu$.
Moreover, the first resolvent identity and the estimate $\lVert (H-E)^{-1} \rVert \leq \dist
(E,\sigma (H))^{-1}$ for self-adjoint $H$ and $E \in \CC\setminus \sigma (H)$ implies
\begin{multline*}
 \lVert \chi_{l,k}^{\rm out} G_{\cc_{l,k}} (E) \chi_{l,k}^{\rm int} \rVert - \lVert \chi_{l,k}^{\rm out} G_{\cc_{l,k}} (E') \chi_{l,k}^{\rm int} \rVert  \\ \leq
\lvert E - E' \rvert \lVert \chi_{l,k}^{\rm out} G_{\cc_{l,k}} (E) \chi_{l,k}^{\rm int} \rVert 
\lVert \chi_{l,k}^{\rm out} G_{\cc_{l,k}} (E') \chi_{l,k}^{\rm int} \rVert \leq \frac{1}{2} \euler^{-\mu l / 8}  ,
\end{multline*}
 and hence, since $E\in \Delta_\omega^k$ and $l \geq 8 \ln (8) / \mu$,
\[
 \lVert \chi_k G_{\cc_{l,k}} (E') \chi_w \rVert \geq \euler^{-\mu l / 8} - \frac{1}{2} \euler^{-\mu l 8} > \euler^{-\mu l / 4} .
\]
We infer that $[E-2\euler^{-5\mu l / 8},E+2\euler^{-5\mu l / 8}]\cap I \subset  \tilde \Delta_\omega^k$ and conclude
$\mathcal{L} (\{\tilde \Delta_\omega^k\}) \ge 2\euler^{-5\mu l / 8}$ since $\lvert I \rvert \geq 2 \euler^{-5\mu l / 8}$ by assumption.
This is however impossible if
$\omega\in \Omega \setminus \tilde B_k$ by \eqref{eq:deltatilde2},
hence the claim \eqref{eq:claim2} follows.
\par
Now we want to estimate the probability of the event $B_{\rm res} := \{\omega \in \Omega :  I \cap I_{\omega,x}(\delta) \cap I_{\omega,y}(\delta) \not = \emptyset \}$
that there are ``resonant'' energies for the two box Hamiltonians $H_{\cc_{l,x}}$ and $H_{\cc_{l,y}}$. For this purpose, we denote by $\Lambda_{l,x}'$ the set of lattice sites $k \in \ZZ^d$ whose coupling constant $\omega_k$ influences the potential values in $\cc_{l,x}$, i.e. $\Lambda_{l,x} = \cup_{j \in \cc_{l,x}} \{k \in \ZZ^d \colon U(j-k) \not = 0\}$. Notice that the expectation in hypothesis (ii) may therefore be replaced by $\EE_{\Lambda_{l,x}'}$. Moreover, since $\Theta$ is a bounded set and $\lvert x-y \rvert_\infty \geq 2l + \diam \Theta + 1$, the operator $H_{\cc_{l,y}}$ and therefore the interval $I_{\omega , y} (\delta)$ is independent of $\omega_k$, $k \in \Lambda_{l,x}'$.
Analogously to the proof of Proposition \ref{prop:replace-msa} we use the product structure of the measure $\PP$. We denote $\Omega \ni \omega = (\omega_1, \omega_2) \in \Omega_{\Lambda_{l,x}'} \times \Omega_{\ZZ^d \setminus\Lambda_{l,x}'}$ and for each $\omega_2 \in \Omega_{\ZZ^d \setminus\Lambda_{l,x}'}$ we set  $\tilde B_{\rm res} (\omega_2) = \{\omega_1 \in \Omega_{\Lambda_{l,x}'} \colon (\omega_1 , \omega_2) \in B_{\rm res}\}$. Since $H_{\cc_{l,y}}$ is independent of $\omega_k$, $k \in \Lambda_{l,x}'$, we obtain for any $\omega_2 \in \Omega_{\ZZ^d \setminus \Lambda_{l,x}'}$ using Chebyshev's inequality and Hypothesis (ii) that
\begin{align*}
\nonumber
\PP_{\Lambda_{l,x}'} (\tilde B_{\rm res} (\omega_2))&\leq \sum_{i=1}^{N_y}
\PP_{\Lambda_{l,x}'} \bigl( \{ \omega_1 \in \Omega_{\Lambda_{l,x}'} : \Tr \bigl( \chi_{I \cap [E_{\omega,y}^i-2\delta , E_{\omega,y}^i + 2\delta ]} (H_{\cc_{l,x}}) \bigr) \geq 1 \}\bigr) \\
& \leq \sum_{i=1}^{N_y}
\EE_{\Lambda_{l,x}'} \bigl( \Tr \bigl( \chi_{I \cap [E_{\omega,y}^i-2\delta , E_{\omega,y}^i + 2\delta ]}  (H_{\cc_{l,x}}) \bigr) \bigr) \nonumber \\
&\leq N_y  C_{\rm W} (4\delta)^{\beta} l^D \leq C_{\rm Weyl} \lvert \cc_{l,y} \rvert m^{d/2} C_{\rm W} (4\delta)^{\beta} l^D. 
\end{align*}
Fubini's theorem now gives
\[
 \PP (B_{\rm res}) \leq C_{\rm Weyl} \lvert \cc_{l,y} \rvert m^{d/2} C_{\rm W} (4\delta)^{\beta} l^D .
\]
Consider now an $\omega \not \in \tilde B_x \cup \tilde B_y$. Recall that \eqref{eq:claim2} tells us that $\Delta_\omega^x \subset  I_{\omega,x}(\delta)$
and $\Delta_\omega^y \subset  I_{\omega,y}(\delta)$. If additionally $\omega \not \in B_{\rm  res}$ then no $E \in I$ can be in
$\Delta_\omega^x $ and $\Delta_\omega^y$ simultaneously. Hence for each $E \in I$ either  $\cc_{l,x}$ or $\cc_{l,y}$
is $(\mu/8,E)$-regular. A contraposition gives us
\begin{align}
\PP \bigl(\{&\text{$\exists \, E \in I$,
$\cc_{l,x}$ and $\cc_{l,y}$ is $(\mu/8,E)$-singular} \}\bigr) \le \PP (\tilde B_x) +\PP (\tilde B_y ) + \PP (B_{\rm res} ) \nonumber
\\ &\leq 2 C \lvert I \rvert \euler^{-\mu l /8} 
+ C_{\rm Weyl}  C_{\rm W} m^{d/2} (4\delta)^{\beta} 2^d l^{D+d}, \label{eq:contraposition}
\end{align}
from which the result follows.
\end{proof}
Assumption (ii) from Proposition \ref{prop3} is a Wegner estimate. The next lemma shows that a certain a-priori estimate on averaged fractional moments of the Green function implies such a Wegner estimate.
%
\begin{lemma} \label{lemma:apriori_wegner_cont}
Let $I \subset \RR$ be a bounded interval, $s \in (0,1)$, $C \in (0,\infty)$, $k \in \ZZ^d$ and $l \in \NN + 1/2$. Assume that for all $\epsilon \in (0,\lvert I \rvert]$, $E \in I$ and $x,y \in \Lambda_{l,k}$
\[
 \EE \bigl( \lVert \chi_{x} G_{\cc_{l,k}}(E+ \i \epsilon) \chi_y \rVert^s \bigr) \leq C  .
\]
Then there is a constant $C_{\rm W}$ depending only on $U$, $\nu$, $I$ and $C$ such that for all $[a,b] \subset I$
\[
 \EE \bigl( \Tr \chi_{[a,b]}(H_{\cc_{l,k}}) \bigr)
 \leq C_{\rm W} (2l+1)^{3d} \lvert b-a \rvert^s  .
\]

\end{lemma}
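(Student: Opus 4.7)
The plan is to mimic the discrete argument of Lemma~\ref{lemma:a-priori-wegner}, replacing the rank-one projections $\langle\delta_x,\cdot\rangle\delta_x$ by the unit-cube projections $\chi_x$. Since $l\in\NN+1/2$, the family $\{\chi_x\}_{x\in\Lambda_{l,k}}$ is a partition of unity on $\cc_{l,k}$, so
$\Tr\chi_{[a,b]}(H_{\cc_{l,k}})=\sum_{x\in\Lambda_{l,k}}\Tr\bigl(\chi_x\chi_{[a,b]}(H_{\cc_{l,k}})\chi_x\bigr).$
For $0<\epsilon\leq b-a$, Stone's formula inequality lifted through the functional calculus gives
$\Tr\bigl(\chi_x\chi_{[a,b]}(H_{\cc_{l,k}})\chi_x\bigr)\leq(4/\pi)\int_a^b \Tr\bigl(\chi_x\im G_{\cc_{l,k}}(E+\i\epsilon)\chi_x\bigr)\,dE.$

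The crucial bridge between the trace and the fractional moment is the spectral-theorem identity $\im G(z)=\epsilon\,G(z)^*G(z)$. Combined with $\sum_{y\in\Lambda_{l,k}}\chi_y=1$ on $\cc_{l,k}$, this gives $\Tr\bigl(\chi_x\im G(z)\chi_x\bigr)=\epsilon\sum_{y\in\Lambda_{l,k}}\lVert\chi_y G(z)\chi_x\rVert_{\rm HS}^2.$ To turn Hilbert--Schmidt norms into operator norms (which is what the hypothesis controls), I would invoke the Schatten-norm interpolation $\lVert A\rVert_{\rm HS}^2\leq\lVert A\rVert\cdot\lVert A\rVert_1$, and then write $\lVert\chi_y G(z)\chi_x\rVert=\lVert\chi_y G(z)\chi_x\rVert^s\cdot\lVert\chi_y G(z)\chi_x\rVert^{1-s}$, bounding the $(1-s)$-power by the a priori estimate $\lVert G(z)\rVert\leq 1/\epsilon$. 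This yields
$\lVert\chi_y G(z)\chi_x\rVert_{\rm HS}^2\leq \epsilon^{s-1}\lVert\chi_y G(z)\chi_x\rVert^s\,\lVert\chi_y G(z)\chi_x\rVert_1.$

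The remaining trace-norm factor is the continuous analog of the elementary bound $\lvert G(z;x,x)\rvert^{1-s}\leq\epsilon^{s-1}$ used in the discrete case. I would control it by elliptic regularity: writing $\chi_y G(z)\chi_x=\chi_y(H_{\cc_{l,k}}+c)^{-M}\cdot(H_{\cc_{l,k}}+c)^M G(z)\chi_x$ for suitable $M>d/2$ and $c$ large enough that $H_{\cc_{l,k}}+c\geq 1$, and combining H\"older for Schatten norms, the iterated resolvent identity $(H_{\cc_{l,k}}+c)G(z)=I+(z+c)G(z)$, and the Weyl-type bound $\lVert(H_{\cc_{l,k}}+c)^{-M}\rVert_1=O((2l+1)^d)$, one obtains a bound of the form $\lVert\chi_y G(z)\chi_x\rVert_1\leq\tilde C(U,\nu,I)\,(2l+1)^d$; this is precisely the source of the third power of $(2l+1)^d$ in the conclusion.

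Assembling everything, taking expectation, using the hypothesis term by term, and summing over $x$ and $y$ in $\Lambda_{l,k}$ gives
$\EE\,\Tr\chi_{[a,b]}(H_{\cc_{l,k}})\leq C'(U,\nu,I,C)\,(2l+1)^{3d}\,\epsilon^{s-1}\,|b-a|,$
the three factors of $(2l+1)^d$ coming from the sum over $x$, the sum over $y$, and the trace-norm bound. The admissible choice $\epsilon=b-a\in(0,\lvert I\rvert]$ then produces the asserted $\lvert b-a\rvert^s$ behavior. The main obstacle is the trace-norm estimate in the third step: for $d=1$ this goes through directly with $M=1$, but for higher $d$ one must iterate the resolvent identity or split $M$ across both factors to avoid an unbounded $(H_{\cc_{l,k}}+c)^{M-1}$ contribution, and this Schatten-class bookkeeping is the main technical burden of the proof.
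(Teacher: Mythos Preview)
Your strategy runs into a genuine obstruction in dimension $d\ge 2$: the operator $\chi_x G_{\cc_{l,k}}(z)\chi_x$ is simply \emph{not} trace class there, so the interpolation $\lVert A\rVert_{\rm HS}^2\le\lVert A\rVert\,\lVert A\rVert_1$ is vacuous for the diagonal (and neighbouring) terms. The reason is the local singularity of the Dirichlet Green kernel, $G(z;\xi,\eta)\sim|\xi-\eta|^{2-d}$ for $d\ge 3$ (logarithmic for $d=2$): this makes $\int_{\cc_{1/2,x}}G(z;\xi,\xi)\,\drm\xi$ diverge, and hence $\lVert\chi_x G(z)\chi_x\rVert_1=\infty$. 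Your proposed elliptic-regularity fix cannot repair this. Writing $\chi_y G\chi_x=\chi_y(H+c)^{-M_1}\cdot(H+c)^{M_1+M_2}G\cdot(H+c)^{-M_2}\chi_x$ and using H\"older requires $M_1,M_2>d/4$ for the outer factors to be Hilbert--Schmidt, while the spectral calculus gives $\lVert(H+c)^{M_1+M_2}G(z)\rVert<\infty$ only if $M_1+M_2\le 1$; these are compatible only for $d=1$. So the ``Schatten-class bookkeeping'' you flag as the main burden is not merely technical---the target quantity is infinite.

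The paper avoids all of this by expanding the trace not spatially but over the (finitely many) eigenfunctions $\psi$ of $H_{\cc_{l,k}}$ with eigenvalue in the fixed interval $I$. The Weyl bound gives $\#\{\psi\}\le C_{\rm Weyl}\,m^{d/2}(2l+1)^d$; Stone's formula is applied to each $\langle\psi,\chi_{[a,b]}(H)\psi\rangle$; one writes $|\langle\psi,G\psi\rangle|=|\langle\psi,G\psi\rangle|^s|\langle\psi,G\psi\rangle|^{1-s}\le\lVert G\rVert^s\epsilon^{s-1}$; and finally the operator norm is covered by $\lVert G\rVert^s\le\sum_{x,y\in\Lambda_{l,k}}\lVert\chi_xG\chi_y\rVert^s$ to invoke the hypothesis. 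The three factors of $(2l+1)^d$ then come from the Weyl count and the double sum over $x,y$---no Schatten norms beyond the operator norm are ever needed.
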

\begin{proof}
Set $m = \max\{\lvert \inf I \rvert , \lvert \sup I \rvert \}$ and let $[a,b] \subset I$. Since we have for any $\lambda \in \RR$ and $0<\epsilon \leq b-a$
\[
 \arctan \left( \frac{\lambda - a}{\epsilon} \right) - \arctan \left( \frac{\lambda - b}{\epsilon} \right) \geq \frac{\pi}{4} \ \chi_{[a,b]}(\lambda) ,
\]
one obtains an inequality version of Stones formula, namely for all $\epsilon \in (0, b-a]$ and all $\psi \in L^2 (\Lambda)$ we have
\[
 \langle \psi , \chi_{[a,b]} (H_{\cc_{l,k}}) \psi \rangle
\leq \frac{4}{\pi} \int_{[a,b]} \im \langle \psi , G_{\cc_{l,k}} (E+ \i \epsilon) \psi \rangle \drm E .
\]
Let $B_\omega$ be the set of normalized eigenfunctions corresponding to the eigenvalues of $H_{\omega}^{\cc_{l,k}}$ in $I$. Note that for all $\omega \in \Omega$, the number of elements in $B_\omega$ is bounded from above by $C_{\rm Weyl} \lvert \cc_{l,x}\rvert m^{d/2} =: N$ with $C_{\rm Weyl}$ depending only on $U$ and $\nu$, see e.g. \cite{PasturF-92}.
Using triangle inequality, $\lvert \im z\rvert \leq \lvert z\rvert$
for $z \in \CC$, Fubini's theorem and $\lvert \langle \psi , G_{\cc_{l,k}} (E+\i
\epsilon)\psi \rangle \rvert^{1-s} \leq \dist (\sigma(H_{\cc_{l,k}}) ,
E+i \epsilon)^{s-1} \leq \epsilon^{s-1}$ we
obtain for all $\epsilon \in (0,b-a]$
\begin{align*}
\EE \bigl( \Tr \chi_{[a,b]}(H_{\cc_{l,k}}) \bigr) & \leq \EE \Bigl( \sum_{\psi \in B_\omega} \frac{4}{\pi} \int_{[a,b]} \im \langle \psi , G_{\cc_{l,k}} (E+\i \epsilon) \psi \rangle \drm E  \Bigr) \\
&  \leq  \frac{\epsilon^{s-1}}{\pi / 4} \EE \Bigl( \sum_{\psi \in B_\omega} \int_{[a,b]}  \bigl\| G_{\cc_{l,k}} (E+\i \epsilon) \bigr\|^{s} \drm E  \Bigr)  \\
&  \leq  \frac{N \epsilon^{s-1}}{\pi / 4} \int_{[a,b]} \EE \Bigl(   \bigl\| G_{\cc_{l,k}} (E+\i \epsilon) \bigr\|^{s}   \Bigr) \drm E. 
\end{align*}
We now use $l \in \NN + 1/2$ and obtain by covering $\cc_{l,x}$ with unit cubes the estimate $\| G_{\cc_{l,x}} (E+\i \epsilon) \| \leq \sum_{x,y\in \Lambda_{l,k}} \| \chi_x G_{\cc_{l,k}} (E+\i \epsilon) \chi_y \|$. We use further the estimate $(\sum a_n )^s \leq \sum a_n^s$ for $s \in (0,1)$ and $a_n > 0$ and obtain by using the hypothesis of the Lemma
\begin{align*}
\EE \bigl( \Tr \chi_{[a,b]}(H_{\cc_{l,k}}) \bigr) &  \leq  \frac{N \epsilon^{s-1}}{\pi / 4}   \sum_{j,k\in \Lambda_{l,k}}  \int_{[a,b]}  \EE \Bigl( \bigl\| \chi_j G_{\cc_{l,k}} (E+\i \epsilon) \chi_k \bigr\|^{s} \Bigr) \drm E   \\
& \leq 4\pi^{-1}\epsilon^{s-1}  N  \lvert \Lambda_{l,k} \rvert^2 \, \lvert b-a \rvert C \\[1ex]
&\leq 4\pi^{-1}\epsilon^{s-1}  C_{\rm Weyl} \lvert \cc_{l,k} \rvert m^{d/2} \lvert \Lambda_{l,k} \rvert^2 \lvert b-a \rvert C.
\end{align*}
We minimize the right hand side by choosing $\epsilon = b-a$ and obtain the statement of the lemma.
\end{proof}
\begin{proof}[Proof of Theorem \ref{thm:fmb_loc_cont}]
First consider the case where $\lvert I \rvert \leq 1$.
In order to prove the theorem we first verify the hypothesis of Proposition \ref{prop3}. Let $l \in 3 \NN + 3/2$ and $k \in \ZZ^d$. Note that $\cc_{l,k}^{\rm int}$ and $\cc_{l,k}^{\rm out}$ can be covered exactly by unit cubes. By assumption we have
\begin{align*}
 \EE \Bigl( \bigl\lVert \chi_{l,k}^{\rm out} G_{\cc_{l,k}} (E + \i \epsilon) \chi_{l,k}^{\rm int} \bigr\rVert^s\Bigr) &\leq \sum_{x \in \Lambda_{l,k}\setminus \Lambda_{l-1,k}} \sum_{y\in \Lambda_{l/3 , k}} \EE \Bigl( \bigl\lVert \chi_x  G_{\cc_{l,k}} (E + \i \epsilon) \chi_y \bigr\rVert^s\Bigr) \\[1ex]
& \leq 2d(2l)^{d-1} (2l/3)^d C \euler^{-\mu\left(l - 1 - \frac{l}{3} \right)} \\[1ex] &= 
2d(2l)^{d-1} (2l/3)^d C \euler^\mu \euler^{-\frac{2\mu}{3}l}
\end{align*}
Hence the hypothesis (i) of Proposition \ref{prop3} is satisfied with an $l$-dependent constant $C$. In view of Ineq.~\eqref{eq:contraposition} we see that this volume dependence does not change the statement of Proposition~\ref{prop3} in a quantitative way. Hypothesis (ii) of Proposition \ref{prop3} is satisfied by Lemma \ref{lemma:apriori_wegner_cont} since $\lvert I \rvert \leq 1$. Since $l \in 3\NN + 3/2$ and $k \in \ZZ^d$ were arbitrary, we infer from the conclusion of Proposition \ref{prop3} that for any $p > 0$ there is an $\tilde l$, such that for all $l \in 3 \NN + 3/2$ with $l \geq \tilde l$ and all $x,y \in \ZZ^d$ with $\lvert x-y \rvert_\infty \geq 2l + \diam \Theta + 1$ we have
\begin{equation} \label{eq:output_MSA}
\PP \bigl(\{\omega \in \Omega \colon \forall E \in I : \text{$\cc_{l,x}$ or $\cc_{l,y}$ is $(\mu / 12,E)$-regular}\}\bigr) \geq 1 - l^{-2p} .
\end{equation}
Roughly speaking, Ineq.~\eqref{eq:output_MSA} is the typical output of the multiscale analysis. That Ineq.~\eqref{eq:output_MSA} implies exponential localization is a well known fact. For the classical Anderson model on $\ZZ^d$ this was implemented in \cite{DreifusK-89}. For the case of alloy-type models with a single-site potential supported on the unit cube we refer to \cite{Stollmann-01}. A generalization to models with a bounded support of the single site-potential is straightforward, cf.\ Theorem~\ref{thm:vDK-2.3}. For the non-compactly supported case, including our model as a special case, this implication was shown in \cite{KirschSS-98b}. In particular, all the mentioned papers consider models with a non-negative single-site potential. However, the assumption that the single-site potential is non-negative is not used, and so the existing proofs of this fact apply directly to our setting. Hence, we obtain for almost all $\omega\in\Omega$ that $H_\omega$ exhibits exponential localization in $I$.
\par
In the case where $\lvert I \rvert \geq 1$, we can cover $I$ by countable many intervals with length smaller or equal one. Since a countable intersection of sets with measure one has full measure, we obtain the statement of the theorem.
\end{proof}

\addchap{Theses}
We consider a family self-adjoint operators $H_\omega : \ell^2 (\ZZ^d) \to \ell^2 (\ZZ^d)$ given by
\begin{equation*} 
 H_\omega = -\Delta + \lambda V_\omega, \quad \omega \in \Omega .
\end{equation*}
Here $\Delta$ denotes the discrete Laplacian and $V_\omega$ is a multiplication operator by the function
\[
 V_\omega (x) = \sum_{k \in \ZZ^d} \omega_k u(x-k) ,
\]
and $\lambda>0$ measures the strength of the disorder. We assume that $\omega = (\omega_k)_{k \in \ZZ^d}$ is a sequence of independent identically distributed random variables, each distributed according to a probability measure $\nu$ on $\RR$ with compact support, and the single-site potential $u$ is a function in $\ell^1 (\ZZ^d ; \RR)$. The family of operators $H_\omega$, $\omega \in \Omega = \times_{k \in \ZZ^d} \RR$, is called discrete alloy-type model. We also consider the continuous analogue of the discrete alloy-type model, the alloy-type model, and use with some abuse of notation the same symbol $H_\omega$. Here the Hilbert space is replaced by $L^2 (\RR)$, $\Delta$ denotes the Laplace operator and the fuction $u$ is replaced by a function $U$ from $\RR^d$ to $\RR$. A precise definition of the models can be found in Section~\ref{sec:model} and \ref{sec:cont_model}.
\par
The key feature of both models is that the single-site potential is allowed to change its sign. As a consequence, the random operators depend, in the sense of quadratic forms, non-monotonically on the random parameters. However, the existing methods (multiscale analysis and fractional moment method) for studying localization phenomena of random operators strongly rely on the fact that the operator depends monotonically on the random parameters. For this reason one has to develop further the methods in order to prove localization despite of the lack of monotonicity. 
\par
Concerning the discrete alloy-type model we have the following theses.
\begin{enumerate}[(1)]
\item \textbf{[Regularity properties]} Localization for random operators on $\ell^2 (\ZZ^d)$ where the potential values are not independent have been studied earlier in the literature by imposing certain regularity assumptions on the joint distribution of the random potential values \cite{AizenmanM-93,DreifusK-91,AizenmanG-98,Hundertmark-00,AizenmanFSH-01,Hundertmark-08}. These regularity assumptions are not satisfied for the discrete alloy-type model in general.
\item \textbf{[Deterministic spectrum]} The spectrum of the discrete alloy-type model is almost surely a non-random set. The same holds true for the spectral components (i.e. pure point, absolutely continuous and singular continuous spectrum). This holds also true for the alloy-type model.
\item \textbf{[The deterministic spectrum is an interval]} Let $\supp \nu$ be a bounded interval. Then the spectrum of the discrete alloy-type model is almost surely an interval.
\item \textbf{[Exponential localization]} Assume that $\Theta = \supp u$ is a finite set, the measure $\nu$ has a density $\rho \in L^\infty (\RR)$, and the function $u$ satisfies $u (k) > 0$ for all $k$ at the boundary of $\Theta$ (we call this Assumption \ref{ass:monotone}). Let further $\lambda$ be sufficiently large. Then the spectrum of the discrete alloy-type model is almost surely only of pure point type and the eigenfunctions corresponding to the eigenvalues decay exponentially.
\end{enumerate}
For the proof of exponential localization for the discrete alloy-type model we use the fractional moment method. The concept of this method is to control the expectation value of fractional powers of the Green function $G_\omega (z;x,y) = \langle \delta_x , (H_\omega - z)^{-1} \delta_y \rangle$ or the finite volume Green function $G_\Lambda (z;x,y) = \langle \delta_x , (H_\Lambda - z)^{-1} \delta_y \rangle$, where $\Lambda \subset \ZZ^d$ is a finite set  and $H_\Lambda$ denotes the natural restriction of $H_\omega$ to $\ell^2 (\Lambda)$. The main goal of the method is to show the so-called fractional moment decay from which exponential localization follows by separate arguments. 
\begin{enumerate}[(1)] \setcounter{enumi}{4}
\item \textbf{[Fractional moment decay]} Let Assumption \ref{ass:monotone} be satisfied and $\lambda$ be sufficiently large. Then the discrete alloy-type model satisfies a so-called fractional moment decay. This means that averaged fractional powers of the Green function decay exponentially.
\item \textbf{[Fractional moment decay implies exponential localization]} Let $\Theta$ be a finite set. Then the fractional moment decay implies exponential localization, i.e., the discrete alloy-type model has almost surely only pure point spectrum and the eigenfunctions corresponding to the eigenvalues decay exponentially. Let us note that this implication does not rely on the assumptions that the single-site potential has fixed sign at the boundary of its support and that the measure $\nu$ has a density. 
\end{enumerate}
For a proof of the fractional moment decay one typically first shows the boundedness of an averaged fractional power of Green's function, called the a priori bound. Via a decoupling argument one then shows the so-called finite volume criterion from which the fractional moment decay follows directly in the case of sufficiently strong disorder using the a priori bound.
\begin{enumerate}[(1)] \setcounter{enumi}{6}
 \item \textbf{[A priori bound]} If Assumption~\ref{ass:monotone} is satisfied, the expectation value of fractional powers of the Green function is bounded, where the expectation is only taken with respect to finitely many random variables. Moreover, the upper bound depends in a quantitative way on the disorder parameter $\lambda$, and the bound gets small if the disorder gets large.
\item \textbf{[Finite volume criterion]} The discrete alloy type model satisfies a finite volume criterion if Assumption \ref{ass:monotone} is satisfied. This is a criterion which permits us to conclude fractional moment decay from some boundedness condition of the finite volume Green function.
\end{enumerate}
There is an alternative a priori bound to the a priori bound from theses (7). However, it is not applicable to show a finite volume criterion, since the average over the randomness is non-local.
\begin{enumerate}[(1)]  \setcounter{enumi}{8}
 \item \textbf{[Another a priori bound]} Let $\Theta$ be a finite set and $\sum_{k} u(k) \not = 0$. Then the expectation value of fractional powers of the finite volume Green functions is bounded uniformly in the volume, and the bound gets small if the disorder gets large.
\end{enumerate}
If one pursues a proof of exponential localization not via the fractional moment method but using the multiscale analysis, then there is a need for a Wegner estimate. A Wegner estimate is an estimate on the expected number of eigenvalues of a finite volume operator in some energy interval $I$.
\begin{enumerate}[(1)]\setcounter{enumi}{9}
 \item \textbf{[Wegner estimate, discrete alloy-type model]} Let the measure $\nu$ have a density $\rho$ of finite total variation and that $u$ decays exponentially. Then the discrete alloy-type model satisfies a Wegner estimate. This Wegner estimate is suitable for a proof of localization according to the multiscale analysis.
\end{enumerate}
Let us now switch from the discrete alloy-type model to the alloy-type model on $L^2 (\RR^d)$ with sign-changing single-site potential. In analogue of theses (10) and (6) we have the following.
\begin{enumerate}[(1)]\setcounter{enumi}{10}
 \item \textbf{[Wegner estimate, alloy-type model]} Assume that the measure $\nu$ has a density $\rho$ of finite total variation and $U$ is a generalized step-function with an exponentially decaying convolution vector. Then the alloy-type model satisfies a Wegner estimate. This Wegner estimate can be used for a proof of localization via multiscale analysis.
\item \textbf{[Fractional moment decay implies exponential localization]} Consider the alloy-type model with a single-site potential of compact support. Then the typical fractional moment decay implies exponential localization. Note that there is no additional assumption on the measure $\nu$ and that the single-site potential may change its sign arbitrarily.
\end{enumerate}
%

%

\begin{thebibliography}{BdMNSS06}

\bibitem[AEN{\etalchar{+}}06]{AizenmanENSS-06}
M.~Aizenman, A.~Elgart, S.~Naboko, J.~H. Schenker, and G.~Stolz, \emph{{M}oment
  analysis for localization in random {S}chr\"o{}dinger operators}, Invent.
  Math. \textbf{163} (2006), no.~2, 343--413.

\bibitem[AG73]{AmreinG-73}
W.~Amrein and V.~Georgescu, \emph{On the characterization of bound states and
  scattering states}, Helv. Phys. Acta \textbf{46} (1973), no.~5, 635--658.

\bibitem[AG98]{AizenmanG-98}
M.~Aizenman and G.~M. Graf, \emph{Localization bounds for an electron gas}, J.
  Phys. A: Math. Theor. \textbf{31} (1998), no.~32, 6783.

\bibitem[Aiz94]{Aizenman-94}
M.~Aizenman, \emph{Localization at weak disorder: some elementary bounds}, Rev.
  Math. Phys. \textbf{6} (1994), no.~5a, 1163--1182.

\bibitem[AM93]{AizenmanM-93}
M.~Aizenman and S.~Molchanov, \emph{Localization at large disorder and at
  extreme energies: An elemantary derivation}, Commun. Math. Phys. \textbf{157}
  (1993), no.~2, 245--278.

\bibitem[And58]{Anderson-58}
P.~W. Anderson, \emph{Absence of diffusion in certain random lattices}, Phys.
  Rev. \textbf{109} (1958), no.~5, 1492--1505.

\bibitem[ASFH01]{AizenmanFSH-01}
M.~Aizenman, J.~H. Schenker, R.~M. Friedrich, and D.~Hundertmark,
  \emph{Finite-volume fractional-moment criteria for {A}nderson localization},
  Commun. Math. Phys. \textbf{224} (2001), no.~1, 219--253.

\bibitem[ASW06]{AizenmanSW-06}
M.~Aizenman, R.~Sims, and S.~Warzel, \emph{Stability of the absolutely
  continuous spectrum of random {S}chr\"odinger operators on tree graphs},
  Probab. Theory Related fields \textbf{136} (2006), no.~3, 363--394.

\bibitem[AW09]{AizenmanW-09}
M.~Aizenman and S.~Warzel, \emph{Localization bounds for multiparticle
  systems}, Commun. Math. Phys. \textbf{290} (2009), no.~3, 903--934.

\bibitem[AW10]{AizenmanW-10}
M.~Aizenman and S.~Warzel, \emph{Complete dynamical localization in disordered quantum
  multi-particle systems}, XVIth International Congress On Mathematical Physics
  (P.~Exner, ed.), 2010, pp.~556--565.

\bibitem[BdMCS11]{BoutetCY-11}
A.~Boutet~de Monvel, V.~Chulaevsky, and Y.~Suhov, \emph{Dynamical localization
  for a multi-particle model with an alloy-type external random potential},
  Nonlinearity \textbf{24} (2011), no.~5, 1451.

\bibitem[BdMCSS10]{BoutetCSY-10}
A.~Boutet~de Monvel, V.~Chulaevsky, P.~Stollmann, and Y.~Suhov,
  \emph{Wegner-type bounds for a multi-particle continuous {A}nderson model
  with an alloy-type external potential}, J. Stat. Phys. \textbf{138} (2010),
  no.~4-5, 553--566.

\bibitem[BdMNSS06]{BoutetNSS-06}
A.~Boutet~de Monvel, S.~Naboko, P.~Stollmann, and G.~Stolz, \emph{Localization
  near fluctuation boundaries via fractional moments and applications}, J.
  Anal. Math. \textbf{100} (2006), 83--116.

\bibitem[Ber68]{Berezanskii-68}
J.~M. Berezanskii, \emph{Expansion in eigenfunctions of self-adjoint
  operators}, Transl. Math. Monographs, vol.~17, American Mathematical Society,
  1968.

\bibitem[BHS07]{BellissardHS-07}
J.~V. Bellissard, P.~D. Hislop, and G.~Stolz, \emph{Correlations estimates in
  the lattice {A}nderson model}, J. Stat. Phys. \textbf{129} (2007), no.~4,
  649--662.

\bibitem[BLS09]{BakerLS-08}
J.~Baker, M.~Loss, and G.~Stolz, \emph{Low energy properties of the random
  displacement model}, J. Funct. Anal. \textbf{256} (2009), no.~8.

\bibitem[Bou09]{Bourgain-09}
J.~Bourgain, \emph{An approach to {W}egner's estimate using subharmonicity}, J.
  Stat. Phys. \textbf{134} (2009), no.~5-6, 969--978.

\bibitem[BS00]{BuschmanS-00}
D.~Buschmann and G.~Stolz, \emph{Two-parameter spectral averaging and
  localization for non-monotonic random {S}chr\"odinger operators}, T. Am.
  Math. Soc. \textbf{353} (2000), no.~2, 635--653.

\bibitem[CE12]{CaoE-11}
Z.~Cao and A.~Elgart, \emph{The weak localization for the alloy-type {A}nderson
  model on a cubic lattice}, J. Stat. Phys. \textbf{148} (2012), no.~6, 1006--1039.

\bibitem[CH94]{CombesH-94}
J.-M. Combes and P.~D. Hislop, \emph{Localization for some continuous, random
  {H}amiltonians in d-dimensions}, J. Funct. Anal. \textbf{124} (1994), no.~1,
  149--180.

\bibitem[CL90]{CarmonaL-90}
R.~Carmona and J.~Lacroix, \emph{Spectral theory of random {S}chr\"odinger
  operators}, Birkh\"a{}user, Boston, 1990.

\bibitem[CS08]{ChulaevskyS-08}
V.~Chulaevsky and Y.~Suhov, \emph{Wegner bounds for a two-particle tight
  binding model}, Commun. Math. Phys. \textbf{283} (2008), no.~2, 479--489.

\bibitem[CS09a]{ChulaevskyS-09a}
V.~Chulaevsky and Y.~Suhov, \emph{Eigenfunctions in a two-particle {A}nderson tight binding
  model}, Commun. Math. Phys. \textbf{289} (2009), no.~2, 701--723.

\bibitem[CS09b]{ChulaevskyS-09b}
V.~Chulaevsky and Y.~Suhov, \emph{Multi-particle {A}nderson localisation: Induction on the number
  of particles}, Math. Phys. Anal. Geom. \textbf{12} (2009), no.~2, 117--139.

\bibitem[CT73]{CombesT-73}
J.-M. Combes and L.~Thomas, \emph{Asymptotic behaviour of eigenfunctions for
  multiparticle {S}chr\"odinger operators}, Commun. Math. Phys. \textbf{34}
  (1973), no.~4, 251--270.

\bibitem[dRJLS96]{RioJLS1996}
R.~del Rio, S.~Jitomirskaya, Y.~Last, and B.~Simon, \emph{Operators with
  singular continuous spectrum, {IV}. {H}ausdorff dimensions, rank one
  perturbations, and localization}, J. Anal. Math. \textbf{69} (1996),
  153--200.

\bibitem[DS01]{DamanikS-01}
D.~Damanik and P.~Stollmann, \emph{Multi-scale analysis implies strong
  dynamical localization}, Geom. Funct. Anal. \textbf{11} (2001), no.~1,
  11--29.

\bibitem[EKTV12]{ElgartKTV-11}
A.~Elgart, H.~Kr{\"u}ger, M.~Tautenhahn, and I.~Veseli\'c, \emph{Discrete
  {S}chr\"odinger operators with random alloy-type potential}, Spectral
  Analysis of Quantum Hamiltonians: Spectral Days 2010 (Basel) (R.~Benguria,
  E.~Friedman, and M.~Mantoiu, eds.), Operator Theory: Advances and
  Applications, vol. 224, Springer, 2012.

\bibitem[Ens78]{Enss-78}
V.~Enss, \emph{Asymptotic completeness for quantum-mechanical potential
  scattering}, Commun. in Math. Phys. \textbf{61} (1978), no.~3, 285--291.

\bibitem[ETV10]{ElgartTV-10}
A.~Elgart, M.~Tautenhahn, and I.~Veseli\'c, \emph{Localization via fractional
  moments for models on $\mathbb{Z}$ with single-site potentials of finite
  support}, J. Phys. A: Math. Theor. \textbf{43} (2010), no.~47, 474021.

\bibitem[ETV11]{ElgartTV-11}
A.~Elgart, M.~Tautenhahn, and I.~Veseli\'c, \emph{Anderson localization for a class of models with a
  sign-indefinite single-site potential via fractional moment method}, Ann.
  Henri Poincar\'e \textbf{12} (2011), no.~8, 1571--1599.

\bibitem[FHLM97]{FischerHLM-97}
W.~Fischer, T.~Hupfer, H.~Leschke, and P.~M\"u{}ller, \emph{Existence of the
  density of states for multi-dimensional continuum {S}chr\"o{}dinger operators
  with {G}aussian random potentials}, Commun. Math. Phys. \textbf{190} (1997),
  no.~1, 133--141.

\bibitem[FHS07]{FroeseHS-07}
R.~Froese, D.~Hasler, and W.~Spitzer, \emph{Absolutely continuous spectrum for
  the {A}nderson model on a tree: a geometric proof of {K}lein's theorem},
  Commun. Math. Phys. \textbf{269} (2007), no.~1, 239--257.

\bibitem[FMSS85]{FroehlichMSS-85}
J.~Fr\"o{}hlich, F.~Martinelli, E.~Scoppola, and T.~Spencer, \emph{Constructive
  proof of localization in the {A}nderson tight binding model}, Commun. Math.
  Phys. \textbf{101} (1985), no.~1, 21--46.

\bibitem[FS83]{FroehlichS-83}
J.~Fr\"ohlich and T.~Spencer, \emph{Absence of diffusion in the {A}nderson
  tight binding model for large disorder or low energy}, Commun. Math. Phys.
  \textbf{88} (1983), no.~2, 151--184.

\bibitem[GDB98]{GerminetB-98}
F.~Germinet and S.~De~Bi\`e{}vre, \emph{Dynamical localization for discrete and
  continuous random {S}chr\"odinger operators}, Commun. Math. Phys.
  \textbf{194} (1998), no.~2, 323--341.

\bibitem[GK01]{GerminetK-01}
F.~Germinet and A.~Klein, \emph{Bootstrap multiscale analysis and localization
  in random media}, Commun. Math. Phys. \textbf{222} (2001), no.~2, 415--448.

\bibitem[GMP77]{GoldsteinMP-77}
I.~Ya. Gol'dsheid, S.~A. Molchanov, and L.~A. Pastur, \emph{A pure point
  spectrum of the stochastic one-dimensional {S}chr\"odinger operator}, Funct.
  Anal. Appl. \textbf{11} (1977), no.~1, 1--8.

\bibitem[GR09]{GunningR-09}
R.~C. Gunning and H.~Rossi, \emph{Analytic functions of several complex
  variables}, AMS Chelsea Publishing, Providence, RI, 2009, Reprint of the 1965
  original.

\bibitem[Gra94]{Graf-94}
G.~M. Graf, \emph{{A}nderson localization and the space-time characteristic of
  continuum states}, J. Stat. Phys. \textbf{75} (1994), no.~1-2, 337--346.

\bibitem[His08]{Hislop-08}
P.~D. Hislop, \emph{Lectures on random {S}chr\"o{}dinger operators}, Contemp.
  Math. \textbf{476} (2008), 41--131.

\bibitem[HK02]{HislopK-02}
P.~D. Hislop and F.~Klopp, \emph{The integrated density of states for some
  random operators with nonsign definite potentials}, J. Funct. Anal.
  \textbf{195} (2002), no.~1, 12--47.

\bibitem[HM84]{HoldenM-84}
H.~Holden and F.~Martinelli, \emph{On absence of diffusion near the bottom of
  the spectrum for a random {S}chr\"odinger operator on {$L^2
  (\mathbb{R}^\nu)$}}, Commun. Math. Phys. \textbf{93} (1984), no.~2, 197--217.

\bibitem[HSS10]{HamzaSS-10}
E.~Hamza, R.~Sims, and G.~Stolz, \emph{A note on fractional moments for the
  one-dimensional continuum {A}nderson model}, J. Math. Anal. Appl.
  \textbf{365} (2010), no.~2, 435--446.

\bibitem[Hun00]{Hundertmark-00}
D.~Hundertmark, \emph{On the time-dependent approach to {A}nderson
  localization}, Math. Nachr. \textbf{214} (2000), no.~1, 25--38.

\bibitem[Hun08]{Hundertmark-08}
D.~Hundertmark, \emph{A short introduction to {A}nderson localization}, Analysis and
  Stochastics of Growth Processes and Interface Models, vol.~1, Oxford
  Scholarship Online Monographs, 2008, pp.~194--219.

\bibitem[Jor08]{Jorgensen2008}
P.~E.~T. Jorgensen, \emph{Essentially selfadjointness of the
  graph-{L}aplacian}, J. Math. Phys \textbf{49} (2008), 073510.

\bibitem[Kir08a]{Kirsch-08}
W.~Kirsch, \emph{An invitation to random {S}chr\"odinger operators}, Random
  {S}chr\"odinger operators, Panoramas et synth\`eses, vol.~25, Soci\'et\'e
  Math\'ematique de France, 2008, with an appendix by Fr{\'e}d{\'e}ric Klopp,
  pp.~1--119.

\bibitem[Kir08b]{Kirsch-08b}
W.~Kirsch, \emph{A {W}egner estimate for multi-particle random {H}amiltonians},
  J. Math. Phys. Anal. Geo. \textbf{4} (2008), no.~1, 121--127.

\bibitem[KKO00]{KirschKO-00}
W.~Kirsch, M.~Krishna, and J.~Obermeit, \emph{Anderson model with decaying
  randomness: mobility edge}, Math. Z. \textbf{235} (2000), no.~3, 421--433.

\bibitem[KL12]{KellerL2011}
M.~Keller and D.~Lenz, \emph{Dirichlet forms and stochastic completeness of
  graphs and subgraphs}, J. reine angew. Math. \textbf{666} (2012), 189--223.

\bibitem[Kle98]{Klein-98}
A.~Klein, \emph{Extended states for the {A}nderson model on the {B}ethe
  lattice}, Adv. Math. \textbf{133} (1998), no.~1, 163--184.

\bibitem[Kle08]{Klein-08}
A.~Klein, \emph{Multiscale analysis and localization of random operators},
  Random {S}chr\"odinger operators, Panoramas et synth\`eses, vol.~25,
  Soci\'et\'e Math\'ematique de France, 2008, pp.~121--159.

\bibitem[KLNS12a]{KloppLNS-11b}
F.~Klopp, M.~Loss, S.~Nakamura, and G.~Stolz, \emph{Localization for the random
  displacement model}, Duke Math. J. \textbf{161} (2012), no.~4, 587--621.

\bibitem[KLNS12b]{KloppLNS-11}
F.~Klopp, M.~Loss, S.~Nakamura, and G.~Stolz, \emph{Understanding the random displacement model: From ground-state
  properties to localization}, Spectral Analysis of Quantum Hamiltonians:
  Spectral Days 2010 (Basel) (R.~Benguria, E.~Friedman, and M.~Mantoiu, eds.),
  Operator Theory: Advances and Applications, vol. 224, Springer, 2012.

\bibitem[Klo95]{Klopp-95a}
F.~Klopp, \emph{Localization for some continuous random {S}chr\"o{}dinger
  operators}, Commun. Math. Phys. \textbf{167} (1995), no.~3, 553--569.

\bibitem[Klo02]{Klopp-02}
F.~Klopp, \emph{Weak disorder localization and {L}ifshitz tails: continuous
  {H}amiltonians}, Ann. Henri Poincar\'e \textbf{3} (2002), no.~4, 711--737.

\bibitem[KLW12]{KellerLW-11}
M.~Keller, D.~Lenz, and S.~Warzel, \emph{Absolutely continuous spectrum for
  random operators on trees of finite cone type}, J. Anal. Math. \textbf{118} (2012), no.~1, 363--396.

\bibitem[KM82]{KirschM-82}
W.~Kirsch and F.~Martinelli, \emph{On the ergodic properties of the spectrum of
  general random operators}, J. Reine Angew. Math. \textbf{334} (1982),
  141--156.

\bibitem[KN09]{KloppN-09}
F.~Klopp and S.~Nakamura, \emph{Spectral extrema and {L}ifshitz tails for
  non-monotonous alloy type models}, Commun. Math. Phys. \textbf{287} (2009),
  no.~3, 1133--1143.

\bibitem[KN10]{KloppN-10}
F.~Klopp and S.~Nakamura, \emph{Lifshitz tails for generalized alloy-type random {S}chr\"odinger
  operators}, Anal. PDE \textbf{3} (2010), no.~4, 409--426.

\bibitem[Kri11]{Krishna-11}
M.~Krishna, \emph{{AC} spectrum for a class of random operators at small
  disorder}, arXiv:1107.1965v1 [math-ph] (2011).

\bibitem[Kr{\"u}12]{Krueger-11}
H.~Kr{\"u}ger, \emph{Localization for random operators with non-monotone
  potentials with exponentially decaying correlations}, Ann. Henri Poincar\'e
  \textbf{13} (2012), no.~3, 543--598.

\bibitem[KS80]{KunzS-80}
H.~Kunz and B.~Souillard, \emph{Sur le spectre des op\'erateurs aux
  diff\'erences finies al\'eatoires}, Commun. Math. Phys. \textbf{78} (1980),
  no.~2, 201--246.

\bibitem[KSS98]{KirschSS-98b}
W.~Kirsch, P.~Stollmann, , and G.~Stolz, \emph{Anderson localization for random
  {S}chr\"odinger operators with long range interactions}, Commun. Math. Phys.
  \textbf{195} (1998), no.~3, 495--507.

\bibitem[Kuz96]{Kuzhel-96}
A.~Kuzhel, \emph{Characteristic functions and models of nonself-adjoint
  operators}, Kluwer Academic Publishers, Dordrecht, 1996.

\bibitem[KV06]{KostrykinV-06}
V.~Kostrykin and I.~Veseli\'c, \emph{On the {L}ipschitz continuity of the
  integrated density of states for sign-indefinite potentials}, Math. Z.
  \textbf{252} (2006), no.~2, 367--392.

\bibitem[LL01]{LiebL2001}
E.~H. Lieb and M.~Loss, \emph{Analysis}, American Mathmatical Society,
  Providence, 2001.

\bibitem[MS69]{MotzkinS-69}
T.~S. Motzkin and E.~G. Straus, \emph{Divisors of polynomials and power series
  with positive coefficients}, Pac. J. Math. \textbf{29} (1969), no.~3,
  641--652.

\bibitem[Nar95]{Narasimhan-95}
R.~Narasimhan, \emph{Several complex variables}, Chicago Lectures in
  Mathematics, University of Chicago Press, Chicago, IL, 1995, Reprint of the
  1971 original.

\bibitem[Pas80]{Pastur-80}
L.~Pastur, \emph{Spectral properties of disordered systems in one-body
  approximation}, Commun. Math. Phys. \textbf{75} (1980), no.~2, 179--196.

\bibitem[PF92]{PasturF-92}
L.~Pastur and A.~Figotin, \emph{Spectra of random and almost-periodic
  operators}, Springer, 1992.

\bibitem[Por94]{Port-94}
S.~C. Port, \emph{Theoretical probability for applications}, Wiley, New York,
  1994.

\bibitem[PTV11]{PeyerimhoffTV-11}
N.~Peyerimhoff, M.~Tautenhahn, and I.~Veseli\'c, \emph{Wegner estimate for
  alloy-type models with sign-changing and exponentially decaying single-site
  potentials}, Technische Universit\"at Chemnitz, Preprintreihe der Fakult\"at
  f\"ur Mathematik, Preprint 2011-9, ISSN 1614-8835, 2011.

\bibitem[Rem84]{Remmert-84}
R.~Remmert, \emph{Funktionentheorie 1}, Springer, Berlin, 1984.

\bibitem[RS80a]{ReedS-78b}
M.~Reed and B.~Simon, \emph{Methods of modern mathematical physics {II}:
  Fourier analysis, self-adjointness}, Academic press, San Diego, 1980.

\bibitem[RS80b]{ReedS-78d}
M.~Reed and B.~Simon, \emph{Methods of modern mathematical physics {IV}: Analysis of
  operators}, Academic press, San Diego, 1980.

\bibitem[Rue69]{Ruelle-69}
D.~Ruelle, \emph{A remark on bound states in potential scattering theory},
  Nuovo Cimento A \textbf{61} (1969), no.~4, 655--662.

\bibitem[Sch26]{Schroedinger-26}
E.~Schr\"odinger, \emph{An undulatory theory of the mechanics of atoms and
  molecules}, Phys. Rev. \textbf{28} (1926), no.~6, 1049–1070.

\bibitem[Sim82]{Simon-82}
B.~Simon, \emph{Schr\"o{}dinger semigroups}, Bull. Amer. Math. Soc. \textbf{7}
  (1982), no.~3, 447--526.

\bibitem[SNF70]{NagyF-70}
B.~Sz.-Nagy and C.~Foia{\c{s}}, \emph{Harmonic analysis of operators on
  {H}ilbert scales}, North-Holland Publishing Company, Amsterdam, 1970.

\bibitem[Sto01]{Stollmann-01}
P.~Stollmann, \emph{Caught by disorder}, Birkh\"a{}user, 2001.

\bibitem[Sto02]{Stolz-02}
G.~Stolz, \emph{Strategies in localization proofs for one-dimensional random
  {S}chr\"odinger operators}, Proc. Indian Acad. Sci. Math. Sci. \textbf{112}
  (2002), no.~1, 229--243.

\bibitem[Sto11]{Stolz-10}
G.~Stolz, \emph{An introduction to the mathematical theory of {A}nderson
  localization}, Cont. Math., vol. 552, pp.~71--108, American Math. Soc., 2011.

\bibitem[SW86]{SimonW-86}
B.~Simon and T.~Wolff, \emph{Singular continuous spectrum under rank one
  perturbations and localization for random {H}amiltonians}, Commun. Pur. Appl.
  Math. \textbf{39} (1986), no.~1, 75--90.

\bibitem[Tau11]{Tautenhahn-11}
M.~Tautenhahn, \emph{Localization criteria for anderson models on locally
  finite graphs}, J. Stat. Phys. \textbf{144} (2011), no.~1, 60--77.

\bibitem[Tes09]{Teschl-09}
G.~Teschl, \emph{Mathematical methods in quantum mechanics: With applications
  to {S}chr\"o{}dinger operators}, Graduate Studies in Mathematics, vol.~99,
  American Mathematical Society, 2009.

\bibitem[TV10a]{TautenhahnV-10b}
M.~Tautenhahn and I.~Veseli\'c, \emph{A note on regularity for discrete
  alloy-type models}, Technische Universit\"at Chemnitz, Preprintreihe der
  Fakult\"at f\"ur Mathematik, Preprint 2010-6, ISSN 1614-8835, 2010.

\bibitem[TV10b]{TautenhahnV-10}
M.~Tautenhahn and I.~Veseli\'c, \emph{Spectral properties of discrete alloy-type models}, XVIth
  International Congress On Mathematical Physics (P.~Exner, ed.), 2010,
  pp.~551--555.

\bibitem[vDK89]{DreifusK-89}
H.~von Dreifus and A.~Klein, \emph{A new proof of localization in the
  {A}nderson tight binding model}, Commun. Math. Phys. \textbf{124} (1989),
  no.~2, 285--299.

\bibitem[vDK91]{DreifusK-91}
H.~von Dreifus and A.~Klein, \emph{Localization for random {S}chr{\"o}dinger operators with
  correlated potentials}, Commun. Math. Phys. \textbf{140} (1991), no.~1,
  133--147.

\bibitem[Ves02]{Veselic-02a}
I.~Veseli\'c, \emph{{W}egner estimate and the density of states of some
  indefinite alloy type {S}chr\"odinger operators}, Lett. Math. Phys.
  \textbf{59} (2002), no.~3, 199--214.

\bibitem[Ves08]{Veselic-08}
I.~Veseli\'c, \emph{Existence and regularity properties of the integrated density of
  states of random {S}chr\"odinger operators}, Lecture Notes in Mathematics,
  vol. 1917, Springer, 2008.

\bibitem[Ves10a]{Veselic-10a}
I.~Veseli\'c, \emph{Wegner estimate for discrete alloy-type models}, Ann. Henri
  Poincar\'e \textbf{11} (2010), no.~5, 991--1005.

\bibitem[Ves10b]{Veselic-10b}
I.~Veseli\'c, \emph{Wegner estimates for sign-changing single site potentials},
  Math. Phys. Anal. Geom. \textbf{13} (2010), no.~4, 299--313.

\bibitem[Web10]{Weber2010}
A.~Weber, \emph{Analysis of the physical {L}aplacian and the heat flow on a
  locally finite graph}, J. Math. Anal. Appl. \textbf{370} (2010), 146--158.

\bibitem[Weg81]{Wegner-81}
F.~Wegner, \emph{Bounds on the {DOS} in disordered systems}, Z. Phys. B
  \textbf{44} (1981), no.~1-2, 9--15.

\bibitem[Wei00a]{Weidmann-00}
J.~Weidmann, \emph{{{L}ineare {O}peratoren in {H}ilbertr{\"a}umen Teil I
  Grundlagen}}, Teubner, Stuttgart, 2000.

\bibitem[Wei00b]{Weidmann-03}
J.~Weidmann, \emph{{{L}ineare {O}peratoren in {H}ilbertr{\"a}umen Teil II
  Anwendungen}}, Teubner, Stuttgart, 2000.

\bibitem[Woj08]{Wojciechowski2007}
R.~K. Wojciechowski, \emph{Stochastic completeness of graphs}, Ph.D. thesis,
  The Graduate Center of the City University of New York, 2008,
  arXiv:0712.1570v2 [math.SP].

\bibitem[Zie89]{Ziemer-89}
W.~P. Ziemer, \emph{Weakly differentiable functions}, Springer, New York, 1989.

\end{thebibliography}
\newcommand{\etalchar}[1]{$^{#1}$}
\providecommand{\bysame}{\leavevmode\hbox to3em{\hrulefill}\thinspace}
\providecommand{\MR}{\relax\ifhmode\unskip\space\fi MR }
\providecommand{\MRhref}[2]{%
  \href{http://www.ams.org/mathscinet-getitem?mr=#1}{#2}
}
\providecommand{\href}[2]{#2}

\end{document}